\tikzset{>=stealth'} %
\tikzstyle{every picture} = [style=semithick]
\tikzstyle{every node}    = [font=\small]
\tikzstyle{every state}   = [thick, minimum size=1mm, inner sep=2pt]
\tikzstyle{initial}   = [initial   by arrow, initial   text=, initial   distance=4mm]
\tikzstyle{accepting} = [accepting by arrow, accepting text=, accepting distance=4mm]
\theoremstyle{remark}
\newtheorem{fact}[theorem]{Fact}
\newcommand{\setN}{\mathbb{N}}
\newcommand{\setZ}{\mathbb{Z}}
\newcommand{\setQ}{\mathbb{Q}}
\newcommand{\partie}[1]{\mathbb{P}({#1})}
\newcommand{\pow}[1]{\partie{#1}}
\newcommand{\relimg}[2]{{#1}[{#2}]}
\renewcommand{\vec}[1]{{\mathbf #1}}
\newcommand{\stab}[2]{\operatorname{Stab}_{#1}(#2)}
\newcommand{\con}[1]{\operatorname{Con}(#1)}
\newcommand{\conP}[1]{\operatorname{Con}_{\geq 0}(#1)}
\newcommand{\per}[1]{\operatorname{Per}(#1)}
\newcommand{\perP}[1]{\operatorname{Per}_{\geq 0}(#1)}
\newcommand{\multiset}[1]{\{\!\!\{ #1 \}\!\!\}}
\newcommand{\bvass}{\mathcal{B}}
\newcommand{\bvassexample}{\mathcal{E}}
\newcommand{\vass}{\mathcal{V}}
\newcommand{\post}[1][]{\operatorname{Post}_{#1}}
\newcommand{\ReachSet}[1]{\llbracket #1 \rrbracket}
\newcommand{\step}[1][]{\xRightarrow{#1}}
\newcommand{\instantiate}[2]{#1 \langle #2 \rangle}
\newcommand{\accel}[1]{\operatorname{Add}_{#1}^*}
\newcommand{\clo}[2]{\operatorname{Clo}_{#1, #2}}
\newcommand{\graph}{\mathcal{G}}
\newcommand{\graphbis}{\mathcal{H}}
\newcommand{\explo}{\graph}
\newcommand{\explobis}{\graphbis}
\newcommand{\anc}[2][]{\operatorname{Anc}^{#1}(#2)}
\newcommand{\des}[2][]{\operatorname{Des}^{#1}(#2)}
\newcommand{\WorkList}{W}
\newcommand{\Redundant}{R}
\title{On the Reachability Problem for Two-Dimensional Branching VASS} %
\titlerunning{Reachability for Two-Dimensional Branching VASS} %
\author{Clotilde Bizière}{LaBRI, Univ. Bordeaux, CNRS, Bordeaux INP, Talence, France}{}{}{}
\author{Thibault Hilaire}{LaBRI, Univ. Bordeaux, CNRS, Bordeaux INP, Talence, France}{}{}{}
\author{Jérôme Leroux}{LaBRI, Univ. Bordeaux, CNRS, Bordeaux INP, Talence, France}{}{}{}
\author{Grégoire Sutre}{LaBRI, Univ. Bordeaux, CNRS, Bordeaux INP, Talence, France}{}{}{}
\authorrunning{C. Bizière, T. Hilaire, J. Leroux, and G. Sutre}
\keywords{%
  Vector addition systems,
  Reachability problem,
  Semilinear sets,
  Verification
}
\begin{document}

\maketitle

\begin{abstract}
  Vectors addition systems with states (VASS), or equivalently Petri nets,
are arguably one of the most studied formalisms for the
modeling and analysis of concurrent systems.
A central decision problem for VASS is reachability: whether there exists a run from an initial configuration to a final one.
This problem has been known to be decidable for over forty years, and its complexity has recently been precisely characterized.
Our work concerns the reachability problem for BVASS, a branching generalization of VASS.
In dimension one,
the exact complexity of this problem is known.
In this paper, we prove that the reachability problem for 2-dimensional BVASS is decidable.
In fact, we even show that the reachability set admits a computable semilinear presentation.
The decidability status of the reachability problem for BVASS remains open in higher dimensions.

\end{abstract}

\section{Introduction}\label{sec:introduction}
Vectors addition systems with states (VASS), or equivalently Petri nets,
are arguably one of the most studied formalisms for the
modeling and analysis of concurrent systems. A central decision problem for VASS is reachability: whether there exists a run from an initial configuration to a final one. This problem was shown decidable more than forty years ago~\cite{Mayr84} but its precise complexity was only established a few years ago~\cite{DBLP:conf/lics/LerouxS19,DBLP:journals/jacm/CzerwinskiLLLM21,DBLP:conf/focs/Leroux21}.
Several VASS extensions have been introduced and studied, most notably unordered data nets~\cite{DBLP:journals/fuin/LazicNORW08}, pushdown VASS~\cite{DBLP:conf/fsttcs/AtigG11,Lazic2013}, and branching VASS~\cite{GGS04,DBLP:journals/dmtcs/VermaG05}. But so far, the reachability problem is still open for these models.

\smallskip

One of the first subclasses of VASS for which reachability was shown to be decidable is the class of $2$-dimensional VASS. For this class, Hopcroft and Pansiot devised an algorithm that computes a finite description (more precisely, a semilinear presentation) of the reachability set~\cite{DBLP:journals/tcs/HopcroftP79}. As an immediate consequence, they obtained that reachability is decidable for this class. In fact, the algorithm of Hopcroft and Pansiot can be viewed as a refinement of the classical Karp-Miller algorithm \cite{DBLP:journals/jcss/KarpM69} where the abstract pumping of cycles (putting $\omega$ in some components) is replaced by an exact acceleration of cycles (adding new vectors to the current set of periods).

\smallskip

In this paper, we investigate the reachability problem for branching VASS (shortly called BVASS in the sequel), a branching generalization of VASS. More precisely, BVASS extend VASS with special branching transitions that merge configurations (by summing their vectors). This model has gained a lot of interest recently due to strong links with several fields in computer science such as cryptographic protocols~\cite{DBLP:journals/dmtcs/VermaG05}, linear logic~\cite{GGS04,LazicS15}, recursively parallel programs~\cite{DBLP:journals/toplas/BouajjaniE13}, timed pushdown systems~\cite{DBLP:conf/lics/ClementeLLM17}, computational linguistic~\cite{DBLP:conf/acl/Rambow94,DBLP:conf/acl/Schmitz10}, game semantics~\cite{DBLP:conf/esop/Cotton-BarrattM17}, equational tree automata~\cite{DBLP:conf/csl/Ohsaki01,DBLP:conf/fossacs/Lugiez03} and data logics~\cite{DBLP:journals/corr/JacquemardSD16,DBLP:conf/pods/BojanczykDMSS06}. For instance, provability in the multiplicative exponential fragment of linear logic (MELL) is inter-reducible with the reachability problem in BVASS~\cite{GGS04}.
As mentioned before, the reachability problem is still open in arbitrary dimension for BVASS. In dimension one, the reachability problem is decidable and the exact complexity is known~\cite{GollerHLT16,DBLP:conf/icalp/FigueiraLLMS17}. The objective of our work was to investigate the decidability of the reachability problem for $2$-dimensional BVASS.

\subparagraph*{Contributions.}
In this paper, we prove that the reachability problem for $2$-dimensional BVASS is decidable. In fact, we even show that the reachability set admits a computable semilinear presentation.
We propose an algorithm that essentially performs
a forward symbolic exploration of a $2$-dimensional BVASS given as input.
Our algorithm is inspired from Hopcroft and Pansiot's algorithm for
classical $2$-dimensional VASS~\cite{DBLP:journals/tcs/HopcroftP79}.
The latter computes a symbolic reachability tree,
but in our case we need an acyclic graph, that we call \emph{exploration}, because of branching transition rules.

\smallskip

Compared to Hopcroft and Pansiot's algorithm where pumped cycles are computed statically, in our case pumped cycles are computed dynamically since they exploit configurations that are discovered during the exploration. This feature complicates the proof of soundness of our algorithm. But the main challenge is the proof of termination. As usual, we proceed by contradiction and assume that the algorithm constructs an infinite exploration. A first source of difficulty in order to obtain a contradiction is that the set of pumped cycles is potentially infinite. A second source is the fact that we cannot consider any infinite path of the exploration. In fact, there are mutual dependencies between paths since the exploration is not necessarily a tree. We need to consider an infinite path that ultimately does not depend on the other paths, and we show that such a path always exist. We believe that our proof techniques could be applied to other algorithms that construct potentially infinite acyclic graphs.

\subparagraph*{Related Work.}
In general dimension, the coverability problem (a weak version of the reachability problem) and the boundedness problem are decidable for BVASS~\cite{DBLP:journals/dmtcs/VermaG05}, and their precise complexity is known~\cite{DBLP:journals/jcss/DemriJLL13,LazicS15}. The complexity of the reachability problem for bounded BVASS was established in~\cite{DBLP:conf/concur/MazowieckiP19}. The reachability problem for BVASS and pushdown VASS is still open. For pushdown VASS the problem is known to be decidable in the bidirected case~\cite{DBLP:conf/icalp/GanardiMPSZ22}.
In small dimensions, the above-mentioned idea of pumping cycles was successfully applied to the analysis of several VASS generalizations, and in particular to solve reachability for $2$-dimensional VASS and extensions~\cite{DBLP:journals/jacm/BlondinEFGHLMT21,DBLP:conf/focs/0001CMOSW24,DBLP:conf/fsttcs/FinkelLS18}, coverability for 1-dimensional pushdown VASS~\cite{DBLP:conf/icalp/LerouxST15}, and reachability for 1-dimensional BVASS~\cite{GollerHLT16,DBLP:conf/icalp/FigueiraLLMS17}.

\subparagraph*{Outline.}
Some preliminary background and notations are provided in \cref{sec:preliminaries}.
We define in \cref{sec:BVASS} the model of BVASS and their semantics.
\cref{sec:algorithm} introduces a class of acyclic graphs,
called \emph{explorations},
where nodes are labeled by sets of configurations, and
presents our reachability algorithm for $2$-BVASS.
We show in \cref{sec:correctness} that the explorations constructed by our algorithm,
called \emph{algorithmic} explorations,
are sound and complete
(for the reachability set).
This shows the partial correctness of our algorithm, and we
then focus on its termination.
We prove in \cref{sec:branches} that any infinite graph that admits a finitely-branching spanning forest contains a ``core'' witness of infinity defined as a so-called directed and primary graph.
\cref{sec:termination} provides the proof of termination of our algorithm, and
is decomposed into four subsections.
First,
we study the stabilization of cones, a form of acceleration for cones through the so-called notion of \emph{modes}.
Second,
we show how to decompose the effect of paths in an exploration into a sum of
so-called \emph{elementary} vectors corresponding to previously mentioned \emph{pumped cycles}, and
\emph{consecutive} vectors.
Third,
we prove that the periodic set associated to a primary infinite set of nodes in an algorithmic exploration is finitely-generated.
Fourth, we assemble the results from the previous sections to deduce the termination of our algorithm.
\cref{sec:conclusion} concludes the paper.
\emph{Due to space limitations, detailed proofs are deferred to the appendix.}

\section{Preliminaries}\label{sec:preliminaries}
We denote by $\setZ$ the set of integers, $\setN$ the set of natural numbers, by $\setQ$ the set of rational numbers, and by $\setQ_{\geq 0}$ the set of non-negative rational numbers. We also introduce $\setN_{>0}$ and $\setQ_{>0}$ defined as $\setN\setminus \{0\}$ and $\setQ_{\geq 0}\setminus\{0\}$, respectively. The powerset of a set $S$ is written $\pow{S}$.

\subparagraph*{Vectors.}
Vectors are typeset in bold face. Given $\vec{c}\in\setQ^d$, we let $(\vec{c}(1),\ldots, \vec{c}(d))$ denote the vector of rational numbers defining $\vec{c}$. We write $\vec{x}\leq \vec{y}$ for two vectors $\vec{x},\vec{y}\in\setQ^d$ if $\vec{x}(i)\leq \vec{y}(i)$ for every $i\in\{1,\ldots,d\}$. The sum of two vectors $\vec{x}+\vec{y}$ is defined component-wise.
The sum operator over vectors is extended over sets $\vec{X},\vec{Y}\subseteq\setQ^d$ by $\vec{X}+\vec{Y}=\{\vec{x}+\vec{y}\mid \vec{x}\in\vec{X}\wedge\vec{y}\in\vec{Y}\}$. Given $\vec{X}\subseteq \setQ^d$ and $\vec{x}\in\setQ^d$, we define $\vec{x}+\vec{X}$ as $\{\vec{x}\}+\vec{X}$. The set $\vec{X}+\vec{x}$ is defined similarly. We also write $\setQ_{\geq 0}\vec{X}$ the set $\{\lambda\vec{x} \mid \lambda\in \setQ_{\geq 0}\wedge \vec{x}\in\vec{X}\}$.

\subparagraph*{Periodic sets and semilinear sets.}
A set $\vec{P}\subseteq \setQ^d$ is said to be \emph{periodic} if $\vec{0}\in\vec{P}$ and $\vec{P}+\vec{P}\subseteq\vec{P}$. Given a set $\vec{A}\subseteq\setQ^d$, we denote by $\per{\vec{A}}$ the set of finite sums $\vec{a}_1+\cdots+\vec{a}_k$ where $k\in\setN$, and $\vec{a}_1,\ldots,\vec{a}_k\in\vec{A}$. This periodic set is called the \emph{periodic set spanned} by $\vec{A}$. A periodic set $\vec{P}$ is said to be \emph{finitely-generated} if $\vec{P}=\per{\vec{A}}$ for some finite set $\vec{A}\subseteq \setQ^d$.
The sum of two periodic sets is a periodic set. Given a sequence $(\vec{P}_i)_{i\in I}$ of periodic sets indexed by a finite or infinite set $I$, we denote by $\sum_{i\in I}\vec{P}_i$ the periodic set $\per{\bigcup_{i\in I}\vec{P}_i}$. Notice that that if $I$ is finite, this definition of sum coincides with the previously introduced finite sum of subsets of $\setQ^d$.
A set $\vec{L}\subseteq \setN^d$ is said to be \emph{linear} if $\vec{L}=\vec{b}+\vec{P}$ where $\vec{b} \in \setN^d$ and $\vec{P}\subseteq\setN^d$ is a finitely-generated periodic set. A set $\vec{S} \subseteq \setN^d$ is \emph{semilinear} if $\vec{S}$ is a finite union of linear subsets of~$\setN^d$.

\subparagraph*{Cones.}
A \emph{cone} $\vec{C}$ of $\setQ^d$ is a periodic subset of $\setQ^d$ such that $\setQ_{\geq 0}\vec{C}\subseteq \vec{C}$. The \emph{cone spanned} by a set $\vec{A}\subseteq\setQ^d$ is the cone denoted by $\con{\vec{A}}$ and defined as $\setQ_{\geq 0}\per{\vec{A}}$. %
A cone $\vec{C}\subseteq\setQ^d$ is said to be \emph{finitely-generated} if $\vec{C}=\con{\vec{A}}$ for some finite set $\vec{A}\subseteq \setQ^d$.
\begin{lemma}[{\cite[Lemma 1.2]{DBLP:journals/tcs/HopcroftP79}}]\label{cor:HPcone}
  Let $\vec{P} \subseteq \setZ^d$ be a periodic set. Then $\vec{P}$ is a finitely-generated periodic set if, and only if, $\con{\vec{P}}$ is a finitely-generated cone.
\end{lemma}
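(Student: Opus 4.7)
The forward direction follows from the definitions: if $\vec{P} = \per{\vec{A}}$ for finite $\vec{A} \subseteq \setZ^d$, then $\con{\vec{P}} = \setQ_{\geq 0}\per{\vec{P}} = \setQ_{\geq 0}\per{\vec{A}} = \con{\vec{A}}$, exhibiting $\con{\vec{P}}$ as finitely generated by $\vec{A}$. The substantive direction is the converse.

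Suppose $\con{\vec{P}} = \con{\vec{B}}$ for some finite $\vec{B}$. Each $\vec{b} \in \vec{B}$ lies in $\setQ_{\geq 0}\vec{P}$ and is thus a positive rational multiple of some element of $\vec{P}$, so I may replace each $\vec{b}$ by an element of $\vec{P}$ on the same ray without altering the spanned cone; this reduces the problem to the case $\vec{B} \subseteq \vec{P}$. Applying Gordan's lemma to the rational polyhedral cone $\con{\vec{B}}$ then yields a finite generating set $\vec{G}$ for the monoid of integer points $\con{\vec{B}} \cap \setZ^d$, and in particular $\vec{P} \subseteq \per{\vec{G}}$. For each $\vec{g} \in \vec{G}$, some positive integer multiple $N_{\vec{g}} \vec{g}$ lies in $\per{\vec{B}} \subseteq \vec{P}$; taking $N$ to be the least common multiple of the $N_{\vec{g}}$'s gives $N \vec{g} \in \vec{P}$ for every $\vec{g}$.

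The plan is then to exhibit a finite set $\vec{F} \subseteq \vec{P}$ such that $\vec{P} = \per{\vec{F} \cup \{N \vec{g} : \vec{g} \in \vec{G}\}}$. I would try to obtain $\vec{F}$ by partitioning $\vec{P}$ according to the (finitely many) cosets of $N \setZ \vec{G}$ inside the lattice $\setZ \vec{G}$ and, within each coset, extracting finitely many minimal representatives under the sub-additive order given by $\per{\{N \vec{g}\}} \subseteq \vec{P}$, via a Dickson-style well-quasi-ordering argument on the finitely generated monoid $\per{\vec{G}}$.

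The main obstacle I foresee is precisely this last step: the naive Euclidean-division decomposition of $\vec{p} \in \vec{P}$ as $\vec{p} = \vec{r} + \vec{y}$ with $\vec{y} \in \per{\{N \vec{g}\}}$ and $\vec{r} \in \per{\vec{G}}$ of bounded coefficients does not guarantee $\vec{r} \in \vec{P}$, because $\vec{P}$ is only a monoid and therefore not closed under subtraction. The well-quasi-ordering argument is meant to sidestep this by selecting coset representatives already inside $\vec{P}$; this is essentially Hopcroft and Pansiot's original technique, and for the complete details I would refer the reader to~\cite{DBLP:journals/tcs/HopcroftP79}.
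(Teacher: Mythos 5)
The paper does not prove this lemma at all: it is imported verbatim as Lemma~1.2 of Hopcroft and Pansiot, so there is no in-paper argument to compare against. Your reconstruction is a correct sketch of that cited proof — the reduction to $\vec{B} \subseteq \vec{P}$, Gordan's lemma, and the choice of $N$ are all sound, and the final step you defer does go through: refining Dickson's lemma on coefficient tuples in $\setN^{|\vec{G}|}$ by first fixing each coordinate modulo $N$ shows that the order $\vec{q} \leq \vec{q}'$ iff $\vec{q}' - \vec{q} \in \per{\{N\vec{g} : \vec{g} \in \vec{G}\}}$ is a well-quasi-order on $\per{\vec{G}}$, so the finitely many minimal elements $\vec{F}$ of $\vec{P}$ under this order satisfy $\vec{P} = \per{\vec{F} \cup \{N\vec{g} : \vec{g} \in \vec{G}\}}$.
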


\subparagraph*{Graphs.}
A \emph{graph} is a pair $\graph=(N,\rightarrow)$ where $N$ is a set of \emph{nodes}, and $\rightarrow$ is a binary relation on $N$ called the \emph{edge relation}. The graph is said to be \emph{empty} (resp. \emph{finite}, \emph{infinite}) when its set of nodes is empty (resp. finite, infinite). We denote by $\xrightarrow{+}$ the transitive closure of $\rightarrow$, and by $\xrightarrow{*}$ the reflexive closure of $\xrightarrow{+}$.
The graph is called \emph{acyclic} when $\xrightarrow{+}$ is irreflexive.
We associate with a node $n\in N$ the set of \emph{ancestors} $\anc[\graph]{n}=\{m\in N \mid m\xrightarrow{*}n\}$, and the set of \emph{descendants} $\des[\graph]{n}=\{m\in N\mid n\xrightarrow{*}m\}$.
Ancestors and descendant are extended over sets of nodes $X \subseteq N$ as expected,
by $\anc[\graph]{X} = \bigcup_{n \in X} \anc[\graph]{n}$ and $\des[\graph]{X} = \bigcup_{n \in X} \des[\graph]{n}$.
A set of nodes $X \subseteq N$ verifying $X = \anc[\graph]{X}$ is said to be \emph{ancestor-closed}.
A node $n\in N$ is called a \emph{leaf} if there does not exist a node $m\in M$ satisfying $n\rightarrow m$. A node $n$ is called a \emph{source} if there is no node $m$ such that $m\rightarrow n$.
The \emph{restriction} of a graph $\graph=(N,\rightarrow)$ to a set of nodes $X \subseteq N$ is the graph $(X, {\rightarrow} \cap (X \times X))$.
A \emph{node-labeled graph} is a triple $(N, \rightarrow, \lambda)$ where
$(N, \rightarrow)$ is a graph and
$\lambda$ is a function with domain $N$.
The notions defined above for graphs naturally carry over to node-labeled graphs.

\section{Branching VASS}\label{sec:BVASS}
A $d$-dimensional \emph{branching vector addition system with states}
(\emph{$d$-BVASS} for short)
is a pair $\bvass = (Q, \Delta)$ where
$Q$ is a finite non-empty set of \emph{states} and
$\Delta \subseteq (\pow{Q} \times \setZ^d \times Q)$ is a finite set of \emph{transition rules}.
A transition rule $\delta = (S, \vec{a}, q)$ in $\Delta$ consists in
a set $S \subseteq Q$ of \emph{input states},
a \emph{displacement} $\vec{a} \in \setZ^d$, and
a single \emph{output state} $q \in Q$.
Intuitively,
assuming that $S = \{q_1, \ldots, q_k\}$,
this transition rule can be seen as the rewriting rule
$q_1(\mathtt{x}_1), \ldots, q_k(\mathtt{x}_k) \rightarrow q(\vec{a} + \mathtt{x}_1 + \cdots + \mathtt{x}_k)$
with formal parameters $\mathtt{x}_1, \ldots, \mathtt{x}_k$.
Note that our definition forbids a state from occurring twice on the left-hand side of a transition rule
(as this left-hand side is given by a set of states).
This restriction is only a matter of technical convenience.
A transition rule $\delta = (S, \vec{a}, q)$ is called
\emph{initial} when $S = \emptyset$,
\emph{unary} when $|S| = 1$, and
\emph{branching} when $|S| \geq 2$.
A $d$-dimensional \emph{vector addition system with states}
(\emph{$d$-VASS} for short)
is a $d$-BVASS $\vass = (Q, \Delta)$ such that $|S| \leq 1$
for every transition rule $(S, \vec{a}, q)$ in $\Delta$.

\smallskip

We formulate the semantics of a $d$-BVASS $\bvass = (Q, \Delta)$ in terms of a configuration-set transformer $\post$.
A \emph{configuration} of $\bvass$ is a pair $(q, \vec{x})$ in $Q \times \setN^d$,
also written as $q(\vec{x})$ in the sequel.
By extension,
given a set $\vec{X} \subseteq \setN^d$,
we let $q(\vec{X})$ denote the set of configurations $\{q\} \times \vec{X}$.
The set of \emph{initial} configurations of $\bvass$ is
$\{q(\vec{a}) \mid (\emptyset, \vec{a}, q) \in \Delta \text{ and } \vec{a} \geq \vec{0}\}$.
For each transition rule $\delta = (S, \vec{a}, q)$ in $\Delta$,
we introduce the function
$\post[\delta] : \pow{Q \times \setN^d} \rightarrow \pow{Q \times \setN^d}$
defined by\footnote{%
  We use double braces $\multiset{\cdots}$ to denote multisets.
  Here,
  the condition
  $S = \multiset{r \mid r(\vec{z}) \in D}$
  means that,
  firstly,
  the set $S$ is equal to the set $\{r \mid r(\vec{z}) \in D\}$, and,
  secondly,
  $r_1 \neq r_2$ for every two distinct configurations $r_1(\vec{z}_1), r_2(\vec{z}_2)$ in $D$.
}
\begin{align*}
  \post[\delta](C)
  =
  q\left(\left\{
    \vec{y} \in \setN^d
    \mid
    \exists D \subseteq C :
    S = \multiset{r \mid r(\vec{z}) \in D}
    \text{ and }
    \textstyle
    \vec{y} = \vec{a} + \sum_{r(\vec{z}) \in D} \vec{z}
  \right\}\right)
\end{align*}
for every set $C \subseteq Q \times \setN^d$.
We also introduce
$\post[\bvass] : \pow{Q \times \setN^d} \rightarrow \pow{Q \times \setN^d}$,
defined by
$\post[\bvass](C) = \bigcup_{\delta \in \Delta} \post[\delta](C)$.
Note that $\post[\bvass]$ is $\subseteq$-nondecreasing and that
$\post[\bvass](\emptyset)$ coincides with the set of initial configurations of $\bvass$.
The \emph{reachability set} of $\bvass$,
written $\ReachSet{\bvass}$,
is the $\subseteq$-least set $C \subseteq Q \times \setN^d$
such that $\post[\bvass](C) \subseteq C$.

\begin{figure}[t]
  \centering
  \begin{tikzpicture}[node distance=2cm, text centered, ->, bend angle=25]
    \node[state] (r) {$r$};
    \node[state, above left of=r, yshift=2mm]  (p) {$p$};
    \node[state, above right of=r, yshift=2mm] (q) {$q$};
    \node[below of=p, node distance=1.2cm] (m) {};
    \node[state, above right of=p, yshift=2mm] (s) {$s$};
    \draw[->]
    (m) edge             node[left, yshift=-1mm]       {$(4, 4)$} (p)
    (p) edge[]  node[above, yshift = 1.5pt]       {$(-1, 0)$} (q)
    (r) edge[bend right] node[below right] {$(-1, 1)$} (q)
    (p) edge[bend left]  coordinate[pos=0.8] (pr) (r)
    (q) edge[bend right] coordinate[pos=0.8] (qr) (r)
    (r) edge[loop below, looseness=10, in=-120, out=-60] node[below] {$(1,3)$} (r)
    (q) edge[bend right] node[above right] {$(0,0)$} (s)
    (s) edge[bend right] node[above left] {$(0,0)$} (p);
    \draw[-]
    (pr) edge[bend left] node[above, yshift=-0.5mm] {$+$} (qr);

    \coordinate (Orig) at (6.5,3.15);
    \node[draw, rectangle, label=left:$n_0$]  (N0) at ($(Orig) + (0,0)$) {$p, (4,4), \{(0,0)\}$};
    \node[draw, rectangle, label=left:$n_1$]  (N1) at ($(Orig) + (0,-1)$) {$q, (3,4), \{(0,0)\}$};
    \node[draw, rectangle, label=left:$n_2$]  (N2) at ($(Orig) + (0,-2)$) {$s, (3,4), \per{\{(2,5)\}}$};
    \node[draw, rectangle, label=left:$n_3$]  (N3) at ($(Orig) + (0,-3)$) {$p, (3,4), \per{\{(2,5)\}}$};
    \node[draw, rectangle, label=right:$n_4$] (N4) at ($(Orig) + (2,-4)$) {$q, (2,4), \per{\{(2,5)\}}$};
    \node[draw, rectangle, label=left:$n_5$]  (N5) at ($(Orig) + (-2, -4)$) {$r, (6,8), \per{\{(2,5)\}}$};
    \node[draw, rectangle, label=right:$n_6$] (N6) at ($(Orig) + (3.25,-2)$) {$r, (7,8), \{(0,0)\}$};

    \draw[->] (N0) to (N1);
    \draw[->] (N1) to (N2);
    \draw[->] (N2) to (N3);
    \draw[->] (N3) to (N4);
    \draw[->, out=180, in=90] (N1.south west) to (N5.160);
    \draw[->] (N3) to (N5);
    \draw[->] (N1.south east) to (N6);
    \draw[->] (N0.south east) to (N6);
  \end{tikzpicture}
  \caption{%
    The $2$-BVASS $\bvassexample$ from \cref{exa:BVASS} (left) and
    an execution of $\mathtt{Explore}(\bvassexample)$ (right).
  }
  \label{fig:BVASS}
\end{figure}

\begin{example}
  \label{exa:BVASS}
  Consider the $2$-BVASS $\bvassexample$ depicted in \cref{fig:BVASS}.
  It has four states $p, q, r, s$ and
  seven transition rules,
  namely
  $(\emptyset, (4, 4), p)$,
  $(\{p, q\}, (0, 0), r)$ and
  five unary transition rules (see \cref{fig:BVASS}).
  We have $p(4, 4) \in \ReachSet{\bvassexample}$
  since $(\emptyset, (4, 4), p) \in \Delta$.
  From $(\{p\}, (-1, 0), q) \in \Delta$ and $p(4, 4) \in \ReachSet{\bvassexample}$
  we get that $q(3, 4) \in \ReachSet{\bvassexample}$.
  From $(\{p, q\}, (0, 0), r) \in \Delta$ and $p(4, 4), q(3, 4) \in \ReachSet{\bvassexample}$
  we get that $r(7, 8) \in \ReachSet{\bvassexample}$.
  \lipicsEnd
\end{example}

A set of configurations $C \subseteq Q \times \setN^d$ is said to be \emph{semilinear} if $C$ is
a finite union of sets of the form $q(\vec{L})$ where $q \in Q$ and $\vec{L} \subseteq \setN^d$ is linear, i.e. a set of the form $\vec{b}+\per{\vec{A}}$ for some $\vec{b}\in\setN^d$ and some finite subset $\vec{A}$ of $\setN^d$.
A \emph{presentation} of a semilinear set of configurations $C \subseteq Q \times \setN^d$
is a finite set $\{(q_1, \vec{b}_1, \vec{A}_1), \ldots, (q_k, \vec{b}_k, \vec{A}_k)\}$,
where $q_j \in Q$, $\vec{b}_j \in \setN^d$, and $\vec{A}_j$ is a finite subset of $\setN^d$,
such that $C = \bigcup_{j=1}^k q_j(\vec{b}_j+\per{\vec{A}_j})$.
The main contribution of this paper is the generalization to $2$-BVASS of the following theorem.

\begin{theorem}[\cite{DBLP:journals/tcs/HopcroftP79}]
  \label{thm:hp79}
  For every $2$-VASS $\vass$,
  the reachability set $\ReachSet{\vass}$ of $\vass$ is semilinear and
  a presentation of $\ReachSet{\vass}$ is computable from $\vass$.
\end{theorem}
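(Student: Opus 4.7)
The plan is to adapt the Karp--Miller coverability construction, replacing the abstract $\omega$ pumping by an exact acceleration that stores, as a finitely generated periodic set, the effects of cycles that can be iterated without leaving $\setN^2$. Concretely, I would build a forest whose roots are the initial configurations $q(\vec{a})$ for $(\emptyset, \vec{a}, q) \in \Delta$ with $\vec{a} \geq \vec{0}$, and whose nodes are labeled by triples $(q, \vec{b}, \vec{A})$ with $\vec{b} \in \setN^2$ and $\vec{A} \subseteq \setN^2$ finite, each denoting the linear set $q(\vec{b} + \per{\vec{A}})$. To expand a node $n$ labeled $(q, \vec{b}, \vec{A})$, I would fire every unary rule $(\{q\}, \vec{a}, q') \in \Delta$ with $\vec{b} + \vec{a} \geq \vec{0}$, creating a child labeled $(q', \vec{b} + \vec{a}, \vec{A})$; whenever some ancestor $n'$ on the path from the root is labeled $(q', \vec{b}', \vec{A}')$ with $\vec{A}' \subseteq \vec{A}$ and $\vec{b}' \leq \vec{b} + \vec{a}$, I would additionally enrich the child's periodic set with the cycle effect $\vec{b} + \vec{a} - \vec{b}'$. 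A node becomes a leaf as soon as an ancestor with the same state and the same periodic set dominates its base component.

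Soundness, namely $q(\vec{b} + \per{\vec{A}}) \subseteq \ReachSet{\vass}$ at every node, follows by induction on the tree: each accelerated cycle starts from a configuration in the ancestor's linear set, and its base-component minimum along the path stays non-negative because $\vec{b}' \leq \vec{b} + \vec{a}$ at the endpoints of a purely unary path, hence the cycle can be iterated arbitrarily often from any configuration in the child's label. Completeness, that every reachable configuration lies in the linear set of some node, is obtained by tracing an arbitrary run of $\vass$ through the forest and observing that whenever the run reaches a dominated leaf, its remaining suffix is absorbed into the periodic component of the dominating ancestor.

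The main obstacle is termination, and it is here that dimension two is essential. Along any infinite branch the sequence $(\vec{A}_n)$ is non-decreasing, so the cones $\con{\vec{A}_n}$ form an ascending chain of rational convex subcones of $\setQ_{\geq 0}^2$. In dimension two such a chain admits only finitely many strict increases, because every rational cone inside the non-negative orthant is bounded by two extremal rays and every strict enlargement either passes from a lower-dimensional cone to a higher one or shifts an extremal ray, which a dedicated combinatorial analysis in $\setQ_{\geq 0}^2$ shows can happen only finitely often. Beyond the stabilization point, every new cycle effect lies in the stable cone $\vec{C}$, and by \cref{cor:HPcone} a positive integer multiple of it already belongs to $\per{\vec{A}_n}$; this forces the base components $\vec{b}_n$ to stay within a bounded translate of $\per{\vec{A}_n}$, and Dickson's lemma on $\setN^2$ then produces two comparable nodes along the branch, contradicting infiniteness. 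Finiteness of the forest immediately yields the computable semilinear presentation $\bigcup_n q_n(\vec{b}_n + \per{\vec{A}_n})$ of $\ReachSet{\vass}$.
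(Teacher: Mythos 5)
The paper does not prove this statement; it cites it as Hopcroft and Pansiot's theorem, and your proposal is essentially a reconstruction of their algorithm, which is the right approach. However, two steps in your sketch have genuine gaps.

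First, completeness fails with your expansion rule. You only fire a unary rule $(\{q\},\vec{a},q')$ from a node labeled $(q,\vec{b},\vec{A})$ when $\vec{b}+\vec{a}\geq\vec{0}$, but the node represents all of $q(\vec{b}+\per{\vec{A}})$, and a configuration $q(\vec{b}+\vec{p})$ with $\vec{p}\in\per{\vec{A}}$ may enable the rule even though $q(\vec{b})$ does not (e.g.\ $\vec{b}=(0,0)$, $\vec{A}=\{(1,0)\}$, $\vec{a}=(-1,0)$). The correct construction computes $(\vec{a}+\vec{b}+\per{\vec{A}})\cap\setN^2$, which decomposes as $\vec{B}+\per{\vec{A}}$ for a finite computable set $\vec{B}$ of new base vectors (this is exactly the device at line~\ref{line:basis-computation} of \cref{algo:main} and \cref{rem:algo-implementability-issues}); your run-tracing argument for completeness breaks without it.

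Second, and more seriously, your termination argument rests on the claim that an ascending chain of finitely generated cones in $\setQ_{\geq 0}^2$ has only finitely many strict increases. That is false: $\con{\{(1,0),(1,n)\}}$ for $n=1,2,\ldots$ is a strictly increasing chain of two-dimensional subcones of $\setQ_{\geq 0}^2$ whose extremal ray shifts infinitely often. Since your acceleration adds $\vec{b}+\vec{a}-\vec{b}'$ for \emph{arbitrary} comparable ancestors, nothing prevents the generated cones from climbing forever in this way. The actual Hopcroft--Pansiot argument (mirrored in this paper by the sets $\vec{E}_n$ and $\vec{C}_n$ and the ``modes'' machinery of \cref{sec:modes}) restricts the added vectors to displacements of \emph{elementary} cycles, of which there are finitely many, together with differences of configurations that each have a component below the iteration constant $c$; the latter can only contribute axis directions, and once an axis enters the cone such additions become idle. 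Without this restriction on the generators, the stabilization of $\con{\vec{A}_n}$ along a branch is unsupported, and the subsequent appeal to \cref{cor:HPcone} and Dickson's lemma never gets off the ground.
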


In order to generalize \cref{thm:hp79} to 2-BVASS,
we will extend to BVASS some techniques developed for classical VASS.
A linear (i.e., non-branching) view of BVASS behaviors is required to do so.
Intuitively,
this view is obtained by instantiating transition rules $\delta = (S, \vec{a}, q)$ with $|S| \geq 2$
into unary transition rules.
This instantiation can be performed at the semantic level or at the syntactic level.
Let us make these ideas more concrete.
We assume that $\bvass = (Q, \Delta)$ is a $d$-BVASS for the remainder of this section.

\smallskip

For each transition rule $\delta = (S, \vec{a}, q)$ in $\Delta$,
we introduce the binary relation $\step[\delta]$ on $Q \times \setN^d$ defined as
the set of pairs $(p(\vec{x}), q(\vec{y})) \in (Q \times \setN^d)^2$ such that
$p \in S$ and
there exists a set $D \subseteq \ReachSet{\bvass}$ verifying
$(S \setminus \{p\}) = \multiset{r \mid r(\vec{z}) \in D}$ and
$\vec{y} = \vec{a} + \vec{x} + \sum_{r(\vec{z}) \in D} \vec{z}$.
The binary relation $\step[\delta]$ instantiates the transition rule $\delta$
at the semantic level as
it relies on the reachability set $\ReachSet{\bvass}$ of $\bvass$.
We also introduce the binary \emph{step} relation $\step$ on $Q \times \setN^d$
defined as the union $\bigcup_{\delta \in \Delta} \step[\delta]$.
The reflexive-transitive closure of $\step$ is denoted by $\step[*]$.
It is readily seen that $\step$ and $\step[*]$ are diagonal.\footnote{
  \label{footnote:diagonal}
  A binary relation $\bowtie$ on $Q \times \setN^d$ is called \emph{diagonal} if
  $p(\vec{x}) \bowtie q(\vec{y})$ implies $p(\vec{x} + \vec{u}) \bowtie q(\vec{y} + \vec{u})$,
  for every configurations $p(\vec{x}), q(\vec{y}) \in Q \times \setN^d$ and vector $\vec{u} \in \setN^d$.
}
Using square brackets to denote relational images\footnote{%
  Given a binary relation $\bowtie$ on a set $S$ and a subset $X$ of $S$,
  we let $\relimg{\bowtie}{X}$ denote the \emph{relational image} of $X$ under $\bowtie$,
  defined by $\relimg{\bowtie}{X} = \{y \in S \mid \exists x \in X : x \bowtie y\}$.
}, we have
$\post[\bvass](C) \subseteq \relimg{\step}{C} \subseteq \relimg{\step[*]}{C} \subseteq \ReachSet{\bvass}$
for every set of configurations $C \subseteq \ReachSet{\bvass}$.

\smallskip

Let us now instantiate transition rules at the syntactic level.
Given a finite set $F \subseteq Q \times \setN^d$,
the \emph{instantiation of $\bvass$ with $F$},
written $\instantiate{\bvass}{F}$,
is the $d$-VASS $\instantiate{\bvass}{F} = (Q, \Delta')$ where
$\Delta'$ is the set of triples
$(\{p\}, \vec{a}', q)$ such that
there exist a transition rule $(S, \vec{a}, q) \in \Delta$ with $p \in S$ and
a set $D \subseteq F$ verifying
$(S \setminus \{p\}) = \multiset{r \mid r(\vec{z}) \in D}$ and
$\vec{a}' = \vec{a} + \sum_{r(\vec{z}) \in D} \vec{z}$.
We observe that if $F \subseteq \ReachSet{\bvass}$ then
the step relation of $\instantiate{\bvass}{F}$ is contained in
the step relation of $\bvass$.

\begin{remark}
  In the definition of the instantiation $\instantiate{\bvass}{F}$,
  we require $F$ to be finite solely to ensure that $\Delta'$ is finite.
  We could drop this requirement and obtain an ``infinite $d$-VASS'',
  meaning that its set of transition rules is potentially infinite.
  The step relations of $\bvass$ and of the resulting ``infinite $d$-VASS''
  $\instantiate{\bvass}{\ReachSet{\bvass}}$ coincide.
\end{remark}

\smallskip

We conclude this section with notions that are specific to classical VASS.
Consider a $d$-VASS $\vass = (Q, \Delta)$.
From now on,
unary transition rules $(\{p\}, \vec{a}, q) \in \Delta$ will be written $(p, \vec{a}, q)$ for short.
A \emph{path} of $\vass$ is a non-empty sequence
$\theta = (p_1, \vec{a}_1, q_1) \cdots (p_k, \vec{a}_k, q_k)$
of unary transition rules $(p_i, \vec{a}_i, q_i) \in \Delta$
such that
$q_i = p_{i+1}$ for all $i \in \{1, \ldots, k-1\}$.
We call $p_1$ and $q_k$ the \emph{start} and the \emph{end} of $\theta$, respectively.
The \emph{displacement} of $\theta$ is $\sum_{i=1}^k \vec{a}_i$.
We say that $\theta$ is a \emph{cycle} if $p_1 = q_k$.
It is an \emph{elementary cycle} if
$p_1 = q_k$ and $p_1, \ldots, p_k$ are pairwise distinct.

\begin{fact}
  \label{fact:iteration-cycles-of-instantiation}
  Consider a $d$-BVASS $\bvass = (Q, \Delta)$ and finite set $F \subseteq \ReachSet{\bvass}$.
  Let $q \in Q$, $\vec{x} \in \setN^d$ and
  let $\theta$ be an elementary cycle of $\instantiate{\bvass}{F}$ with displacement $\vec{v}$ and with
  start (and end) $q$.
  If $\vec{x}\geq (c,\ldots,c)$ where $c = |Q| \max_{i\in\{1,\ldots,d\}} \max_{(S, \vec{a}, q) \in \Delta} -\vec{a}(i)$ then $q(\vec{x}) \step[*] q(\vec{x} + \vec{v})$.
\end{fact}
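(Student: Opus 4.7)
The plan is to fire the cycle $\theta$ one rule at a time from $q(\vec{x})$, using the binary step relation $\step$ of $\bvass$, and to verify along the way that (i) every intermediate vector stays in $\setN^d$ and (ii) each unary rule of $\theta$ in $\instantiate{\bvass}{F}$ is realised by a single step $\step$ of $\bvass$.

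Write $\theta = (p_1, \vec{a}_1, q_1) \cdots (p_k, \vec{a}_k, q_k)$ with $p_1 = q_k = q$ and $q_i = p_{i+1}$, and let $\vec{x}_i = \vec{x} + \sum_{j=1}^i \vec{a}_j$, so that $\vec{x}_0 = \vec{x}$ and $\vec{x}_k = \vec{x} + \vec{v}$. Being elementary, $\theta$ visits pairwise distinct states, hence $k \leq |Q|$. Each displacement $\vec{a}_j$ arises from a rule $(S_j, \vec{b}_j, q_j) \in \Delta$ with $p_j \in S_j$ and a set $D_j \subseteq F$ through $\vec{a}_j = \vec{b}_j + \sum_{r(\vec{z}) \in D_j} \vec{z}$ with every $\vec{z} \geq \vec{0}$. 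Therefore, componentwise, $\vec{a}_j(\ell) \geq \vec{b}_j(\ell) \geq -m_\ell$, where $m_\ell = \max_{(S, \vec{a}, q) \in \Delta} -\vec{a}(\ell)$. Summing over $j \leq i \leq k \leq |Q|$ and using the assumption $\vec{x}(\ell) \geq c$ yields $\vec{x}_i(\ell) \geq c - |Q|\,m_\ell \geq 0$ by the very definition of $c$. This non-negativity bookkeeping is the only delicate point, and it is precisely what dictates the bound $c = |Q| \max_\ell m_\ell$ in the statement.

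Given (i), property (ii) is immediate: setting $\delta_j = (S_j, \vec{b}_j, q_j)$, the inclusion $D_j \subseteq F \subseteq \ReachSet{\bvass}$ means that the very same witness $D_j$ certifies $p_j(\vec{x}_{j-1}) \step[\delta_j] q_j(\vec{x}_j)$ directly from the definition of $\step[\delta_j]$ (the target lies in $\setN^d$ by (i)). Chaining the $k$ single steps gives $q(\vec{x}) = p_1(\vec{x}_0) \step q_1(\vec{x}_1) \step \cdots \step q_k(\vec{x}_k) = q(\vec{x} + \vec{v})$, i.e.\ $q(\vec{x}) \step[*] q(\vec{x} + \vec{v})$ as desired. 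The whole argument is, in essence, the classical cycle-iteration lemma for VASS, transferred to BVASS through the instantiation $\instantiate{\bvass}{F}$ which replays branching rules using already-reached auxiliary configurations from $F$; no new idea is required beyond the linear bound on cumulative decreases along an elementary cycle.
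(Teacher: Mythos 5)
Your proof is correct and is exactly the routine argument the paper intends: the paper states this as a \emph{Fact} without proof, and your reconstruction (realise each instantiated unary rule as one $\step$ of $\bvass$ via the witness set $D_j \subseteq F \subseteq \ReachSet{\bvass}$, and keep intermediate vectors non-negative using $k \leq |Q|$ and $\vec{a}_j \geq \vec{b}_j \geq (-m_1,\ldots,-m_d)$) is the standard cycle-iteration bound correctly transferred through the instantiation. No gap.
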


\section{Reachability Set Computation for $2$-BVASS}\label{sec:algorithm}
We present in this section an algorithm to compute the reachability set for $2$-BVASS.
More precisely,
given a $2$-BVASS $\bvass$,
our algorithm returns a finite exploration of $\bvass$ that is both sound and complete.
We start by defining what we mean by sound and complete exploration.

\begin{definition}
  \label{def:exploration}
  An \emph{exploration} of a $2$-BVASS $\bvass = (Q, \Delta)$ is
  a node-labeled acyclic graph $\explo = (N, \rightarrow, \lambda)$ such that
  \begin{enumerate}
  \item
    \label{explo:wf}
    the edge relation $\rightarrow$ is well-founded,
    i.e.,
    there is no infinite sequence $n_0, n_1, \ldots$ of nodes in $N$ such that
    $n_{i+1} \rightarrow n_i$ for all $i \in \setN$,
    and
  \item
    \label{explo:label}
    each node $n \in N$ is labeled with
    $\lambda(n) = (\vec{a}_n, q_n, \vec{z}_n, \vec{P}_n)$
    where
    $\vec{a}_n \in \setZ^2$,
    $q_n \in Q$,
    $\vec{z}_n \in \setN^2$, and
    $\vec{P}_n$ is a periodic subset of $\setN^2$.
  \end{enumerate}
\end{definition}

Intuitively,
the label $\lambda(n) = (\vec{a}_n, q_n, \vec{z}_n, \vec{P}_n)$ of a node $n$ provides,
firstly,
the displacement $\vec{a}_n$ of the transition rule used to create $n$
(this will be made clear later on and can be ignored for now),
and,
secondly,
the set of configurations $q_n (\vec{z}_n + \vec{P}_n)$ associated with the node $n$.
Recall that $\ReachSet{\bvass}$ denotes the reachability set of a $2$-BVASS $\bvass$.
Similarly,
we associate to an exploration $\explo = (N, \rightarrow, \lambda)$ of $\bvass$
the set of configurations
$\ReachSet{\explo} = \bigcup_{n \in N} q_n(\vec{z}_n + \vec{P}_n)$.
We say that $\explo$ is \emph{sound} when $\ReachSet{\explo} \subseteq \ReachSet{\bvass}$ and
that it is \emph{complete} when $\ReachSet{\explo} \supseteq \ReachSet{\bvass}$.
A node $n \in N$ is called \emph{redundant} if
there exists $s \in N$ verifying $s \xrightarrow{+} n$ and $q_n(\vec{z}_n + \vec{P}_n) \subseteq q_s(\vec{z}_s + \vec{P}_s)$.
We say that $\explo$ is \emph{non-redundant} when every redundant node is a leaf.

\begin{example}
  \label{exa:exploration}
  The node-labeled acyclic graph depicted on the right-hand side of \cref{fig:BVASS}
  is an exploration of the $2$-BVASS $\bvassexample$ depicted on the left-hand side
  (see also \cref{exa:BVASS}).
  For instance,
  the set of configurations associated with the node $n_4$ is
  $\{q(2 + 2k, 4 + 5k) \mid k \in \setN\}$.
  The first component $\vec{a}_n$ of $\lambda(n)$ is omitted in the figure
  to reduce clutter.
  As mentioned above,
  the vectors $\vec{a}_n$ can be ignored for now,
  see \cref{exa:algorithmic-exploration} for actual values.
  \lipicsEnd
\end{example}

As in Hopcroft and Pansiot's algorithm for
classical $2$-VASS~\cite{DBLP:journals/tcs/HopcroftP79},
a crucial ingredient of our algorithm is the \emph{acceleration} of cycles.
The purpose of cycle acceleration is to make the periodic sets $\vec{P}_n$ grow.
We will utilize three kinds of cycles.
Consider an exploration $\explo = (N, \rightarrow, \lambda)$ of a $2$-BVASS $\bvass = (Q, \Delta)$.
We associate to each node $n \in N$ the $2$-VASS $\mathcal{V}_n$ defined as
the instantiation $\instantiate{\bvass}{F}$ of $\bvass$ with the finite set of configurations
$F = \{q_s(\vec{z}_s) \mid s \in N, s \xrightarrow{+} n\}$.
In particular,
if $n$ is a source then
$\mathcal{V}_n = \instantiate{\bvass}{\emptyset} = (Q, \Delta')$
where $\Delta' = \{(S, \vec{a}, q) \in \Delta \mid |S| = 1\}$
is the set of unary transition rules in $\Delta$.
We introduce three finite subsets of $\setZ^2$,
namely $\vec{E}_n$, $\vec{C}_n$ and $\overline{\vec{C}}_n$,
that correspond to the three kinds of cycles mentioned above.
Let us define
$c = |Q| \max_{i \in \{1, 2\}} \max_{(S, \vec{a}, q) \in \Delta} -\vec{a}(i)$.
We call $c$ the \emph{constant of iteration} of $\bvass$. Observe that $c$ is the same constant as the one in \cref{fact:iteration-cycles-of-instantiation}.
\begin{itemize}
\item
  $\vec{E}_n$ is the set of vectors $\vec{v} \in \setZ^2$
  such that there exist an elementary cycle $\theta$ of $\mathcal{V}_n$ with displacement $\vec{v}$ and
  a node $s \in \anc{n}$ verifying
  $q_s$ is the start of $\theta$, $\vec{z}_s \geq (c, c)$ and
  $s \neq n$ implies $\vec{v} \geq (0, 0)$.
\item
  $\vec{C}_n$ is the set of vectors $\vec{v} \in \setZ^2$
  such that there exists a node $s \in \anc{n}$ verifying
  $q_s = q_n$, $\vec{v} = \vec{z}_n - \vec{z}_s$,
  $\vec{z}_n \not \geq (c, c)$ and $\vec{z}_s \not \geq (c, c)$.
\item
  $\overline{\vec{C}}_n$ is the set of vectors $\vec{v} \in \setZ^2$
  such that there exists a node $s \in \anc{n}$ verifying
  $q_s = q_n$ and $\vec{v} = \vec{z}_n - \vec{z}_s$.
\end{itemize}
Note that $\vec{C}_n \subseteq \overline{\vec{C}}_n$.
We also introduce the finite set $\vec{I}_n$ defined by
$\vec{I}_n = \vec{E}_n \cup \overline{\vec{C}}_n$ if $\sum_{m \rightarrow n} \vec{P}_m = \{(0, 0)\}$ and
$\vec{I}_n = \vec{E}_n \cup \vec{C}_n$ otherwise.
Vectors in $\vec{E}_n$, $\vec{C}_n$, and $\vec{I}_n$ are respectively called
\emph{$n$-elementary}, \emph{$n$-consecutive} and \emph{$n$-iterable}.

\begin{example}
  \label{exa:iterable-vectors}
  Let us continue \cref{exa:exploration}.
  The constant of iteration of $\bvassexample$ is $c = 4$.
  By definition,
  $\vec{I}_{n_i} = \vec{E}_{n_i} \cup \overline{\vec{C}}_{n_i}$ for $i \in \{0, 1, 2, 6\}$, and
  $\vec{I}_{n_i} = \vec{E}_{n_i} \cup \vec{C}_{n_i}$ for $i \in \{3, 4, 5\}$.
  We first discuss $n$-consecutive vectors.
  We have $\vec{C}_{n_0} = \overline{\vec{C}}_{n_0} = \{(0, 0)\}$
  since $n_0$ has no ancestor except itself.
  The set $\vec{C}_{n_1}$ is empty because
  the first component of $\vec{z}_{n_1} = (3, 4)$ is strictly below $c$,
  hence,
  $\vec{z}_{n_1} \not\geq (c, c)$.
  Similarly,
  $\vec{C}_{n_2} = \vec{C}_{n_3} = \emptyset$.
  It is readily seen that
  $\overline{\vec{C}}_{n_1} = \overline{\vec{C}}_{n_2} = \{(0, 0)\}$ and that
  $\overline{\vec{C}}_{n_3} = \{(0, 0), (-1, 0)\}$.
  We now discuss $n$-elementary vectors.
  The $2$-VASS $\mathcal{V}_{n_0} = \instantiate{\bvassexample}{\emptyset}$ contains exactly
  one elementary cycle (up to rotation),
  namely\footnote{%
    We use the notation $p \xhookrightarrow{\vec{a}} q$ to
    denote unary transition rules $(p, \vec{a}, q)$ of a $d$-VASS.
  }
  $\theta_0 = p \xhookrightarrow{(-1,0)} q \xhookrightarrow{(0,0)} s \xhookrightarrow{(0,0)} p$.
  It follows that $\vec{E}_{n_0} = \{(-1, 0)\}$.
  In addition to the cycle $\theta_0$,
  the $2$-VASS $\mathcal{V}_{n_1} = \instantiate{\bvassexample}{\{p(4, 4)\}}$
  also contains the elementary cycle
  $\theta_1 = q \xhookrightarrow{(4,4)} r \xhookrightarrow{(-1,1)} q$.
  Still,
  the set $\vec{E}_{n_1}$ is empty,
  because none of these two cycles contributes to $\vec{E}_{n_1}$.
  Indeed,
  even though the elementary cycle $\theta_0$ contains the states $q_{n_0}$ and $q_{n_1}$,
  its displacement $\vec{v}_0 = (-1, 0)$ is not in $\vec{E}_{n_1}$ because
  $\vec{z}_{n_1} \not\geq (c, c)$ and $\vec{v}_0 \not\geq (0, 0)$.
  Analogously,
  the displacement of $\theta_1$ is not in $\vec{E}_{n_1}$ because $\vec{z}_{n_1} \not\geq (c, c)$.
  The elementary cycles of $\mathcal{V}_{n_2}$ are $\theta_0$, $\theta_1$ and $\theta_2$,
  where
  $\theta_2 = p \xhookrightarrow{(3,4)} r \xhookrightarrow{(-1,1)} q \xhookrightarrow{(0,0)} s \xhookrightarrow{(0,0)} p$.
  The displacement $\vec{v}_2 = (2, 5)$ of $\theta_2$ is in $\vec{E}_{n_2}$
  since $\theta_2$ contains the state $q_{n_0}$, $\vec{z}_{n_0} \geq (c, c)$ and $\vec{v}_2 \geq (0, 0)$.
  We get that $\vec{E}_{n_2} = \{(2, 5)\}$.
  Last,
  we observe that $\mathcal{V}_{n_2} = \mathcal{V}_{n_3}$ since the state $q_{n_2}$ is not part of any branching transition.
  So $\mathcal{V}_{n_3}$ has the same elementary cycles as $\mathcal{V}_{n_2}$, and
  we get that $\vec{E}_{n_3} = \vec{E}_{n_2} = \{(2, 5)\}$.
  Indeed,
  even though the elementary cycle $\theta_0$ contains the state $q_{n_3}$,
  its displacement $\vec{v}_0 = (-1, 0)$ is not in $\vec{E}_{n_3}$ because
  $\vec{z}_{n_3} \not\geq (c, c)$ and $\vec{v}_0 \not\geq (0, 0)$.
  \lipicsEnd
\end{example}

\smallskip

Given a set $\vec{I}\subseteq \setZ^2$ and a periodic set $\vec{P}\subseteq\setN^2$, we introduce the periodic set $\accel{\vec{I}}(\vec{P})$ defined as the set of vectors $\vec{p}+\vec{v}_1+\cdots+\vec{v}_k$ where $\vec{p}\in\vec{P}$, $k\in\setN$, and $\vec{v}_1,\ldots,\vec{v}_k$ are vectors in $\vec{I}$ such that $\vec{p}+\vec{v}_1+\cdots+\vec{v}_\ell\geq (0, 0)$ for every $\ell\in\{1,\ldots,k\}$. 
\begin{restatable}{lemma}{lemAccel}\label{lem:accel-finitely-generated-dim-2-bis}
   For every finite sets $\vec{G}\subseteq \setN^2$ and $\vec{I}\subseteq \setZ^2$, we can effectively compute a finite set $\vec{H}\subseteq \setN^2$ such that $\accel{\vec{I}}(\per{\vec{G}})=\per{\vec{H}}$.
\end{restatable}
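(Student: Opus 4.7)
The plan is to first verify that $\accel{\vec{I}}(\per{\vec{G}})$ is a periodic subset of $\setN^2$, and then appeal to \cref{cor:HPcone} to reduce the problem to showing that the cone $\vec{K}=\con{\accel{\vec{I}}(\per{\vec{G}})}$ is finitely-generated with effectively computable generators. Periodicity is straightforward: $\vec{0}$ is obtained with $\vec{p}=\vec{0}$ and $k=0$, and closure under sum follows by concatenating two witnessing sequences from the starting vector $\vec{p}+\vec{p}'$, as all partial sums stay $\geq\vec{0}$ (after the first block we are at $\vec{p}'+(\vec{p}+\vec{v}_1+\cdots+\vec{v}_k)\geq\vec{0}$, and subsequent partial sums stay $\geq \vec{p}+\vec{v}_1+\cdots+\vec{v}_k\geq\vec{0}$).

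For the cone $\vec{K}$, I would partition $\vec{I}$ into the always-usable vectors $\vec{I}^{+}=\vec{I}\cap\setN^2$ and the remainder $\vec{I}^{-}=\vec{I}\setminus\setN^2$. The inclusions $\per{\vec{G}\cup\vec{I}^{+}}\subseteq\accel{\vec{I}}(\per{\vec{G}})\subseteq\per{\vec{G}\cup\vec{I}}\cap\setN^2$ already bracket $\vec{K}$ between $\con{\vec{G}\cup\vec{I}^{+}}$ and $\con{\vec{G}\cup\vec{I}}\cap\setQ^2_{\geq 0}$. To pin down $\vec{K}$, I would then determine, for each $\vec{v}\in\vec{I}^{-}$, whether $\vec{v}$ can be used unboundedly often: this requires reaching vectors already in $\accel{\vec{I}}(\per{\vec{G}})$ that are large enough in the coordinate where $\vec{v}$ is negative. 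In dimension 2 this boils down to a finite case analysis on the shape of $\con{\vec{G}\cup\vec{I}^{+}}$ (trivial, a single rational ray, or a two-dimensional subcone of $\setQ^2_{\geq 0}$) and on the slopes of the vectors in $\vec{I}^{-}$, producing a finite set $\vec{A}\subseteq\setN^2$ with $\vec{K}=\con{\vec{A}}$.

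Once $\vec{A}$ is at hand, I would lift it to a generating set $\vec{H}$ for the periodic set by the standard trick: for each $\vec{a}\in\vec{A}$, pick a natural number $n_{\vec{a}}$ large enough so that $n_{\vec{a}}\vec{a}$ admits an explicit witness $\vec{p}+\vec{v}_1+\cdots+\vec{v}_k\in\accel{\vec{I}}(\per{\vec{G}})$ satisfying the partial-sum constraints; then augment $\{n_{\vec{a}}\vec{a}\mid\vec{a}\in\vec{A}\}\cup\vec{G}\cup\vec{I}^{+}$ with a bounded set of ``offsets'' that fills the gap between $\per{\vec{H}}$ and $\accel{\vec{I}}(\per{\vec{G}})$. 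That this completion can be chosen finite follows from \cref{cor:HPcone} applied in reverse: the cone of the completed set equals $\vec{K}$, which is finitely generated, and any $\setZ^2$-periodic set whose cone is finitely generated is itself finitely generated.

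The main obstacle is the second step, namely characterizing $\vec{K}$ effectively. The example $\vec{G}=\{\vec{0}\}$, $\vec{I}=\{(-1,2),(1,-1)\}$ illustrates the pitfall: $\con{\vec{G}\cup\vec{I}}\cap\setQ^2_{\geq 0}=\setQ^2_{\geq 0}$, while $\accel{\vec{I}}(\per{\vec{G}})=\{\vec{0}\}$, so $\vec{K}$ is strictly smaller than the naive upper bound because we cannot bootstrap the acceleration from an all-zero start. Detecting precisely which vectors of $\vec{I}^{-}$ can ``fire'' using combinations of $\vec{G}$ and $\vec{I}^{+}$, and which further vectors of $\vec{I}^{-}$ then become usable in cascade, is the delicate point, and is where being in dimension 2 is essential: it bounds the number of extreme rays of $\vec{K}$ by two and allows a terminating case analysis on slopes.
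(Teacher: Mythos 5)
Your overall skeleton is the same as the paper's: check periodicity, reduce via \cref{cor:HPcone} to showing that the cone $\vec{K}=\con{\accel{\vec{I}}(\per{\vec{G}})}$ is finitely generated, then extract a computable spanning set. But there is a genuine gap at exactly the step you flag as ``the main obstacle.'' You correctly bracket $\vec{K}$ between $\con{\vec{G}\cup\vec{I}^{+}}$ and $\con{\vec{G}\cup\vec{I}}\cap\setQ_{\geq 0}^2$, but you never close that gap: you announce a ``finite case analysis on slopes'' that would detect which vectors of $\vec{I}^{-}$ can fire ``and which further vectors of $\vec{I}^{-}$ then become usable in cascade,'' without carrying it out. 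The cascade picture is also not the right one. The paper proves (\cref{lem:cone-accel}) a clean dichotomy for $\vec{A}=\vec{G}\cup\vec{I}$: either $\vec{A}$ is disjoint from $\setQ_{\geq 0}^2\setminus\{(0,0)\}$, in which case $\vec{K}=\{(0,0)\}$ (your example $\vec{G}=\{\vec{0}\}$, $\vec{I}=\{(-1,2),(1,-1)\}$ is of this kind), or $\vec{A}$ contains a single nonzero vector of $\setN^2$, in which case your upper bound is attained exactly, $\vec{K}=\con{\vec{A}}\cap\setQ_{\geq 0}^2$, with no further enabling conditions on $\vec{I}^{-}$. The dimension-$2$ ingredient you are missing is that every $\vec{a}\in\vec{A}$ stabilizes $\vec{K}$ and, in dimension $2$, the stabilizer of a cone different from $\{(0,0)\}$ is itself a cone (\cref{lem:useless-stabilizers}); hence it contains $\con{\vec{A}}$, and $\con{\vec{A}}\cap\setQ_{\geq 0}^2\subseteq\vec{K}$ follows from \cref{lem:stabcone}. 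This is precisely what fails in dimension $3$ (see the remark after \cref{lem:cone-accel}). Without this lemma, or a completed analysis playing its role, you have not shown that $\vec{K}$ is finitely generated, let alone computable.

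A secondary weakness is effectiveness in your last step: ``pick $n_{\vec{a}}$ large enough'' and ``augment with a bounded set of offsets'' asserts that a finite completion exists but gives no way to compute it or to certify that the computation is finished. The paper instead iterates $\vec{X}_{n+1}=\vec{X}_n\cup(\vec{X}_n+\vec{Z})\cap\setN^2$ with $\vec{Z}=\vec{G}\cup\vec{I}$ and stops when the decidable inclusion $(\per{\vec{X}_n}+\vec{Z})\cap\setN^2\subseteq\per{\vec{X}_n}$ holds; finite generation guarantees termination and the test certifies correctness (\cref{cor:accel-finitely-generated-dim-2}). Your construction needs an analogous certifiable stopping criterion.
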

\begin{proof}[Proof sketch.]
  First of all, notice that \cref{thm:hp79} is not sufficient for proving that result since there exist semilinear periodic sets, like $\{(0,0)\}\cup ((1,1)+\setN^2)$ that are not finitely-generated. From \cref{cor:HPcone}, we deduce that $\accel{\vec{I}}(\per{\vec{G}})$ is finitely-generated periodic set if $\con{\accel{\vec{I}}(\per{\vec{G}})}$ is a finitely-generated cone. We observe that this cone is spanned by $\vec{G}\cup (\vec{I}\cap\setN^2)\cup\vec{U}$ where $\vec{U}$ is a set of axis, i.e. a subset of $\{(1,0),(0,1)\}$. It follows that there exists a finite set $\vec{H}\subseteq \setN^2$ such that $\accel{\vec{I}}(\per{\vec{G}})=\per{\vec{H}}$. Finally, with a step-by-step algorithm computing increasing finite subsets of $\accel{\vec{I}}(\per{\vec{G}})$ we eventually reach a set $\vec{H}$ satisfying the lemma.
\end{proof}

\begin{algorithm}[t]
  \DontPrintSemicolon
  \Input{%
    A $2$-BVASS $\bvass = (Q, \Delta)$.
  }
  \Output{%
    A sound and complete finite exploration of $\bvass$.
  }
  $(N, {\rightarrow}, \lambda, \Redundant, \WorkList) := (\emptyset, \emptyset, \emptyset, \emptyset, \emptyset)$\;
  \label{line:empty-exploration}
  \ForEach{$(S, \vec{a}, q) \in \Delta$ with $S = \emptyset$ and $\vec{a} \geq (0, 0)$}{
    \label{line:init-start}
    create a new node $n$ (with $n \not\in N$)\;
    $N := N \cup \{n\}$\;
    $\lambda(n) := (\vec{a}, q, \vec{a}, \{(0, 0)\})$\;
    \label{line:lambda-init}
    $\WorkList := \WorkList \cup \{n\}$\;
    \label{line:init-end}
  }
  \While{$\WorkList \neq \emptyset$}{
    \label{line:while-start}
    let $n$ be a node in $\WorkList$\;
    $\WorkList := \WorkList \setminus \{n\}$\;
    $\vec{P}_n := \accel{\vec{I}_n}(\vec{P}_n)$\;
    \label{line:acceleration}
    \If{there exists $s \in N$ verifying $s \xrightarrow{+} n$ and $q_n(\vec{z}_n + \vec{P}_n) \subseteq q_s(\vec{z}_s + \vec{P}_s)$}{
      \label{line:cover-check}
      $\Redundant := \Redundant \cup \{n\}$\;
      \label{line:redundant-node}
    }
    \Else{
      \ForEach{$M \subseteq N$ with $n \in M$ and $M \cap (\Redundant \cup \WorkList) = \emptyset$}{
        \label{line:expansion-start}
        \ForEach{$(S, \vec{a}, q) \in \Delta$ with $S = \multiset{q_m \mid m \in M}$}{
          $(\vec{z}', \vec{P}') := (\sum_{m \in M} \vec{z}_m, \sum_{m \in M} \vec{P}_m)$\;
          \label{line:z'-P'}
          let $\vec{B}$ be a finite subset of $\setN^2$ such that $(\vec{B} + \vec{P}') = (\vec{a} + \vec{z}' + \vec{P}') \cap \setN^2$\;
          \label{line:basis-computation}
          \ForEach{$\vec{b} \in \vec{B}$}{
            \label{line:children-start}
            create a new node $n'$ (with $n' \not\in N$)\;
            $(N, {\rightarrow}) := (N \cup \{n'\}, {\rightarrow} \cup \{(m, n') \mid m \in M\})$\;
            \label{line:child-add-to-N}
            $\lambda(n') := (\vec{a}, q, \vec{b}, \vec{P}')$\;
            $\WorkList := \WorkList \cup \{n'\}$\;
            \label{line:child-add-to-W}
            \label{line:children-end}
          }
        }
      }
      \label{line:expansion-end}
    }
    \label{line:while-end}
  }
  \Return{$(N, {\rightarrow}, \lambda)$}\;
  \label{line:return}
  \caption{$\mathtt{Explore}(\bvass)$}
  \label{algo:main}
\end{algorithm}

\smallskip

Our algorithm, dubbed $\mathtt{Explore}$, is defined in \cref{algo:main}.
We use an abstract pseudocode to simplify the presentation.
This raises implementability issues that are addressed in \cref{rem:algo-implementability-issues}.
Given a $2$-BVASS $\bvass = (Q, \Delta)$ as input,
$\mathtt{Explore}(\bvass)$ iteratively computes an exploration of $\bvass$ and then returns it.
This exploration is maintained in the variables $N$, $\rightarrow$ and $\lambda$, and
is initially empty (see line~\ref{line:empty-exploration}).
As in \cref{def:exploration},
we let $q_n$, $\vec{z}_n$ and $\vec{P}_n$ denote
the second, third and fourth components of $\lambda(n)$.
The set of redundant nodes of the exploration is tracked in the variable $\Redundant$.
The set of unprocessed nodes, called the \emph{worklist}, is maintained in the variable $\WorkList$.
Both variables $\Redundant$ and $\WorkList$ remain disjoint subsets of $N$ during the execution of the algorithm.
For each initial configuration $q(\vec{a})$ of $\bvass$,
a new node is created and put in the worklist
(see lines~\ref{line:init-start}--\ref{line:init-end}).
After this initialization phase,
$\mathtt{Explore}(\bvass)$
repeatedly selects a node from the worklist and processes it,
as long as the worklist is non-empty
(see lines~\ref{line:while-start}--\ref{line:while-end}).
The processing of a node $n$ consists in four steps.
First,
the node $n$ is removed from the worklist
(so $n$ is considered processed afterwards).
Second,
the periodic set $\vec{P}_n$ is enlarged using the set $\vec{I}_n$ of $n$-iterable vectors
(see line~\ref{line:acceleration}).
This is the cycle \emph{acceleration} step.
Note that the set $\vec{I}_n$ is finite and
implicitly depends on the constant of iteration of $\bvass$ and on
the current exploration $(N, \rightarrow, \lambda)$ of $\bvass$.
The assignment at line~\ref{line:acceleration} actually means that
the label $\lambda(n)$ of the node $n$ is modified
(in fact,
only the fourth component of the label is modified).
Third,
the algorithm tests whether the node $n$ is ``covered'' by one of its ancestors
(see line~\ref{line:cover-check}).
If that is the case then $n$ is added to the set $\Redundant$ of redundant nodes and
the processing of $n$ stops.
Otherwise,
as a fourth step,
the node $n$ is expanded
(see lines~\ref{line:expansion-start}--\ref{line:expansion-end}),
meaning that
for each set of processed and non-redundant nodes $M$ containing $n$ and
for each transition rule $(S, \vec{a}, q)$ that applies to $M$,
finitely many children of $n$ are created and put in the worklist.
At first glance,
one might want to create a single child labeled with
$(\vec{a}, q, \vec{a} + \vec{z}', \vec{P}')$.
This would be fine if $\vec{a} + \vec{z}' \geq (0, 0)$,
however that is not the case in general.
This is the reason why we need the finite set $\vec{B}$ at line~\ref{line:basis-computation}
and the \textbf{foreach}-loop at lines~\ref{line:children-start}--\ref{line:children-end}.
The existence of such a finite set is explained in \cref{rem:algo-implementability-issues}.
The observant reader will notice that the modification of the variables $N$ and $\WorkList$
at lines~\ref{line:child-add-to-N} and~\ref{line:child-add-to-W}
has no impact on the \textbf{foreach}-loop iteration at line~\ref{line:expansion-start},
since $N \setminus \WorkList$ remains constant.
When the worklist becomes empty,
the constructed exploration is returned at line~\ref{line:return}.
The reachability set of $\bvass$ is then easily obtained from this exploration,
provided that it is sound and complete.

\begin{remark}
  \label{rem:algo-implementability-issues}
  The abstract pseudocode used in \cref{algo:main} raises some implementability issues.
  The first issue is that
  we liberally use periodic sets in the pseudocode,
  but these periodic sets need to admit a finite and computable representation.
  To address this issue,
  we observe that each periodic set defined in \cref{algo:main} is finitely-generated and
  admits a computable finite spanning set.
  The only line where this property is non-trivial is line~\ref{line:acceleration}, but \cref{lem:accel-finitely-generated-dim-2-bis} provides the result.
  The second issue is the existence and computation of the finite set $\vec{B}$ at line~\ref{line:basis-computation}.
  To address this issue,
  we recall the following well-known fact
  (see for instance \cite[Lemma~1.1]{DBLP:journals/tcs/HopcroftP79}).
  Given a vector $\vec{v} \in \setZ^d$ and
  a finite subset $\vec{A} = \{\vec{a}_1, \ldots, \vec{a}_k\}$ of $\setN^d$,
  the set $(\vec{v} + \per{\vec{A}}) \cap \setN^d$ is equal to
  $(\vec{B} + \per{\vec{A}})$ where
  $\vec{B} \subseteq \setN^d$ is the finite set of vectors
  $\vec{v} + \alpha_1\vec{a}_1 + \cdots + \alpha_k\vec{a}_k$ such that
  $(\alpha_1, \ldots, \alpha_k)$ is a minimal vector in $\setN^k$ satisfying
  $\vec{v} + \alpha_1\vec{a}_1 + \cdots + \alpha_k\vec{a}_k \geq \vec{0}$.
\end{remark}

\begin{example}
  \label{exa:algo-execution}
  To illustrate \cref{algo:main},
  we apply it on the $2$-BVASS $\bvassexample$ from \cref{exa:BVASS}.
  The resulting exploration is depicted in~\cref{fig:BVASS}.
  There is only one initial configuration,
  so the exploration has only one source $n_0$.
  The nodes are added to ($+$) and removed from ($-$) the worklist in the order
  $+n_0$, $-n_0$, $+n_1$, $-n_1$, $+n_2$, $+n_6$, $-n_2$, $+n_3$, $-n_3$, $+n_4$, $+n_5$.
  So the nodes $n_4, n_5$ and $n_6$ are still in the worklist.
  The labels $q_n$ and $\vec{z}_n$ are straightforward.
  For the periodic sets $\vec{P}_n$,
  we first recall from \cref{exa:iterable-vectors} that
  $\vec{I}_{n_0} = \{(0, 0), (-1, 0)\}$,
  $\vec{I}_{n_1} = \{(0, 0)\}$,
  $\vec{I}_{n_2} = \{(0, 0), (2, 5)\}$ and
  $\vec{I}_{n_3} = \{(2, 5)\}$.
  The periodic set $\vec{P}_{n_0}$ is $\accel{\vec{I}_{n_0}}(\{(0, 0)\}) = \{(0, 0)\}$.
  The node $n_1$ also has $\{(0, 0)\}$ as periodic set since
  $\vec{I}_{n_1} = \{(0, 0)\}$.
  The periodic set $\vec{P}_{n_2}$ is $\accel{\vec{I}_{n_2}}(\{(0, 0)\}) = \per{\{(2,5)\}}$.
  Similarly,
  $\vec{P}_{n_3} = \accel{\vec{I}_{n_3}}(\per{\{(2,5)\}}) = \per{\{(2,5)\}}$.
  \lipicsEnd
\end{example}

\section{Soundness and Completeness of Algorithmic Explorations}\label{sec:correctness}
This section is devoted to the partial correctness of our algorithm $\mathtt{Explore}$ (see \cref{algo:main}).
Its termination is much more involved and will be the subject of the next sections.
We first refine the notion of exploration to account for the behavior of our algorithm.

\begin{definition}
  \label{def:algorithmic-exploration}
  An exploration $\explo = (N, \rightarrow, \lambda)$ of a $2$-BVASS $\bvass = (Q, \Delta)$
  is \emph{algorithmic} if it satisfies,
  for every node $n \in N$,
  the three following conditions:
  \begin{enumerate}
  \item
    \label{explo:v+q}
    the multiset $S = \multiset{q_m \mid m \rightarrow n}$ is a set and verifies
    $(S, \vec{a}_n, q_n) \in \Delta$,
  \item
    \label{explo:z}
    the vector $\vec{z}_n$ is in $\vec{a}_n + \sum_{m \rightarrow n} (\vec{z}_m + \vec{P}_m)$, and
  \item
    \label{explo:P}
    the periodic set $\vec{P}_n$ verifies
    $\vec{P}_n = \accel{\vec{I}_n}(\sum_{m \rightarrow n} \vec{P}_m)$.
  \end{enumerate}
\end{definition}

Intuitively,
Conditions~\ref{explo:v+q} and~\ref{explo:z} ensure that the exploration
conforms to the semantics of $2$-BVASS.
As hinted before,
the vector $\vec{a}_n$ is the displacement of the transition rule leading to $n$
(see Condition~\ref{explo:v+q}).
Notice these two conditions entail in particular that
$q_n(\vec{z}_n)$ is an initial configuration for every source $n$.
Condition~\ref{explo:P} corresponds to the previously-mentioned cycle acceleration step
(see line~\ref{line:acceleration} of \cref{algo:main}).

\begin{example}
  \label{exa:algorithmic-exploration}
  Let us get back to the exploration of \cref{exa:exploration} and
  give the actual values of the vectors $\vec{a}_n$.
  We take
  $\vec{a}_{n_0} = (4, 4)$,
  $\vec{a}_{n_1} = \vec{a}_{n_4} = (-1, 0)$ and
  $\vec{a}_{n_2} = \vec{a}_{n_3} = \vec{a}_{n_5} = \vec{a}_{n_6} = (0, 0)$.
  The restriction to $\{n_0, n_1, n_2, n_3\}$ of the resulting exploration is algorithmic.
  \lipicsEnd
\end{example}

\begin{remark}
  \label{rem:finite-anc}
  For every algorithmic exploration $\explo = (N, \rightarrow, \lambda)$ and every node $n \in N$,
  the set $\anc{n}$ is finite.
  Indeed,
  by Condition~\ref{explo:v+q},
  every node $n \in N$ has finite in-degree (i.e., the set of nodes $m \in N$ such that $m \rightarrow n$ is finite).
  As $\rightarrow$ is well-founded by Condition~\ref{explo:wf} of \cref{def:exploration},
  it follows from König's Lemma that $\anc{n}$ is finite for every $n \in N$.
\end{remark}

In the rest of this section,
we show,
firstly,
that every algorithmic exploration is sound,
and secondly,
that our algorithm constructs explorations that are algorithmic and complete.

\subsection{Soundness of algorithmic explorations}
\label{subsec:soundness-algorithmic-explorations}

The main difficulty to establish the partial correctness of our algorithm $\mathtt{Explore}$
comes from the cycle acceleration step
(see line~\ref{line:acceleration} of \cref{algo:main}),
which translates to Condition~\ref{explo:P} of \cref{def:algorithmic-exploration}.
Contrary to usual cycle acceleration techniques,
our cycle acceleration step is ``retroactive'' since
the set $\vec{I}_n$ used to accelerate a node $n$
accounts for elementary cycles of $\mathcal{V}_n$ that
apply to an ancestor $s \xrightarrow{+} n$.
Thus, to show that a given algorithmic exploration $\explo$ is sound,
we transform $\explo$ into an alternative exploration $\explobis$
whose periodic sets $\vec{Q}_n$ are closed
in the sense that each $\vec{Q}_n$ already accounts for all potential cycles applicable to $n$, and
we show that $\explobis$ is sound.
Let us make these ideas more precise.

\smallskip

Consider a $2$-BVASS $\bvass = (Q, \Delta)$.
For every configuration $q(\vec{x})$ of $\bvass$,
we define the function $\clo{q}{\vec{x}} : \pow{\setN^2} \rightarrow \pow{\setN^2}$ by
$
\clo{q}{\vec{x}}(\vec{P})
=
\{\vec{w} \in \setN^2 \mid \exists \vec{u} \in \vec{P} : q(\vec{x} + \vec{u}) \step[*] q(\vec{x} + \vec{w})\}
$.
Observe that $\clo{q}{\vec{x}}$ is an upper closure operator\footnote{%
  An \emph{upper closure operator} on a partially-ordered set $(S, \leq)$ is
  any function $f : S \rightarrow S$ that is
  $\leq$-nondecreasing ($x \leq y$ implies $f(x) \leq f(y)$),
  extensive ($x \leq f(x)$), and
  idempotent ($f \circ f = f$).
} on the partially-ordered set $(\pow{\setN^2}, \subseteq)$.
In fact,
for every subset $\vec{P} \subseteq \setN^2$,
the set $\clo{q}{\vec{x}}(\vec{P})$ is
the $\subseteq$-greatest subset $\vec{Q} \subseteq \setN^2$ such that
$q(\vec{x} + \vec{Q}) \subseteq \relimg{\step[*]}{q(\vec{x} + \vec{P})}$.
Note also that $\clo{q}{\vec{x}}$ preserves periodicity,
meaning that
$\clo{q}{\vec{x}}(\vec{P})$ is periodic for every periodic subset $\vec{P} \subseteq \setN^2$.
The following technical lemma will allow us to relate
$\accel{\vec{I}_n}$ and $\clo{q_n}{\vec{z}_n}$.

\begin{restatable}{lemma}{lemCorrectnessAccelerationIterableVectors}
  \label{lem:correctness-of-acceleration-via-iterable-vectors}
  For every configuration $q(\vec{x})$ of $\bvass$ and periodic subset $\vec{P}$ of $\setN^2$,
  it holds that
  $\accel{\vec{J}}(\vec{P}) = \clo{q}{\vec{x}}(\vec{P})$ where
  $\vec{J}$ is the set of vectors $\vec{v} \in \setZ^2$ such that
  there exists $\vec{y} \in \setN^2$ and $\vec{u} \in \vec{P}$ verifying
  $(\vec{v} = \vec{y} - \vec{x} \text{ and } q(\vec{x} + \vec{u}) \step[*] q(\vec{y}))$
  or
  $(\vec{v} = \vec{x} - \vec{y} \text{ and } q(\vec{y} + \vec{u}) \step[*] q(\vec{x}))$.
\end{restatable}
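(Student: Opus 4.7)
The plan is to prove the equality by two inclusions. The inclusion $\clo{q}{\vec{x}}(\vec{P}) \subseteq \accel{\vec{J}}(\vec{P})$ should be almost immediate from the definitions: if $\vec{w} \in \clo{q}{\vec{x}}(\vec{P})$ then some $\vec{u} \in \vec{P}$ witnesses $q(\vec{x}+\vec{u}) \step[*] q(\vec{x}+\vec{w})$. Setting $\vec{y} = \vec{x}+\vec{w}$, which lies in $\setN^2$ since $\vec{x}, \vec{w} \in \setN^2$, the first disjunct in the definition of $\vec{J}$ directly places $\vec{w} = \vec{y} - \vec{x}$ in $\vec{J}$. Writing $\vec{w} = \vec{0} + \vec{w}$ with $\vec{0} \in \vec{P}$ then expresses $\vec{w}$ as a legal element of $\accel{\vec{J}}(\vec{P})$, the single prefix sum $\vec{w}$ being non-negative.

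For the reverse inclusion $\accel{\vec{J}}(\vec{P}) \subseteq \clo{q}{\vec{x}}(\vec{P})$, I would induct on the number $k$ of iterable vectors in a decomposition $\vec{w} = \vec{p} + \vec{v}_1 + \cdots + \vec{v}_k$ (with $\vec{p} \in \vec{P}$, $\vec{v}_i \in \vec{J}$, and all prefix sums non-negative). The base case $k = 0$ is handled by reflexivity of $\step[*]$ with $\vec{u} = \vec{p}$. For the inductive step, let $\vec{w}_{k-1}$ denote the prefix sum of length $k-1$. The induction hypothesis supplies $\vec{u}' \in \vec{P}$ with $q(\vec{x}+\vec{u}') \step[*] q(\vec{x}+\vec{w}_{k-1})$, while unfolding $\vec{v}_k \in \vec{J}$ yields some $\vec{u}_k \in \vec{P}$ and a reachability of one of two shapes: $q(\vec{x}+\vec{u}_k) \step[*] q(\vec{x}+\vec{v}_k)$ in the first disjunct, or $q(\vec{x}-\vec{v}_k+\vec{u}_k) \step[*] q(\vec{x})$ in the second.

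I would combine these two reachabilities using the diagonality of $\step[*]$: shift the induction hypothesis by $\vec{u}_k \in \setN^2$, and shift the $\vec{v}_k$-witness by $\vec{w}_{k-1}$ in the first disjunct or by $\vec{w}$ in the second. Concatenation yields $q(\vec{x}+\vec{u}'+\vec{u}_k) \step[*] q(\vec{x}+\vec{w}_{k-1}+\vec{u}_k) \step[*] q(\vec{x}+\vec{w})$, and $\vec{u}'+\vec{u}_k \in \vec{P}$ since $\vec{P}$ is periodic. This exhibits $\vec{w} \in \clo{q}{\vec{x}}(\vec{P})$.

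The main point to watch is the ``backward'' disjunct in the definition of $\vec{J}$, where $\vec{v}_k$ may have negative components: one must shift the second reachability by $\vec{w}$ rather than $\vec{w}_{k-1}$ so that both sides stay in $\setN^2$, and this shift is legal precisely because the prefix-sum constraint built into $\accel{\vec{J}}$ forces $\vec{w} \geq (0,0)$. I expect this disjunct to be the only real obstacle; the rest is routine bookkeeping with diagonality and the closure $\vec{P}+\vec{P} \subseteq \vec{P}$.
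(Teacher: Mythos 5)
Your proposal is correct and follows essentially the same route as the paper: the easy inclusion via the first disjunct with $\vec{p}=\vec{0}$, and the converse by showing $\clo{q}{\vec{x}}(\vec{P})$ contains $\vec{P}$ and is closed under adding a $\vec{J}$-vector that preserves non-negativity (your induction on $k$ is just this closure property unrolled), using diagonality of $\step[*]$ and periodicity of $\vec{P}$ to concatenate the two witnessing runs. You also correctly isolate the one delicate point, namely that in the ``backward'' disjunct the second run must be shifted by the full non-negative sum rather than by the previous prefix sum, which is exactly how the paper handles the case $\vec{a} \in \vec{T}$.
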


\begin{restatable}{lemma}{lemSoundnessAlgorithmicExplorations}
  \label{lem:soundness-of-algorithmic-explorations}
  Every algorithmic exploration of a $2$-BVASS is sound.
\end{restatable}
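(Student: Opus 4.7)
The proof follows the strategy sketched in \cref{subsec:soundness-algorithmic-explorations}. I would introduce a companion exploration $\explobis$ sharing the nodes, edges, and labels $\vec{a}_n, q_n, \vec{z}_n$ of $\explo$, but replacing each periodic set $\vec{P}_n$ by the \emph{reachability closure}
\[
  \vec{Q}_n \;=\; \clo{q_n}{\vec{z}_n}\!\Bigl(\sum_{m \rightarrow n} \vec{Q}_m\Bigr),
\]
defined by well-founded recursion along $\rightarrow$ (well-posed because each $\anc{n}$ is finite by \cref{rem:finite-anc}). Two separate well-founded inductions on $\rightarrow$ then establish: (i) $q_n(\vec{z}_n + \vec{Q}_n) \subseteq \ReachSet{\bvass}$ for every $n$, i.e., $\explobis$ is sound; and (ii) $\vec{P}_n \subseteq \vec{Q}_n$ for every $n$. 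Together these yield soundness of $\explo$.

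The first induction is routine: in the inductive step, Conditions~\ref{explo:v+q} and~\ref{explo:z} of \cref{def:algorithmic-exploration} combined with periodicity lift reachable configurations from predecessors through the transition rule at $n$ to show $q_n(\vec{z}_n + \sum_{m \rightarrow n} \vec{Q}_m) \subseteq \ReachSet{\bvass}$, and the defining property of $\clo{q_n}{\vec{z}_n}$ extends this to $q_n(\vec{z}_n + \vec{Q}_n) \subseteq \ReachSet{\bvass}$ because $\ReachSet{\bvass}$ is $\step[*]$-closed.

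For the second induction, \cref{lem:correctness-of-acceleration-via-iterable-vectors} identifies $\vec{Q}_n$ with $\accel{\vec{J}}(\sum_{m \rightarrow n} \vec{Q}_m)$ for an explicit set $\vec{J}$ of cycle-like value changes at $q_n$. By the IH $\sum \vec{P}_m \subseteq \sum \vec{Q}_m$ and monotonicity of $\accel{\cdot}(\cdot)$ in both arguments, it suffices to prove $\vec{I}_n \subseteq \vec{J}$. For $\vec{v} \in \vec{E}_n$ witnessed by $s = n$ with $\vec{z}_n \geq (c, c)$, \cref{fact:iteration-cycles-of-instantiation} applied in $\mathcal{V}_n$---whose step relation is included in $\bvass$'s since $\{q_s(\vec{z}_s) : s \xrightarrow{+} n\} \subseteq \ReachSet{\bvass}$ by the outer IH---yields $q_n(\vec{z}_n) \step[*] q_n(\vec{z}_n + \vec{v})$, placing $\vec{v} \in \vec{J}$. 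For $\vec{v} \in \vec{E}_n$ witnessed by $s \neq n$ (so $\vec{v} \geq \vec{0}$), the same Fact gives $q_s(\vec{z}_s) \step[*] q_s(\vec{z}_s + \vec{v})$ and hence $\vec{v} \in \vec{Q}_s$. A secondary induction on the distance from $s$ to $n$, using that $\clo{}$ always contains its input, then propagates $\vec{v} \in \vec{Q}_m$ along some path from $s$ to $n$, so $\vec{v} \in \sum_{m \rightarrow n} \vec{Q}_m \subseteq \vec{J}$ (via zero steps in the definition of $\vec{J}$). Consecutive vectors $\vec{v} = \vec{z}_n - \vec{z}_s \in \overline{\vec{C}}_n$ (so $q_s = q_n$) are placed in $\vec{J}$ via its second clause: the IH-provided proof tree of $q_n(\vec{z}_n)$ linearizes into a $\bvass$-step sequence $q_n(\vec{z}_s + \vec{u}_s) \step[*] q_n(\vec{z}_n)$ for some $\vec{u}_s \in \vec{Q}_s$, and the same propagation argument moves $\vec{u}_s$ into $\sum_{m \rightarrow n} \vec{Q}_m$. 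Since $\vec{C}_n \subseteq \overline{\vec{C}}_n$, both possible forms of $\vec{I}_n$ are covered.

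The main obstacle is the retroactive elementary case $\vec{v} \in \vec{E}_n$ with witness $s \neq n$: the cycle at $q_s$ lives in $\mathcal{V}_n$ and may exploit configurations discovered below $s$, so $\vec{v}$ generally does not belong to $\vec{P}_s$ and the obvious IH on $\vec{P}_s$ fails. Replacing $\vec{P}_s$ by the larger closure $\vec{Q}_s$ and exploiting the propagation $\vec{Q}_s \subseteq \vec{Q}_m$ along paths is precisely what captures this retroactive information and makes the inclusion $\vec{I}_n \subseteq \vec{J}$ go through.
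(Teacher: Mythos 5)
Your proposal follows the paper's proof essentially verbatim: the same closure family $\vec{Q}_n = \clo{q_n}{\vec{z}_n}(\sum_{m \rightarrow n} \vec{Q}_m)$, the same appeal to \cref{lem:correctness-of-acceleration-via-iterable-vectors}, and the same case analysis placing $\vec{E}_n$ (via \cref{fact:iteration-cycles-of-instantiation} and the monotonicity $\vec{Q}_s \subseteq \vec{Q}_n$ along paths) and $\overline{\vec{C}}_n$ (via linearization of the proof tree of $q_n(\vec{z}_n)$) inside $\vec{J}$. The one point that does not survive as literally stated is the claim that (i) and (ii) can be established by \emph{two separate} well-founded inductions. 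They cannot: to prove (i) at a node $n$ you must pass from $q_m(\vec{z}_m + \vec{Q}_m) \subseteq \ReachSet{\bvass}$ to $q_n(\vec{z}_n + \sum_{m \rightarrow n} \vec{Q}_m) \subseteq \ReachSet{\bvass}$, but Condition~\ref{explo:z} of \cref{def:algorithmic-exploration} only places $\vec{z}_n$ in $\vec{a}_n + \sum_{m \rightarrow n}(\vec{z}_m + \vec{P}_m)$, so absorbing the $\vec{P}_m$-components into the periodic sets $\vec{Q}_m$ already requires $\vec{P}_m \subseteq \vec{Q}_m$, i.e.\ statement (ii) at the predecessors. Conversely, as you note, (ii) at $n$ needs (i) at the strict ancestors (so that $\mathcal{V}_n$ instantiates $\bvass$ with a subset of $\ReachSet{\bvass}$ and so that the proof tree of $q_n(\vec{z}_n)$ linearizes into a $\step[*]$-sequence). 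The two statements are therefore mutually dependent and must be proved by a single simultaneous induction over $\rightarrow$, which is exactly what the paper does; with that repair your argument coincides with the paper's.
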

\begin{proof}[Proof sketch.]
  Let $\explo = (N, \rightarrow, \lambda)$ be an algorithmic exploration of $2$-BVASS $\bvass$.
  We introduce the family $\left(\vec{Q}_n\right)_{n \in N}$ of subsets of $\setN^2$
  defined,
  by well-founded recursion over $\rightarrow$,
  by
  $
  \vec{Q}_n = \clo{q_n}{\vec{z}_n}(\sum_{m \rightarrow n} \vec{Q}_m)
  $
  for every node $n \in N$.
  Observe that each $\vec{Q}_n$ is periodic and
  that $m \rightarrow n$ implies $\vec{Q}_m \subseteq \vec{Q}_n$.
  We show by well-founded induction over $\rightarrow$ that,
  for all $n \in N$,
  we have
  $\vec{P}_n \subseteq \vec{Q}_n$ and $q_n(\vec{z}_n + \vec{Q}_n) \subseteq \ReachSet{\bvass}$.
  Let $n \in N$ and
  assume that
  $\vec{P}_s \subseteq \vec{Q}_s$ and $q_s(\vec{z}_s + \vec{Q}_s) \subseteq \ReachSet{\bvass}$
  for all $s \in N$ with $s \xrightarrow{+} n$.
  We derive from Conditions~\ref{explo:v+q} and~\ref{explo:z} of \cref{def:algorithmic-exploration} that
  $q_n(\vec{z}_n + \sum_{m \rightarrow n} \vec{Q}_m) \subseteq \ReachSet{\bvass}$.
  It follows that $q_n(\vec{z}_n + \vec{Q}_n) \subseteq \ReachSet{\bvass}$.
  It remains to show that $\vec{P}_n \subseteq \vec{Q}_n$.
  Let $\vec{J}_n$ denote the set $\vec{J}$ defined in
  \cref{lem:correctness-of-acceleration-via-iterable-vectors}
  with $q(\vec{x}) := q_n(\vec{z}_n)$ and $\vec{P} := \vec{Q}_n$.
  The crucial observation now is that $\vec{E}_n$ and $\overline{\vec{C}}_n$
  are both contained in $\vec{J}_n$,
  hence,
  $\vec{I}_n \subseteq \vec{J}_n$.
  We obtain from \cref{lem:correctness-of-acceleration-via-iterable-vectors}
  that
  $\accel{\vec{I}_n}(\vec{Q}_n) \subseteq \clo{q_n}{\vec{z}_n}(\vec{Q}_n) = \vec{Q}_n$
  since $\clo{q_n}{\vec{z}_n}$ is idempotent.
  Recall that $\vec{P}_m \subseteq \vec{Q}_m \subseteq \vec{Q}_n$
  for every $m \in N$ with $m \rightarrow n$.
  We derive from Condition~\ref{explo:P} of \cref{def:algorithmic-exploration} that
  $
  \vec{P}_n
  =
  \accel{\vec{I}_n}(\sum_{m \rightarrow n} \vec{P}_m)
  \subseteq
  \accel{\vec{I}_n}(\vec{Q}_n)
  \subseteq
  \vec{Q}_n
  $.
\end{proof}

\subsection{Partial correctness of $\mathtt{Explore}$}
\label{subsec:partial-correctness-Explore}

Consider an execution $\sigma$ of $\mathtt{Explore}(\bvass)$,
where $\bvass$ is a $2$-BVASS.
We let $\left(N^j, \rightarrow^j, \lambda^j, \Redundant^j, \WorkList^j\right)_{j \in J}$ denote
the successive values of the variables $N$, $\rightarrow$, $\lambda$, $\Redundant$ and $\WorkList$
at line~\ref{line:while-start} of \cref{algo:main},
just before the evaluation of the \textbf{while}-loop condition.
The index set $J$ is $\setN$ if the execution $\sigma$ does not terminate
(i.e., $\WorkList^j \neq \emptyset$ for all $j \in \setN$).
Otherwise,
$J = \{0, \ldots, \kappa\}$ where $\kappa \in \setN$ is the number of iterations of the \textbf{while}-loop
(i.e., $\WorkList^\kappa = \emptyset$ and $\WorkList^j \neq \emptyset$ for all $j \in \{0, \ldots, \kappa - 1\}$).
It is understood that,
in both cases,
the values of the variables after the \textbf{foreach}-loop at lines~\ref{line:init-start}--\ref{line:init-end}
are $N^0$, $\rightarrow^0$, $\lambda^0$, $\Redundant^0$ and $\WorkList^0$.
Moreover,
if $\sigma$ terminates then it returns
$(N^\kappa, \rightarrow^\kappa, \lambda^\kappa)$ at line~\ref{line:return}.
For every $j \in J$,
we let $\explo^j$ denote the labeled graph $(N^j, \rightarrow^j, \lambda^j)$, and
we let $\hat{\explo}^j$ denote the restriction of $\explo^j$ to
the set of processed nodes $\hat{N}^j = N^j \setminus \WorkList^j$.
The following lemma is easily derived from \cref{algo:main}
(the detailed proof is tedious but straightforward).

\begin{restatable}{lemma}{lemAlgoExplorations}
  \label{lem:algo-explorations}
  Both $\explo^j$ and $\hat{\explo}^j$ are non-redundant explorations of $\bvass$,
  for every $j \in J$.
  Moreover,
  $\hat{\explo}^j$ is algorithmic and it holds that
  $\post[\bvass](\ReachSet{\hat{\explo}^j}) \subseteq \ReachSet{\explo^j}$.
\end{restatable}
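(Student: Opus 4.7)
The plan is to proceed by induction on $j \in J$, relying on the fact that $\mathtt{Explore}$ only modifies its data structures monotonically (nodes and edges are added, never removed) and that each node's label is changed at most once, by the acceleration on line~\ref{line:acceleration} when the node is processed. Two stability consequences of this invariant will be used throughout: the ancestor set $\anc{n}$ of any node is fixed at creation (since new nodes are created as leaves with in-edges only from older nodes), and the redundancy status of a processed node is permanent (its label and those of its ancestors are all final). The base case $j=0$ is immediate: after lines~\ref{line:init-start}--\ref{line:init-end} every fresh node sits in $\WorkList^0$, hence $\hat{N}^0 = \emptyset$ and $\hat{\explo}^0$ is the empty exploration; the initial configurations $\post[\bvass](\emptyset)$ are captured by the source nodes of $\explo^0$, whose labels $(\vec{a}, q, \vec{a}, \{(0,0)\})$ force them into $\ReachSet{\explo^0}$.

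For the inductive step, one iteration of the \textbf{while}-loop processes a node $n \in \WorkList^j$. Non-redundancy of $\explo^{j+1}$ (and hence of $\hat{\explo}^{j+1}$, whose ancestor relation is a subrelation) follows because freshly created children are leaves, a redundant $n$ is added to $\Redundant$ and not expanded so it remains a leaf, and the status of every other node is preserved by the stability invariant. For algorithmicity of $\hat{\explo}^{j+1}$, only the node $n$ is new to verify: its parent set in $\hat{\explo}^{j+1}$ coincides with the set $M$ chosen at creation, since $M \subseteq \hat{N} \setminus \Redundant$ already held at that earlier time. Condition~\ref{explo:v+q} of \cref{def:algorithmic-exploration} follows from the fact that the expansion block chose a rule with $S = \multiset{q_m \mid m \in M}$; Condition~\ref{explo:z} from $\vec{z}_n = \vec{b} \in \vec{B} \subseteq \vec{a}_n + \sum_{m \to n}(\vec{z}_m + \vec{P}_m)$; Condition~\ref{explo:P} from line~\ref{line:acceleration} combined with the fact that the value of $\vec{P}_n$ set at creation (line~\ref{line:z'-P'}) already equals $\sum_{m \in M} \vec{P}_m$ with its final entries, and that $\vec{I}_n$ only depends on data that is fixed at creation.

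The delicate part will be the inclusion $\post[\bvass](\ReachSet{\hat{\explo}^{j+1}}) \subseteq \ReachSet{\explo^{j+1}}$. Given $(q, \vec{y}) \in \post[\delta](\ReachSet{\hat{\explo}^{j+1}})$ for a rule $\delta = (S, \vec{a}, q)$, the case $S = \emptyset$ is trivial via source nodes, so assume $S \neq \emptyset$ and let $D$ be a set of witness configurations $r(\vec{z}) \in q_{m_r}(\vec{z}_{m_r} + \vec{P}_{m_r})$ with $m_r \in \hat{N}^{j+1}$. If any $m_r$ is redundant, I replace it by a non-redundant ancestor $s$ witnessing the coverage inclusion; since coverage forces $q_s = q_{m_r}$, this replacement preserves the multiset $S$ without colliding with any other $m_{r'}$ (states in $S$ are pairwise distinct, and the ancestor chain is finite by \cref{rem:finite-anc}, so a non-redundant witness exists). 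The resulting set $M \subseteq \hat{N}^{j+1} \setminus \Redundant^{j+1}$ satisfies $S = \multiset{q_m \mid m \in M}$. Let $m^\star \in M$ be the most recently processed node of $M$; at the moment $m^\star$ was found non-redundant, the expansion block at lines~\ref{line:expansion-start}--\ref{line:expansion-end} iterated precisely over this set $M$ and over rule $\delta$. Then $\vec{y} = \vec{a} + \sum_{m \in M}(\vec{z}_m + \vec{p}_m)$ sits in $(\vec{a} + \vec{z}' + \vec{P}') \cap \setN^2 = \vec{B} + \vec{P}'$, so $\vec{y} = \vec{b} + \vec{p}'$ for some $\vec{b} \in \vec{B}$ and $\vec{p}' \in \vec{P}'$. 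The child $n'$ created with $\lambda(n') = (\vec{a}, q, \vec{b}, \vec{P}')$ then lies in $N^{j+1}$, and since its current periodic set is either $\vec{P}'$ or $\accel{\vec{I}_{n'}}(\vec{P}') \supseteq \vec{P}'$, we obtain $(q, \vec{y}) \in q_{n'}(\vec{z}_{n'} + \vec{P}_{n'}) \subseteq \ReachSet{\explo^{j+1}}$. I expect this redundancy-replacement argument to be the only subtle point; everything else amounts to tracking invariants through the algorithm's assignments.
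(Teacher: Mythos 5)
Your proposal is correct and follows essentially the same route as the paper, which splits the statement into an invariant-tracking lemma (monotonicity of $N$, $\rightarrow$, labels final after acceleration, $\Redundant\cup\WorkList$ consisting of leaves, and $\vec{I}_n$ fixed from creation) and a completeness lemma that locates the relevant execution of the expansion block at the processing of the last member of $M$ to become processed and non-redundant. Your explicit replacement of redundant witnesses by covering ancestors, justified by the fact that such ancestors cannot themselves be redundant leaves and that the states in $S$ are pairwise distinct, is exactly the step the paper compresses into the observation that $\ReachSet{\hat{\explo}^j}$ equals the union over non-redundant processed nodes.
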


Notice that
$N^j \subseteq N^{j+1}$,
$\hat{N}^j \subset \hat{N}^{j+1}$ and
${\rightarrow}^j \subseteq {\rightarrow}^{j+1}$,
for every $j \in J$ with $(j+1) \in J$.
We introduce the ``limit'' values
$N^\sigma$, $\hat{N}^\sigma$ and $\rightarrow^\sigma$ of the corresponding sequences,
defined naturally by
$N^\sigma = \bigcup_{j \in J} N^j$,
$\hat{N}^\sigma = \bigcup_{j \in J} \hat{N}^j$ and
$\rightarrow^\sigma = \bigcup_{j \in J} \rightarrow^j$.
We also define
$\lambda^\sigma$ as the function with domain $N^\sigma$ that maps each $n \in N^\sigma$ to
the ultimate value of the sequence $\left(\lambda^j(n)\right)_{j \in J, n \in N^j}$.
The latter sequence is non-empty and ultimately constant.
We let $\explo^\sigma$ denote the labeled graph $(N^\sigma, \rightarrow^\sigma, \lambda^\sigma)$, and
we let $\hat{\explo}^\sigma$ denote the restriction of $\explo^\sigma$ to $\hat{N}^\sigma$.

\begin{restatable}{lemma}{lemAlgoPartialCorrectness}
  \label{lem:algo-partial-correctness}
  For every $2$-BVASS $\bvass$ and
  every execution $\sigma$ of $\mathtt{Explore}(\bvass)$,
  $\hat{\explo}^\sigma$ is a non-redundant and algorithmic exploration of $\bvass$.
  If $\sigma$ terminates then
  $\hat{\explo}^\sigma$ is finite and complete, and
  $\sigma$ returns $\hat{\explo}^\sigma$.
  Otherwise,
  $\hat{\explo}^\sigma$ is infinite.
\end{restatable}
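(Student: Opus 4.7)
The plan is to leverage \cref{lem:algo-explorations} and take appropriate limits, arguing that the sequences $\left(\explo^j\right)_{j \in J}$ and $\left(\hat{\explo}^j\right)_{j \in J}$ stabilize on each node once it has been processed. The key structural observation is that \cref{algo:main} modifies the label of a node $n$ only at line~\ref{line:acceleration}, which is executed exactly once, namely when $n$ is removed from $\WorkList$ and added to $\hat{N}$. Moreover, new nodes are created only from predecessor sets $M \subseteq \hat{N}^j$, so the set of in-neighbors of any node in $\explo^\sigma$ is also fixed the moment that node is created. Consequently, the sequence $(\lambda^j(n))_{j \in J, n \in N^j}$ is ultimately constant, $\lambda^\sigma$ is well-defined, and for every $n \in \hat{N}^\sigma$ there exists $j_0 \in J$ such that $n \in \hat{N}^{j}$ and $\lambda^j(n) = \lambda^\sigma(n)$ for all $j \geq j_0$.

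First, I would establish the limit object is an algorithmic, non-redundant exploration. Because each $\hat{\explo}^j$ is acyclic with well-founded edge relation (by \cref{lem:algo-explorations}) and the edge relation only grows by adding edges from older nodes to strictly newer nodes, the union $\rightarrow^\sigma$ remains acyclic and well-founded on $\hat{N}^\sigma$. The three conditions of \cref{def:algorithmic-exploration} transfer from $\hat{\explo}^j$ to $\hat{\explo}^\sigma$ node by node: for any $n \in \hat{N}^\sigma$, its in-neighbors in $\hat{\explo}^\sigma$ coincide with those in $\hat{\explo}^j$ for $j$ large enough, so Conditions~\ref{explo:v+q} and~\ref{explo:z} are inherited. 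Condition~\ref{explo:P} requires a slightly finer argument: the set $\vec{I}_n$ depends on the labels $\vec{z}_s$, $\vec{P}_s$ of ancestors of $n$; since every ancestor of $n$ is processed before $n$ itself (as children are only created from $M \subseteq N \setminus \WorkList$), these labels are already frozen at the moment line~\ref{line:acceleration} is executed on $n$, and therefore the $\vec{I}_n$ computed by the algorithm coincides with the one in $\hat{\explo}^\sigma$. Non-redundancy follows similarly: a node flagged redundant at line~\ref{line:redundant-node} is excluded from all future $M$ (since $M \cap \Redundant = \emptyset$), hence remains a leaf in $\hat{\explo}^\sigma$, and conversely subsumption is detected at processing time because the relevant labels are already stabilized.

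For the terminating case, note that $\WorkList^\kappa = \emptyset$ yields $\hat{\explo}^\sigma = \hat{\explo}^\kappa = \explo^\kappa$, which is finite and is exactly what is returned at line~\ref{line:return}. Completeness then follows from \cref{lem:algo-explorations}: the inclusion $\post[\bvass](\ReachSet{\hat{\explo}^\kappa}) \subseteq \ReachSet{\explo^\kappa} = \ReachSet{\hat{\explo}^\kappa}$ exhibits $\ReachSet{\hat{\explo}^\sigma}$ as a $\post[\bvass]$-closed set, so by minimality of the reachability set $\ReachSet{\bvass} \subseteq \ReachSet{\hat{\explo}^\sigma}$. For the non-terminating case, each iteration of the \textbf{while}-loop removes some node from $\WorkList$ and adds it to $\hat{N}$, so $|\hat{N}^{j+1}| > |\hat{N}^j|$ for all $j \in \setN$, forcing $\hat{N}^\sigma$ to be infinite.

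I expect the only delicate point to be verifying that the dependencies involved in Condition~\ref{explo:P}, namely the set $\vec{I}_n$ of $n$-iterable vectors, are correctly captured in the limit; this reduces, as sketched above, to the ordering observation that ancestors are processed before descendants, ensuring that what the algorithm sees when it sets $\vec{P}_n$ is identical to what \cref{def:algorithmic-exploration} prescribes relative to $\hat{\explo}^\sigma$. Beyond this, the argument is essentially bookkeeping over the monotone sequence of intermediate explorations provided by \cref{lem:algo-explorations}.
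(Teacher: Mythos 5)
Your proposal is correct and follows essentially the same route as the paper: the paper packages your node-by-node stabilization observations into a claim that the restriction of $\hat{\explo}^\sigma$ to each $\hat{N}^j$ coincides with $\hat{\explo}^j$ (so that non-redundancy and the algorithmic conditions are inherited from \cref{lem:algo-explorations}), and your key point that $\vec{I}_n$ is already frozen when line~\ref{line:acceleration} fires is exactly the observation the paper records as $\vec{I}_m^j = \vec{I}_m^{j+1}$ for $m \in N^j$. The terminating and non-terminating cases are handled identically in both arguments.
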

\begin{proof}[Proof sketch.]
  Let $\sigma$ be an execution of $\mathtt{Explore}(\bvass)$,
  where $\bvass$ is a $2$-BVASS.
  It is routinely checked that,
  for every $j \in J$,
  \begin{enumerate}
  \item
    for every $m, n \in N^\sigma$,
    if $n \in N^j$ and $m \rightarrow^\sigma n$ then $m \in N^j$ and $m \rightarrow^j n$, and
  \item
    the restriction of $\hat{\explo}^\sigma$ to $\hat{N}^j$ coincides with $\hat{\explo}^j$.
  \end{enumerate}
  The first observation entails that the edge relation $\rightarrow^\sigma$ of $\explo^\sigma$ is well-founded,
  hence,
  both $\explo^\sigma$ and $\hat{\explo}^\sigma$ are explorations of $\bvass$.
  The second observation,
  combined with \cref{lem:algo-explorations},
  entails that $\hat{\explo}^\sigma$ is a non-redundant and algorithmic exploration of $\bvass$.
  If $\sigma$ terminates then it returns $\explo^\kappa$.
  Moreover,
  we have $\explo^\kappa = \hat{\explo}^\kappa = \hat{\explo}^\sigma$ in that case,
  hence,
  $\hat{\explo}^\sigma$ is finite and complete by \cref{lem:algo-explorations}.
  If $\sigma$ does not terminate then
  $\hat{N}^\sigma$ is infinite since
  $\hat{N}^j \subset \hat{N}^{j+1}$ for all $j \in \setN$.
\end{proof}

Partial correctness of our algorithm $\mathtt{Explore}$ follows from \cref{lem:algo-partial-correctness}.
To prove termination,
we will show in the next sections that
every non-redundant and algorithmic exploration of a $2$-BVASS is finite,
under an additional technical condition called spannability and discussed in next section.
Termination of our algorithm will then be ensured by \cref{lem:algo-partial-correctness}.

\section{Core Witnesses of Infinity for Spannable Graphs}\label{sec:branches}
\newcommand{\fst}{\operatorname{fst}}
\newcommand{\lst}{\operatorname{lst}}

A graph $(N,\rightarrow)$ is called a \emph{forest} if the edge relation $\rightarrow$ is well-founded (hence acyclic) and the set $\{m\in N \mid m\rightarrow n\}$ contains at most one node for every node $n\in N$. A forest is said to be \emph{finitely-branching} if the set of source nodes is finite and the set $\{n\in N \mid m\rightarrow n\}$ is finite for every node $m\in N$. We say that a graph $\mathcal{G}=(N,\rightarrow)$ is \emph{spannable} if there exists a subrelation $\rightarrow_{\mathcal{F}}$ of $\rightarrow$ such that $\mathcal{F}=(N,\rightarrow_{\mathcal{F}})$ is a finitely-branching forest. In that case $\mathcal{F}$ is called a \emph{spanning forest} of $\mathcal{G}$. We prove in appendix (see \cref{lem:caragoodforbranch}) that the exploration $\hat{\explo}^\sigma$ built by $\mathtt{Explore}(\bvass)$ (see \cref{subsec:partial-correctness-Explore}) is spannable.

\smallskip

A \emph{branch} $\beta$ of a forest $(N,\rightarrow_{\mathcal{F}})$ is an infinite sequence $\beta=(\beta_n)_{n\in\setN}$ of nodes such that $\beta_0$ is a source node of $\mathcal{F}$ and such that $\beta_{i}\rightarrow_{\mathcal{F}}\beta_{i+1}$ for every $i\geq 0$. Thanks to the Koenig's lemma, we know that any infinite finitely-branching forest admits a branch. Such a branch can be seen as a witness of infinity of $\mathcal{F}$. We extend this notion of witnesses to infinite spannable graphs in a non-trivial way. Naturally, denoting by $\mathcal{F}$ a spanning forest of $\mathcal{G}$, any branch of $\mathcal{F}$ is a kind of witness of infinity of $\mathcal{G}$. However such a witness depends somehow on the choice of $\mathcal{F}$ and does not take into account the structure of $\mathcal{G}$.

Our ``core'' witness of infinity of an infinite spannable graph is defined thanks to the notion of \emph{primary} and \emph{directed} graphs. A graph $\mathcal{G}=(N,\rightarrow)$ is said to be \emph{primary} if $N\setminus \des{n}$ is finite for every node $n\in N$. A graph $\mathcal{G}=(N,\rightarrow)$ is said to be \emph{directed} if for every $n,m\in N$, there exists $s\in N$ such that $n\xrightarrow{*}s$ and $m\xrightarrow{*}s$. The following lemma will be useful to extract from an infinite spannable graph a ``core'' subset of nodes that explain its infinity.
\begin{restatable}{lemma}{lemWellExtraction}\label{lem:well-extraction}
  For every infinite spannable graph $\mathcal{G}=(N,\rightarrow)$, there exists an infinite ancestor-closed set $X\subseteq N$ such that the restriction of $\mathcal{G}$ to $X$ is primary and directed.
\end{restatable}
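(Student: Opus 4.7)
The plan is to obtain $X$ as the $\mathcal{G}$-ancestor closure of an infinite $\mathcal{G}$-chain chosen to be ``furthest upstream''. Let $\mathcal{C}$ denote the set of infinite chains $\mu=(\mu_i)_{i\in\setN}$ in $\mathcal{G}$; it is non-empty because König's lemma applied to the infinite finitely-branching forest $\mathcal{F}$ yields an infinite branch, and every $\mathcal{F}$-edge is a $\mathcal{G}$-edge. I order $\mathcal{C}$ by $\mu \preceq \nu$ iff $\anc[\mathcal{G}]{\mu} \supseteq \anc[\mathcal{G}]{\nu}$, and will show via Zorn's lemma that a $\preceq$-maximal element $\mu^*$ exists and that $X := \anc[\mathcal{G}]{\mu^*}$ satisfies all the required properties.

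Given such a maximal $\mu^*$, the set $X$ is infinite and ancestor-closed by construction, and $\mathcal{G}|_X$ is directed since any $m, m' \in X$ are ancestors of some $\mu^*_i, \mu^*_j$ via paths that stay inside $X$ (by ancestor-closedness), whence $\mu^*_{\max(i,j)}$ is a common $X$-descendant. For primarity, I will argue by contradiction: suppose some $n \in X$ has $Y := X \setminus \des[\mathcal{G}]{n}$ infinite. A direct check shows $Y$ is ancestor-closed in $\mathcal{G}$, for if $m \in Y$ and $m' \to m$, then $m' \in X$ by ancestor-closedness of $X$, and $m' \notin \des{n}$ (otherwise $n \to^* m' \to m$ would give $m \in \des{n}$). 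Any ancestor-closed subset of a spannable graph is itself spannable, since the restriction $\mathcal{F}|_Y$ is still a finitely-branching forest ($\mathcal{F}$-parents of $Y$-nodes remain in $Y$), so König's lemma applied to $\mathcal{F}|_Y$ produces an infinite $\mathcal{G}$-chain $\nu \subseteq Y$. But then $\anc[\mathcal{G}]{\nu} \subseteq Y \subsetneq X$ strictly (because $n \in X$ cannot be an ancestor of any $\nu_i \in Y$), so $\mu^* \prec \nu$, contradicting the maximality of $\mu^*$.

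The main obstacle is verifying Zorn's hypothesis: every $\preceq$-chain in $\mathcal{C}$ admits an upper bound. Given such a chain $(\mu^{(\alpha)})_\alpha$, the ancestor closures $A_\alpha := \anc[\mathcal{G}]{\mu^{(\alpha)}}$ form a descending family of infinite ancestor-closed sets; since $\mathcal{F}$ consists of finitely many finitely-branching trees, $N$ is countable, so after passing to a cofinal subfamily I may assume the chain is a countable strictly decreasing sequence $A_0 \supsetneq A_1 \supsetneq \cdots$. The crux is that $A^* := \bigcap_k A_k$ remains infinite, which I establish by a diagonal König construction: applying König to each spannable $A_k$ yields an infinite $\mathcal{F}$-branch $\beta^{(k)}$ in $A_k$; pigeonholing on the finitely many $\mathcal{F}$-sources picks a source $s^*$ equal to $\beta^{(k)}_0$ for a cofinal set of indices $k$ (hence, by the descending property, $s^* \in A_k$ for all $k$), and then iterating the pigeonhole on the finitely many $\mathcal{F}$-children at each successive level builds an infinite $\mathcal{F}$-branch $\mu^*$ lying entirely in $A^*$. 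This $\mu^* \in \mathcal{C}$ satisfies $\anc[\mathcal{G}]{\mu^*} \subseteq A^* \subseteq A_\alpha$ for all $\alpha$, and is therefore the required $\preceq$-upper bound.
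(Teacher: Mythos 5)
Your overall strategy matches the paper's: order infinite branches by inclusion of their $\mathcal{G}$-ancestor sets, extract one whose ancestor set is minimal via a compactness/diagonalization argument over the finitely-branching spanning forest, and show that minimality forces the restriction to its ancestor set to be primary (directedness being automatic for the ancestor set of a chain). Your packaging differs in that you invoke Zorn's lemma and verify the chain condition with a diagonal K\"onig extraction, whereas the paper (see \cref{lem:primaryexists,lem:primary}) runs a direct contradiction with an $\omega$-sequence of ``eagerly smaller'' branches; but the mathematical content of the extraction step and of the minimality-implies-primary step is the same. The directedness and primarity verifications are correct, and the reduction to a countable descending sequence is legitimate (any inclusion-chain of subsets of the countable set $N$ admits a countable coinitial subchain, by the usual injection of a well-ordered coinitial subchain into $N$).

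There is, however, one genuine gap: you take $\mathcal{C}$ to be the set of \emph{all} infinite chains of $\mathcal{G}$, but spannability does not make $\mathcal{G}$ acyclic --- only the spanning forest $\mathcal{F}$ is required to be well-founded. If $\mathcal{G}$ has a cycle, an infinite chain may loop among finitely many nodes, its ancestor set may be finite, and such a chain can be $\preceq$-maximal; your $X = \anc{\mu^*}$ would then be finite and the conclusion fails. (Note that your assertion that the sets $A_\alpha$ are infinite already tacitly uses injectivity of chains, which is exactly what acyclicity buys.) Concretely, take an infinite $\mathcal{F}$-path $a_0 \rightarrow a_1 \rightarrow \cdots$ together with an extra node $c$ and extra $\mathcal{G}$-edges $a_0 \rightarrow c$ and $c \rightarrow c$: the constant chain at $c$ has ancestor set $\{a_0, c\}$ and is $\preceq$-maximal. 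The repair is exactly the paper's choice: restrict $\mathcal{C}$ to infinite branches of the spanning forest $\mathcal{F}$, which are injective and hence have infinite ancestor sets. This costs you nothing, since the upper bound you build in the Zorn step is already an $\mathcal{F}$-branch and the chain $\nu$ produced in your primarity argument is also an $\mathcal{F}$-branch; the rest of your proof then goes through verbatim. (Assuming $\mathcal{G}$ acyclic would also suffice for the paper's application to explorations, but it weakens the lemma as stated.)
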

\begin{proof}[Proof sketch.]
  Let $\mathcal{F}$ be a spanning forest of $\mathcal{G}$. We associate to each branch $\beta=(\beta_n)_{n\in\setN}$ of $\mathcal{F}$, the set $\anc{\beta}=\bigcup_{n\in\setN}\anc{\beta_n}$ of ancestors of $\beta$ with respect to the graph $\mathcal{G}$. We introduce the preorder (reflexive and transitive) $\sqsubseteq$ over the branches defined by $\alpha\sqsubseteq \beta$ if $\anc{\alpha}\subseteq \anc{\beta}$. A branch $\beta$ is said to be \emph{minimal} (for the relation $\sqsubseteq$) if for every branch $\alpha$ such that $\alpha\sqsubseteq \beta$, we have $\beta\sqsubseteq \alpha$. Notice that a branch $\beta$ is minimal for $\sqsubseteq$ if, and only if, $\anc{\beta}$ is minimal for the inclusion relation. In appendix, we prove that a minimal branch exists by contradiction. Intuitively, if there does not exist a minimal branch then any branch admits a strictly smaller one (we prove that we can select an eagerly smaller one). Since the number of sources of $\mathcal{F}$ is finite, and the set $\{n\in N \mid m\rightarrow n\}$ is finite for every node $m\in N$, we can extract from this infinite sequence of branches, a subsequence that ``converges'' to another branch. We prove that this branch is necessarily minimal providing a contradiction. It follows that there exists a minimal branch $\beta$. In appendix, we show that $X=\anc{\beta}$ satisfies the lemma.
\end{proof}

\begin{remark}\label{rem:primary2wqo}
  If $(N,\rightarrow)$ is a primary graph then $\xrightarrow{*}$ is a well-quasi-order (wqo). In fact, let us consider an infinite sequence $(n_i)_{i\in\setN}$ of nodes $n_i\in N$. Since $\mathcal{G}$ is primary, the set $N_0=N\setminus \des{n_0}$ is finite. If $n_i\in N_0$ for every $i\geq 1$ then there exists $i<j$ such that $n_i=n_j$ and in particular $n_i\xrightarrow{*}n_j$. Otherwise there exists $j\geq 1$ such that $n_j\not\in N_0$ and in that case $n_0\xrightarrow{*}n_j$. So, in any case, we have proved that there exists $i<j$ such that $n_i\xrightarrow{*}n_j$. Therefore $\xrightarrow{*}$ is a wqo.
\end{remark}

\section{Termination}\label{sec:termination}
The termination of \cref{algo:main} is obtained by contradiction. We assume that the algorithm is not terminating and from an infinite execution we derive an infinite exploration. Such an exploration is spannable, algorithmic and infinite. Thanks to \cref{lem:well-extraction} we can extract a sub-exploration $(N,\rightarrow,\lambda)$ that is also directed and primary. We prove that the sequence of periodic sets $\vec{P}_n$ for this exploration stabilizes. From~\cref{cor:HPcone} it is sufficient to prove that the sequence of cones $\con{\vec{P}_n}$ eventually stabilizes. Since the exploration is directed, it is sufficient to prove that the cone spanned by $\vec{P}_N=\bigcup_{n\in N}\vec{P}_n$ is finitely-generated.

\smallskip

This result is obtained by interpreting geometrically the acceleration step $\vec{P}_n:=\accel{\vec{I}_n}(\vec{P}_n)$ performed at line~\ref{line:acceleration}, as the \emph{so-called} $\vec{v}$-stabilization of $\con{\vec{P}_n}$ by vectors $\vec{v}\in\vec{I}_n$. More formally, the \emph{$\vec{v}$-stabilization} of a cone $\vec{C}\subseteq\setQ_{\geq 0}^2$ for some vector $\vec{v}\in\setZ^2$ is the cone $(\vec{C}+\setQ_{\geq 0}\vec{v})\cap\setQ_{\geq 0}^2$. A cone $\vec{C}\subseteq\setQ_{\geq 0}^2$ is said to be \emph{$\vec{v}$-stable} when it is equal to its $\vec{v}$-stabilization.

\begin{example}
  The cone $\vec{C}=\con{\{(1,2),(2,1)\}}$ is $(-1,-1)$-stable. It is not $(-3,-1)$-stable since its $(-3,-1)$-stabilization is $\vec{C}+\setQ_{\geq 0}(0,1)$.
  \lipicsEnd
\end{example}

The $\vec{v}$-stabilization of a cone $\vec{C}\subseteq \setQ_{\geq 0}^2$ spanned by a finite set of vectors $\vec{G}\subseteq \setN^2$ is the cone spanned by $\vec{G}\cup\{\vec{v}\}$ when $\vec{v}\geq (0, 0)$, and the cone spanned by $\vec{G}\cup\vec{U}$ where $\vec{U}$ is a subset of the set of \emph{axis} $\{(1,0),(0,1)\}$ when $\vec{v}\not\geq(0, 0)$. Since at some step of the algorithm, no new axis are added, it follows that the cone $\con{\vec{P}_n}$ is $\vec{v}$-stable for every vector $\vec{v}\in\vec{I}_n\setminus\setQ_{\geq 0}^2$. In fact a stronger stabilization property occurs: there exists a node $n_0$ such that for every $n,m\in\des{n_0}$, the cone $\con{\vec{P}_n}$ is $\vec{v}$-stable for every $\vec{v}\in\vec{I}_m\setminus\setQ_{\geq 0}^2$. In order to capture this special node $n_0$, we introduce in \cref{sec:modes} the notion of \emph{modes} of a cone and prove that modes of non-decreasing sequence of cones eventually stabilizes.

\smallskip

Thanks to the notion of modes, it will be clear that $\con{\vec{P}_n}$ is $\vec{v}$-stable for every vector $\vec{v}\in\vec{I}_n\setminus\setQ_{\geq 0}^2$ but also for every vector $\vec{v}\in\vec{I}_m\setminus\setQ_{\geq 0}^2$ where $m\in\des{n_0}$. This $\vec{v}$-stabilization with respect to vectors that can be discovered later by the algorithm will be useful with the decomposition of elementary vectors of $\vec{E}_n$ introduced in \cref{sec:elem-cons} as finite sums of elementary vectors $\vec{E}_m$ where $m$ ranges over a finite set independent of $n$.

\smallskip

Thanks to the notion of modes and the decomposition of elementary cycles, we prove in \cref{sec:finitegen} that the cones $\con{\vec{P}_N}$ is finitely-generated. We conclude our proof by contradiction in \cref{sec:wrap-up}.

\subsection{Modes}\label{sec:modes}

The $\vec{v}$-stabilization of a cone $\vec{C}\subseteq\setQ_{\geq 0}^2$ does not change the cone when $\vec{v}$ satisfies some conditions depending on the set of axis of $\vec{C}$ and a natural number $h\in\setN$ that provides a lower-bound on some negative components of $\vec{v}$. By \emph{axis} of a cone $\vec{C}\subseteq \setQ_{\geq 0}^2$ we mean a vector in $\vec{C}\cap\{(1,0),(0,1)\}$. The \emph{$h$-mode} of $\vec{C}$ is the set of vectors $\vec{v}\in\setZ^2\setminus \setN^2$ such that $\vec{C}$ is $\vec{v}$-stable, and such that the following two conditions hold:
\begin{itemize}
\item $\vec{v}(1)\geq -h$ if $(1,0)$ is not an axis of $\vec{C}$.
\item $\vec{v}(2)\geq -h$ if $(0,1)$ is not an axis of $\vec{C}$.
\end{itemize}

\begin{restatable}{lemma}{lemFiniteswithmodes}\label{lem:finiteswithmodes}
  The $h$-mode of any non-decreasing sequence of cones $\vec{C}_0\subseteq \vec{C}_1\subseteq \cdots \subseteq \setQ_{\geq 0}^2$ eventually stabilizes.
\end{restatable}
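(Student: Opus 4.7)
The plan is to reduce the problem to a case analysis on the eventually constant set of axes of $\vec{C}_n$, and in each case show that the $h$-mode lies in a fixed universe and evolves monotonically in a controlled way. First, I would observe that $\vec{A}_n := \vec{C}_n \cap \{(1,0),(0,1)\}$ is a non-decreasing subset of a two-element set, hence eventually constant; passing to a tail, I may assume $\vec{A}_n = \vec{A}$ is constant throughout. If $\vec{A} = \{(1,0),(0,1)\}$, convexity of cones forces $\vec{C}_n = \setQ_{\geq 0}^2$, which is $\vec{v}$-stable for every $\vec{v} \in \setZ^2$, so the $h$-mode is constant.

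In the three remaining cases, I would parameterize each $\vec{C}_n$ by its two extreme slopes $0 \leq \alpha_n \leq \beta_n \leq \infty$, where $\alpha_n = 0$ iff $(1,0) \in \vec{A}$ and $\beta_n = \infty$ iff $(0,1) \in \vec{A}$. Inclusion $\vec{C}_n \subseteq \vec{C}_{n+1}$ forces $\alpha_n$ non-increasing and $\beta_n$ non-decreasing. A short geometric analysis, translating the extreme rays of $\vec{C}_n$ along $\setQ_{\geq 0}\vec{v}$ and intersecting with $\setQ_{\geq 0}^2$, gives a complete characterization of $\vec{v}$-stability via case analysis on the signs of the coordinates of $\vec{v}$: for each sign pattern, stability is either always true, always false, or equivalent to a slope condition of the form $|\vec{v}(2)|/|\vec{v}(1)| \geq \beta_n$ (from the upper extreme ray) or $|\vec{v}(2)|/|\vec{v}(1)| \leq \alpha_n$ (from the lower extreme ray).

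Using this characterization, in the case $\vec{A} = \emptyset$ the $h$-mode is contained in the finite set $\{-h, \ldots, -1\}^2$, and for each $\vec{v}$ therein the slope conditions $\alpha_n \geq |\vec{v}(2)|/|\vec{v}(1)| \geq \beta_n$ can flip from true to false at most once along the monotone sequence $(\alpha_n, \beta_n)$, so the $h$-mode stabilizes. In the case $\vec{A} = \{(1,0)\}$ (the case $\{(0,1)\}$ is symmetric), the $h$-mode decomposes into a constant part (vectors with $\vec{v}(1) \geq 0$ and $\vec{v}(2) \in \{-h, \ldots, -1\}$, always stable because the wedge is closed under down-right translations bounded below by the $x$-axis that lies in $\vec{C}_n$) and a variable part consisting of vectors with both coordinates negative subject to $|\vec{v}(2)|/|\vec{v}(1)| \geq \beta_n$. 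Either $\beta_n = 0$ for all $n$, in which case $\vec{C}_n$ equals the constant ray $\setQ_{\geq 0}(1,0)$ and we are done; or from some index $n_0$ with $\beta_{n_0} > 0$, the variable part is confined for every $n \geq n_0$ to the fixed finite set $\{(-\ell, -k) : 1 \leq k \leq h,\ 1 \leq \ell \leq h/\beta_{n_0}\}$ and is monotonically non-increasing as $\beta_n$ grows, hence stabilizes.

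The main technical obstacle I anticipate is the geometric characterization of $\vec{v}$-stability: it requires a careful case split on the sign pattern of $\vec{v}$ and on $\vec{A}$, together with a verification that the translated cone does not escape through an unexpected boundary of $\setQ_{\geq 0}^2$. Once this characterization is established, the stabilization argument itself is a straightforward combination of monotonicity of $(\alpha_n,\beta_n)$ and finiteness of the relevant universe.
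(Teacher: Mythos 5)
Your proposal is correct and follows essentially the same route as the paper: stabilize the set of axes along a tail of the sequence, then in the one-axis case decompose the $h$-mode into the constant strip $\setN\times\{-h,\ldots,-1\}$ plus a finite, non-increasing part, the no-axis and two-axis cases being (essentially) trivial. The only real difference is presentational: where you obtain finiteness and monotonicity from an explicit slope characterization of $\vec{v}$-stability, the paper derives the same facts from abstract lemmas on stabilizers of cones (in particular it shows that the $h$-mode of a non-degenerate cone with no axis is actually empty, and it disposes of degenerate cones --- including $\vec{C}=\{(0,0)\}$, for which your claim that the no-axis $h$-mode lies in $\{-h,\ldots,-1\}^2$ would fail --- in a separate preliminary step).
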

\begin{proof}[Proof sketch.]
  By considering a suffix of the sequence, by discarding the trivial cases, and by symmetry (on the set of axis), we can assume that $(1,0)\in\vec{C}_n\not\subseteq\setQ_{\geq 0}(1,0)$ for every $n\in \setN$. We observe that the $h$-mode $\vec{M}_n$ of $\vec{C}_n$ can be decomposed into $\vec{Q}_n\cup (\setN\times\{-h,\ldots,-1\})$ where $\vec{Q}_n=\{\vec{v}\in \vec{M}_n\mid \vec{v}(1)<0\wedge \vec{v}(2)\in\{-h,\ldots,0\}\}$. By observing that $(\vec{Q}_n)_{n\in\setN}$ is a non-increasing sequence of finite sets, we deduce that it eventually stabilizes.
\end{proof}

\subsection{Cycle Decomposition}\label{sec:elem-cons}

In this section, we prove that vectors in $\vec{E}_n$ can be decomposed as finite sums of vectors in $\vec{E}_m$, vectors in $\vec{P}_s$ and some consecutive vectors where $m$ and $s$ ranges over finite sets independent of $n$. The finiteness of those sets follows from the co-finiteness of $\des{n_0}$ and $\des{n_1}$ since the exploration is primary.
\begin{lemma}\label{lem:elem_cons}
  For every primary and directed algorithmic exploration $(N,\rightarrow,\lambda)$ and for every node $n_0\in N$ such that $\sum_{s\rightarrow n_0}\vec{P}_s\not=\{(0, 0)\}$, there exists a node $n_1\in\des{n_0}$ such that for every node $n\in\des{n_0}$, we have:
  \begin{equation*}
  \vec{E}_n\subseteq \sum_{m\in\des{n_0}\setminus (\des{n_1}\setminus\{n_1\})}\per{\vec{E}_m}+\sum_{s\not\in \des{n_0}}\vec{P}_s+\sum_{m\mid n_0\xrightarrow{*}m\xrightarrow{+}n}\per{\vec{C}_m}
  \end{equation*}
\end{lemma}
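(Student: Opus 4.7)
The plan is to take any elementary cycle $\theta$ of $\mathcal{V}_n$ witnessing a vector $\vec{v}\in\vec{E}_n$ and rewrite its displacement as a finite sum of vectors drawn from the three periodic sets on the right-hand side. I would proceed in three stages: first choose $n_1$, then unfold $\theta$ along the ancestors used to instantiate branching rules, and finally replace each ``deep'' ancestor (those in $\des{n_1}\setminus\{n_1\}$) by $n_1$ at a controlled cost.

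For the choice of $n_1$, I would exploit primariness together with the hypothesis $\sum_{s\rightarrow n_0}\vec{P}_s\neq\{(0,0)\}$. The latter forces $\vec{P}_{n_0}$ to be nontrivial and, propagated along descendants, eventually forces $\vec{z}_m\geq(c,c)$ at sufficiently deep $m$. I would take $n_1\in\des{n_0}$ deep enough that $\vec{z}_{n_1}\geq(c,c)$ and that the modes of $\con{\vec{P}_m}$ have stabilized on $\des{n_1}$, as permitted by~\cref{lem:finiteswithmodes}. Because the exploration is primary, $N\setminus\des{n_1}$ is finite, so $\des{n_0}\setminus(\des{n_1}\setminus\{n_1\})=(\des{n_0}\setminus\des{n_1})\cup\{n_1\}$ is finite, and likewise $N\setminus\des{n_0}$ is finite.

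Next, fix any $\vec{v}\in\vec{E}_n$ with witness elementary cycle $\theta$ of $\mathcal{V}_n$ starting at $q_s$ with $s\in\anc{n}$, $\vec{z}_s\geq(c,c)$, and, if $s\neq n$, with $\vec{v}\geq(0,0)$. Unfolding the instantiation, $\theta$ decomposes into unary transitions whose branching components are instantiated with ancestors $m_1,\ldots,m_k\in\anc{n}\setminus\{n\}$, yielding $\vec{v}=\vec{a}_\theta+\sum_i\vec{z}_{m_i}$. Ancestors $m_i\not\in\des{n_0}$ form a finite set and feed the $\sum_{s\not\in\des{n_0}}\vec{P}_s$ term via the shift $\vec{z}_{m_i}+\vec{P}_{m_i}$. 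For each $m_i\in\des{n_1}\setminus\{n_1\}$, I would replace the instantiation using $m_i$ by one using $n_1$; the induced error $\vec{z}_{m_i}-\vec{z}_{n_1}$ can be telescoped as a sum of differences $\vec{z}_{m'}-\vec{z}_{m''}$ along a directed path from $n_1$ to $m_i$ (which exists since the exploration is directed), and each such difference lies in $\per{\vec{C}_{m''}}$ by definition of consecutive vectors. After all substitutions, the modified cycle uses only ancestors from the finite set $\des{n_0}\setminus(\des{n_1}\setminus\{n_1\})$, so its displacement contributes to the $\sum\per{\vec{E}_m}$ term indexed over that finite set.

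The main obstacle will be maintaining elementarity of $\theta$ through the substitutions: swapping $m_i$ for $n_1$ can cause the modified cycle to fail to be elementary, or to use transitions unavailable in the VASS associated to the new start. I expect to resolve this by pumping auxiliary elementary sub-cycles via \cref{fact:iteration-cycles-of-instantiation}---whose hypothesis is ensured by $\vec{z}_s\geq(c,c)$---absorbing their displacements into $\per{\vec{E}_m}$ terms, and inducting on the number of ``deep'' ancestors used by the current cycle. The mode stabilization from the first step is the crucial ingredient that keeps the pumped vectors within the allowed terms, so that the induction terminates with the desired bounded decomposition.
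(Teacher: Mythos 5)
There is a genuine gap, and it sits at the heart of your substitution step. You propose to replace each instantiating ancestor $m_i\in\des{n_1}\setminus\{n_1\}$ by the single node $n_1$. But an instantiated unary transition comes from a rule $(S,\vec{a},q)$ whose input-state multiset must equal $\multiset{q_m\mid m\in M}$ for the chosen set $M$ of instantiating nodes; swapping $m_i$ for $n_1$ is only legal when $q_{n_1}=q_{m_i}$, which fails in general. The paper instead replaces each deep node $m$ by a node $\lambda(m)\sqsubseteq m$ taken from the finite set of minimal elements of a well-quasi-order that preserves both the state $q_m$ and the truncation $\alpha(\vec{z}_m)$ of $\vec{z}_m$ at the iteration constant $c$; preserving the state keeps the instantiated transition valid (and, since the cycle's state sequence is untouched, elementarity is automatic — the obstacle you flag does not arise), while preserving $\alpha(\vec{z}_m)$ is what guarantees that the new displacement is nonnegative exactly when the old one is, which is needed for membership in some $\vec{E}_m$. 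Your appeal to mode stabilization cannot substitute for this: modes concern the cones $\con{\vec{P}_m}$ and are used later, in the finite-generation argument, not here.

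A second, independent error is the claim that the error term $\vec{z}_{m_i}-\vec{z}_{n_1}$ telescopes into differences $\vec{z}_{m'}-\vec{z}_{m''}$ along edges of a path, "each of which lies in $\per{\vec{C}_{m''}}$." Consecutive vectors $\vec{C}_m$ are defined only for pairs of nodes with \emph{equal states} and with both $\vec{z}$-labels failing $\geq(c,c)$; a single edge of the exploration satisfies neither condition in general. Handling the error of a path between two nodes with the same state requires the full path-decomposition lemma of the paper: one greedily pairs up positions where $\vec{z}\not\geq(c,c)$ and the state repeats (these yield genuine consecutive vectors), decomposes the leftover cycle into elementary cycles anchored at positions with $\vec{z}\geq(c,c)$ (these yield $n$-elementary vectors), and collects the periodic contributions $\vec{P}_m$ of the branching inputs along the way. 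Finally, your choice of $n_1$ is too weak: besides mode stabilization one needs that all minimal nodes, all nodes outside $\des{n_0}$, and, for every state realized with $\vec{z}\geq(c,c)$ in the deep part of the exploration, some witness node with that state and $\vec{z}\geq(c,c)$, are all ancestors of $n_1$ — this is what makes the rewritten cycles live in $\mathcal{V}_m$ for an $m$ in the finite index set of the statement. Also note that the hypothesis $\sum_{s\rightarrow n_0}\vec{P}_s\neq\{(0,0)\}$ does not force $\vec{z}_m\geq(c,c)$ for deep $m$; it is used to ensure $\vec{I}_n=\vec{E}_n\cup\vec{C}_n$ on $\des{n_0}$.
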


We define a node $n_1$ satisfying the previous lemma as follows. We fix a primary and directed exploration $(N,\rightarrow,\lambda)$ and a node $n_0\in N$ such that $\sum_{s\rightarrow n_0}\vec{P}_s\not=\{(0, 0)\}$. We introduce an abstraction function $\alpha:\setN^2\rightarrow \{0,\ldots,c\}^2$ defined by $\alpha(\vec{x})(i)=\vec{x}(i)$ if $\vec{x}(i)<c$ and $\alpha(\vec{x})(i)=c$ if $\vec{x}(i)\geq c$, for every $\vec{x}\in\setN^2$ and every $i\in\{1,2\}$, where $c$ is the constant of iteration of $\mathcal{B}$. We introduce a partial order $\sqsubseteq$ on the set of nodes $\des{n_0}$ defined by $s\sqsubseteq n$ if $s\xrightarrow{*}n$ and $q_s(\alpha(\vec{z}_s))=q_n(\alpha(\vec{z}_n))$. Since $\xrightarrow{*}$ is a well-quasi-order on the set of nodes, and the equality is a well-quasi-order on the finite set $Q\times\{0,\ldots,c\}^2$, the partial order $\sqsubseteq$ is also a well-quasi-order as the intersection of two well-quasi-orders. In particular the set $N_{\min}$ defined as the set of minimal nodes $n\in \des{n_0}$ for $\sqsubseteq$ is finite and for every $n\in \des{n_0}$, there exists $s\in N_{\min}$ such that $s\sqsubseteq n$. Since the exploration is primary, the set $N\setminus\des{n_0}$ is finite. Moreover, as the exploration is directed and $N_{\min}\cup (N\setminus\des{n_0})$ is finite, there exists a node $n_0'\in N$ such that $n\xrightarrow{*}n_0'$ for every $n\in N_{\min}\cup (N\setminus\des{n_0})$. Let $Q_c$ be the set of states $q\in Q$ such that there exists a node $n\in \des{n_0'}$ satisfying $\vec{z}_n\geq (c,c)$ and $q_n=q$. For each $q\in Q_c$, we pick such a node $n_q$. Since the set $\{n_q \mid q\in Q_c\}$ is finite and the exploration is directed, there exists $n_1\in\des{n_0'}$ such that $n_q\xrightarrow{*}n_1$ for every $q\in Q_c$. This node $n_1$ satisfies Lemma~\ref{lem:elem_cons}. Intuitively, this property is obtained by decomposing recursively either paths in the algorithmic exploration from a node $n\in\des{n_0}$ to a node $m$ such that $q_n=q_m$ following intermediate nodes $n$ satisfying $\vec{z}_n\geq (c,c)$ into elementary and consecutive cycles, and by replacing any elementary cycle using a transition coming from an instantiated node $n\in\des{n_0}$ by the elementary cycle obtained by using a node $s\in N_{\min}$ satisfying $s\sqsubseteq n$ rather than $n$ for the instantiation. Since the effect of the original cycle is equal to the sum of the effect of the new  one with the effect of a path from $s$ to $n$, by recursively decomposing that path we prove that $n_1$ satisfies Lemma~\ref{lem:elem_cons}. Such a decomposition and the proof that $n_1$ satisfies \cref{lem:elem_cons} are fully detailed in appendix.

\subsection{Finitely-generated Cone}\label{sec:finitegen}
In this section, we prove the following lemma.
\begin{lemma}\label{lem:Pfinite}
  For every primary and directed algorithmic exploration, the cone spanned by $\vec{P}_N=\sum_{n\in N}\vec{P}_n$ is finitely-generated.
\end{lemma}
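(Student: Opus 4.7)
If $\vec{P}_n = \{(0,0)\}$ for every $n \in N$, we are done, as $\con{\vec{P}_N} = \{(0,0)\}$ is trivially finitely-generated. Otherwise, since $\vec{P}_m \subseteq \vec{P}_n$ whenever $m \rightarrow n$ (by Condition~\ref{explo:P} of \cref{def:algorithmic-exploration}), directedness lets me pick a node $n_0 \in N$ with $\sum_{s \rightarrow n_0} \vec{P}_s \neq \{(0,0)\}$; I then apply \cref{lem:elem_cons} to obtain the companion node $n_1 \in \des{n_0}$. This decomposition expresses each $\vec{E}_n$ (for $n \in \des{n_0}$) as a non-negative combination of vectors drawn from the \emph{finite} index sets $\des{n_0} \setminus (\des{n_1} \setminus \{n_1\})$ and $N \setminus \des{n_0}$ (both finite by primarity), plus consecutive vectors $\vec{C}_m$ for $m$ on a path from $n_0$ to $n$.

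My next step is to locate a node $n^* \in \des{n_1}$ past which the cones $\con{\vec{P}_n}$ no longer grow, even under $\vec{v}$-stabilization by vectors $\vec{v}$ discovered arbitrarily deep below $n^*$. The argument has two parts. First, since there are only two candidate axes, the set of axes contained in $\con{\vec{P}_n}$ can strictly grow at most twice along any chain, so by directedness this set is eventually constant on descendants. Combined with the observation in the paragraph preceding \cref{sec:modes} that, once no new axis is added, $\con{\vec{P}_n}$ is $\vec{v}$-stable for every $\vec{v} \in \vec{I}_n \setminus \setN^2$, this already yields the uniform stability statement: $\con{\vec{P}_n}$ is $\vec{v}$-stable for every $\vec{v} \in \vec{I}_m \setminus \setN^2$ whenever $m, n \in \des{n^*}$. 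Second, I apply \cref{lem:finiteswithmodes} with $h := c$ along an arbitrary chain in $\des{n^*}$ and use directedness once more to promote per-chain mode stabilization to a uniform one on $\des{n^*}$, refining $n^*$ as needed.

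I then claim $\con{\vec{P}_N} = \vec{D}$, where $\vec{D} := \con{\vec{P}_{n^*}} + \sum_{s \in N \setminus \des{n^*}} \con{\vec{P}_s}$. This $\vec{D}$ is a finite sum of finitely-generated cones ($N \setminus \des{n^*}$ is finite by primarity, and each $\vec{P}_m$ is a finitely-generated periodic set by recursive application of \cref{lem:accel-finitely-generated-dim-2-bis} along the well-founded edge relation), hence itself finitely-generated. The inclusion $\vec{D} \subseteq \con{\vec{P}_N}$ is immediate; the converse is shown by induction along paths from $n^*$ to $n \in \des{n^*}$. In the inductive step, $\vec{P}_n = \accel{\vec{I}_n}(\sum_{m \rightarrow n} \vec{P}_m)$ amounts to successive $\vec{v}$-stabilizations for $\vec{v} \in \vec{I}_n$: the negative vectors $\vec{v} \in \vec{I}_n \setminus \setN^2$ contribute nothing new thanks to the uniform stability above, while the non-negative vectors $\vec{v} \in \vec{I}_n \cap \setN^2$ are broken down via \cref{lem:elem_cons} into pieces already sitting in $\vec{D}$ and into consecutive remainders whose positive part lies in $\con{\vec{P}_{n^*}}$ and whose potentially unbounded negative part aligns with an axis already present in $\con{\vec{P}_{n^*}}$. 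The main obstacle will be precisely this inductive step, in particular (i) lifting the $\vec{v}$-stability of $\con{\vec{P}_{n^*}}$ to the aggregate cone $\vec{D}$, and (ii) showing rigorously that the unbounded coordinate of any consecutive vector is always absorbed --- via a case analysis on which of $\vec{z}_n, \vec{z}_s$ has its first, respectively second, coordinate bounded by $c$.
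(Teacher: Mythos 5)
Your overall architecture (reduce to cone finite generation, pick $n_0$ with $\vec{P}'_{n_0}\neq\{(0,0)\}$, invoke \cref{lem:elem_cons}, stabilize axes and modes on a cofinite set of descendants, then prove a two-sided inclusion by well-founded induction) matches the paper's, but there is a genuine gap at the heart of your decomposition. You claim $\con{\vec{P}_N}=\con{\vec{P}_{n^*}}+\sum_{s\notin\des{n^*}}\con{\vec{P}_s}$, where $n^*$ is chosen so that axes and $h$-modes have stabilized. That choice of $n^*$ only controls the vectors of $\vec{I}_m\setminus\setN^2$ discovered below $n^*$: mode stabilization says nothing about \emph{non-negative} iterable vectors entering the cone later. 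In particular, your assertion that the ``consecutive remainders'' have their ``positive part in $\con{\vec{P}_{n^*}}$'' is unsupported: a consecutive vector $\vec{z}_n-\vec{z}_s\in\setN^2$ with both $n$ and $s$ arbitrarily deep below $n^*$ can a priori point in a direction absent from $\con{\vec{P}_{n^*}}$, and a nondecreasing sequence of finitely-generated cones in $\setQ_{\geq 0}^2$ need not have a finitely-generated union, so ``the cone eventually stops growing'' is essentially the statement to be proved, not something you may extract from axis/mode stabilization. The paper circumvents this by \emph{not} absorbing the consecutive contributions into a single node's cone: it keeps the infinite sum $\sum_{m\in\des{n_0}}\con{\vec{C}_m^+}$ as a separate summand (\cref{lem:Pformula}) and proves it finitely generated on its own (\cref{lem:cons-finite}), using the structural bound $\vec{Z}\subseteq\{0,\ldots,h\}^2+\vec{U}^*$ of \cref{lem:cprime} (obtained from the wqo given by primarity) together with a minimum-slope argument. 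Neither \cref{lem:cprime} nor any substitute for \cref{lem:cons-finite} appears in your plan, and without them step (ii) of your ``main obstacle'' cannot be carried out.

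A second, related error is your choice $h:=c$ in the application of \cref{lem:finiteswithmodes}. The constant $c$ only bounds the negative components of \emph{elementary} vectors (cycles of length at most $|Q|$). Consecutive vectors $\vec{z}_n-\vec{z}_s$ with $\vec{z}_n,\vec{z}_s\not\geq(c,c)$ can have a non-axis component more negative than $-c$; the paper must take $h$ large enough that $\vec{Z}\subseteq\{0,\ldots,h\}^2+\vec{U}^*$ (and $h\geq c$) precisely so that every negative iterable vector falls within the scope of the $h$-mode. With $h=c$ the uniform stability statement you rely on would fail to cover all of $\vec{I}_m\setminus\setN^2$.
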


Let $\mathcal{G}$ be a primary and directed algorithmic exploration and let $\vec{P}_N$ be the periodic set $\sum_{n\in N}\vec{P}_n$. We introduce the set of axis $\vec{U}=\con{\vec{P}_N}\cap\{(1,0),(0,1)\}$, the constant of iteration $c$, and the set $\vec{Z}=\{\vec{z}_n \mid n\in N\wedge \vec{z}_n\not\geq (c,c)\}$. If $\vec{U}=\{(1,0),(0,1)\}$ we are done since in that case $\con{\vec{P}_N}=\setQ_{\geq 0}^2$. So, we can assume that $\vec{U}$ contains at most one vector.

\begin{lemma}\label{lem:cprime}
   There exists $h\in\setN$ such that $\vec{Z}\subseteq \{0,\ldots,h\}^2+\vec{U}^*$.
\end{lemma}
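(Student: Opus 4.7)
The plan is to reduce the lemma to the following symmetric pair of claims: \emph{if $(1,0) \notin \vec{U}$, then $\{\vec{z}_n(1) \mid n \in N,\ \vec{z}_n(2) < c\}$ is bounded}, and similarly with the coordinates swapped. Granting the two claims, the lemma follows by a case split on $\vec{U}$ (which, by assumption, contains at most one vector). If $\vec{U} = \emptyset$, both claims apply: for every $\vec{z}_n \in \vec{Z}$, at least one coordinate is $< c$ and the other is then bounded by the appropriate claim, so $\vec{Z} \subseteq \{0,\ldots,h\}^2 = \{0,\ldots,h\}^2 + \vec{U}^*$ for $h$ large enough. If $\vec{U} = \{(1,0)\}$ (the case $\vec{U} = \{(0,1)\}$ being symmetric), only the claim with coordinates swapped is available; it bounds $\vec{z}_n(2)$ whenever $\vec{z}_n(1) < c$, while the complementary subcase gives $\vec{z}_n(2) < c$ directly. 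In both subcases $\vec{z}_n(2)$ is bounded by some $h$, and the unbounded first coordinate is absorbed into $\vec{U}^* = \setN \cdot (1,0)$, so $\vec{Z} \subseteq \{0,\ldots,h\}^2 + \vec{U}^*$.

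To prove the first claim I argue by contradiction. Assume $(1,0) \notin \con{\vec{P}_N}$ while the first coordinate of $\vec{z}_n$ is unbounded over $n$ with $\vec{z}_n(2) < c$. Since the pairs $(q_n, \vec{z}_n(2))$ live in the finite set $Q \times \{0, \ldots, c-1\}$, pigeonhole yields a state $q$ and a value $v_2 < c$ for which infinitely many such $n$ share $q_n = q$ and $\vec{z}_n(2) = v_2$; among these I extract an infinite sequence $n_0, n_1, \ldots$ with strictly increasing $\vec{z}_{n_i}(1)$. By \cref{rem:primary2wqo}, $\xrightarrow{*}$ is a well-quasi-order on $N$, so some $n_i \xrightarrow{*} n_j$ with $i < j$; since $\vec{z}_{n_i}(1) < \vec{z}_{n_j}(1)$, these nodes are distinct, hence $n_i \xrightarrow{+} n_j$ and $n_i \in \anc{n_j}$.

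By construction $q_{n_i} = q_{n_j}$, and both $\vec{z}_{n_i}, \vec{z}_{n_j}$ fail $\geq (c,c)$ because their second coordinates are $< c$; so the definition of $\vec{C}_{n_j}$ gives $(k, 0) := \vec{z}_{n_j} - \vec{z}_{n_i} \in \vec{C}_{n_j}$ with $k > 0$. Since $\vec{C}_n \subseteq \vec{I}_n$ in both branches of the definition of $\vec{I}_n$, and the exploration is algorithmic so that $\vec{P}_{n_j} = \accel{\vec{I}_{n_j}}(\sum_{m \rightarrow n_j} \vec{P}_m)$, I can iterate the non-negative vector $(k, 0)$ arbitrarily many times starting from $(0,0) \in \sum_{m \rightarrow n_j} \vec{P}_m$ (each $\vec{P}_m$ is periodic, so contains $(0,0)$); all partial sums remain in $\setN^2$ because $(k, 0) \geq (0,0)$. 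Hence $\setN \cdot (k, 0) \subseteq \vec{P}_{n_j} \subseteq \vec{P}_N$, which forces $(1,0) \in \con{\vec{P}_N}$, contradicting the assumption. The heart of the proof, and the main obstacle to identify, is precisely this last move: the well-quasi-order coming from the primary structure of the sub-exploration produces an ancestor pair of the right shape, and the consecutive-cycle acceleration (via $\vec{C}_n \subseteq \vec{I}_n$) promotes the coordinate-aligned increment into a generator of the cone $\con{\vec{P}_N}$, which by hypothesis must already have been an axis of $\con{\vec{P}_N}$.
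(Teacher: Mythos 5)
Your proof is correct and follows essentially the same route as the paper's: both arguments pigeonhole nodes with a small coordinate into the finitely many classes $(q_n,\vec{z}_n(i))\in Q\times\{0,\dots,c-1\}$, use the well-quasi-order on $\xrightarrow{*}$ coming from primariness to produce an ancestor pair in the same class, and then exploit the fact that the resulting axis-aligned consecutive vector lies in $\vec{C}_n\subseteq\vec{I}_n$ and hence, via the acceleration $\vec{P}_n=\accel{\vec{I}_n}(\cdot)$, forces that axis into $\con{\vec{P}_N}=\vec{U}$-territory. The only difference is organizational: the paper constructs $h$ directly from the finitely many $\sqsubseteq$-minimal elements of each class, whereas you argue contrapositively, but the underlying mechanism is identical.
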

\begin{proof}
  Recall from \cref{sec:branches} that $\xrightarrow{*}$ is a wqo. Given $q\in Q$, $d\in\{0,\ldots,c-1\}$, and $i\in \{1,2\}$, we introduce the set $N_{q,i,d}$ of nodes $n\in N$ such that $q_n=q$ and $\vec{z}_n(i)=d$. Since $\xrightarrow{*}$ is a wqo, the set $M_{q,i,d}$ of minimal elements of $N_{q,d,i}$ for this partial order is finite, and for every $n\in N_{q,i,d}$, there exists $m\in M_{q,i,d}$ such that $m\xrightarrow{*}n$. We pick $h\in\setN$ satisfying $h\geq \vec{z}_m(\bar{i})$ for every $q\in Q$, $i\in\{1,2\}$, $d\in \{0,\ldots,c-1\}$ and $m\in M_{q,i,d}$.
 
  Observe that for every $\vec{z}\in\vec{Z}$, there exists $n\in N$ such that $\vec{z}=\vec{z}_n\not\geq (c,c)$. It follows that there exists $i\in\{1,2\}$ such that $d=\vec{z}_n(i)$ is in $\{0,\ldots,c-1\}$. Let $q=q_n$ and notice that $n\in N_{q,i,d}$. It follows that there exists $m\in M_{q,i,d}$ such that $m\xrightarrow{*}n$. Observe that $\vec{z}_m(i)=d=\vec{z}_n(i)$ and $q_m=q=q_n$. Moreover $\vec{z}_m(\bar{i})\leq h$. If $\vec{z}_n(\bar{i})\leq h$ then $\vec{z}\in\{0,\ldots,h\}^2$. If $\vec{z}_n(\bar{i})>h$, then $\vec{z}_n-\vec{z}_m\in\setQ_{\geq 0}\vec{u}$ for some $\vec{u}\in\{(1,0),(0,1)\}$. It follows that $\vec{u}$ is in $\con{\vec{P}_n}$ since $\vec{z}_n-\vec{z}_m$ is a consecutive vector in $\vec{C}_n$. Hence it is also in $\vec{U}$. As $\vec{z}_m\in \{0,\ldots,h\}^2$ we deduce that $\vec{z}\in  \{0,\ldots,h\}^2+\vec{U}^*$. We have proved the lemma.
\end{proof}

In the sequel, $h$ denotes a fix natural number satisfying the previous lemma and $h\geq c$. We put $\vec{P}_n'=\sum_{s\rightarrow n}\vec{P}_s$ for every node $n\in N$. If $\vec{P}_n'=\{(0, 0)\}$ for every $n\in N$ then $\con{\vec{P}_N}=\{(0, 0)\}$ and we are done also in that case. So, we can assume that there exists a node $n\in N$ such that $\vec{P}_n'\not=\{(0, 0)\}$.

\smallskip

Let us prove that there exists a node $n_0$ such that $\vec{U}$ is the set of axis of $\con{\vec{P}_{n_0}}$ and $\vec{P}_{n_0}'\not=\{(0, 0)\}$. If $\vec{U}=\emptyset$, it is sufficient to consider a node $n_0\in N$ such that $\vec{P}_{n_0}'\not=\{(0, 0)\}$. If $\vec{U}\not=\emptyset$, it is sufficient to consider a node $n_0$ such that the unique vector of $\vec{U}$ is in $\con{\vec{P}_{n_0}'}$. By replacing $n_0$ by a descendant of $n_0$, thanks to \cref{lem:finiteswithmodes}, we can assume that the $h$-mode of $\con{\vec{P}_n}$ is equal to the $h$-mode of $\con{\vec{P}_{n_0}}$ for every $n\in\des{n_0}$.

\smallskip

In the sequel, given a set $\vec{X}\subseteq \setQ^2$, we introduce $\vec{X}^+= \vec{X}\cap\setQ_{\geq 0}^2$ and $\vec{X}^-=\vec{X}\setminus\setQ_{\geq 0}^2$.
\begin{restatable}{lemma}{lemConsFinite}\label{lem:cons-finite}
    The set $\con{\vec{U}}+\sum_{m\in \des{n_0}}\con{\vec{C}_m^+}$ is a finitely-generated cone.
\end{restatable}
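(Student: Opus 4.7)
\textbf{Proof proposal for \cref{lem:cons-finite}.}

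The plan is to reduce the infinite sum $\sum_{m \in \des{n_0}} \con{\vec{C}_m^+}$ to a cone generated by a finite set, by exploiting \cref{lem:cprime} to bound one coordinate of every vector appearing in the union $\bigcup_{m \in \des{n_0}} \vec{C}_m^+$. First, observe that every $m$-consecutive vector has the form $\vec{z}_m - \vec{z}_s$ with $\vec{z}_m, \vec{z}_s \in \vec{Z}$, so $\bigcup_{m} \vec{C}_m^+ \subseteq (\vec{Z} - \vec{Z}) \cap \setQ_{\geq 0}^2$. Since $\sum_m \con{\vec{C}_m^+} = \con{\bigcup_m \vec{C}_m^+}$, it suffices to show that $\con{\vec{U}} + \con{\bigcup_m \vec{C}_m^+}$ is finitely generated.

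I would then split on $|\vec{U}|$, using that $\vec{U} \in \{\emptyset, \{(1,0)\}, \{(0,1)\}\}$ by the assumption preceding \cref{lem:cprime}. If $\vec{U} = \emptyset$, then \cref{lem:cprime} yields $\vec{Z} \subseteq \{0,\ldots,h\}^2$, so $\vec{Z}$ is finite; hence $\bigcup_m \vec{C}_m^+$ is a finite subset of $\setN^2$, and the cone it spans (together with $\con\vec{U} = \{(0,0)\}$) is finitely generated. The interesting case is $|\vec{U}| = 1$; by symmetry, I may assume $\vec{U} = \{(1,0)\}$. Then \cref{lem:cprime} gives $\vec{Z} \subseteq \{0,\ldots,h\}^2 + \setN(1,0)$, so every $\vec{z} \in \vec{Z}$ satisfies $\vec{z}(2) \in \{0,\ldots,h\}$. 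Therefore every $\vec{v} = \vec{z}_m - \vec{z}_s \in \vec{C}_m^+$ satisfies $\vec{v}(2) \in \{-h,\ldots,h\}$, and since $\vec{v} \geq (0,0)$ we in fact have $\vec{v}(2) \in \{0,\ldots,h\}$ and $\vec{v}(1) \in \setN$.

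The key step is then an explicit finite generating family. For each $b \in \{1,\ldots,h\}$ such that the set $A_b = \{a \in \setN : (a,b) \in \bigcup_m \vec{C}_m^+\}$ is non-empty, let $a^b = \min A_b$, which exists because $A_b \subseteq \setN$ is non-empty. Let $\vec{F} = \{(1,0)\} \cup \{(a^b, b) : A_b \neq \emptyset\}$, a finite subset of $\setN^2$. I claim $\con{\vec{U}} + \con{\bigcup_m \vec{C}_m^+} = \con{\vec{F}}$. The inclusion $\con{\vec{F}} \subseteq \con{\vec{U}} + \con{\bigcup_m \vec{C}_m^+}$ is clear since $(1,0) \in \vec{U}$ and each $(a^b, b)$ lies in some $\vec{C}_m^+$. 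For the reverse inclusion, take any $(a,b) \in \bigcup_m \vec{C}_m^+$. If $b = 0$, then $(a,0) = a \cdot (1,0) \in \con{\vec{F}}$. Otherwise $b \in \{1,\ldots,h\}$, so $a^b$ is defined and $a \geq a^b$; hence $(a,b) = (a^b, b) + (a - a^b)(1,0) \in \con{\vec{F}}$.

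The main obstacle is simply identifying the right bound to exploit: one must notice that \cref{lem:cprime} forces the coordinate transverse to the axis in $\vec{U}$ to stay bounded on $\vec{Z}$, hence on $\vec{Z} - \vec{Z}$ restricted to $\setQ_{\geq 0}^2$. Once that observation is in hand, the ``minimum element per fibre'' argument gives the finite generating set essentially for free; no deep geometry of cones is needed beyond the elementary fact that in dimension $2$, vectors in a horizontal strip of bounded height together with $(1,0)$ generate a finitely generated cone.
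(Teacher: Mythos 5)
Your proof is correct and follows essentially the same route as the paper's: the same case split on $\vec{U}$, the same use of \cref{lem:cprime} via $\vec{C}_m \subseteq \vec{Z} - \vec{Z}$ to bound the coordinate transverse to the axis, and then extraction of a finite generating set. The only (harmless) difference is in that last step: the paper picks a single generator $(\mu,1)$ of minimal slope $\mu=\min_{\vec{x}}\vec{x}(1)/\vec{x}(2)$, whereas you pick one minimal generator per horizontal fibre $b\in\{1,\dots,h\}$; both yield a valid finite generating family.
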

\begin{proof}[Proof sketch.]
  We just observe that $\vec{C}_m\subseteq \vec{Z}-\vec{Z}$ and $\vec{Z}\subseteq \{0,\ldots,h\}^2+\vec{U}^*$.
\end{proof}

Now, let $n_1\in\des{n_0}$ satisfying \cref{lem:elem_cons}.
\begin{restatable}{lemma}{lemPformula}\label{lem:Pformula}
  The set $\con{\vec{P}_N}$ can be decomposed as follows:
  \begin{align*}
     \sum_{m\in\des{n_0}\setminus (\des{n_1}\setminus\{n_1\})}\con{\vec{E}_m^+}+\sum_{s\not\in \des{n_0}}\con{\vec{P}_s}
    +
    \con{\vec{U}}+
    \sum_{m\in \des{n_0}}\con{\vec{C}_m^+}
  \end{align*}
\end{restatable}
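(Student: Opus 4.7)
The plan is to prove the two inclusions separately.

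For the inclusion $\vec{R} \subseteq \con{\vec{P}_N}$ (where $\vec{R}$ denotes the right-hand side), I would check each of the four summands. The term $\sum_{s \not\in \des{n_0}} \con{\vec{P}_s}$ is trivially included since $\vec{P}_s \subseteq \vec{P}_N$ for every $s \in N$. The term $\con{\vec{U}}$ is included by the definition $\vec{U} = \con{\vec{P}_N} \cap \{(1,0),(0,1)\}$. For $\con{\vec{E}_m^+}$ and $\con{\vec{C}_m^+}$ with $m \in \des{n_0}$, I would invoke the acceleration step $\vec{P}_m = \accel{\vec{I}_m}(\vec{P}'_m)$ from condition~(\ref{explo:P}) of \cref{def:algorithmic-exploration}: any vector in $\vec{I}_m \cap \setN^2$ trivially satisfies the nonnegativity constraint in the definition of $\accel{\vec{I}_m}$, hence belongs to $\vec{P}_m$. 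Since in both cases of the definition of $\vec{I}_m$ we have $\vec{E}_m \cup \vec{C}_m \subseteq \vec{I}_m$, both $\vec{E}_m^+$ and $\vec{C}_m^+$ are contained in $\vec{P}_m \subseteq \vec{P}_N$.

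For the converse $\con{\vec{P}_N} \subseteq \vec{R}$, I would show that $\con{\vec{P}_n} \subseteq \vec{R}$ for every $n \in N$. If $n \not\in \des{n_0}$ this is immediate. Otherwise $n \in \des{n_0}$ and I would use the acceleration identity $\vec{P}_n = \accel{\vec{I}_n}(\vec{P}'_n)$ passed to cones, following the geometric description recalled at the start of \cref{sec:termination}: $\con{\vec{P}_n}$ is generated by $\con{\vec{P}'_n}$, by $\con{\vec{I}_n^+}$, and by at most the axes introduced when stabilizing under vectors of $\vec{I}_n^-$. The crucial point is that every $\vec{v} \in \vec{I}_n^-$ lies in the $h$-mode of $\con{\vec{P}_{n_0}}$ (the required bounds $\vec{v}(i) \geq -h$ follow from $h \geq c$ together with the standard magnitude bounds on displacements of elementary cycles and on consecutive vectors attached to descendants of $n_0$). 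Because $n_0$ was chosen so that the $h$-mode of $\con{\vec{P}_n}$ equals that of $\con{\vec{P}_{n_0}}$ for every $n \in \des{n_0}$ (via \cref{lem:finiteswithmodes}), the vectors in $\vec{I}_n^-$ do not enlarge $\con{\vec{P}_n}$ beyond axes already present in $\vec{U}$. Iterating this containment along the edges of the exploration and terminating at predecessors lying outside $\des{n_0}$, I would obtain
\[
\con{\vec{P}_n} \;\subseteq\; \sum_{s \not\in \des{n_0}} \con{\vec{P}_s} \;+\; \con{\vec{U}} \;+\; \sum_{\substack{m \in \des{n_0}\\ m \xrightarrow{*} n}} \con{\vec{I}_m^+}.
\]

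Finally I would rewrite each $\con{\vec{I}_m^+}$ in the form demanded by $\vec{R}$. The $\vec{C}_m^+$ part, and where applicable the $\overline{\vec{C}}_m^+$ part, both feed into the fourth summand of $\vec{R}$, after absorbing into $\con{\vec{U}}$ any difference $\vec{z}_n - \vec{z}_s$ with one endpoint satisfying $\vec{z} \geq (c,c)$, which lies on an axis of $\vec{U}$. The $\vec{E}_m^+$ part is precisely where \cref{lem:elem_cons} is deployed: it expresses $\vec{E}_m$ as a finite sum of vectors drawn from $\per{\vec{E}_{m'}}$ with $m' \in \des{n_0} \setminus (\des{n_1} \setminus \{n_1\})$, from $\vec{P}_s$ with $s \not\in \des{n_0}$, and from $\per{\vec{C}_{m''}}$ with $m''$ on the path from $n_0$ to $m$; splitting into positive and negative parts, the positive contributions land in the expected summands of $\vec{R}$ while the negative contributions are again absorbed into $\con{\vec{U}}$ by the same mode argument.

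The main obstacle I anticipate is the careful verification that every negative vector encountered, both in the primary unrolling of the acceleration identity and in the subsequent application of \cref{lem:elem_cons}, genuinely belongs to the $h$-mode of $\con{\vec{P}_{n_0}}$. This is what pins down the choice $h \geq c$ and makes essential use of the $h$-mode stabilization along $\des{n_0}$ provided by \cref{lem:finiteswithmodes}. Once this is secured, the rest is routine manipulation of cone sums.
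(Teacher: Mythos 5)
Your overall strategy is the paper's: the easy inclusion is checked summand by summand (correctly), and the hard inclusion is obtained by pushing the acceleration identity $\vec{P}_n = \accel{\vec{I}_n}(\vec{P}'_n)$ through the exploration, invoking \cref{lem:elem_cons} to re-express the elementary vectors, and using the $h$-mode stabilization to neutralize the negative vectors. The ingredients you name ($h \geq c$, \cref{lem:finiteswithmodes}, \cref{lem:elem_cons}) are exactly the ones the paper uses; the paper merely packages the unrolling as a single well-founded induction with an auxiliary cone $\vec{R}_n$.

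There is, however, a gap in the step where the negative contributions are ``absorbed into $\con{\vec{U}}$.'' After applying \cref{lem:elem_cons}, a vector $\vec{v} \in \vec{E}_m^+$ is written as $\vec{p} + \vec{w}_1 + \cdots + \vec{w}_k$ with $\vec{p}$ in the admissible positive summands and the $\vec{w}_i$ negative elementary or consecutive vectors; to conclude $\vec{v} \in \vec{R}$ you need the target cone (the right-hand side, or the relevant partial sum of it) to be $\vec{w}_i$-stable. Knowing that each $\vec{w}_i$ lies in the common $h$-mode $\vec{M}$ of the cones $\con{\vec{P}_n}$, $n \in \des{n_0}$, is not by itself enough: \cref{lem:inclusionstab} says that stabilizers shrink as cones grow, so $\vec{w}_i$-stability of some $\con{\vec{P}_{m_i}}$ does not transfer upward for free to a cone, like the right-hand side, that is not one of the $\con{\vec{P}_n}$. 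The paper closes this by sandwiching its auxiliary cone between $\con{\vec{P}_{n_0}}$ and $\con{\vec{P}_{n'}}$ for a common descendant $n'$ (this is precisely where directedness is used), deducing from \cref{lem:inclusionstab} that its $h$-mode is again $\vec{M}$; it also needs \cref{lem:useless-stabilizers} (the stabilizer is a cone in dimension two) to handle the sum $\vec{w}_1 + \cdots + \vec{w}_k$ rather than a single negative vector. Note also that once this is in place the negative vectors are not ``absorbed into $\con{\vec{U}}$'' at all — they contribute nothing, because the target cone is already stable under them; the $\con{\vec{U}}$ summand is needed for other reasons (the axes produced by \cref{lem:cprime} and by the plain $\vec{v}$-stabilizations). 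Without the sandwich-and-stabilizer argument, your absorption claim does not go through as stated.
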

\begin{proof}[Proof sketch.]
  Let $\vec{K}$ be the cone given above. Clearly $\vec{E}_m^+$ where $m\in\des{n_0}\setminus (\des{n_1}\setminus\{n_1\})$, $\vec{P}_s$ where $s\not\in \des{n_0}$, $\vec{U}$ and $\vec{C}_m^+$ where $m\in \des{n_0}$ are included in $\con{\vec{P}_N}$. It follows that $\vec{K}\subseteq \con{\vec{P_N}}$. We prove the converse inclusion by induction on the well-foundedness of the relation $\rightarrow$, showing that $\vec{P}_n\subseteq \vec{K}$ for every $n\in N$. The crucial observation is the fact that even if in the right-hand side we discard vectors in $\vec{E}_m^-$ and $\vec{C}_m^-$, since the $h$-mode of $\con{\vec{P}_n}$ does not depend on the node $n\in\des{n_0}$, we can apply the decomposition of $\vec{E}_n$ given by \cref{lem:elem_cons} to prove the inclusion  $\vec{P}_n\subseteq \vec{K}$.
\end{proof}

 \cref{lem:cons-finite} and \cref{lem:Pformula} show that $\con{\vec{P_N}}$ is a finitely-generated cone.

\subsection{Wrap-Up}\label{sec:wrap-up}
In this section,
we prove the termination of our algorithm by assembling the results from the previous sections.
We start with the following observation.

\begin{restatable}{lemma}{lemZdiffStable}
  \label{lem:zdiff-stable}
  Let $\explo = (N, \rightarrow, \lambda)$ be an algorithmic exploration of a $2$-BVASS $\bvass$ and
  let $s, n \in N$.
  If $s \xrightarrow{*} n$, $q_s = q_n$ and $\vec{P}_s = \vec{P}_n$ then
  $\con{\vec{P}_n}$ is $(\vec{z}_n - \vec{z}_s)$-stable.
\end{restatable}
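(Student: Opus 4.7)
The plan is to case-split. First, if $s = n$ then $\vec{v} := \vec{z}_n - \vec{z}_s = (0, 0)$ and the cone $\con{\vec{P}_n}$ is trivially $\vec{0}$-stable. In the remainder I assume $s \xrightarrow{+} n$, which entails $s \in \anc{n} \setminus \{n\}$ with $q_s = q_n$, and therefore $\vec{v} \in \overline{\vec{C}}_n$ by definition of $n$-consecutive vectors. A further split is on whether $\vec{P}'_n := \sum_{m \to n} \vec{P}_m = \{(0, 0)\}$. In this first sub-case $\vec{I}_n = \vec{E}_n \cup \overline{\vec{C}}_n \ni \vec{v}$, and I will appeal to the geometric interpretation of acceleration given at the start of \cref{sec:termination} (together with the proof sketch of \cref{lem:accel-finitely-generated-dim-2-bis}): the cone $\con{\vec{P}_n}$ is $\vec{w}$-stable for every $\vec{w} \in \vec{I}_n$, because either $\vec{w} \geq (0, 0)$ lies in the cone's generating set, or $\vec{w}$ has a negative component and the $\vec{w}$-stabilization has added the appropriate axis. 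Taking $\vec{w} = \vec{v}$ closes this sub-case.

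The harder sub-case is $\vec{P}'_n \neq \{(0, 0)\}$, and there I will use the hypothesis $\vec{P}_s = \vec{P}_n$. Fixing a path $s = m_0 \to m_1 \to \cdots \to m_k = n$, Condition~\ref{explo:P} of \cref{def:algorithmic-exploration} combined with the extensivity of $\accel{}$ gives $\vec{P}_{m_0} \subseteq \vec{P}_{m_1} \subseteq \cdots \subseteq \vec{P}_{m_k}$; the hypothesis then forces all $\vec{P}_{m_i}$ to equal a common periodic set $\vec{P}$. A short computation then yields $\sum_{m \to m_i} \vec{P}_m = \vec{P}$ and $\accel{\vec{I}_{m_i}}(\vec{P}) = \vec{P}$ for every $i \geq 1$, so by the same reasoning as in the first sub-case, $\con{\vec{P}_n} = \con{\vec{P}}$ is $\vec{w}$-stable for every $\vec{w} \in \bigcup_{i} \vec{I}_{m_i}$.

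The final step will decompose $\vec{v}$ as a combination of such stable directions plus a vector in $\vec{P}$. Using Condition~\ref{explo:z} along the path, I will translate the exploration path into a cycle $\theta$ of the instantiated $2$-VASS $\mathcal{V}_n$ at state $q_n$, whose displacement $\vec{u}$ satisfies $\vec{v} = \vec{u} + \vec{p}$, with $\vec{p}$ a sum of periodic sibling contributions lying in $\vec{P}$. I will then decompose $\theta$ into elementary cycles and show that each elementary cycle's displacement lies in some $\vec{E}_{m_j} \subseteq \vec{I}_{m_j}$, relying on an iterable ancestor of $n$ at the cycle's base state with coordinates $\geq (c, c)$ (whose existence is supplied by $\vec{P}'_n \neq \{(0, 0)\}$ combined with \cref{fact:iteration-cycles-of-instantiation}). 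The main obstacle is precisely this elementary-cycle-to-iterable matching, especially for elementary cycles based at path nodes $m_i$ with $\vec{z}_{m_i} \not\geq (c, c)$, and the argument should parallel the decomposition underlying \cref{lem:elem_cons}.
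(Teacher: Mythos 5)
Your overall strategy is the one the paper follows: establish that $\con{\vec{P}_t}$ is $\vec{w}$-stable for every iterable vector $\vec{w} \in \vec{I}_t$ (this is exactly \cref{lem:conp-stabilizer}, since Condition~\ref{explo:P} gives $\vec{P}_t = \perP{\vec{J}_t}$ for some $\vec{J}_t \supseteq \vec{I}_t$), dispatch the case where $\vec{z}_n - \vec{z}_s$ is itself in $\vec{I}_n$, and otherwise decompose $\vec{z}_n - \vec{z}_s$ into a sum of iterable vectors of intermediate nodes $t$ with $s \xrightarrow{*} t \xrightarrow{*} n$ (all of which satisfy $\vec{P}_t = \vec{P}_n$ thanks to the hypothesis $\vec{P}_s = \vec{P}_n$) plus a vector of $\vec{P}_n$. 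The gap is in the decomposition itself, which you leave as ``the main obstacle.'' Your stated plan --- match every elementary cycle of the induced cycle of $\mathcal{V}_n$ to some $\vec{E}_{m_j}$ via an anchor ancestor with $\vec{z} \geq (c, c)$ at the cycle's base state --- fails as written: $\sum_{m \rightarrow n} \vec{P}_m \neq \{(0,0)\}$ does not supply such an anchor for an arbitrary base state, and for elementary cycles rooted at small configurations none need exist. The missing idea, carried out in the paper's \cref{lem:elem_cons:1} via \cref{algo:decompose}, is to first excise maximal segments of the path between repeated occurrences of a state at configurations $\not\geq (c, c)$: each excised segment contributes a \emph{consecutive} vector of some $\vec{C}_t \subseteq \vec{I}_t$ (plus periodic contributions), and only the residual cycle is decomposed into elementary cycles, whose endpoints can then be anchored at configurations $\geq (c, c)$. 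This is why the resulting decomposition lives in $\sum_{t} \bigl(\per{\vec{E}_t} + \per{\vec{C}_t}\bigr) + \sum_{m \rightarrow n} \vec{P}_m$ rather than in elementary contributions alone.

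A second, smaller omission: to conclude $(\vec{z}_n - \vec{z}_s)$-stability from stability under each summand, you need the stabilizer of $\con{\vec{P}_n}$ to be closed under addition. This is \cref{lem:useless-stabilizers}; it holds here only because $\vec{P}_n \supseteq \sum_{m \rightarrow n} \vec{P}_m \neq \{(0,0)\}$ in your second sub-case, and it is specific to dimension two (the paper exhibits a three-dimensional cone whose stabilizer is not closed under addition), so it cannot be taken for granted.
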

\begin{proof}[Proof sketch.]
   The decomposition mentioned in \cref{sec:elem-cons} provides a way to decompose a path from $s$ to $n$ in the algorithmic exploration into elementary cycles and consecutive cycles, with cutting points from intermediate nodes $m$ satisfying $\vec{z}_m\not\geq (c,c)$. With such a decomposition, we deduce that $\vec{z}_n-\vec{z}_s$ is a sum of vectors in $\vec{P}_n$ and vectors $\vec{v}\in\vec{I}_m$ where $m$ ranges over the intermediate nodes of the considered path. Since $\con{\vec{P}_m}$ is $\vec{v}$-stable for each $\vec{v}\in\vec{I}_m$ and $\con{\vec{P}_m}=\con{\vec{P}_n}$, we deduce that $\con{\vec{P}_n}$ is $\vec{v}$-stable for all those vectors $\vec{v}\in\vec{I}_m$. We deduce that $\con{\vec{P}_n}$ is $(\vec{z}_n - \vec{z}_s)$-stable.
\end{proof}

\begin{lemma}
  \label{lem:finiteness-of-nice-explorations}
  Every non-redundant, algorithmic, primary and directed exploration of a $2$-BVASS is finite.
\end{lemma}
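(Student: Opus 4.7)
The plan is to argue by contradiction: we assume that $\explo = (N, \rightarrow, \lambda)$ is infinite and aim to produce a non-leaf node that is redundant, contradicting the non-redundancy of $\explo$.

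\textbf{Stabilization of the periodic sets.} By \cref{lem:Pfinite} and \cref{cor:HPcone}, the periodic set $\vec{P}_N = \sum_{n \in N} \vec{P}_n$ is finitely-generated. From condition~\ref{explo:P} of \cref{def:algorithmic-exploration} together with the fact that $\accel{\cdot}$ is extensive, an induction on $\xrightarrow{+}$ shows that $\vec{P}_m \subseteq \vec{P}_n$ whenever $m \xrightarrow{*} n$. Because a finite spanning set of $\vec{P}_N$ only involves finitely many nodes and $\explo$ is directed, we obtain a node $n_0 \in N$ with $\vec{P}_{n_0} = \vec{P}_N$; consequently $\vec{P}_n = \vec{P}_N =: \vec{P}$ for every $n \in \des{n_0}$.

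\textbf{Extraction and reduction.} Primariness makes $\des{n_0}$ infinite and, by \cref{rem:primary2wqo}, makes $\xrightarrow{*}$ a well-quasi-order on $N$. Combining this wqo with the finiteness of $Q$ and Dickson's lemma on $\setN^2$, we extract from $\des{n_0}$ an infinite sequence $(m_j)_{j \in \setN}$ of pairwise distinct nodes with $m_j \xrightarrow{+} m_{j+1}$, $q_{m_j} = q$ constant, and $\vec{z}_{m_j}$ componentwise non-decreasing; since $(q, \vec{P})$ is shared along the sequence, non-redundancy forces the $\vec{z}_{m_j}$'s to be pairwise distinct (otherwise some $m_k$ would be covered by an earlier $m_j$ yet have $m_{k+1}$ as descendant), so the sequence is unbounded. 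For every $j < k$, \cref{lem:zdiff-stable} applied to $s = m_j$ and $n = m_k$ yields $\vec{v}_{jk}$-stability of $\con{\vec{P}}$, where $\vec{v}_{jk} := \vec{z}_{m_k} - \vec{z}_{m_j} \geq (0, 0)$; hence $\vec{v}_{jk} \in \con{\vec{P}}$. It then suffices to upgrade this to $\vec{v}_{jk} \in \vec{P}$ for at least one pair: $q_{m_k}(\vec{z}_{m_k} + \vec{P}) \subseteq q_{m_j}(\vec{z}_{m_j} + \vec{P})$ would make $m_k$ redundant, whereas $m_k \xrightarrow{+} m_{k+1}$ ensures $m_k$ is not a leaf, the sought contradiction.

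\textbf{The promotion -- main obstacle.} This last upgrade is the hard part; we handle it by a dimension analysis of $\con{\vec{P}}$. If $\vec{P} = \{(0, 0)\}$, the stability of $\{(0, 0)\}$ under $\vec{v}_{jk}$ forces $\vec{v}_{jk} = (0, 0)$, directly clashing with distinctness of the $\vec{z}_{m_j}$'s. Otherwise the subgroup $\vec{P} - \vec{P} \subseteq \setZ^2$ has positive rank, and $\setZ^2 / (\vec{P} - \vec{P})$ is a finitely-generated abelian group; a further extraction (pigeonhole on its torsion part, Dickson on its free part) arranges that all $\vec{z}_{m_j}$ lie in a single coset, so $\vec{v}_{jk} \in \vec{P} - \vec{P}$ for every remaining pair. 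A classical saturation result for finitely-generated submonoids of $\setN^2$ -- the Frobenius threshold when $\con{\vec{P}}$ is one-dimensional, or its Gordan-type analogue when $\con{\vec{P}}$ is two-dimensional -- then guarantees $\vec{v}_{jk} \in \vec{P}$ as soon as $\vec{v}_{jk}$ is sufficiently large, and unboundedness of $(\vec{z}_{m_j})$ delivers such a pair, completing the contradiction.
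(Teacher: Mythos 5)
Your proof follows the paper's own argument almost step for step up to the last paragraph: stabilization of the periodic sets via \cref{lem:Pfinite} and directedness, extraction of an infinite chain $m_1 \xrightarrow{+} m_2 \xrightarrow{+} \cdots$ with constant state and non-decreasing $\vec{z}$ via primariness and Dickson, and the use of \cref{lem:zdiff-stable} to place every difference $\vec{v}_{jk}$ in $\con{\vec{P}}$. The divergence is in the ``promotion'' step, and that is where there is a genuine gap. The ``classical saturation result'' you invoke --- that an element of $(\vec{P}-\vec{P}) \cap \con{\vec{P}} \cap \setN^2$ belongs to $\vec{P}$ as soon as it is sufficiently large --- is false. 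Take $\vec{P} = \per{\{(1,0),(1,1),(1,3)\}}$: here $\vec{P}-\vec{P} = \setZ^2$ and $\con{\vec{P}} = \{(x,y) : 0 \leq y \leq 3x\}$, yet $(n, 3n-1) \notin \vec{P}$ for every $n \geq 1$, so the saturation contains arbitrarily large elements outside $\vec{P}$. The correct conductor-type statement only guarantees membership for elements of the form $\vec{c} + \bigl((\vec{P}-\vec{P}) \cap \con{\vec{P}}\bigr)$ for a suitable $\vec{c} \in \vec{P}$, i.e.\ elements lying deep inside the cone, and unboundedness of $(\vec{z}_{m_j})$ alone does not produce a difference $\vec{v}_{jk}$ of that form (the $\vec{z}_{m_j}$ could drift to infinity while all differences stay within bounded distance of an extremal ray of $\con{\vec{P}}$). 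Handling that case would require a further dichotomy (Frobenius-type reasoning on each extremal ray versus the interior), none of which is in your text. As a secondary point, your coset extraction is justified by ``Dickson on the free part'', but Dickson yields inequalities, not equality of residues; what actually saves that step is that the differences lie in $\setQ(\vec{P}-\vec{P}) \cap \setZ^2$, whose image in $\setZ^2/(\vec{P}-\vec{P})$ is the finite torsion subgroup.

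The paper avoids all of this by invoking the Hopcroft--Pansiot lemma (\cref{lem:HPcone}): since $\vec{B} = \{\vec{z}_{n_i} \mid i \geq 1\}$ satisfies $\vec{B} + \vec{P} \subseteq \vec{z}_{n_1} + \con{\vec{P}}$, one gets $\vec{B} + \vec{P} = \vec{B}' + \vec{P}$ for a finite $\vec{B}' \subseteq \vec{B}$, and pigeonhole immediately yields $i < j$ with $\vec{z}_{n_j} \in \vec{z}_{n_i} + \vec{P}$, hence a redundant non-leaf node. This is a finiteness (pigeonhole on $\vec{P}$-classes) statement, not a largeness statement; it needs neither the case split on $\vec{P} = \{(0,0)\}$, nor the coset extraction, nor any interiority condition. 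To repair your proof, replace the entire promotion paragraph by an appeal to \cref{lem:HPcone}, or prove an equivalent of it.
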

\begin{proof}
  Let $\explo = (N, \rightarrow, \lambda)$ be a non-redundant, algorithmic,
  primary and directed exploration of a $2$-BVASS $\bvass = (Q, \Delta)$.
  By \cref{lem:Pfinite},
  the set $\vec{P} = \sum_{n\in N} \vec{P}_n$ is a finitely-generated periodic set.
  This entails that $\vec{P} = \sum_{n \in N'} \vec{P}_n$ for some finite set $N' \subseteq N$.
  As $\explo$ is directed,
  there exists a node $n_0 \in N$ such that $n' \xrightarrow{*} n_0$ for every $n' \in N'$.
  This entails that $\vec{P}_n = \vec{P}$ for all $n \in \des{n_0}$.
  Assume,
  by contradiction,
  that $N$ is infinite.
  Since $\explo$ is primary,
  the set $\des{n_0}$ is infinite and the binary relation $\xrightarrow{*}$ is a wqo on $N$.
  By Dickson's Lemma,
  the binary relation $\preceq$ on $N$ defined by
  by $m \preceq n$ if $q_m = q_n$ and $\vec{z}_m \leq \vec{z}_n$,
  is also a wqo on $N$.
  We derive that there exists an infinite sequence
  $n_1, n_2, n_3, \ldots$ of nodes in $\des{n_0}$ such that
  $n_1 \xrightarrow{+} n_2 \xrightarrow{+} n_3 \xrightarrow{+} \cdots$,
  $q_{n_1} = q_{n_2} = q_{n_3} = \cdots$, and
  $\vec{z}_{n_1} \leq \vec{z}_{n_2} \leq \vec{z}_{n_3} \leq \cdots$.
  Note that $\vec{P}_{n_i} = \vec{P}$ for all $i \geq 1$.
  We deduce from \cref{lem:zdiff-stable} that
  $\con{\vec{P}}$ is $(\vec{z}_{n_i} - \vec{z}_{n_1})$-stable for all $i \geq 1$.
  Since the vector $\vec{z}_{n_i} - \vec{z}_{n_1}$ is in $\setN^2$, we get that it is in $\con{\vec{P}}$.
  It follows that
  $\vec{z}_{n_i} \in \vec{z}_{n_1} + \con{\vec{P}}$ for all $i \geq 1$.
  So the set $\vec{B} = \{\vec{z}_{n_i} \mid i \geq 1\}$ is contained in
  $\vec{z}_{n_1} + \con{\vec{P}}$.
  We obtain from \cite{DBLP:journals/tcs/HopcroftP79} (see statement \cref{lem:HPcone}) that
  $\vec{B} + \vec{P} = \vec{B}' + \vec{P}$ for some finite subset $\vec{B}' \subseteq \vec{B}$.
  This entails that $\vec{z}_{n_j} \in (\vec{z}_{n_i} + \vec{P})$
  for some $i < j$.
  Observe that
  $q_{n_j}(\vec{z}_{n_j} + \vec{P}_{n_j}) \subseteq q_{n_i}(\vec{z}_{n_i} + \vec{P}_{n_i})$.
  Moreover, we have $n_i \xrightarrow{+} n_j$ since $i < j$.
  So the node $n_j$ is redundant,
  but $n_j$ is not a leaf since $n_j \xrightarrow{+} n_{j+1}$.
  This contradicts our assumption that $\explo$ is non-redundant.
\end{proof}

\begin{corollary}
  \label{cor:finiteness-of-nice-explorations}
  Every non-redundant, algorithmic and spannable exploration of a $2$-BVASS is finite.
\end{corollary}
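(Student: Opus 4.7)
The plan is to prove the corollary by contradiction, reducing the spannable case to the primary and directed case already handled by \cref{lem:finiteness-of-nice-explorations}. Suppose, aiming at a contradiction, that $\explo = (N, \rightarrow, \lambda)$ is a non-redundant, algorithmic, and spannable exploration of a $2$-BVASS $\bvass$ that is infinite. Then \cref{lem:well-extraction} applies to $(N, \rightarrow)$ and yields an infinite ancestor-closed subset $X \subseteq N$ such that the restriction $\explo_X$ of $\explo$ to $X$ is primary and directed. The remaining goal is to check that $\explo_X$, equipped with the restriction of $\lambda$ to $X$, is itself a non-redundant, algorithmic exploration of $\bvass$. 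Once this is established, \cref{lem:finiteness-of-nice-explorations} applies to $\explo_X$, giving that $X$ is finite and contradicting the choice of $X$.

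The key observation that makes the inheritance of properties go through is the ancestor-closedness of $X$. Since $X = \anc[\explo]{X}$, for every node $n \in X$ the set of predecessors $\{m \in N \mid m \rightarrow n\}$ is entirely contained in $X$, so the set of predecessors of $n$ in $\explo_X$ coincides with the set of predecessors of $n$ in $\explo$. Similarly, $\anc[\explo_X]{n} = \anc[\explo]{n}$ for every $n \in X$. Well-foundedness of the edge relation is trivially preserved by taking a restriction, and the labelling $\lambda$ is unchanged on $X$, so $\explo_X$ is an exploration in the sense of \cref{def:exploration}.

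For the algorithmic property, Conditions~\ref{explo:v+q} and~\ref{explo:z} of \cref{def:algorithmic-exploration} only quantify over predecessors $m \rightarrow n$, hence they transfer verbatim from $\explo$ to $\explo_X$. For Condition~\ref{explo:P}, the sets $\vec{E}_n$, $\vec{C}_n$, $\overline{\vec{C}}_n$ and the derived set $\vec{I}_n$ are all defined in terms of the ancestors of $n$ (via the instantiated $2$-VASS $\mathcal{V}_n$) and the predecessors of $n$ (via the test on $\sum_{m \rightarrow n} \vec{P}_m$). Both sets coincide in $\explo$ and $\explo_X$ by ancestor-closedness, so $\vec{I}_n$ is the same in $\explo_X$ as in $\explo$, and the identity $\vec{P}_n = \accel{\vec{I}_n}(\sum_{m \rightarrow n} \vec{P}_m)$ still holds. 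Hence $\explo_X$ is algorithmic.

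Non-redundancy transfers in the same way: if some node $n \in X$ were redundant but not a leaf in $\explo_X$, the witness $s \xrightarrow{+} n$ with $q_n(\vec{z}_n + \vec{P}_n) \subseteq q_s(\vec{z}_s + \vec{P}_s)$ would also witness redundancy of $n$ in $\explo$ (since $s \in \anc[\explo]{n} \subseteq X$ and the labels agree), and any child of $n$ in $\explo_X$ is also a child in $\explo$, so $n$ would be a non-leaf redundant node of $\explo$, contradicting the non-redundancy of $\explo$. This completes the verification, and \cref{lem:finiteness-of-nice-explorations} yields the desired contradiction. I expect no genuine obstacle: every step is a routine check that the relevant structural quantities look only at ancestors and predecessors, which are preserved by the ancestor-closedness of $X$.
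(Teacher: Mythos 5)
Your proof is correct and follows essentially the same route as the paper: apply \cref{lem:well-extraction} to extract an infinite ancestor-closed subset whose restriction is primary and directed, observe that restriction to an ancestor-closed set preserves being an exploration, non-redundancy and the algorithmic property (precisely because predecessors, ancestors, and hence $\mathcal{V}_n$ and $\vec{I}_n$ are unchanged), and then invoke \cref{lem:finiteness-of-nice-explorations} for the contradiction. The paper leaves this preservation step as an unproved ``observation,'' whereas you spell it out; the details you give are accurate.
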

\begin{proof}
  This corollary follows from \cref{lem:well-extraction}, \cref{lem:finiteness-of-nice-explorations} and
  the following observation.
  Given an exploration $\explo = (N, \rightarrow, \lambda)$ of a $2$-BVASS $\bvass$ and
  an ancestor-closed subset $X$ of $N$,
  the restriction of $\explo$ to $X$ is also an exploration of $\bvass$.
  Moreover,
  if $\explo$ is non-redundant (resp., algorithmic) then the restriction of $\explo$ to $X$
  is also non-redundant (resp., algorithmic).
\end{proof}

As indicated in \cref{sec:branches},
for every $2$-BVASS $\bvass$ and
every execution $\sigma$ of $\mathtt{Explore}(\bvass)$,
the constructed exploration $\hat{\explo}^\sigma$ is spannable.
We derive from
\cref{lem:soundness-of-algorithmic-explorations,lem:algo-partial-correctness,cor:finiteness-of-nice-explorations}
that every execution of $\mathtt{Explore}(\bvass)$ terminates and
returns a sound and complete finite exploration of $\bvass$.
The reachability set of $\bvass$ is then easily obtained from this exploration.
We obtain the following theorem.

\begin{theorem}
  For every $2$-BVASS $\bvass$,
  the reachability set $\ReachSet{\bvass}$ of $\bvass$ is semilinear and
  a presentation of $\ReachSet{\bvass}$ is computable from $\bvass$.
\end{theorem}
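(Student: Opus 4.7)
The plan is to assemble the finiteness and correctness results developed in Sections~\ref{sec:correctness}--\ref{sec:termination} and then read off a presentation of $\ReachSet{\bvass}$ directly from the exploration returned by $\mathtt{Explore}(\bvass)$. Given a $2$-BVASS $\bvass$, I would run $\mathtt{Explore}(\bvass)$ on any execution $\sigma$ and let $\hat{\explo}^\sigma = (N, \rightarrow, \lambda)$ denote the limit algorithmic exploration produced.

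First, I would show that $\sigma$ must terminate. By \cref{lem:algo-partial-correctness}, $\hat{\explo}^\sigma$ is a non-redundant algorithmic exploration of $\bvass$, and as indicated in \cref{sec:branches} it is also spannable (via the forthcoming \cref{lem:caragoodforbranch}). Then \cref{cor:finiteness-of-nice-explorations} forces $\hat{\explo}^\sigma$ to be finite. Since the non-terminating case of \cref{lem:algo-partial-correctness} would produce an infinite $\hat{\explo}^\sigma$, termination follows immediately. Consequently, $\mathtt{Explore}(\bvass)$ returns a finite exploration $\explo = \hat{\explo}^\sigma$ that is algorithmic, non-redundant, and complete.

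Second, I would invoke \cref{lem:soundness-of-algorithmic-explorations} to obtain $\ReachSet{\explo} \subseteq \ReachSet{\bvass}$, and combine it with the completeness from \cref{lem:algo-partial-correctness} to conclude
\begin{equation*}
  \ReachSet{\bvass} \;=\; \ReachSet{\explo} \;=\; \bigcup_{n \in N} q_n(\vec{z}_n + \vec{P}_n).
\end{equation*}
Since $N$ is finite and each $\vec{P}_n$ is, by construction at line~\ref{line:acceleration} of \cref{algo:main}, a finitely-generated periodic subset of $\setN^2$ admitting a computable finite spanning set (via \cref{lem:accel-finitely-generated-dim-2-bis}, as spelled out in \cref{rem:algo-implementability-issues}), each $q_n(\vec{z}_n + \vec{P}_n)$ is a linear set of configurations with a computable triple $(q_n, \vec{z}_n, \vec{A}_n)$ where $\per{\vec{A}_n} = \vec{P}_n$. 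Collecting these triples yields a presentation of $\ReachSet{\bvass}$, establishing both semilinearity and computability.

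There is no real obstacle left at this point: all the difficulty has been absorbed into the termination argument of \cref{sec:termination} and the soundness/completeness machinery of \cref{sec:correctness}. The only care needed is to observe that the periodic sets maintained by the algorithm are not just abstract semilinear objects but are manipulated through computable finite generating sets throughout the run, so that the triples $(q_n, \vec{z}_n, \vec{A}_n)$ are genuinely extractable from the output of $\mathtt{Explore}(\bvass)$.
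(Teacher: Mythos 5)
Your proof is correct and follows essentially the same route as the paper: it assembles \cref{lem:algo-partial-correctness}, the spannability of $\hat{\explo}^\sigma$, and \cref{cor:finiteness-of-nice-explorations} to get termination, then combines soundness (\cref{lem:soundness-of-algorithmic-explorations}) with completeness to read off a semilinear presentation from the finite returned exploration. Your extra remark that the periodic sets $\vec{P}_n$ carry computable finite spanning sets throughout the run is exactly the point the paper delegates to \cref{rem:algo-implementability-issues}, so nothing is missing.
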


\section{Conclusion}\label{sec:conclusion}
In this paper,
we have shown that the reachability set of a $2$-BVASS admits a computable semilinear presentation.
This entails that the reachability problem for $2$-BVASS is decidable.
Our approach,
which is inspired from Hopcroft and Pansiot's algorithm for classical $2$-VASS~\cite{DBLP:journals/tcs/HopcroftP79},
does not provide any upper bound on the complexity of this problem.
The decidability status of the reachability problem for $d$-BVASS remains open in arbitrary dimension.

\bibliographystyle{plainurl}
\bibliography{biblio.bib}

\clearpage
\appendix

\section{Additional Results on Cones and Periodic Sets}
\subsection{General Dimension}

Any intersection of periodic sets is a periodic set. We recall that the intersection of two finitely-generated periodic sets is a finitely-generated periodic set~\cite{GS-PACIF66}. Given a set $\vec{A}\subseteq\setQ^d$, we also introduce the set $\perP{\vec{A}}$ of finite sums $\vec{a}_1+\cdots+\vec{a}_k$ where $k\in\setN$, and $\vec{a}_1,\ldots,\vec{a}_k\in\vec{A}$ are such that $\sum_{\ell=1}^j\vec{a}_\ell\geq \vec{0}$ for every $j\in\{0,\ldots,k\}$.

\begin{lemma}\label{lem:perpp}
   $\perP{\vec{A}}$ is periodic.
\end{lemma}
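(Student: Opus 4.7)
The plan is to verify the two defining properties of a periodic set directly from the definition of $\perP{\vec{A}}$, namely that $\vec{0} \in \perP{\vec{A}}$ and that $\perP{\vec{A}}$ is closed under addition.

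First, I would observe that $\vec{0} \in \perP{\vec{A}}$ by taking $k = 0$: the empty sum equals $\vec{0}$, and the prefix-sum condition $\sum_{\ell=1}^{j}\vec{a}_\ell \geq \vec{0}$ is vacuously satisfied for $j \in \{0\}$ (the empty sum being $\vec{0}$).

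Second, to show $\perP{\vec{A}} + \perP{\vec{A}} \subseteq \perP{\vec{A}}$, I would take two arbitrary elements $\vec{p} = \vec{a}_1 + \cdots + \vec{a}_k$ and $\vec{q} = \vec{b}_1 + \cdots + \vec{b}_m$ of $\perP{\vec{A}}$ and consider the concatenation $\vec{a}_1, \ldots, \vec{a}_k, \vec{b}_1, \ldots, \vec{b}_m$, whose total sum is $\vec{p} + \vec{q}$. I need to verify that every prefix sum of this concatenated sequence is non-negative. For prefixes of length $j \leq k$, the partial sum equals $\sum_{\ell=1}^{j}\vec{a}_\ell$, which is $\geq \vec{0}$ by the assumption on $\vec{p}$. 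For prefixes of length $j$ with $k < j \leq k+m$, the partial sum equals $\vec{p} + \sum_{\ell=1}^{j-k}\vec{b}_\ell$. The key observation is that $\vec{p} \geq \vec{0}$, since $\vec{p}$ itself arises as the partial sum at index $k$ in the definition of $\perP{\vec{A}}$ applied to the sequence $\vec{a}_1, \ldots, \vec{a}_k$. Combined with $\sum_{\ell=1}^{j-k}\vec{b}_\ell \geq \vec{0}$ from the assumption on $\vec{q}$, this yields $\vec{p} + \sum_{\ell=1}^{j-k}\vec{b}_\ell \geq \vec{0}$. Hence the concatenated decomposition witnesses $\vec{p} + \vec{q} \in \perP{\vec{A}}$, completing the proof.

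There is no real obstacle here: the definition of $\perP{\vec{A}}$ is designed precisely so that concatenation of valid sequences yields a valid sequence, and the essential point is merely the elementary fact that if both $\vec{p}$ and $\vec{q}$ are total sums of non-negative-prefix sequences then $\vec{p}, \vec{q} \geq \vec{0}$, which makes the shifted prefixes of the concatenation automatically non-negative. The proof is a one-liner once this observation is isolated.
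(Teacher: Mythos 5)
Your proof is correct and follows essentially the same route as the paper's: concatenating the two witnessing sequences and verifying the prefix-sum condition by splitting into the cases $j \leq k$ and $j > k$, using the fact that $\vec{p} \geq \vec{0}$ (being itself a prefix sum) to handle the second case. Nothing is missing.
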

\begin{proof}
   $\vec{0}\in \perP{\vec{A}}$ is clear. Let us prove that $\vec{u}+\vec{v}\in \perP{\vec{A}}$ for every $\vec{u},\vec{v}\in \perP{\vec{A}}$. So, assume that $\vec{u}=\vec{x}_1+\cdots+\vec{x}_a$ for a sequence $\vec{x}_1,\ldots,\vec{x}_a\in\vec{A}$ such that $\sum_{\ell=1}^j\vec{x}_\ell\geq \vec{0}$ for every $j\in\{0,\ldots,a\}$, and symmetrically $\vec{v}=\vec{y}_1+\cdots+\vec{y}_b$ for a sequence $\vec{y}_1,\ldots,\vec{y}_b\in\vec{A}$ such that $\sum_{\ell=1}^j\vec{y}_\ell\geq \vec{0}$ for every $j\in\{0,\ldots,b\}$. Let us consider the sequence $\vec{a}_1,\ldots,\vec{a}_k$ defined as $\vec{x}_1,\ldots,\vec{x}_a,\vec{y}_1,\ldots,\vec{y}_b$ and just notice that $\vec{u}+\vec{v}=\vec{a}_1+\cdots+\vec{a}_k$ and $\sum_{\ell=1}^j\vec{a}_\ell\geq \vec{0}$ for every $j\in\{0,\ldots,k\}$ by considering the case $j\leq a$ and its complement separately. It follows that $\vec{u}+\vec{v}\in \perP{\vec{A}}$.
\end{proof}

If $\vec{P}_1$ and $\vec{P}_2$ are two periodic sets then $\con{\vec{P}_1}\cap\con{\vec{P}_2}=\con{\vec{P}_1\cap\vec{P}_2}$. It follows that the intersection of two finitely-generated cones is a finitely-generated cone. Given a set $\vec{A}\subseteq\setQ^d$, we also introduce the set $\conP{\vec{A}}=\setQ_{\geq 0}\perP{\vec{A}}$. \cref{lem:perpp} shows that $\conP{\vec{A}}$ is a cone. Notice that $\conP{\vec{A}}$ is the set of finite sums $\lambda_1\vec{a}_1+\cdots+\lambda_k\vec{a}_k$ where $k\in\setN$, $\vec{a}_1,\ldots,\vec{a}_k\in\vec{A}$, and $\lambda_1,\ldots,\lambda_k\in\setQ_{\geq 0}$ are such that $\sum_{\ell=1}^j\lambda_\ell\vec{a}_\ell\geq \vec{0}$ for every $j\in\{0,\ldots,k\}$.

\begin{remark}
   The periodic set $\perP{\vec{A}}$ can be seen as the reachability set of a vector addition system where $\vec{A}$ is its infinite set of actions, while $\conP{\vec{A}}$ is the same set but under the continuous semantics~\cite{DA87,DBLP:journals/fuin/FracaH15}.
\end{remark}

\medskip

We recall the following lemma.
\begin{lemma}[\cite{DBLP:journals/tcs/HopcroftP79}]\label{lem:HPcone}
  Let $\vec{B} \subseteq \setZ^d$ and let $\vec{P} \subseteq \setZ^d$ be a finitely-generated periodic set.
  If $\vec{B} + \vec{P} \subseteq \vec{x} + \con{\vec{P}}$ for some $\vec{x} \in \setZ^d$ then
  $\vec{B} + \vec{P} = \vec{B}' + \vec{P}$ for some finite subset $\vec{B}' \subseteq \vec{B}$.
\end{lemma}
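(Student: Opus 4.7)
The plan is to reduce to a translation-invariant version and then combine Dickson's lemma with a residue-class decomposition to obtain a well-quasi-order on $\vec{B}$.

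First I would translate by $-\vec{x}$: setting $\vec{B}_0 = \vec{B} - \vec{x} \subseteq \con{\vec{P}} \cap \setZ^d$, the desired conclusion follows from the analogous statement applied to $\vec{B}_0$, so without loss of generality I may assume that $\vec{B} \subseteq \vec{P}^*$, where $\vec{P}^* := \con{\vec{P}} \cap \setZ^d$. The set $\vec{P}^*$ is easily checked to be periodic, and since $\con{\vec{P}^*} = \con{\vec{P}}$ is finitely-generated (as $\vec{P}$ is), applying \cref{cor:HPcone} to $\vec{P}^*$ yields a finite set $\{\vec{q}_1, \ldots, \vec{q}_m\}$ with $\vec{P}^* = \per{\{\vec{q}_1, \ldots, \vec{q}_m\}}$. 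Expressing each $\vec{q}_i$ as a non-negative rational combination of generators of $\vec{P}$ and clearing denominators would then produce an integer $N \geq 1$ such that $N\vec{q}_i \in \vec{P}$ for every $i$.

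The key claim I would establish is that the relation $\preceq$ on $\vec{B}$ defined by $\vec{b} \preceq \vec{b}'$ iff $\vec{b}' - \vec{b} \in \vec{P}$ is a well-quasi-order. To prove it, given an infinite sequence $(\vec{b}_j)_{j \geq 1}$ in $\vec{B} \subseteq \vec{P}^*$, I would fix for each $j$ a representation $\vec{b}_j = \sum_{i=1}^m n_i^{(j)} \vec{q}_i$ with $n_i^{(j)} \in \setN$ and perform Euclidean division $n_i^{(j)} = N a_i^{(j)} + r_i^{(j)}$ with $0 \leq r_i^{(j)} < N$. The residue tuples $(r_1^{(j)}, \ldots, r_m^{(j)})$ live in the finite set $\{0, \ldots, N-1\}^m$, so by pigeonhole an infinite subsequence shares a single residue tuple. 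Along this subsequence, Dickson's lemma applied to the tuples $(a_1^{(j)}, \ldots, a_m^{(j)}) \in \setN^m$ produces indices $j_1 < j_2$ with componentwise domination, and then
\[
\vec{b}_{j_2} - \vec{b}_{j_1} \;=\; \sum_{i=1}^m \bigl(a_i^{(j_2)} - a_i^{(j_1)}\bigr)\, (N\vec{q}_i) \;\in\; \vec{P},
\]
proving $\vec{b}_{j_1} \preceq \vec{b}_{j_2}$. Taking $\vec{B}'$ to be the finite set of $\preceq$-minimal elements of $\vec{B}$ would then give $\vec{B} \subseteq \vec{B}' + \vec{P}$, whence the equality $\vec{B} + \vec{P} = \vec{B}' + \vec{P}$ (the reverse inclusion being trivial).

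The main obstacle is the non-uniqueness of representations of integer points in $\vec{P}^*$ as non-negative integer combinations of the $\vec{q}_i$, together with the fact that $\vec{P}$ is only a proper submonoid of $\vec{P}^*$ in general, so one cannot apply Dickson directly to $\vec{P}$ itself. The residue-class trick is designed precisely to overcome this: it quarantines the ``uncontrolled'' part of each representation into a fixed finite set, forcing the bulk $N$-multiple portion to live inside $\vec{P}$, where componentwise domination in $\setN^m$ translates cleanly into the $\preceq$-order on $\vec{B}$.
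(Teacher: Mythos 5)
The paper itself offers no proof of this lemma: it is recalled verbatim from Hopcroft and Pansiot with a citation, so there is no in-paper argument to compare against. Your proof is correct, and it is essentially the classical argument for this kind of statement: reduce to $\vec{B}\subseteq\vec{P}^*=\con{\vec{P}}\cap\setZ^d$, note via \cref{cor:HPcone} that $\vec{P}^*$ is itself finitely generated (since $\con{\vec{P}^*}=\con{\vec{P}}$), and then show that the quasi-order $\vec{b}\preceq\vec{b}'\iff\vec{b}'-\vec{b}\in\vec{P}$ is a wqo on $\vec{B}$ by combining a pigeonhole on residues modulo $N$ with Dickson's lemma; the multiplier $N$ pushing $N\vec{q}_i$ into $\vec{P}$ is exactly the right device to bridge the gap between $\vec{P}^*$ and $\vec{P}$. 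Each step checks out: the translation is legitimate because the conclusion is translation-invariant, the hypothesis $\vec{B}+\vec{P}\subseteq\vec{x}+\con{\vec{P}}$ is used only through its consequence $\vec{B}\subseteq\vec{x}+\con{\vec{P}}$ (valid since $\vec{0}\in\vec{P}$), and the displayed difference does land in $\vec{P}$ by periodicity. The only cosmetic imprecision is the final step: since $\vec{P}$ may contain a nonzero vector together with its opposite, $\preceq$ need not be antisymmetric, so ``the finite set of $\preceq$-minimal elements'' should be replaced by the standard finite-basis property of wqos (finitely many minimal elements up to mutual comparability, one representative per class); this does not affect the validity of the argument.
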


\medskip

Given a vector $\vec{v}\in\setQ^d$, the \emph{$\vec{v}$-stabilization} of a cone $\vec{C}\subseteq \setQ_{\geq 0}^d$ is the cone $(\vec{C}+\setQ_{\geq 0}\vec{v})\cap\setQ_{\geq 0}^d$, also denoted as $\stab{\vec{v}}{\vec{C}}$. A cone $\vec{C}\subseteq \setQ_{\geq 0}^d$ is said to be \emph{$\vec{v}$-stable} for some vector $\vec{v}\in\setQ^d$ if $\stab{\vec{v}}{\vec{C}}=\vec{C}$. The \emph{stabililizer} of a cone $\vec{C}\subseteq \setQ_{\geq 0}^d$ is the set $\stab{\star}{\vec{C}}$ of vectors $\vec{v}\in\setQ^d$ such that $\vec{C}$ is $\vec{v}$-stable.

\begin{example}
  \label{exa:stabilizer-dim-two}
  We have
  $\stab{\star}{\vec{C}} = (\setQ^2 \setminus \setQ_{\geq 0}^2) \cup \{(0, 0)\}$ if $\vec{C} = \{(0, 0)\}$,
  $\stab{\star}{\vec{C}} = \vec{C}$ if $\vec{C} = \con{\{(2, 1), (1, 2)\}}$, and
  $\stab{\star}{\vec{C}} = \{(x, y) \in \setQ^2 \mid 2y \geq x\}$ if $\vec{C} = \con{\{(2, 1), (0, 1)\}}$.
  \lipicsEnd
\end{example}

\begin{lemma}\label{lem:stabcone}
  The stabilizer $\vec{S}$ of every cone $\vec{C} \subseteq \setQ_{\geq 0}^d$ satisfies $\vec{C} = \vec{S} \cap \setQ_{\geq 0}^d$.
\end{lemma}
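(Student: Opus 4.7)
The plan is a direct double-inclusion argument, exploiting only the definitional properties of cones (non-negative scalar closure plus additive closure with $\vec{0}$) and the explicit form of the $\vec{v}$-stabilization.

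For the inclusion $\vec{C} \subseteq \vec{S} \cap \setQ_{\geq 0}^d$, I would start from the hypothesis $\vec{C} \subseteq \setQ_{\geq 0}^d$, which reduces the task to showing $\vec{C} \subseteq \vec{S}$. Fix $\vec{v} \in \vec{C}$. Because $\vec{C}$ is a cone, $\setQ_{\geq 0}\vec{v} \subseteq \vec{C}$, and because $\vec{C}$ is periodic it is closed under addition, so $\vec{C} + \setQ_{\geq 0}\vec{v} \subseteq \vec{C}$. Intersecting with $\setQ_{\geq 0}^d$ gives $\stab{\vec{v}}{\vec{C}} \subseteq \vec{C}$. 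The reverse inclusion $\vec{C} \subseteq \stab{\vec{v}}{\vec{C}}$ is immediate since $\vec{0} \in \setQ_{\geq 0}\vec{v}$ and $\vec{C} \subseteq \setQ_{\geq 0}^d$. Hence $\vec{C}$ is $\vec{v}$-stable, i.e.\ $\vec{v} \in \vec{S}$.

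For the inclusion $\vec{S} \cap \setQ_{\geq 0}^d \subseteq \vec{C}$, fix $\vec{v} \in \vec{S} \cap \setQ_{\geq 0}^d$. Since $\vec{0} \in \vec{C}$ (as $\vec{C}$ is periodic) and $1 \cdot \vec{v} \in \setQ_{\geq 0}\vec{v}$, we have $\vec{v} \in \vec{C} + \setQ_{\geq 0}\vec{v}$; combined with $\vec{v} \in \setQ_{\geq 0}^d$, this gives $\vec{v} \in (\vec{C} + \setQ_{\geq 0}\vec{v}) \cap \setQ_{\geq 0}^d = \stab{\vec{v}}{\vec{C}} = \vec{C}$, where the last equality uses $\vec{v} \in \vec{S}$.

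There is no real obstacle here; the statement is a pure unfolding of definitions, so the only thing I need to be careful about is to cite exactly the right properties of cones at each step (non-negative scalar closure, additive closure, and membership of $\vec{0}$), and to keep track of the intersection with $\setQ_{\geq 0}^d$ appearing in the definition of $\stab{\vec{v}}{\vec{C}}$.
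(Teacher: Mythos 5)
Your proof is correct and follows essentially the same route as the paper's: both directions are obtained by unfolding the definition of $\stab{\vec{v}}{\vec{C}}$, with the forward inclusion resting on the observation that $\stab{\vec{v}}{\vec{C}} = \vec{C}$ for every $\vec{v} \in \vec{C}$ (which the paper merely states and you verify explicitly) and the reverse inclusion using $\vec{0} \in \vec{C}$ exactly as in the paper. No differences worth noting.
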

\begin{proof}
  Observe that $\stab{\vec{v}}{\vec{C}} = \vec{C}$ for every $\vec{v} \in \vec{C}$.
  It follows that $\vec{C} \subseteq \vec{S} \cap \setQ_{\geq 0}^d$.
  Let $\vec{v} \in \vec{S}$ such that $\vec{v} \geq \vec{0}$.
  Note that $\stab{\vec{v}}{\vec{C}} = \vec{C}$ since $\vec{v} \in \vec{S}$.
  As $\vec{C}$ contains $\vec{0}$,
  we derive from the definition of $\stab{\vec{v}}{\vec{C}}$ that
  $\vec{v} \in \stab{\vec{v}}{\vec{C}} = \vec{C}$.
  This concludes the proof that $\vec{C} = \vec{S} \cap \setQ_{\geq 0}^d$.
\end{proof}

\begin{lemma}\label{lem:conp-stabilizer}
  It holds that $\vec{A} \subseteq \stab{\star}{\conP{\vec{A}}}$ for every set $\vec{A} \subseteq \setQ^d$.
\end{lemma}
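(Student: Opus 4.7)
The plan is to unfold the two definitions and show that for each $\vec{a} \in \vec{A}$, adding a nonnegative multiple of $\vec{a}$ to any element of $\conP{\vec{A}}$ while staying in $\setQ_{\geq 0}^d$ yields another element of $\conP{\vec{A}}$; the key observation is that we can simply append $\vec{a}$ (with its coefficient) to the witnessing sequence.

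First I would record the easy inclusion. Since the full sum $\sum_{\ell=1}^k \lambda_\ell \vec{a}_\ell$ is itself one of the partial sums appearing in the definition of $\conP{\vec{A}}$, we have $\conP{\vec{A}} \subseteq \setQ_{\geq 0}^d$. Combined with $\vec{0} \in \setQ_{\geq 0}\vec{a}$, this gives $\conP{\vec{A}} \subseteq (\conP{\vec{A}} + \setQ_{\geq 0}\vec{a}) \cap \setQ_{\geq 0}^d$, which is one direction of $\vec{a}$-stability.

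For the other direction, I would take an arbitrary element $\vec{x}$ of $(\conP{\vec{A}} + \setQ_{\geq 0}\vec{a}) \cap \setQ_{\geq 0}^d$, write it as $\vec{x} = \vec{w} + \mu \vec{a}$ with $\vec{w} \in \conP{\vec{A}}$, $\mu \in \setQ_{\geq 0}$, and $\vec{x} \geq \vec{0}$, and expand $\vec{w}$ via the definition of $\conP{\vec{A}}$ as $\vec{w} = \lambda_1 \vec{a}_1 + \cdots + \lambda_k \vec{a}_k$ with $\vec{a}_i \in \vec{A}$, $\lambda_i \in \setQ_{\geq 0}$, and every partial sum $\sum_{\ell=1}^j \lambda_\ell \vec{a}_\ell$ nonnegative for $j \in \{0, \ldots, k\}$. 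Then I would form the concatenated sequence $\vec{a}_1, \ldots, \vec{a}_k, \vec{a}$ with coefficients $\lambda_1, \ldots, \lambda_k, \mu$: its first $k$ partial sums are nonnegative by hypothesis on $\vec{w}$, and its total sum equals $\vec{x} \geq \vec{0}$. This exhibits $\vec{x}$ as an element of $\conP{\vec{A}}$, and since $\vec{a} \in \vec{A}$ was arbitrary, we conclude $\vec{A} \subseteq \stab{\star}{\conP{\vec{A}}}$.

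There is no real obstacle here: the statement is essentially a bookkeeping consequence of the definitions. The only thing worth being careful about is making sure the concatenated sequence satisfies the partial-sum condition at every index, which is why the argument pairs the existing witness for $\vec{w}$ with the global nonnegativity $\vec{x} \geq \vec{0}$ to cover the new final index.
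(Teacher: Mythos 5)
Your proof is correct and follows essentially the same route as the paper: both arguments establish $\vec{a}$-stability of $\conP{\vec{A}}$ by appending $\vec{a}$ with its coefficient to the witnessing sequence of the given element, with the new final partial sum being nonnegative precisely because the resulting vector is assumed to lie in $\setQ_{\geq 0}^d$. The only cosmetic difference is that you also spell out the trivial inclusion $\conP{\vec{A}} \subseteq (\conP{\vec{A}} + \setQ_{\geq 0}\vec{a}) \cap \setQ_{\geq 0}^d$, which the paper leaves implicit.
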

\begin{proof}
  Let $\vec{v}\in \conP{\vec{A}}$, $\lambda\in\setQ_{\geq 0}$, and $\vec{a}\in\vec{A}$ such that the vector $\vec{w}$ defined as $\vec{v}+\lambda\vec{a}$ is in $\setQ_{\geq 0}^d$, and let us prove that $\vec{w}\in \conP{\vec{A}}$. As $\vec{v}\in\conP{\vec{A}}$, there exists $k\in\setN$, a sequence $\vec{a}_1,\ldots,\vec{a}_k\in\vec{A}$, and a sequence $\lambda_1,\ldots,\lambda_k\in\setQ_{\geq 0}$ such that $\vec{x}=\lambda_1\vec{a}_1+\cdots+\lambda_k\vec{a}_k$ and $\sum_{\ell=1}^j\lambda_\ell\vec{a}_\ell\geq \vec{0}$ for every $j\in\{0,\ldots,k\}$. Let us consider $\vec{a}_{k+1}=\vec{a}$ and $\lambda_{k+1}=\lambda$ and observe that $\vec{w}=\lambda_1\vec{a}_1+\cdots+\lambda_{k+1}\vec{a}_{k+1}$ and $\sum_{\ell=1}^j\lambda_\ell\vec{a}_\ell\geq \vec{0}$ for every $j\in\{0,\ldots,k+1\}$. It follows that $\vec{w}$ is in $\conP{\vec{A}}$. We have proved that for every $\vec{a}\in\vec{A}$, the set $\conP{\vec{A}}$ is $\vec{a}$-stable. Therefore $\vec{A}\subseteq \stab{\star}{\conP{\vec{A}}}$.
\end{proof}

\subsection{Dimension Two}\label{sec:dim2}
Cones and stabilizers are much simpler in dimension two than in the general case
(see, e.g., \cref{exa:stabilizer-dim-two}).
\begin{lemma}
  \label{fact:axis-in-stab}
  Consider a cone $\vec{C} \subseteq \setQ_{\geq 0}^2$ with $\vec{C} \neq \{(0, 0)\}$.
  For every vector $\vec{v}\in\setQ^2$,
  \begin{itemize}
  \item
    if $\vec{v}(1) > 0 \geq \vec{v}(2)$ then $(1, 0) \in \stab{\vec{v}}{\vec{C}}$, and
  \item
    if $\vec{v}(2) > 0 \geq \vec{v}(1)$ then $(0, 1) \in \stab{\vec{v}}{\vec{C}}$.
  \end{itemize}
\end{lemma}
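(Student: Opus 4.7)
The plan is to prove only the first implication, since the second follows by the symmetry exchanging the two coordinates of $\setQ^2$. Assuming $\vec{v}(1) > 0 \geq \vec{v}(2)$, I need to exhibit $\vec{c} \in \vec{C}$ and $\lambda \in \setQ_{\geq 0}$ witnessing $(1,0) = \vec{c} + \lambda \vec{v}$; membership of $(1,0)$ in $\setQ_{\geq 0}^2$ is automatic.

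The easy subcase is $\vec{v}(2) = 0$: here $\vec{v} = (\vec{v}(1), 0)$ with $\vec{v}(1) > 0$, so $\lambda = 1/\vec{v}(1)$ together with $\vec{c} = (0,0) \in \vec{C}$ works at once. Otherwise $\vec{v}(2) < 0$, and I would split on whether $(1,0) \in \vec{C}$: if it is, take $\vec{c} = (1,0)$ and $\lambda = 0$. If it is not, I would use the hypothesis $\vec{C} \neq \{(0,0)\}$ to pick some $\vec{w} \in \vec{C} \setminus \{(0,0)\}$. Since $\vec{w} \in \setQ_{\geq 0}^2$ and $\vec{w}$ cannot be a positive multiple of $(1,0)$ (otherwise the cone $\vec{C}$ would contain $(1,0)$), one must have $\vec{w}(2) > 0$.

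Given such a $\vec{w}$, I would seek $\mu > 0$ and $\lambda \geq 0$ with $\mu \vec{w} + \lambda \vec{v} = (1,0)$, and then set $\vec{c} = \mu \vec{w}$, which belongs to $\vec{C}$ by closure under non-negative scaling. The second-coordinate equation forces $\lambda = \mu \vec{w}(2)/(-\vec{v}(2))$, which is non-negative since $\vec{v}(2) < 0$ and $\mu, \vec{w}(2) > 0$. Substituting into the first-coordinate equation yields
\[
\mu \bigl(\vec{w}(1) + \vec{w}(2)\,\vec{v}(1)/(-\vec{v}(2))\bigr) = 1,
\]
and the bracketed quantity is strictly positive because $\vec{w}(1) \geq 0$ and $\vec{w}(2), \vec{v}(1), -\vec{v}(2) > 0$, giving a unique positive $\mu$ that completes the construction.

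There is no real obstacle here. The only subtlety is recognising that the hypothesis $\vec{C} \neq \{(0,0)\}$ is used precisely in the case $\vec{v}(2) < 0$ and $(1,0) \notin \vec{C}$: without some $\vec{w} \in \vec{C}$ whose second coordinate is strictly positive, there is no way to cancel the strictly negative contribution $\lambda \vec{v}(2)$ in the second coordinate of the target $(1, 0)$.
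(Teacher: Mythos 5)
Your proof is correct and follows essentially the same route as the paper's: pick a nonzero element of the cone and slide along the direction $\vec{v}$ until the second coordinate vanishes. The only cosmetic differences are that you normalise the scaling upfront so as to land exactly on $(1,0)$, whereas the paper first exhibits some positive multiple of $(1,0)$ in $\stab{\vec{v}}{\vec{C}}$ and then rescales using the fact that the stabilization is itself a cone, and that your case split on whether $(1,0)\in\vec{C}$ is harmless but not actually needed.
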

\begin{proof}
  We only prove the first item, as the second item can be obtained by symmetry.
  Pick $\vec{c} \in \vec{C}$ such that $\vec{c} \neq (0, 0)$.
  Assume that $\vec{v}(1) > 0 \geq \vec{v}(2)$.
  If $\vec{v}(2) = 0$ then $\vec{v} \in \stab{\vec{v}}{\vec{C}}$ since $\vec{v} \geq (0, 0)$.
  Otherwise,
  we have $\vec{v}(2) < 0$,
  so the vector $\vec{u}$ defined by
  $\vec{u} = \vec{c} + \lambda \vec{v}$,
  where $\lambda = -\frac{\vec{c}(2)}{\vec{v}(2)} \geq 0$,
  verifies $\vec{u}(1) \geq 0 = \vec{u}(2)$ and $\vec{u} \in \stab{\vec{v}}{\vec{C}}$.
  Moreover,
  $\vec{u}(1) > 0$ since $\vec{c} \neq (0, 0)$.
  In both cases,
  $\stab{\vec{v}}{\vec{C}}$ intersects $\setQ_{> 0} (1, 0)$,
  hence,
  $(1, 0) \in \stab{\vec{v}}{\vec{C}}$.
\end{proof}

The following lemma shows that the stabilizer of any cone $\vec{C}\subseteq \setQ_{\geq 0}^2$ is a cone except when $\vec{C}=\{(0,0)\}$.
\begin{lemma}
  \label{lem:useless-stabilizers}
  The stabilizer of a cone $\vec{C}\subseteq \setQ_{\geq 0}^2$ such that $\vec{C}\not=\{(0,0)\}$ is a cone.
\end{lemma}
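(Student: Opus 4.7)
The plan is to verify the three defining properties of a cone for the stabilizer $\vec{S} = \stab{\star}{\vec{C}}$: that $\vec{0} \in \vec{S}$, that $\vec{S}$ is closed under non-negative scaling, and that $\vec{S}$ is closed under addition. The first two properties are almost immediate from the definition of $\vec{v}$-stabilization: $\vec{0}$-stability is trivial, and when $\mu > 0$ we have $\setQ_{\geq 0}(\mu \vec{v}) = \setQ_{\geq 0} \vec{v}$, so $\mu \vec{v}$ has the same $\vec{v}$-stabilization as $\vec{v}$, hence $\mu \vec{v} \in \vec{S}$ if and only if $\vec{v} \in \vec{S}$. The real substance of the proof is closure under addition.

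To prove closure under addition, I would take $\vec{v}_1, \vec{v}_2 \in \vec{S}$, pick arbitrary $\vec{c} \in \vec{C}$ and $\lambda \geq 0$ with $\vec{w} = \vec{c} + \lambda(\vec{v}_1 + \vec{v}_2) \geq \vec{0}$, and show that $\vec{w} \in \vec{C}$. The strategy is a sequential application of the two stability hypotheses: if at least one of the intermediate vectors $\vec{c} + \lambda \vec{v}_j$ is non-negative, say $\vec{c} + \lambda \vec{v}_1 \geq \vec{0}$, then $\vec{v}_1$-stability yields $\vec{c} + \lambda \vec{v}_1 \in \vec{C}$, and then applying $\vec{v}_2$-stability to $(\vec{c} + \lambda \vec{v}_1) + \lambda \vec{v}_2 = \vec{w} \geq \vec{0}$ gives $\vec{w} \in \vec{C}$. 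This handles every situation except when both $\vec{c} + \lambda \vec{v}_1 \not\geq \vec{0}$ and $\vec{c} + \lambda \vec{v}_2 \not\geq \vec{0}$, which is where the main obstacle lies.

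The hard part will be this residual sub-case. The key observation is that the two intermediate vectors must fail non-negativity in distinct coordinates: if both $(\vec{c} + \lambda \vec{v}_j)(i) < 0$ for some common $i \in \{1, 2\}$, summing the two inequalities gives $2\vec{c}(i) + \lambda(\vec{v}_1 + \vec{v}_2)(i) < 0$, which, since $\vec{c}(i) \geq 0$, contradicts $\vec{w}(i) \geq 0$. So, up to relabeling, $(\vec{c} + \lambda \vec{v}_1)(1) < 0$ and $(\vec{c} + \lambda \vec{v}_2)(2) < 0$, which forces $\vec{v}_1(1) < 0$ and $\vec{v}_2(2) < 0$; combined with $\vec{w} \geq \vec{0}$, this then forces $\vec{v}_1(2) > 0$ and $\vec{v}_2(1) > 0$. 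At this point \cref{fact:axis-in-stab}, whose use relies precisely on the hypothesis $\vec{C} \neq \{\vec{0}\}$, applied to $\vec{v}_1$ yields $(0, 1) \in \stab{\vec{v}_1}{\vec{C}} = \vec{C}$, and applied to $\vec{v}_2$ yields $(1, 0) \in \vec{C}$. Since $\vec{C}$ is a cone containing both $(1, 0)$ and $(0, 1)$, it must equal $\setQ_{\geq 0}^2$, and then $\vec{w} \in \vec{C}$ follows immediately from $\vec{w} \geq \vec{0}$.
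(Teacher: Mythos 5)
Your proof is correct and follows essentially the same route as the paper's: the same reduction to closure under addition, the same case split on whether one of the intermediate vectors $\vec{c}+\lambda\vec{v}_j$ is non-negative, and the same appeal to \cref{fact:axis-in-stab} in the residual case to conclude that $\vec{C}=\setQ_{\geq 0}^2$. The only cosmetic difference is that you derive the "distinct failing coordinates" fact by summing the two inequalities, whereas the paper's footnote reasons componentwise; both are fine.
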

\begin{proof}
  Let $\vec{S}$ be the stabilizer of a cone $\vec{C}\subseteq \setQ_{\geq 0}^2$ such that $\vec{C}\not=\{(0,0)\}$. It is routinely checked that $(0, 0) \in \vec{S}$ and $\setQ_{\geq 0} \vec{S} \subseteq \vec{S}$.
  We still need to prove that $\vec{S} + \vec{S} \subseteq \vec{S}$.
  Let $\vec{u}, \vec{v} \in \vec{S}$ and
  let $\vec{x} \in \stab{\vec{u} + \vec{v}}{\vec{C}}$.
  Note that $\vec{x} \geq (0, 0)$ and
  let us show that $\vec{x} \in \vec{C}$.
  We have $\vec{x} = \vec{c} + \lambda (\vec{u} + \vec{v})$
  for some $\vec{c} \in \vec{C}$ and $\lambda \in \setQ_{\geq 0}$.
  If $\vec{c} + \lambda \vec{u} \geq (0, 0)$ then
  $\vec{c} + \lambda \vec{u}$ is in $\vec{C}$ since $\vec{C}$ is $\vec{u}$-stable.
  Since $\vec{c} + \lambda \vec{u} + \lambda \vec{v} = \vec{x} \geq (0, 0)$,
  we get that $\vec{x}$ is in $\vec{C}$ since $\vec{C}$ is $\vec{v}$-stable.
  Analogously,
  if $\vec{c} + \lambda \vec{v} \geq (0, 0)$ then we obtain that
  $\vec{c} + \lambda \vec{v}$ is in $\vec{C}$,
  hence,
  $\vec{x}$ is in $\vec{C}$.
  Assume that
  $\vec{c} + \lambda \vec{u} \not\geq (0, 0)$ and
  $\vec{c} + \lambda \vec{v} \not\geq (0, 0)$.
  There exists $i \in \{1, 2\}$ such that $(\vec{c} + \lambda \vec{u})(i) < 0$.
  Since $\vec{c} + \lambda \vec{u} + \lambda \vec{v} \geq (0, 0)$,
  we derive\footnote{%
    The proof of this fact is as follows.
    Observe that $\vec{c} \geq (0, 0)$ since $\vec{c} \in \vec{C}$.
    Recall also that $\lambda \geq 0$.
    We have $\vec{u}(i) < 0$ because $(\vec{c} + \lambda \vec{u})(i) < 0$.
    It follows that $\vec{v}(i) > 0$ since $\vec{c} + \lambda \vec{u} + \lambda \vec{v} \geq (0, 0)$.
    As $\vec{c} + \lambda \vec{v} \not\geq (0, 0)$,
    we deduce that $(\vec{c} + \lambda \vec{v})(\bar{i}) < 0$.
    Hence,
    $\vec{v}(\bar{i}) < 0$ and moreover
    $\vec{u}(\bar{i}) > 0$ since $\vec{c} + \lambda \vec{u} + \lambda \vec{v} \geq (0, 0)$.
  }
  that
  $\vec{u}(i) < 0 \wedge \vec{u}(\bar{i}) > 0$
  and
  $\vec{v}(i) > 0 \wedge \vec{v}(\bar{i}) < 0$,
  where $\bar{i} = 3 - i$.
  Recall that $\vec{C} = \stab{\vec{u}}{\vec{C}} = \stab{\vec{v}}{\vec{C}}$.
  It follows from \cref{fact:axis-in-stab} that
  $\vec{C}$ contains $\{(0, 1), (1, 0)\}$,
  hence,
  $\vec{C}$ is the cone $\setQ_{\geq 0}^2$.
  This entails that $\vec{x}$ is in $\vec{C}$.
  We have shown that $\stab{\vec{u} + \vec{v}}{\vec{C}} \subseteq \vec{C}$,
  which means that $\vec{u} + \vec{v}$ is in $\vec{S}$.
\end{proof}

\begin{lemma}
  \label{lem:cone-accel}
  For every set $\vec{A} \subseteq \setQ^2$,
  it holds that
  \begin{itemize}
  \item
    $\conP{\vec{A}} = \{(0, 0)\}$ if $\vec{A}$ is disjoint from $\setQ_{\geq 0}^2 \setminus \{(0, 0)\}$, and
  \item
    $\conP{\vec{A}} = \con{\vec{A}} \cap \setQ_{\geq 0}^2$ otherwise.
  \end{itemize}
\end{lemma}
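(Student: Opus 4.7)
The plan is to handle the two cases separately, with the first being a direct combinatorial inspection of the definition of $\conP{\vec{A}}$, and the second leveraging the stabilizer machinery already developed in \cref{lem:conp-stabilizer}, \cref{lem:useless-stabilizers}, and \cref{lem:stabcone}.

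For the first case, I would assume $\vec{A} \cap (\setQ_{\geq 0}^2 \setminus \{(0,0)\}) = \emptyset$ and show $\conP{\vec{A}} \subseteq \{(0,0)\}$, the reverse inclusion being by definition. Given any $\vec{v} \in \conP{\vec{A}}$ with representation $\vec{v} = \sum_{\ell=1}^k \lambda_\ell \vec{a}_\ell$ whose partial sums $\vec{s}_j = \sum_{\ell=1}^j \lambda_\ell \vec{a}_\ell$ all lie in $\setQ_{\geq 0}^2$, suppose for contradiction that $\vec{v} \neq (0,0)$ and pick the smallest index $j_0$ with $\vec{s}_{j_0} \neq (0,0)$. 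Then $\vec{s}_{j_0 - 1} = (0,0)$, so $\lambda_{j_0} \vec{a}_{j_0} = \vec{s}_{j_0}$ lies in $\setQ_{\geq 0}^2 \setminus \{(0,0)\}$, forcing $\lambda_{j_0} > 0$ and $\vec{a}_{j_0} \in \setQ_{\geq 0}^2 \setminus \{(0,0)\}$, contradicting the hypothesis.

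For the second case, I would fix some $\vec{a}^* \in \vec{A} \cap (\setQ_{\geq 0}^2 \setminus \{(0,0)\})$. The inclusion $\conP{\vec{A}} \subseteq \con{\vec{A}} \cap \setQ_{\geq 0}^2$ is immediate from the definitions, since $\conP{\vec{A}} \subseteq \con{\vec{A}}$ and every element of $\conP{\vec{A}}$ is non-negative (take $j = k$ in the partial-sum condition). The reverse inclusion proceeds by the following algebraic chain. Since $\vec{a}^* \in \conP{\vec{A}}$ and $\vec{a}^* \neq (0,0)$, the cone $\conP{\vec{A}}$ is not reduced to $\{(0,0)\}$, so \cref{lem:useless-stabilizers} applies and the stabilizer $\vec{S} = \stab{\star}{\conP{\vec{A}}}$ is itself a cone. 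By \cref{lem:conp-stabilizer} we have $\vec{A} \subseteq \vec{S}$, and since $\vec{S}$ is a cone the minimality of $\con{\vec{A}}$ yields $\con{\vec{A}} \subseteq \vec{S}$. Finally, \cref{lem:stabcone} gives $\vec{S} \cap \setQ_{\geq 0}^2 = \conP{\vec{A}}$, and intersecting the previous containment with $\setQ_{\geq 0}^2$ delivers the desired $\con{\vec{A}} \cap \setQ_{\geq 0}^2 \subseteq \conP{\vec{A}}$.

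The main obstacle I anticipate is recognizing that the second case can be dispatched algebraically via the stabilizer, rather than through a delicate combinatorial argument reordering or subdividing a conic decomposition $\vec{v} = \alpha_1 \vec{a}_1 + \alpha_2 \vec{a}_2$ of some $\vec{v} \in \con{\vec{A}} \cap \setQ_{\geq 0}^2$ into pieces with non-negative partial sums (which seems awkward when both $\alpha_i \vec{a}_i$ have negative components). The dimension-two structural content is entirely concentrated in \cref{lem:useless-stabilizers}, whose failure in higher dimension is precisely what prevents this elegant path from generalizing.
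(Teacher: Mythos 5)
Your proposal is correct and follows essentially the same route as the paper's own proof: the first case is immediate from the partial-sum condition (the paper dismisses it as "clearly" where you spell out the minimal-index argument), and the second case uses exactly the chain $\vec{A} \subseteq \stab{\star}{\conP{\vec{A}}}$ (the paper checks this inline rather than citing \cref{lem:conp-stabilizer}), then \cref{lem:useless-stabilizers} to get that the stabilizer is a cone, hence contains $\con{\vec{A}}$, and \cref{lem:stabcone} to conclude. No substantive difference.
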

\begin{proof}
  Let $\vec{A} \subseteq \setQ^2$ and let $\vec{C} = \conP{\vec{A}}$.
  If $\vec{A}$ is disjoint from $\setQ_{\geq 0}^2 \setminus \{(0, 0)\}$ then clearly
  $\vec{C} = \{(0, 0)\}$.
  Otherwise,
  $\vec{A}$ intersects $\setQ_{\geq 0}^2 \setminus \{(0, 0)\}$.
  It follows that $\vec{C} \neq \{(0, 0)\}$ since $(\vec{A} \cap \setQ_{\geq 0}^2) \subseteq \vec{C}$.
  Let us show that $\vec{C} = \con{\vec{A}} \cap \setQ_{\geq 0}^2$.
  We observe that
  $\vec{C}$ is the set of finite sums $\sum_{j=1}^k \lambda_j\vec{a}_j$,
  where $k \in \setN$,
  $\lambda_1, \ldots, \lambda_k \in \setQ_{\geq 0}$ and
  $\vec{a}_1, \ldots, \vec{a}_k \in \vec{A}$,
  such that $\sum_{j=1}^h \lambda_j\vec{a}_j \geq (0, 0)$ for every $h \in \{0, \ldots, k\}$.
  So the inclusion $\vec{C} \subseteq \con{\vec{A}} \cap \setQ_{\geq 0}^2$ is straightforward.
  It remains to prove the converse inclusion.
  It is routinely checked that
  $\vec{C}$ is $\vec{a}$-stable for every $\vec{a} \in \vec{A}$.
  So $\vec{A} \subseteq \stab{\star}{\vec{C}}$.
  According to \cref{lem:stabcone} and \cref{lem:useless-stabilizers},
  $\vec{C} = \stab{\star}{\vec{C}} \cap \setQ_{\geq 0}^2$ and $\stab{\star}{\vec{C}}$ is a cone.
  We derive from $\vec{A} \subseteq \stab{\star}{\vec{C}}$ that
  $\con{\vec{A}} \subseteq \stab{\star}{\vec{C}}$,
  hence,
  $\con{\vec{A}} \cap \setQ_{\geq 0}^2 \subseteq \vec{C}$.
\end{proof}

\begin{remark}
   The previous lemma cannot be extended to higher dimension. In fact, let $\vec{C}=\setQ_{\geq 0}(0,0,1)$ and observe that the two vectors $(-1,2,0)$ and $(2,-1,0)$ are in the stabilizer of $\vec{C}$ but the sum of those two vectors, i.e. $(1,1,0)$ is not in the stabilizer of $\vec{C}$.
\end{remark}

By observing that $\conP{\vec{A}}$ is a finitely-generated cone when $\vec{A}$ is a finite subset of $\setZ^2$, we deduce the following lemma.
\begin{lemma}
  \label{lem:accel-finitely-generated-dim-2}
  $\perP{\vec{A}}$ is a finitely-generated periodic set for every finite set $\vec{A}\subseteq \setZ^2$. 
\end{lemma}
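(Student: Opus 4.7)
The plan is to reduce the statement to the cone version via \cref{cor:HPcone}, using the preceding \cref{lem:cone-accel} as the central ingredient. First I would observe that $\perP{\vec{A}} \subseteq \setN^2 \subseteq \setZ^2$, which is immediate from the very definition of $\perP{\vec{A}}$ (every element is a finite sum whose total is a non-negative prefix sum). Since $\perP{\vec{A}}$ is already periodic by \cref{lem:perpp}, we have
\begin{equation*}
  \con{\perP{\vec{A}}} \;=\; \setQ_{\geq 0}\,\per{\perP{\vec{A}}} \;=\; \setQ_{\geq 0}\,\perP{\vec{A}} \;=\; \conP{\vec{A}}.
\end{equation*}
Hence by \cref{cor:HPcone}, the desired conclusion is equivalent to showing that $\conP{\vec{A}}$ is a finitely-generated cone.

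Now I would invoke \cref{lem:cone-accel}, which gives two cases. If $\vec{A}$ is disjoint from $\setQ_{\geq 0}^2 \setminus \{(0,0)\}$, then $\conP{\vec{A}} = \{(0,0)\} = \con{\emptyset}$, which is trivially finitely-generated. Otherwise, $\conP{\vec{A}} = \con{\vec{A}} \cap \setQ_{\geq 0}^2$. Here $\con{\vec{A}}$ is finitely-generated because $\vec{A}$ itself is finite, and $\setQ_{\geq 0}^2 = \con{\{(1,0),(0,1)\}}$ is also finitely-generated. The intersection of two finitely-generated cones is finitely-generated (the fact recalled at the beginning of the appendix, following from the analogous statement for finitely-generated periodic sets in $\setZ^d$ via the bijection between finitely-generated cones and finitely-generated periodic sets given by \cref{cor:HPcone}). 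Therefore $\conP{\vec{A}}$ is a finitely-generated cone in both cases.

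Combining the two observations, $\con{\perP{\vec{A}}}$ is a finitely-generated cone, and applying \cref{cor:HPcone} in the reverse direction yields that $\perP{\vec{A}}$ is a finitely-generated periodic set. There is really no hard step here: all the work has been done in \cref{lem:perpp}, \cref{lem:cone-accel}, and \cref{cor:HPcone}. The only subtle point is the two-dimensional hypothesis, which enters through \cref{lem:cone-accel} (as noted in the remark following its proof, the analogous identity fails in $\setQ^3$), and this is precisely why the lemma is stated in dimension two.
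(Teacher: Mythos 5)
Your proof is correct and follows essentially the same route as the paper's: identify $\con{\perP{\vec{A}}}$ with $\conP{\vec{A}}$, apply \cref{lem:cone-accel} to see that this cone is either $\{(0,0)\}$ or the intersection of the two finitely-generated cones $\con{\vec{A}}$ and $\setQ_{\geq 0}^2$, and conclude via \cref{cor:HPcone}. The only difference is that you spell out the justifications (periodicity from \cref{lem:perpp}, closure of finitely-generated cones under intersection) slightly more explicitly than the paper does.
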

\begin{proof}  
  Let $\vec{C} = \con{\perP{\vec{A}}}$. Notice that $\vec{C}=\conP{\vec{A}}$.
  We derive from \cref{lem:cone-accel} that
  $\vec{C}$ is $\{(0, 0)\}$ or $\con{\vec{A}} \cap \setQ_{\geq 0}^2$.
  As $\vec{A}$ is finite,
  we get in both cases that $\vec{C}$ is a finitely-generated cone as the intersection of two finitely-generated cones. It follows from \cref{cor:HPcone} that $\perP{\vec{A}}$ is a finitely-generated periodic set.
\end{proof}

\begin{remark}
  \cref{lem:accel-finitely-generated-dim-2} cannot be extended when $d\geq 3$ since there exists finite sets $\vec{A}\subseteq \setZ^3$ such that the periodic set $\perP{\vec{A}}$ is not finitely-generated. For instance, let $\vec{A}=\{(1,1,1),(-1,2,0),(-1,0,0),(2,-1,0),(0,-1,0)\}$, and observe that $\perP{\vec{A}}=\{(0,0,0)\} \cup (\setN^2\times\setN_{>0})$.
\end{remark}

\begin{corollary}\label{cor:accel-finitely-generated-dim-2}
   For every finite set $\vec{Z}\subseteq \setZ^2$, we can effectively compute a finite set $\vec{X}\subseteq \setN^2$ such that $\perP{\vec{Z}}=\per{\vec{X}}$.
\end{corollary}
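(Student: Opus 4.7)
The plan is to combine the non-effective existence statement provided by \cref{lem:accel-finitely-generated-dim-2} with a decidable correctness check, in the spirit of the semi-decision scheme alluded to in the proof sketch of \cref{lem:accel-finitely-generated-dim-2-bis}. I would enumerate a non-decreasing sequence of finite subsets $\vec{X}_0 \subseteq \vec{X}_1 \subseteq \cdots \subseteq \perP{\vec{Z}}$ whose union equals $\perP{\vec{Z}}$; concretely, one may take $\vec{X}_k$ to be the set of sums of all sequences $(\vec{z}_1, \ldots, \vec{z}_m)$ of length at most $k$ with $\vec{z}_i \in \vec{Z}$ and $\sum_{\ell=1}^j \vec{z}_\ell \geq \vec{0}$ for every $j \in \{0, \ldots, m\}$. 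At each step $k$, I would test whether $\per{\vec{X}_k} = \perP{\vec{Z}}$ and return $\vec{X}_k$ as soon as the test succeeds.

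The central point is that this equality test is decidable. The inclusion $\per{\vec{X}_k} \subseteq \perP{\vec{Z}}$ is automatic, since $\vec{X}_k \subseteq \perP{\vec{Z}}$ and $\perP{\vec{Z}}$ is periodic by \cref{lem:perpp}. For the converse, I would use the characterization of $\perP{\vec{Z}}$ as the $\subseteq$-least subset of $\setN^2$ containing $\vec{0}$ and closed under the operation $\vec{y} \mapsto \vec{y} + \vec{z}$ whenever $\vec{z} \in \vec{Z}$ and $\vec{y} + \vec{z} \geq \vec{0}$. Since $\vec{0} \in \per{\vec{X}_k}$ always holds, the inclusion $\perP{\vec{Z}} \subseteq \per{\vec{X}_k}$ is equivalent to the condition that, for every $\vec{z} \in \vec{Z}$, the set $(\vec{z} + \per{\vec{X}_k}) \cap \setN^2$ is contained in $\per{\vec{X}_k}$. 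Invoking the standard result recalled in \cref{rem:algo-implementability-issues}, the left-hand side equals $\vec{B}_{\vec{z}} + \per{\vec{X}_k}$ for an effectively computable finite set $\vec{B}_{\vec{z}} \subseteq \setN^2$, and periodicity of $\per{\vec{X}_k}$ then collapses the inclusion to $\vec{B}_{\vec{z}} \subseteq \per{\vec{X}_k}$, i.e., to a finite conjunction of decidable integer linear programming membership queries.

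Termination follows from \cref{lem:accel-finitely-generated-dim-2}: some finite set $\vec{X}^* \subseteq \perP{\vec{Z}}$ satisfies $\per{\vec{X}^*} = \perP{\vec{Z}}$, and each of its finitely many elements arises as the sum of some non-negative-partial-sums sequence from $\vec{Z}$, hence belongs to $\vec{X}_k$ for all sufficiently large $k$. Once $\vec{X}^* \subseteq \vec{X}_k$, we have $\per{\vec{X}_k} \supseteq \per{\vec{X}^*} = \perP{\vec{Z}}$, so the test succeeds. The main obstacle is establishing decidability of the equality test; once the reduction via the computability of $\vec{B}_{\vec{z}}$ together with the periodicity of $\per{\vec{X}_k}$ is in place, correctness and termination are straightforward.
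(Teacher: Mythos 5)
Your proposal is correct and follows essentially the same route as the paper's proof: the same saturation sequence $\vec{X}_0 \subseteq \vec{X}_1 \subseteq \cdots$ of sums with non-negative partial sums, the same termination argument via \cref{lem:accel-finitely-generated-dim-2}, and the same reduction of the equality test to the closure condition $(\per{\vec{X}_k}+\vec{Z})\cap\setN^2 \subseteq \per{\vec{X}_k}$. The only (harmless) difference is that you spell out the decidability of that inclusion concretely via the computable set $\vec{B}_{\vec{z}}$ and periodicity, where the paper simply invokes classical algorithms on semilinear sets.
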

\begin{proof}
  Let $\vec{P}=\perP{\vec{Z}}$. We introduce the sequence $(\vec{X}_n)_{n\in\setN}$ defined by $\vec{X}_0=\{(0,0)\}$ and by induction by $\vec{X}_{n+1}=\vec{X}_n\cup (\vec{X}_n+\vec{Z})\cap\setN^2$. Notice that $\vec{X}_n\subseteq \vec{P}$ by induction on $n\in\setN$.

 Let us prove that $\per{\vec{X}_n}=\vec{P}$ for some $n\in\setN$. \cref{lem:accel-finitely-generated-dim-2} shows that $\vec{P}$ is a finitely-generated periodic set. It follows that there exists a finite set $\vec{Q}\subseteq \vec{P}$ such that $\vec{P}=\per{\vec{Q}}$. From $\vec{Q}\subseteq\vec{P}=\bigcup_{n\in\setN}\vec{X}_n$ we deduce that there exists $n\in\setN$ such that $\vec{Q}\subseteq \vec{X}_n$. From $\vec{P}=\per{\vec{Q}}\subseteq \per{\vec{X}_n}\subseteq \vec{P}$ we get $\per{\vec{X}_n}=\vec{P}$.

  We are going to prove that there exists $n\in\setN$ such that $(\per{\vec{X}_n}+\vec{Z})\cap\setN^2\subseteq \per{\vec{X}_n}$ and for any such a $n\in\setN$, we have $\vec{P}=\per{\vec{X}_n}$. The corollary will be then derived by just observing that the inclusion $(\per{\vec{X}_{n}}+\vec{Z})\cap\setN^2\subseteq \per{\vec{X}_n}$ can be decided with classical algorithms on semilinear sets~\cite{GS-PACIF66,DBLP:journals/siglog/Haase18}.

  Notice that it is sufficient to prove that for every $n\in\setN$, we have $\per{\vec{X}_n}=\vec{P}$ iff $(\per{\vec{X}_n}+\vec{Z})\cap\setN^2\subseteq \per{\vec{X}_n}$. Let $n\in\setN$. Since $(\vec{P}+\vec{Z})\cap \setN^2\subseteq \vec{P}$, it follows that $\per{\vec{X}_n}=\vec{P}$ implies $(\per{\vec{X}_n}+\vec{Z})\cap\setN^2\subseteq \per{\vec{X}_n}$. Conversely, assume that $(\per{\vec{X}_n}+\vec{Z})\cap\setN^2\subseteq \per{\vec{X}_n}$ and let us prove $\per{\vec{X}_n}=\vec{P}$. Since $\vec{X}_n\subseteq\vec{P}$, it is sufficient to prove that $\vec{P}\subseteq \per{\vec{X}_n}$. Let $\vec{p}\in\vec{P}$. There exists a sequence $\vec{a}_1,\ldots,\vec{a}_k\in\vec{Z}$ such that $\vec{p}=\sum_{j=1}^k\vec{a}_j$ and such that $\sum_{j=1}^\ell\vec{a}_j\geq (0,0)$ for every $\ell\in\{1,\ldots,k\}$. We introduce $\vec{p}_\ell=\sum_{j=1}^\ell\vec{a}_j$. Let us prove by induction on $\ell\in \{0,\ldots,k\}$ that $\vec{p}_\ell\in \per{\vec{X}_n}$. For $\ell=0$ the membership is trivial. Let us assume that $\vec{p}_\ell\in \per{\vec{X}_n}$ for some $\ell\in \{0,\ldots,k-1\}$ and let us prove that $\vec{p}_{\ell+1}\in \per{\vec{X}_n}$. Notice that $\vec{p}_{\ell+1}=\vec{p}_\ell+\vec{a}_{\ell+1}$. It follows that $\vec{p}_{\ell+1}\in (\per{\vec{X}_n}+\vec{Z})\cap\setN^2$. We deduce that $\vec{p}_{\ell+1}\in\per{\vec{X}_n}$ and the induction is proved. It follows that $\vec{p}=\vec{p}_k\in\per{\vec{X}_n}$. We have proved that $\per{\vec{X}_n}=\vec{P}$.
\end{proof}

\lemAccel*
\begin{proof}
   Just observe that $\accel{\vec{I}}(\per{\vec{G}})=\perP{\vec{I}\cup\vec{G}}$ and apply \cref{cor:accel-finitely-generated-dim-2}.
\end{proof}

\section{Proofs of \cref{sec:correctness}}
\subsection{Proofs of \cref{subsec:soundness-algorithmic-explorations}}

\lemCorrectnessAccelerationIterableVectors*
\begin{proof}
  Consider a configuration $q(\vec{x})$ of $\bvass$ and a periodic subset $\vec{P}$ of $\setN^2$.
  For short,
  we introduce the sets
  $\vec{F}, \vec{T} \subseteq \setZ^2$ and $\vec{Q} \subseteq \setN^2$
  defined as follows:
  \begin{align*}
    \vec{F}
    & = \{\vec{y} - \vec{x} \mid \vec{y} \in \setN^2 \text{ and } \exists \vec{u} \in \vec{P} : q(\vec{x} + \vec{u}) \step[*] q(\vec{y})\}
    \\
    \vec{T}
    & = \{\vec{x} - \vec{y} \mid \vec{y} \in \setN^2 \text{ and } \exists \vec{u} \in \vec{P} : q(\vec{y} + \vec{u}) \step[*] q(\vec{x})\}
    \\
    \vec{Q}
    & = \clo{q}{\vec{x}}(\vec{P})
  \end{align*}
  It is readily seen that $\vec{J} = \vec{F} \cup \vec{T}$ and that $\vec{Q} = (\vec{F} \cap \setN^2)$.
  To prove the lemma,
  we need to show that $\accel{\vec{J}}(\vec{P}) = \vec{Q}$.
  The equality $\vec{Q} = (\vec{F} \cap \setN^2)$ entails that $\accel{\vec{J}}(\vec{P}) \supseteq \vec{Q}$.
  Let us show the converse inclusion $\accel{\vec{J}}(\vec{P}) \subseteq \vec{Q}$.
  According to the definition of $\accel{\vec{J}}(\vec{P})$,
  it is enough to prove that $\vec{Q}$ contains both $\vec{P}$ and $(\vec{Q} + \vec{J}) \cap \setN^2$.
  We have $\vec{P} \subseteq \vec{Q}$ since $\clo{q}{\vec{x}}$ is extensive.
  Let $\vec{v} \in \vec{Q}$ and $\vec{a} \in \vec{J}$ such that $\vec{v} + \vec{a} \geq (0, 0)$.
  We show that $(\vec{v} + \vec{a}) \in \vec{Q}$.
  Since $\vec{v} \in \vec{Q}$,
  there exists $\vec{u} \in \vec{P}$ such that
  $q(\vec{x} + \vec{u}) \step[*] q(\vec{x} + \vec{v})$.
  Since $\vec{a} \in \vec{J}$,
  we have $\vec{a} \in \vec{F}$ or $\vec{a} \in \vec{T}$.
  We consider these two cases separately.
  \begin{itemize}
  \item
    If $\vec{a} \in \vec{F}$ then
    $\vec{a} = \vec{y} - \vec{x}$ for some $\vec{y} \in \setN^2$ and $\vec{u}' \in \vec{P}$ such that $q(\vec{x} + \vec{u}') \step[*] q(\vec{y})$.
    As $\vec{u}' \geq (0, 0)$ and $\vec{v} \geq (0, 0)$,
    we get by diagonality\footnotemark[\getrefnumber{footnote:diagonal}] of $\step[*]$ that
    $q(\vec{x} + \vec{u} + \vec{u}')
    \step[*]
    q(\vec{x} + \vec{v} + \vec{u}')
    \step[*]
    q(\vec{y} + \vec{v})
    =
    q(\vec{x} + \vec{v} + \vec{a})$.
  \item
    If $\vec{a} \in \vec{T}$ then
    $\vec{a} = \vec{x} - \vec{y}$ for some $\vec{y} \in \setN^2$ and $\vec{u}' \in \vec{P}$ such that $q(\vec{y} + \vec{u}') \step[*] q(\vec{x})$.
    As $\vec{u}' \geq (0, 0)$ and $\vec{v} + \vec{a} \geq (0, 0)$,
    we get by diagonality\footnotemark[\getrefnumber{footnote:diagonal}] of $\step[*]$ that
    $q(\vec{x} + \vec{u} + \vec{u}')
    \step[*]
    q(\vec{x} + \vec{v} + \vec{u}')
    =
    q(\vec{y} + \vec{u}' + \vec{v} + \vec{a})
    \step[*]
    q(\vec{x} + \vec{v} + \vec{a})$.
  \end{itemize}
  We obtain, in both cases, that
  $q(\vec{x} + \vec{u} + \vec{u}') \step[*] q(\vec{x} + \vec{v} + \vec{a})$
  for some $\vec{u}' \in \vec{P}$.
  Note that $\vec{u} + \vec{u}'$ is in $\vec{P}$ since $\vec{P}$ is periodic,
  and recall that $\vec{v} + \vec{a} \geq (0, 0)$ by assumption.
  It follows that
  $(\vec{v} + \vec{a}) \in \vec{Q}$.
  We have thus shown that
  $(\vec{Q} + \vec{J}) \cap \setN^2$ is contained in $\vec{Q}$,
  which concludes the proof of the lemma.
\end{proof}

\begin{lemma}
  \label{lem:local-post}
  Let $\explo = (N, \rightarrow, \lambda)$ be an algorithmic exploration of a $2$-BVASS $\bvass$,
  let $\left(\vec{Q}_n\right)_{n \in N}$ be a family of periodic subsets of $\setN^2$, and
  let $n \in N$.
  Assume that $\vec{P}_m \subseteq \vec{Q}_m$ and $q_m(\vec{z}_m + \vec{Q}_m) \subseteq \ReachSet{\bvass}$
  for every $m \in N$ with $m \rightarrow n$.
  Then
  $q_n(\vec{z}_n + \sum_{m \rightarrow n} \vec{Q}_m) \subseteq \ReachSet{\bvass}$.
  Moreover,
  $q_n(\vec{z}_n + \sum_{m \rightarrow n} \vec{Q}_m) \subseteq \relimg{\step}{q_t(\vec{z}_t + \vec{Q}_t)}$
  for every $t \in N$ with $t \rightarrow n$.
\end{lemma}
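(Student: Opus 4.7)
The plan is to prove both conclusions by a direct unpacking of Conditions~\ref{explo:v+q} and~\ref{explo:z} of \cref{def:algorithmic-exploration} and then applying the definition of the step relation $\step[\delta]$ from \cref{sec:BVASS}. The core idea is that the algorithmic exploration records, via $\vec{a}_n$ and $\vec{z}_n$, precisely one ``witness'' execution of the transition rule landing at $n$, and this witness can be thickened using any vectors $\vec{w}_m \in \vec{Q}_m$ thanks to the fact that $\vec{P}_m \subseteq \vec{Q}_m$ together with the periodicity of $\vec{Q}_m$.

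First, I would fix an arbitrary $\vec{w} \in \sum_{m \rightarrow n} \vec{Q}_m$ and decompose it as $\vec{w} = \sum_{m \rightarrow n} \vec{w}_m$ with $\vec{w}_m \in \vec{Q}_m$. By Condition~\ref{explo:z}, I can write $\vec{z}_n = \vec{a}_n + \sum_{m \rightarrow n} (\vec{z}_m + \vec{p}_m)$ for some vectors $\vec{p}_m \in \vec{P}_m$. Since $\vec{P}_m \subseteq \vec{Q}_m$ and $\vec{Q}_m$ is periodic, the vector $\vec{p}_m + \vec{w}_m$ lies in $\vec{Q}_m$, so the configuration $q_m(\vec{z}_m + \vec{p}_m + \vec{w}_m)$ belongs to $q_m(\vec{z}_m + \vec{Q}_m) \subseteq \ReachSet{\bvass}$ by hypothesis. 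The equality $\vec{z}_n + \vec{w} = \vec{a}_n + \sum_{m \rightarrow n}(\vec{z}_m + \vec{p}_m + \vec{w}_m)$ then holds by construction.

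Next, by Condition~\ref{explo:v+q}, the rule $\delta = (S, \vec{a}_n, q_n) \in \Delta$ with $S = \{q_m \mid m \rightarrow n\}$ (a genuine set) applies to the predecessors of $n$. For the first conclusion, taking $D = \{q_m(\vec{z}_m + \vec{p}_m + \vec{w}_m) \mid m \rightarrow n\} \subseteq \ReachSet{\bvass}$ satisfies $S = \multiset{r \mid r(\vec{z}) \in D}$, so $\post[\delta](\ReachSet{\bvass})$ contains $q_n(\vec{z}_n + \vec{w})$. Since $\post[\bvass](\ReachSet{\bvass}) \subseteq \ReachSet{\bvass}$, this yields $q_n(\vec{z}_n + \vec{w}) \in \ReachSet{\bvass}$, proving the first assertion.

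For the second conclusion, I would fix $t \in N$ with $t \rightarrow n$ and apply the definition of $\step[\delta]$ with the ``selected'' configuration $p(\vec{x}) = q_t(\vec{z}_t + \vec{p}_t + \vec{w}_t)$, which lies in $q_t(\vec{z}_t + \vec{Q}_t)$, and with the remaining configurations $D = \{q_m(\vec{z}_m + \vec{p}_m + \vec{w}_m) \mid m \rightarrow n,\ m \neq t\} \subseteq \ReachSet{\bvass}$. Because $S$ is a set (not just a multiset), we have $S \setminus \{q_t\} = \multiset{r \mid r(\vec{z}) \in D}$, and the displacement check $\vec{a}_n + \vec{x} + \sum_{r(\vec{z}) \in D} \vec{z} = \vec{z}_n + \vec{w}$ follows from the decomposition above. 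Hence $q_t(\vec{z}_t + \vec{p}_t + \vec{w}_t) \step[\delta] q_n(\vec{z}_n + \vec{w})$, which gives the desired inclusion. I do not anticipate any serious obstacle; the only subtle point is to notice that the setness of $S$ (Condition~\ref{explo:v+q}) is precisely what lets us isolate any single predecessor $t$ while keeping the remaining predecessors as a valid set of ``side'' configurations $D$ for the step relation.
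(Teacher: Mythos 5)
Your proof is correct and follows essentially the same route as the paper's: decompose the target vector using Condition~2 of the definition of algorithmic explorations together with $\vec{P}_m \subseteq \vec{Q}_m$ and periodicity of $\vec{Q}_m$, apply the transition rule guaranteed by Condition~1 to get membership in $\post[\bvass](\ReachSet{\bvass}) \subseteq \ReachSet{\bvass}$, and single out the predecessor $t$ for the step-relation claim. The paper merely compresses your explicit $\vec{p}_m + \vec{w}_m$ into a single family $(\vec{y}_m)_{m \rightarrow n}$ of vectors in $\vec{Q}_m$.
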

\begin{proof}
  Let $\vec{x} \in (\vec{z}_n + \sum_{m \rightarrow n} \vec{Q}_m)$.
  By assumption,
  it holds for every $m \in N$ with $m \rightarrow n$ that
  $\vec{P}_m \subseteq \vec{Q}_m$ and that $\vec{Q}_m$ is periodic.
  We derive from Condition~\ref{explo:z} of \cref{def:algorithmic-exploration} that
  $\vec{x} = \vec{a}_n + \sum_{m \rightarrow n} (\vec{z}_m + \vec{y}_m)$
  for some family $\left(\vec{y}_m\right)_{m \rightarrow n}$ of vectors $\vec{y}_m \in \vec{Q}_m$.
  By assumption,
  $q_m(\vec{z}_m + \vec{y}_m) \in \ReachSet{\bvass}$ for every $m \in N$ with $m \rightarrow n$.
  Furthermore,
  according to Condition~\ref{explo:v+q} of \cref{def:algorithmic-exploration},
  the multiset $S = \multiset{q_m \mid m \rightarrow n}$ is a set and
  verifies $(S, \vec{a}_n, q_n) \in \Delta$.
  It follows that $q_n(\vec{x}) \in \post[\bvass](\ReachSet{\bvass})$,
  hence,
  $q_n(\vec{x}) \in \ReachSet{\bvass}$.
  Moreover,
  for every $t \in N$ with $t \rightarrow n$,
  we have $q_t(\vec{z}_t + \vec{y}_t) \step q_n(\vec{x})$.
  We have shown that
  $q_n(\vec{z}_n + \sum_{m \rightarrow n} \vec{Q}_m) \subseteq \ReachSet{\bvass}$
  and that
  $q_n(\vec{z}_n + \sum_{m \rightarrow n} \vec{Q}_m) \subseteq \relimg{\step}{q_t(\vec{z}_t + \vec{Q}_t)}$
  for every $t \in N$ with $t \rightarrow n$.
\end{proof}

\begin{corollary}
  \label{cor:valid-explorations-post+}
  Let $\explo = (N, \rightarrow, \lambda)$ be an algorithmic exploration of a $2$-BVASS $\bvass$ and
  let $n \in N$.
  Assume that $q_s(\vec{z}_s + \vec{P}_s) \subseteq \ReachSet{\bvass}$
  for every $s \in N$ with $s \xrightarrow{+} n$.
  Then
  $q_n(\vec{z}_n)$ is in $\relimg{\step[*]}{q_s(\vec{z}_s + \vec{P}_t)}$
  for every $s, t \in N$ with $s \xrightarrow{*} t \rightarrow n$.
\end{corollary}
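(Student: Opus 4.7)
The plan is to proceed by induction on the length $k$ of a path $s = s_0 \rightarrow s_1 \rightarrow \cdots \rightarrow s_k = t$, invoking \cref{lem:local-post} at each step together with the monotonicity of the periodic sets $\vec{P}_u$ along $\rightarrow$ and the diagonality of $\step$.

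First I would record two supporting observations. From Condition~\ref{explo:P} of \cref{def:algorithmic-exploration}, $\sum_{m \rightarrow u} \vec{P}_m \subseteq \vec{P}_u$ for every $u \in N$, hence by transitivity $\vec{P}_u \subseteq \vec{P}_v$ whenever $u \xrightarrow{*} v$. Second, whenever I apply \cref{lem:local-post} with $\vec{Q}_m := \vec{P}_m$ at a node $u$ satisfying $u \xrightarrow{*} n$, the premise holds automatically, since every $m \rightarrow u$ then satisfies $m \xrightarrow{+} n$ and the corollary's hypothesis gives $q_m(\vec{z}_m + \vec{P}_m) \subseteq \ReachSet{\bvass}$.

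For the base case $s = t$, applying \cref{lem:local-post} at $n$ and selecting the zero vector from each $\vec{P}_m$ yields $q_n(\vec{z}_n) \in \relimg{\step}{q_s(\vec{z}_s + \vec{P}_s)}$, which is contained in $\relimg{\step[*]}{q_s(\vec{z}_s + \vec{P}_t)}$ because $\vec{P}_s = \vec{P}_t$ and $\step \subseteq \step[*]$.

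For the inductive step, I would decompose the path as $s \rightarrow s_1 \xrightarrow{*} t \rightarrow n$ and use the inductive hypothesis on $s_1 \xrightarrow{*} t \rightarrow n$ to obtain some $\vec{p} \in \vec{P}_t$ with $q_{s_1}(\vec{z}_{s_1} + \vec{p}) \step[*] q_n(\vec{z}_n)$; then \cref{lem:local-post} applied at $s_1$ with the edge $s \rightarrow s_1$ produces $\vec{p}' \in \vec{P}_s$ with $q_s(\vec{z}_s + \vec{p}') \step q_{s_1}(\vec{z}_{s_1})$. By diagonality, $q_s(\vec{z}_s + \vec{p}' + \vec{p}) \step q_{s_1}(\vec{z}_{s_1} + \vec{p}) \step[*] q_n(\vec{z}_n)$, and since $\vec{p}' \in \vec{P}_s \subseteq \vec{P}_t$ by monotonicity and $\vec{P}_t$ is periodic, $\vec{p}' + \vec{p} \in \vec{P}_t$. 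The main point to keep in mind is that the witness vector must remain inside the single periodic set $\vec{P}_t$ across all inductive steps, which is exactly what the monotonicity of the $\vec{P}_u$'s provides.
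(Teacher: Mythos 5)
Your proof is correct and follows essentially the same route as the paper's: both arguments apply \cref{lem:local-post} (with $\vec{Q}_m := \vec{P}_m$) at each node along the path, use the monotonicity $\vec{P}_{u} \subseteq \vec{P}_{v}$ for $u \xrightarrow{*} v$ together with the periodicity of $\vec{P}_t$ to keep the accumulated witness vector inside $\vec{P}_t$, and conclude by diagonality of $\step[*]$. The only difference is presentational — you compose the steps one at a time by induction on the path length, whereas the paper extracts all the vectors $\vec{u}_0, \ldots, \vec{u}_k$ at once and composes them in a single application of diagonality.
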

\begin{proof}
  Let $s, t \in N$ with $s \xrightarrow{*} t \rightarrow n$.
  There exist $k \geq 0$ and $n_0, \ldots, n_{k+1} \in N$ such that
  $s = n_0 \rightarrow n_1 \cdots \rightarrow n_k = t$ and $n_{k+1} = n$.
  By \cref{lem:local-post},
  applied with $\left(\vec{Q}_n\right)_{n \in N}$ equal to $\left(\vec{P}_n\right)_{n \in N}$,
  there exist
  $\vec{u}_0 \in \vec{P}_{n_0}, \ldots, \vec{u}_k \in \vec{P}_{n_k}$
  such that
  $q_{n_i}(\vec{z}_{n_i} + \vec{u}_i) \step q_{n_{i+1}}(\vec{z}_{n_{i+1}})$
  for every $i \in \{0, \ldots, k\}$.
  We get by diagonality\footnotemark[\getrefnumber{footnote:diagonal}] of $\step[*]$ that
  $q_s(\vec{z}_s + \vec{u}_0 + \cdots + \vec{u}_k) \step[*] q_n(\vec{z}_n)$.
  Since $\explo$ is an algorithmic exploration,
  $\vec{P}_t$ is periodic and
  $\vec{P}_{n_0} \subseteq \cdots \subseteq \vec{P}_{n_k} = \vec{P}_t$.
  Hence,
  $(\vec{u}_0 + \cdots + \vec{u}_k) \in \vec{P}_t$.
  We have shown that
  $q_n(\vec{z}_n)$ is in $\relimg{\step[*]}{q_s(\vec{z}_s + \vec{P}_t)}$.
\end{proof}

\lemSoundnessAlgorithmicExplorations*
\begin{proof}
  Let $\explo = (N, \rightarrow, \lambda)$ be an algorithmic exploration of $2$-BVASS $\bvass$.
  We introduce the family $\left(\vec{Q}_n\right)_{n \in N}$ of subsets of $\setN^2$
  defined,
  by well-founded recursion over $\rightarrow$,
  by
  $
  \vec{Q}_n = \clo{q_n}{\vec{z}_n}(\sum_{m \rightarrow n} \vec{Q}_m)
  $
  for every node $n \in N$.
  Recall that,
  for every configuration $q(\vec{x})$ of $\bvass$,
  the function $\clo{q}{\vec{x}}$ preserves periodicity and is extensive.
  It follows that,
  for every $m, n \in N$,
  the set $\vec{Q}_n$ is periodic and
  $m \rightarrow n$ implies $\vec{Q}_m \subseteq \vec{Q}_n$.

  \smallskip

  We show by well-founded induction over $\rightarrow$ that,
  for all $n \in N$,
  we have
  $\vec{P}_n \subseteq \vec{Q}_n$ and $q_n(\vec{z}_n + \vec{Q}_n) \subseteq \ReachSet{\bvass}$.
  Let $n \in N$ and
  assume that
  $\vec{P}_s \subseteq \vec{Q}_s$ and $q_s(\vec{z}_s + \vec{Q}_s) \subseteq \ReachSet{\bvass}$
  for all $s \in N$ with $s \xrightarrow{+} n$.
  We obtain from \cref{lem:local-post} that
  $q_n(\vec{z}_n + \sum_{m \rightarrow n} \vec{Q}_m) \subseteq \ReachSet{\bvass}$.
  Recall that $\vec{Q}_n = \clo{q_n}{\vec{z}_n}(\sum_{m \rightarrow n} \vec{Q}_m)$.
  It follows that
  $q_n(\vec{z}_n + \vec{Q}_n)
  \subseteq
  \relimg{\step[*]}{q_n(\vec{z}_n + \sum_{m \rightarrow n} \vec{Q}_m)}
  \subseteq
  \ReachSet{\bvass}$.

  \smallskip

  It remains to show that $\vec{P}_n \subseteq \vec{Q}_n$.
  Let $\vec{J}_n$ denote the set $\vec{J}$ defined in
  \cref{lem:correctness-of-acceleration-via-iterable-vectors}
  with $q(\vec{x}) := q_n(\vec{z}_n)$ and $\vec{P} := \vec{Q}_n$.
  The crucial observation now is that $\vec{E}_n$ and $\overline{\vec{C}}_n$
  are both contained in $\vec{J}_n$.

\begin{claim}
  \label{claim:I_n}
  It holds that
  $\vec{E}_n \subseteq \vec{J}_n$
  and
  $\overline{\vec{C}}_n \subseteq \vec{J}_n$.
\end{claim}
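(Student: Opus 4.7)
The plan is to establish the two inclusions of the claim separately, using as induction hypothesis that, for every $r \in N$ with $r \xrightarrow{+} n$, we have $\vec{P}_r \subseteq \vec{Q}_r$ and $q_r(\vec{z}_r + \vec{Q}_r) \subseteq \ReachSet{\bvass}$. In particular, since each $\vec{Q}_r$ is periodic and therefore contains $(0, 0)$, the configuration $q_r(\vec{z}_r)$ lies in $\ReachSet{\bvass}$, which unlocks both \cref{fact:iteration-cycles-of-instantiation} (whose hypothesis is precisely that the instantiation set $\{q_r(\vec{z}_r) \mid r \xrightarrow{+} n\}$ defining $\mathcal{V}_n$ is contained in $\ReachSet{\bvass}$) and \cref{cor:valid-explorations-post+}. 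I shall also exploit the monotonicity $\vec{Q}_r \subseteq \vec{Q}_n$ for $r \xrightarrow{+} n$, which follows by transitivity from the defining recursion of the family $(\vec{Q}_n)_{n \in N}$.

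For $\overline{\vec{C}}_n \subseteq \vec{J}_n$, I fix $\vec{v} = \vec{z}_n - \vec{z}_s$ with $s \in \anc{n}$ and $q_s = q_n$. When $s = n$ we have $\vec{v} = (0, 0)$, which trivially lies in $\vec{J}_n$. Otherwise $s \xrightarrow{+} n$ and \cref{cor:valid-explorations-post+} produces, for some $t$ with $s \xrightarrow{*} t \rightarrow n$, a vector $\vec{u} \in \vec{P}_t \subseteq \vec{Q}_t \subseteq \vec{Q}_n$ satisfying $q_s(\vec{z}_s + \vec{u}) \step[*] q_n(\vec{z}_n)$. Using $q_s = q_n$, this fits the second clause of the characterisation from \cref{lem:correctness-of-acceleration-via-iterable-vectors} with $\vec{y} := \vec{z}_s$, so $\vec{v} \in \vec{J}_n$.

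For $\vec{E}_n \subseteq \vec{J}_n$, I consider an elementary cycle $\theta$ of $\mathcal{V}_n$ with displacement $\vec{v}$ and start $q_s$, where $s \in \anc{n}$ satisfies $\vec{z}_s \geq (c, c)$ and, if $s \neq n$, additionally $\vec{v} \geq (0, 0)$. Applying \cref{fact:iteration-cycles-of-instantiation} yields $q_s(\vec{z}_s) \step[*] q_s(\vec{z}_s + \vec{v})$ in $\bvass$. If $s = n$, this is a loop at $q_n$ based at $\vec{z}_n$, and $\vec{v} \in \vec{J}_n$ follows from the first clause of \cref{lem:correctness-of-acceleration-via-iterable-vectors} with $\vec{u} := (0, 0)$ and $\vec{y} := \vec{z}_n + \vec{v}$. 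If $s \neq n$, the derivation $q_s(\vec{z}_s + (0, 0)) \step[*] q_s(\vec{z}_s + \vec{v})$ witnesses that $\vec{v}$ lies in $\clo{q_s}{\vec{z}_s}(\sum_{m \rightarrow s} \vec{Q}_m) = \vec{Q}_s$, hence in $\vec{Q}_n$ by monotonicity; I then invoke the first clause of \cref{lem:correctness-of-acceleration-via-iterable-vectors} with $\vec{u} := \vec{v}$ and $\vec{y} := \vec{z}_n + \vec{v}$, the required derivation $q_n(\vec{z}_n + \vec{v}) \step[*] q_n(\vec{z}_n + \vec{v})$ being just reflexivity.

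The subtlest step will be the subcase $s \neq n$ of $\vec{E}_n \subseteq \vec{J}_n$. A naive attempt would combine the iterated cycle at $q_s$ with an ancestor-to-descendant execution supplied by \cref{cor:valid-explorations-post+} to produce some explicit witness execution, but the resulting execution begins at $q_s$ rather than at $q_n$, so it cannot directly instantiate the definition of $\vec{J}_n$. The resolution is more abstract: since $\vec{Q}_s$ is defined as the $\clo{q_s}{\vec{z}_s}$-closure of $\sum_{m \rightarrow s} \vec{Q}_m$, the mere existence of the iterated cycle at $q_s$ forces $\vec{v} \in \vec{Q}_s$, hence $\vec{v} \in \vec{Q}_n$, after which the required witness for $\vec{J}_n$ can be chosen trivially by taking $\vec{u} := \vec{v}$.
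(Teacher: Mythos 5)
Your proof is correct and follows essentially the same route as the paper: the paper also uses \cref{fact:iteration-cycles-of-instantiation} plus the closure definition of $\vec{Q}_s$ (and monotonicity $\vec{Q}_s \subseteq \vec{Q}_n$) for the elementary vectors, and \cref{cor:valid-explorations-post+} with $\vec{y} := \vec{z}_s$ for the consecutive vectors. The only cosmetic difference is that in the subcase $s \neq n$ the paper concludes via the general inclusion $\vec{Q}_n \subseteq \vec{F}_n$ (using the witness $\vec{u} \in \sum_{m \rightarrow n} \vec{Q}_m$ from the definition of $\clo{q_n}{\vec{z}_n}$), whereas you instantiate the first clause directly with $\vec{u} := \vec{v}$ and reflexivity of $\step[*]$; both are valid.
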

\begin{claimproof}
  As in the proof of \cref{lem:correctness-of-acceleration-via-iterable-vectors},
  we introduce for short the sets
  $\vec{F}_n, \vec{T}_n \subseteq \setZ^2$
  defined as follows:
  \begin{align*}
    \vec{F}_n
    & = \{\vec{y} - \vec{z}_n \mid \vec{y} \in \setN^2 \text{ and } \exists \vec{u} \in \vec{Q}_n : q_n(\vec{z}_n + \vec{u}) \step[*] q_n(\vec{y})\}
    \\
    \vec{T}_n
    & = \{\vec{z}_n - \vec{y} \mid \vec{y} \in \setN^2 \text{ and } \exists \vec{u} \in \vec{Q}_n : q_n(\vec{y} + \vec{u}) \step[*] q_n(\vec{z}_n)\}
  \end{align*}
  It is readily seen that $\vec{J}_n = \vec{F}_n \cup \vec{T}_n$.
  We prove a stronger statement than the claim,
  namely that
  $\vec{E}_n \subseteq \vec{F}_n$
  and
  $\overline{\vec{C}}_n \subseteq \vec{T}_n$.

  \smallskip

  We first show that $\vec{E}_n \subseteq \vec{F}_n$.
  Let $\vec{v} \in \vec{E}_n$.
  According to the definition of $\vec{E}_n$,
  there exist an elementary cycle $\theta$ of $\mathcal{V}_n$ with displacement $\vec{v}$
  and a node $s \in N$ with $s \xrightarrow{*} n$ such that
  $q_s$ is the start of $\theta$,
  $\vec{z}_s \geq (c, \ldots, c)$ and $s \neq n$ implies $\vec{v} \geq (0, 0)$.
  Since $\mathcal{V}_n$ is the instantiation of $\bvass$ with a finite subset of $\ReachSet{\bvass}$,
  we derive from \cref{fact:iteration-cycles-of-instantiation} that
  $q_s(\vec{z}_s) \step[*] q_s(\vec{z}_s + \vec{v})$.
  It follows that $\vec{v} \in \vec{F}_n$ if $s = n$.
  Otherwise,
  we have $\vec{v} \geq (0, 0)$,
  hence,
  $\vec{v} \in \clo{q_s}{\vec{z}_s}(\{(0, 0)\})$.
  Observe that
  $\clo{q_s}{\vec{z}_s}(\{(0, 0)\}) \subseteq \vec{Q}_s \subseteq \vec{Q}_n \subseteq \vec{F}_n$.
  This entails that $\vec{v} \in \vec{F}_n$.

  \smallskip

  We now show that $\overline{\vec{C}}_n \subseteq \vec{T}_n$.
  Let $\vec{v} \in \overline{\vec{C}}_n$.
  According to the definition of $\overline{\vec{C}}_n$,
  there exists a node $s \in N$ with $s \xrightarrow{*} n$ such that
  $q_s = q_n$ and $\vec{v} = \vec{z}_n - \vec{z}_s$.
  If $s = n$ then $\vec{v} = (0, 0) \in \vec{T}_n$.
  Otherwise,
  there exists $t \in N$ such that $s \xrightarrow{*} t \rightarrow n$.
  We obtain from \cref{cor:valid-explorations-post+},
  whose requirement is ensured by the induction hypothesis,
  that
  $q_n(\vec{z}_n)$ is in $\relimg{\step[*]}{q_s(\vec{z}_s + \vec{P}_t)}$.
  Recall that $\vec{P}_t \subseteq \vec{Q}_t$
  by the induction hypothesis.
  As $\vec{Q}_t \subseteq \vec{Q}_n$,
  we get that
  $q_s(\vec{z}_s + \vec{u}) \step[*] q_n(\vec{z}_n)$ for some $\vec{u} \in \vec{Q}_n$.
  Since $q_s = q_n$ and $\vec{v} = \vec{z}_n - \vec{z}_s$,
  we obtain that
  $\vec{v} \in \vec{T}_n$.
\end{claimproof}

  By definition,
  $\vec{I}_n \subseteq (\vec{E}_n \cup \overline{\vec{C}}_n)$,
  hence,
  $\vec{I}_n \subseteq \vec{J}_n$ according to \cref{claim:I_n}.
  It follows that
  $\accel{\vec{I}_n}(\vec{Q}_n) \subseteq \accel{\vec{J}_n}(\vec{Q}_n)$.
  We derive from \cref{lem:correctness-of-acceleration-via-iterable-vectors}
  that
  $\accel{\vec{I}_n}(\vec{Q}_n) \subseteq \clo{q_n}{\vec{z}_n}(\vec{Q}_n) = \vec{Q}_n$
  since $\clo{q_n}{\vec{z}_n}$ is idempotent.
  Recall that $\vec{P}_m \subseteq \vec{Q}_m \subseteq \vec{Q}_n$
  for every $m \in N$ with $m \rightarrow n$.
  We obtain from Condition~\ref{explo:P} of \cref{def:algorithmic-exploration} that
  $
  \vec{P}_n
  =
  \accel{\vec{I}_n}(\sum_{m \rightarrow n} \vec{P}_m)
  \subseteq
  \accel{\vec{I}_n}(\vec{Q}_n)
  \subseteq
  \vec{Q}_n
  $.
  This concludes the proof that
  for all $n \in N$,
  we have
  $\vec{P}_n \subseteq \vec{Q}_n$ and $q_n(\vec{z}_n + \vec{Q}_n) \subseteq \ReachSet{\bvass}$.
  This property immediately entails that $\explo$ is sound.
\end{proof}

\subsection{Proofs of \cref{subsec:partial-correctness-Explore}}
We introduce a notion of weakly-algorithmic explorations to simplify the presentation of the following lemma.
An exploration $\explo = (N, \rightarrow, \lambda)$ of a $2$-BVASS $\bvass = (Q, \Delta)$
is \emph{weakly-algorithmic} if it satisfies
Conditions~\ref{explo:v+q} and~\ref{explo:z} of \cref{def:algorithmic-exploration}.

\begin{lemma}
  \label{lem:algo-properties}
  For every $j \in J$,
  $\explo^j$ is a non-redundant weakly-algorithmic exploration of $\bvass$.
  Moreover,
  it holds that $\Redundant^j \subseteq N^j$, $\WorkList^j \subseteq N^j$ and
  the following properties are satisfied:
  \begin{itemize}
  \item
    $\Redundant^j \cap \WorkList^j = \emptyset$ and
    $\Redundant^j$ is the set of redundant nodes of $\explo^j$
    that are not in $\WorkList^j$,
  \item
    For all $n \in (\Redundant^j \cup \WorkList^j)$,
    $n$ is a leaf of $\explo^j$, and
  \item
    For all $n \in N^j$,
    $\vec{P}_n^j = \sum_{m \rightarrow^j n} \vec{P}_m^j$ if $n \in \WorkList^j$, and
    $\vec{P}_n^j = \accel{\vec{I}_n^j}(\sum_{m \rightarrow^j n} \vec{P}_m^j)$ otherwise.
  \end{itemize}
\end{lemma}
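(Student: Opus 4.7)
The plan is to proceed by induction on $j \in J$.

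For the base case $j=0$, I would simply inspect the initialization loop at lines~\ref{line:init-start}--\ref{line:init-end}. Each node $n$ created there is a source with $\lambda(n)=(\vec{a},q,\vec{a},\{(0,0)\})$ for some rule $(\emptyset,\vec{a},q)\in\Delta$ with $\vec{a}\geq(0,0)$, $\Redundant^0=\emptyset$, and $N^0=\WorkList^0$. Since no edges exist, every node is vacuously a leaf, no node is redundant, and weakly-algorithmic Conditions~\ref{explo:v+q}--\ref{explo:z} hold with the empty predecessor set (the empty sum giving $\vec{z}_n=\vec{a}=\vec{a}_n$). The worklist case of the third bullet reads $\vec{P}_n^0=\{(0,0)\}=\sum_{m\to^0 n}\vec{P}_m^0$ (empty sum), as required.

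For the inductive step, assume the invariants hold at step $j$ and the \textbf{while}-loop performs one iteration, selecting $n\in\WorkList^j$. I would first record two stability facts used throughout. First, by the inductive hypothesis $n$ is a leaf of $\explo^j$, so the update of $\vec{P}_n$ at line~\ref{line:acceleration} affects no other node's incoming edges or periodic sums. Second, once a node is removed from the worklist its label is never modified again; hence every ancestor $s\in\anc{n}$ (all lying in $\hat{N}^j$) has its $\vec{P}_s$ frozen, and in particular $\sum_{m\to^j n}\vec{P}_m^j=\sum_{m\to^{j+1}n}\vec{P}_m^{j+1}$. Since $\vec{I}_n$ depends only on the fixed fields $q_s,\vec{z}_s$ of ancestors, the fixed fields $q_n,\vec{z}_n$ of $n$, and the value of $\sum_{m\to n}\vec{P}_m$, we have $\vec{I}_n^j=\vec{I}_n^{j+1}$. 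Combined with the worklist invariant at step $j$, the acceleration step therefore gives $\vec{P}_n^{j+1}=\accel{\vec{I}_n^{j+1}}(\sum_{m\to^{j+1}n}\vec{P}_m^{j+1})$, exactly what the third bullet requires for a processed non-worklist node.

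Next I would split on the outcome of the redundancy test at line~\ref{line:cover-check}. In the redundant case, $n$ is transferred from $\WorkList$ to $\Redundant$ while the graph and all other labels are unchanged; the weakly-algorithmic conditions and the leaf property for $\Redundant^{j+1}\cup\WorkList^{j+1}$ are immediate. In the expansion case, for each admissible $(M,(S,\vec{a},q))$ and each $\vec{b}\in\vec{B}$, a new child $n'$ with parents exactly $M$ is created and placed in $\WorkList$ with $\lambda(n')=(\vec{a},q,\vec{b},\vec{P}')$ where $\vec{P}'=\sum_{m\in M}\vec{P}_m$. Conditions~\ref{explo:v+q}--\ref{explo:z} for $n'$ follow directly from the choice of transition rule and from $\vec{b}\in\vec{B}\subseteq(\vec{a}+\vec{z}'+\vec{P}')\cap\setN^2$ (line~\ref{line:basis-computation}), while $\vec{P}_{n'}^{j+1}=\vec{P}'=\sum_{m\to^{j+1}n'}\vec{P}_m^{j+1}$ matches the worklist case of the third bullet. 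The side-remark in the algorithm that $N\setminus\WorkList$ stays constant during the expansion loop ensures that the set over which $M$ ranges is $(\hat{N}^j\setminus\Redundant^j)\cup\{n\}$, so parents of new children are all processed and non-redundant.

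The one point that needs care is maintaining the equality $\Redundant^{j+1}=R^{j+1}\setminus\WorkList^{j+1}$, where $R^{j}$ denotes the actual set of redundant nodes of $\explo^j$. The key observation is that enlarging $\vec{P}_n$ or adjoining children to $n$ cannot change the redundancy status of any $m\neq n$: in the redundant case $n$ is a leaf so it is nobody's ancestor, and in the expansion case the edges added from $n$ only create descendants of $n$, whereas redundancy is an ancestor-based property. Hence $R^{j+1}$ differs from $R^j$ only by possibly adjoining $n$ (case A) or freshly-created children (case B, who all lie in $\WorkList^{j+1}$), and a short calculation matches both sides of the equality in each case. This is the main bookkeeping obstacle, but once the leaf property of worklist/redundant nodes is used it is routine.
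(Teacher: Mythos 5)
Your proof is correct and follows essentially the same route as the paper's: induction on $j$, a base-case inspection of the initialization loop, and an inductive step that splits on the redundancy test, with the key observations that $\vec{I}_m$ is unchanged across the iteration and that the label and edge modifications are localized to the leaf $n$ and its fresh children. The paper merely packages the intermediate state after line~\ref{line:acceleration} as an explicit snapshot $\explo^\sharp$ before the case split, which is a cosmetic difference.
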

\begin{proof}
  We show the lemma by induction on $j \in J$.
  For the base case,
  observe that
  the values of the variables after the \textbf{foreach}-loop at lines~\ref{line:init-start}--\ref{line:init-end}
  satisfy
  $N^0 = \WorkList^0$, $\Redundant^0 = {\rightarrow}^0 = \emptyset$, and
  for every $n \in N^0$,
  there exists $(S, \vec{a}, q) \in \Delta$ with $S = \emptyset$ and $\vec{a} \geq (0, 0)$ such that
  $\lambda^0(n) = (\vec{a}, q, \vec{a}, \{(0, 0)\})$.
  We derive that $\explo^0$ is a non-redundant weakly-algorithmic exploration of $\bvass$.
  Moreover,
  $\Redundant^0 \subseteq N^0$, $\WorkList^0 \subseteq N^0$,
  and the three lemma properties are satisfied.
  To prove the induction step,
  suppose that the lemma holds for a given $j \in J$.
  We also assume that $(j+1) \in J$ as we are done otherwise.
  According to the body of the \textbf{while}-loop
  (see lines~\ref{line:while-start}--\ref{line:while-end}),
  there exists $n \in \WorkList^j$ such that $n \not\in \WorkList^{j+1}$.
  Put differently,
  $n$ is the node that is processed at the $(j+1)^{\mathrm{th}}$ iteration of the \textbf{while}-loop.
  Let $N^\sharp$, $\rightarrow^\sharp$, $\lambda^\sharp$, $\Redundant^\sharp$ and $\WorkList^\sharp$ denote
  the values of the variables at line~\ref{line:cover-check},
  just before the evaluation of the \textbf{if}-statement condition, and
  let $\explo^\sharp$ denote the labeled graph $(N^\sharp, \rightarrow^\sharp, \lambda^\sharp)$.
  Notice that $N^\sharp = N^j$, ${\rightarrow}^\sharp = {\rightarrow}^j$,
  and that the functions $\lambda^\sharp$ and $\lambda^j$ are identical except that
  $\vec{P}_n^\sharp = \accel{\vec{I}_n^j}(\vec{P}_n^j)$.
  So $\explo^\sharp$ is a non-redundant weakly-algorithmic exploration of $\bvass$ as well.
  Notice also that
  $\Redundant^\sharp = \Redundant^j$ and $\WorkList^\sharp = \WorkList^j \setminus \{n\}$.
  There are two cases to consider,
  depending on whether the \textbf{if}-statement condition at line~\ref{line:cover-check} holds or not.
  \begin{itemize}
  \item
    If $n$ is redundant in $\explo^\sharp$ then
    line~\ref{line:redundant-node} is executed and we get that the equalities
    $\explo^{j+1} = \explo^\sharp$,
    $\Redundant^{j+1} = \Redundant^\sharp \uplus \{n\}$, and
    $\WorkList^{j+1} = \WorkList^\sharp$
    are satisfied.\footnote{%
      \label{footnote:uplus}
      To improve readability,
      we use the symbol $\uplus$ to denote the union of two disjoint sets.
    }
    It follows,
    by a routine check,
    that the lemma holds for $j+1$ in this case.
  \item
    Otherwise,
    the \textbf{foreach}-loop at lines~\ref{line:expansion-start}--\ref{line:expansion-end} is executed.
    We derive that
    there exists a set $N'$ and
    a binary relation ${\rightarrow}' \subseteq (N^\sharp \setminus (\Redundant^\sharp \cup \WorkList^\sharp)) \times N'$ such that
    $N^{j+1} = N^\sharp \uplus N'$,
    ${\rightarrow}^{j+1} = {\rightarrow}^\sharp \uplus {\rightarrow}'$,
    $\Redundant^{j+1} = \Redundant^\sharp$, and
    $\WorkList^{j+1} = \WorkList^\sharp \uplus N'$.
    The functions $\lambda^{j+1}$ and $\lambda^\sharp$ coincide on $N^\sharp$.
    Moreover,
    for every $n' \in N'$,
    there exists a subset $M$ of $N^\sharp$, a transition rule $(S, \vec{a}, q) \in \Delta$ and
    a vector $\vec{b} \in \setN^2$ such that
    $M \cap (\Redundant^\sharp \cup \WorkList^\sharp) = \emptyset$,
    $S = \multiset{q_m^\sharp \mid m \in M}$,
    $\vec{b} \in (\vec{a} + \sum_{m \in M} (\vec{z}_m^\sharp + \vec{P}_m^\sharp))$,
    $M = \{m \in N^\sharp \mid m \rightarrow' n'\}$, and
    $\lambda^{j+1}(n') = (\vec{a}, q, \vec{b}, \sum_{m \in M} \vec{P}_m^\sharp)$.
    It again follows,
    by a routine check,
    that the lemma holds for $j+1$ in this case.
  \end{itemize}
  A crucial observation in both cases is that
  $\vec{I}_m^j = \vec{I}_m^\sharp = \vec{I}_m^{j+1}$ for every $m \in N^j$.
\end{proof}

We recall that,
for every $j \in J$,
$\hat{\explo}^j$ is the restriction of $\explo^j$ to
the set of processed nodes $\hat{N}^j = N^j \setminus \WorkList^j$.

\begin{lemma}
  \label{lem:algo-completeness}
  It holds that
  $\post[\bvass](\ReachSet{\hat{\explo}^j}) \subseteq \ReachSet{\explo^j}$
  for every $j \in J$.
\end{lemma}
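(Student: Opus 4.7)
The plan is to trace each configuration of $\post[\bvass](\ReachSet{\hat{\explo}^j})$ back to a child that the algorithm has already spawned by iteration $j$. Fix $q(\vec{y}) \in \post[\bvass](\ReachSet{\hat{\explo}^j})$; unfolding yields a transition rule $\delta = (S, \vec{a}, q) \in \Delta$ and a set $D \subseteq \ReachSet{\hat{\explo}^j}$ with $S = \multiset{r \mid r(\vec{z}) \in D}$ and $\vec{y} = \vec{a} + \sum_{r(\vec{z}) \in D} \vec{z}$. For each $r(\vec{z}) \in D$ I would pick a processed node $m \in \hat{N}^j$ with $q_m = r$ and $\vec{z} \in \vec{z}_m + \vec{P}_m^j$. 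If $m \in \Redundant^j$, then by \cref{lem:algo-properties} $m$ is a leaf of $\explo^j$ covered by some $s \in \hat{N}^j$ with $s \xrightarrow{+} m$; such an $s$ has an outgoing edge, hence is not a leaf, and so is not in $\Redundant^j$. Replacing $m$ by $s$ gives a representative in $\hat{N}^j \setminus \Redundant^j$, and the pairwise-distinctness of states in $S$ turns the collection into a set $M \subseteq \hat{N}^j \setminus \Redundant^j$ of size $|S|$.

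The case $S = \emptyset$ is dispatched by the initialization loop (lines~\ref{line:init-start}--\ref{line:init-end}), which has already produced a node labeled $(\vec{a}, q, \vec{a}, \{(0,0)\})$, so $q(\vec{y}) = q(\vec{a}) \in \ReachSet{\explo^0} \subseteq \ReachSet{\explo^j}$. For $S \neq \emptyset$, let $n$ be the last node of $M$ to be processed, at some iteration $j^*+1 \leq j$. Right after $n$ is removed from the worklist, the variables satisfy $N^\sharp = N^{j^*}$, $\Redundant^\sharp = \Redundant^{j^*} \subseteq \Redundant^j$ and $\WorkList^\sharp = \WorkList^{j^*} \setminus \{n\}$. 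Every $m \in M \setminus \{n\}$ is already processed, so $M \setminus \{n\} \subseteq \hat{N}^{j^*}$; combined with $M \cap \Redundant^j = \emptyset$, this gives $M \cap (\Redundant^\sharp \cup \WorkList^\sharp) = \emptyset$. Moreover the \textbf{if}-test at line~\ref{line:cover-check} must fail for $n$ (otherwise $n$ would be in $\Redundant^j$, contradicting $n \in M$), so the expansion loop at lines~\ref{line:expansion-start}--\ref{line:expansion-end} is executed for this $M$ together with the transition rule $\delta$.

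The expansion sets $\vec{z}' = \sum_{m \in M} \vec{z}_m$, $\vec{P}' = \sum_{m \in M} \vec{P}_m^\sharp$ and chooses a finite $\vec{B}$ with $\vec{B} + \vec{P}' = (\vec{a} + \vec{z}' + \vec{P}') \cap \setN^2$. Writing each $\vec{z}$ from $D$ as $\vec{z}_m + \vec{p}_m$ with $\vec{p}_m \in \vec{P}_m^j$, the decomposition $\vec{y} = \vec{a} + \vec{z}' + \sum_{m \in M} \vec{p}_m$ together with $\vec{y} \in \setN^2$ places $\vec{y}$ in $\vec{B} + \vec{P}'$. The child $n'$ created at lines~\ref{line:children-start}--\ref{line:children-end} for the appropriate $\vec{b} \in \vec{B}$ carries $\lambda^{j^*+1}(n') = (\vec{a}, q, \vec{b}, \vec{P}')$; since $\vec{P}_{n'}$ can only grow via subsequent acceleration, $\vec{P}_{n'}^j \supseteq \vec{P}'$, and we obtain $q(\vec{y}) \in q_{n'}(\vec{z}_{n'} + \vec{P}_{n'}^j) \subseteq \ReachSet{\explo^j}$. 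The main obstacle is the bookkeeping across iteration indices: one must justify via \cref{lem:algo-properties} that $\vec{P}_m^\sharp = \vec{P}_m^j$ for every $m \in M$ (nodes in $M \setminus \{n\}$ are already accelerated and their periodic sets are frozen, while $\vec{P}_n^\sharp$ equals $\vec{P}_n^j$ thanks to the acceleration performed at line~\ref{line:acceleration} of step $j^*+1$), so that the $\vec{P}'$ computed during the expansion truly matches the one needed to cover $\vec{y}$.
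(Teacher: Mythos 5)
Your argument is correct and rests on the same key mechanism as the paper's proof: given a configuration of $\post[\bvass](\ReachSet{\hat{\explo}^j})$, assemble a set $M$ of processed, non-redundant witness nodes (one per state of $S$) and show that the expansion loop at lines~\ref{line:expansion-start}--\ref{line:expansion-end} was executed for this $M$ and this transition rule, producing a child whose label covers the configuration. The difference is purely organizational. The paper proceeds by induction on $j$, so the only configurations needing attention at step $j+1$ are those whose witness set $D$ is not already contained in $\ReachSet{\hat{\explo}^j}$; this forces the freshly processed node into $M$ and keeps the whole argument inside iteration $j+1$. You instead argue directly for an arbitrary $j$ by rewinding to the iteration $j^*+1$ at which the last member of $M$ was processed, which obliges you to transport the situation at that iteration forward to iteration $j$: you need that $\Redundant$ only grows, that labels of processed nodes are frozen (so $\vec{P}_m^\sharp = \vec{P}_m^j$ for $m \in M$), and that $\ReachSet{\explo^{j'}}$ is nondecreasing in $j'$ (so the child created at iteration $j^*+1$ still covers $q(\vec{y})$ at iteration $j$). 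You address all of these, and they follow from \cref{lem:algo-properties} together with the observation that the algorithm never modifies the label of a node after it has been processed. Your explicit replacement of a redundant witness by its covering ancestor (which is not a leaf, hence not in $\Redundant^j \cup \WorkList^j$) corresponds to the paper's remark that $\ReachSet{\hat{\explo}^j} = \bigcup_{n \in U^j} q_n(\vec{z}_n + \vec{P}_n)$ with $U^j = N^j \setminus (\Redundant^j \cup \WorkList^j)$. Neither route is shorter; the induction merely localizes the bookkeeping that your direct version spreads across iterations.
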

\begin{proof}
  For short,
  let us write
  $C^j = \ReachSet{\hat{\explo}^j}$ and
  $U^j = N^j \setminus (\Redundant^j \cup \WorkList^j)$,
  for each $j \in J$.
  Recall that
  $C^j = \bigcup_{n \in (N^j \setminus \WorkList^j)} q_n^j(\vec{z}_n^j + \vec{P}_n^j)$.
  By \cref{lem:algo-properties},
  $\Redundant^j$ only contains redundant nodes of $\explo^j$,
  hence,
  $C^j = \bigcup_{n \in U^j} q_n^j(\vec{z}_n^j + \vec{P}_n^j)$.
  Let us show by induction on $j \in J$ that $\post[\bvass](C^j) \subseteq \ReachSet{\explo^j}$.
  The base case follows from the \textbf{foreach}-loop at lines~\ref{line:init-start}--\ref{line:init-end}.
  Indeed,
  for every $(S, \vec{a}, q) \in \Delta$ with $S = \emptyset$ and $\vec{a} \geq (0, 0)$,
  there exists $n \in N^0$ such that
  $\lambda^0(n) = (\vec{a}, q, \vec{a}, \{(0, 0)\})$.
  Moreover,
  $C^0 = \emptyset$ since $N^0 = \WorkList^0$.
  So
  $\post[\bvass](C^0)
  =
  \post[\bvass](\emptyset)
  =
  \{q(\vec{a}) \mid (\emptyset, \vec{a}, q) \in \Delta \text{ and } \vec{a} \geq (0, 0)\}$
  is contained in $\ReachSet{\explo^0}$.
  To prove the induction step,
  suppose that $\post[\bvass](C^j) \subseteq \ReachSet{\explo^j}$ for a given $j \in J$.
  We also assume that $(j+1) \in J$ as we are done otherwise.
  As in the proof of \cref{lem:algo-properties},
  let $n \in \WorkList^j$ such that $n \not\in \WorkList^{j+1}$, and
  let $N^\sharp$, $\rightarrow^\sharp$, $\lambda^\sharp$, $\Redundant^\sharp$ and $\WorkList^\sharp$ denote
  the values of the variables at line~\ref{line:cover-check}.
  Recall that $N^\sharp = N^j$ and $\WorkList^\sharp = \WorkList^j \setminus \{n\}$.
  We will use the following fact,
  which easily follows from the proof of \cref{lem:algo-properties}.
  The exploration $\explo^\sharp$ of $\bvass$ is defined by
  $\explo^\sharp = (N^\sharp, \rightarrow^\sharp, \lambda^\sharp)$,
  as in that proof.
  \begin{fact}
    \label{fact:proof-lem:algo-completeness}
    We have $\ReachSet{\explo^j} \subseteq \ReachSet{\explo^\sharp} \subseteq \ReachSet{\explo^{j+1}}$.
    Moreover,
    if $U^j \neq U^{j+1}$ then
    $n$ is not redundant in $\explo^\sharp$ and the equalities
    $U^{j+1} = U^j \uplus \{n\} = N^\sharp \setminus (\Redundant^\sharp \cup \WorkList^\sharp)$
    are satisfied.\footnotemark[\getrefnumber{footnote:uplus}]
  \end{fact}
  We also recall from the proof of \cref{lem:algo-properties} that
  the functions $\lambda^\sharp$ and $\lambda^j$ are identical except that
  $\vec{P}_n^\sharp = \accel{\vec{I}_n^j}(\vec{P}_n^j)$, and that
  the functions $\lambda^{j+1}$ and $\lambda^\sharp$ coincide on $N^\sharp$.
  Let $q(\vec{y}) \in \post[\bvass](C^{j+1})$ and
  let us show that $q(\vec{y}) \in \ReachSet{\explo^{j+1}}$.
  By definition of $\post[\bvass]$,
  there exists
  a transition rule $\delta = (S, \vec{a}, q)$ and
  a subset $D$ of $C^{j+1}$
  such that
  $S = \multiset{r \mid r(\vec{z}) \in D}$
  and
  $\vec{y} = \vec{a} + \sum_{r(\vec{z}) \in D} \vec{z}$.
  If $D \subseteq C^j$ then $q(\vec{y}) \in \post[\bvass](C^j)$,
  and we are done since
  $\post[\bvass](C^j) \subseteq \ReachSet{\explo^j} \subseteq \ReachSet{\explo^{j+1}}$.
  Assume, for the remainder of the proof,
  that $D \not\subseteq C^j$.
  As the multiset $\multiset{r \mid r(\vec{z}) \in D}$ is a set,
  each configuration in $D$ comes from a distinct node in
  $U^{j+1}$.
  We derive that there exists a subset $M$ of $U^{j+1}$,
  with $M \not\subseteq U^j$,
  such that
  $S = \multiset{q_m^{j+1} \mid m \in M}$ and
  $\vec{y} \in (\vec{a} + \sum_{m \in M} (\vec{z}_m^{j+1} + \vec{P}_m^{j+1}))$.
  Since $M \subseteq U^{j+1}$ and $M \not\subseteq U^j$,
  we derive from \cref{fact:proof-lem:algo-completeness} that
  $n$ is not redundant in $\explo^\sharp$,
  $n \in M$, and
  $M$ is contained in $N^\sharp \setminus (\Redundant^\sharp \cup \WorkList^\sharp)$.
  Observe also that $S = \multiset{q_m^\sharp \mid m \in M}$.
  So the \textbf{foreach}-loop at lines~\ref{line:expansion-start}--\ref{line:expansion-end} is executed and
  the lines~\ref{line:z'-P'}--\ref{line:children-end} are executed for $M$ and $(S, \vec{a}, q)$.
  We have
  $(\vec{B} + \sum_{m \in M} \vec{P}_m^\sharp) = (\vec{a} + \sum_{m \in M} (\vec{z}_m^\sharp + \vec{P}_m^\sharp)) \cap \setN^2$,
  where $\vec{B}$ is the subset of $\setN^2$ taken at line~\ref{line:basis-computation}.
  Hence,
  $\vec{y} \in (\vec{b} + \sum_{m \in M} \vec{P}_m^\sharp)$
  for some $\vec{b} \in \vec{B}$.
  The \textbf{foreach}-loop at lines~\ref{line:children-start}--\ref{line:children-end} guarantees that
  there exists $n' \in N^{j+1}$ with
  $\lambda^{j+1}(n') = (\vec{a}, q, \vec{b}, \sum_{m \in M} \vec{P}_m^\sharp)$.
  We obtain that $q(\vec{y}) \in \ReachSet{\explo^{j+1}}$.
\end{proof}

\lemAlgoExplorations*
\begin{proof}
  The lemma is an immediate consequence of \cref{lem:algo-properties,lem:algo-completeness}.
\end{proof}

\lemAlgoPartialCorrectness*
\begin{proof}
  Let $\sigma$ be an execution of $\mathtt{Explore}(\bvass)$,
  where $\bvass$ is a $2$-BVASS.
  We first show that $\hat{\explo}^\sigma$ is a non-redundant and algorithmic exploration of $\bvass$.
  This will derive from the following claim.

  \begin{claim}
    \label{claim:limit-exploration}
    For every $j \in J$, it holds that
    \begin{enumerate}
    \item
      for every $m, n \in N^\sigma$,
      if $n \in N^j$ and $m \rightarrow^\sigma n$ then $m \in N^j$ and $m \rightarrow^j n$, and
    \item
      the restriction of $\hat{\explo}^\sigma$ to $\hat{N}^j$ coincides with $\hat{\explo}^j$.
    \end{enumerate}
  \end{claim}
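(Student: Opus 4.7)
The plan is to prove the claim by induction on $j \in J$, exploiting two monotonicity properties of \cref{algo:main}: first, nodes, edges, and the graph of $\lambda$ are only ever extended during an execution, never retracted; second, a node's label is assigned at creation (line~\ref{line:lambda-init} for sources, line~\ref{line:z'-P'}/line~\ref{line:child-add-to-N} for children) and is modified at most once afterwards, namely at line~\ref{line:acceleration}, during the very iteration in which the node is transferred from $\WorkList$ to $\hat N = N \setminus \WorkList$. Together these imply that for every processed node its label is forever fixed, and that every edge into a node is set when the node is created and never modified.

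For the base case $j = 0$, the initial \textbf{foreach}-loop at lines~\ref{line:init-start}--\ref{line:init-end} creates only sources and immediately places them in $\WorkList^0$. Thus $\hat{N}^0 = \emptyset$, which makes part~2 vacuous, and no edge is incident to any node of $N^0$, so part~1 holds with no work.

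For the induction step, assuming the claim at $j$ and $(j{+}1) \in J$, I would name $n_\star \in \WorkList^j$ the node processed at iteration $j+1$ and inspect the body of the \textbf{while}-loop, reusing the description given in the proof of \cref{lem:algo-properties}. The only modifications made to $N$, $\rightarrow$, and $\lambda$ during this iteration are: the in-place update of $\vec{P}_{n_\star}$ at line~\ref{line:acceleration}, and — when $n_\star$ is not redundant — the creation of a finite set $N' = N^{j+1} \setminus N^j$ of fresh children with incoming edges from subsets $M \subseteq N^\sharp \setminus (\Redundant^\sharp \cup \WorkList^\sharp) \subseteq \hat{N}^j \cup \{n_\star\}$. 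Part~1 then follows by a case split on $n \in N^{j+1}$ with $m \rightarrow^\sigma n$: either $n \in N^j$, in which case the induction hypothesis together with ${\rightarrow}^j \subseteq {\rightarrow}^{j+1}$ yields the conclusion; or $n \in N'$, in which case the complete set of edges into $n$ is fixed by the assignment at line~\ref{line:child-add-to-N} during iteration $j+1$ and never touched again, so the ancestor $m$ lies in $N^{j+1}$ and $m \rightarrow^{j+1} n$.

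For part~2, observe that $\hat{N}^{j+1} = \hat{N}^j \uplus \{n_\star\}$ whether or not $n_\star$ is declared redundant, and combine part~1 (applied at $j{+}1$) with the monotonicity observation: labels of nodes in $\hat{N}^j$ are not affected by iteration $j+1$, while the label of $n_\star$ is finalized at line~\ref{line:acceleration} and remains unchanged at all subsequent iterations. Hence $\lambda^\sigma$ and $\lambda^{j+1}$ coincide on $\hat{N}^{j+1}$, and part~1 ensures that the edge relations do too. The claim is essentially bookkeeping; the only point that deserves explicit care is that the \textbf{foreach}-loop at lines~\ref{line:expansion-start}--\ref{line:expansion-end} never retroactively alters the labels or outgoing edges of previously-processed nodes, which is transparent from the pseudocode since that loop only adds outgoing edges from $M$ to fresh targets in $N'$. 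No deeper obstacle arises: the semantic content of acceleration (in particular, that $\vec{I}_{n_\star}^j$ is well-defined and stable) is already absorbed into \cref{lem:algo-properties}.
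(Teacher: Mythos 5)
Your proposal is correct and follows essentially the same route as the paper: both arguments reduce to the structural facts (extracted from the analysis of one \textbf{while}-loop iteration in the proof of \cref{lem:algo-properties}) that new edges only ever point into freshly created nodes and that a node's label is frozen once it leaves the worklist. The only cosmetic difference is that you package these facts as an outer induction on $j$, whereas the paper fixes $j$ and runs an inner induction on later iterations $\ell > j$; the content is the same.
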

  \begin{claimproof}
    Let $j \in J$.
    Recall from the proof of \cref{lem:algo-properties} that
    if $(j+1) \in J$ then
    ${\rightarrow}^{j+1} \setminus {\rightarrow}^j$ is contained in
    $N^j \times (N^{j+1} \setminus N^j)$.
    An immediate induction on $\ell$ shows that,
    for every $\ell \in J$,
    if $\ell > j$ then
    ${\rightarrow}^\ell \setminus {\rightarrow}^j$ is contained in
    $N^{\ell - 1} \times (N^\ell \setminus N^j)$.
    Now let $m, n \in N^\sigma$ such that $n \in N^j$ and $m \rightarrow^\sigma n$.
    We have $m \rightarrow^\ell n$ for some $\ell \in J$.
    Assume by contradiction that $(m, n)$ is in ${\rightarrow}^\ell \setminus {\rightarrow}^j$.
    We have $\ell > j$ since ${\rightarrow}^\ell \subseteq {\rightarrow}^j$ otherwise.
    Hence,
    $(m, n)$ is in $N^{\ell - 1} \times (N^\ell \setminus N^j)$,
    which contradicts our assumption that $n \in N^j$.
    So $m \rightarrow^j n$ holds,
    hence,
    $m \in N^j$.
    This concludes the proof of the first assertion of the claim.

    \smallskip

    To prove the second assertion of the claim,
    we show that the node-labeled graphs $\explo^j$ and $\explo^\sigma$ have the same restriction on $\hat{N}^j = N^j \setminus \WorkList^j$.
    As an immediate consequence of the first assertion,
    we get that
    the restriction of the graph $(N^\sigma, \rightarrow^\sigma)$ on $N^j$ coincides with $(N^j, \rightarrow^j)$.
    Let us now prove that $\lambda^j$ and $\lambda^\sigma$ coincide on $\hat{N}^j$.
    Recall from the proof of \cref{lem:algo-properties} that
    if $(j+1) \in J$ then
    $\WorkList^{j+1} \subseteq \WorkList^j \cup (N^{j+1} \setminus N^j)$ and
    the functions $\lambda^j$ and $\lambda^{j+1}$ coincide on $\hat{N}^j$.
    An immediate induction on $\ell$ shows that,
    for every $\ell \in J$,
    if $\ell > j$ then
    $\WorkList^\ell \subseteq \WorkList^j \cup (N^\ell \setminus N^j)$ and
    the functions $\lambda^j$ and $\lambda^\ell$ coincide on $\hat{N}^j$.
    We obtain that
    $\lambda^j$ and $\lambda^\sigma$ coincide on $\hat{N}^j$,
    which concludes the proof of the second assertion of the claim.
  \end{claimproof}

  The first property of \cref{claim:limit-exploration} entails that
  the edge relation $\rightarrow^\sigma$ of $\explo^\sigma$ is well-founded.
  Indeed,
  if $n_0, n_1, \ldots$ is an infinite sequence of nodes in $N^\sigma$ such that
  $\cdots \rightarrow^\sigma n_1 \rightarrow^\sigma n_0$,
  then $n_0 \in N^j$ for some $j \in J$ and we get that
  $\cdots \rightarrow^j n_1 \rightarrow^j n_0$,
  which is impossible since $\rightarrow^j$ is well-founded.
  Note that the well-foundedness of $\rightarrow^\sigma$ entails that $\explo^\sigma$ is acyclic.
  We obtain that $\explo^\sigma$ is an exploration of $\bvass$,
  hence,
  so is $\hat{\explo}^\sigma$.
  The second property of \cref{claim:limit-exploration},
  combined with \cref{lem:algo-explorations},
  entails that $\hat{\explo}^\sigma$ is a non-redundant and algorithmic exploration of $\bvass$.

  \smallskip

  If $\sigma$ terminates then it returns $\explo^\kappa$.
  Moreover,
  we have $\explo^\kappa = \hat{\explo}^\kappa = \explo^\sigma = \hat{\explo}^\sigma$
  in that case since $\hat{N}^\kappa = N^\kappa$.
  We derive from \cref{lem:algo-explorations} that
  $\post[\bvass](\ReachSet{\explo^\kappa}) \subseteq \ReachSet{\explo^\kappa}$,
  hence,
  $\explo^\kappa$ is complete.
  It follows that $\hat{\explo}^\sigma$ is finite and complete.
  If $\sigma$ does not terminate then
  $\hat{N}^\sigma$ is infinite since
  $\hat{N}^j \subset \hat{N}^{j+1}$ for all $j \in \setN$.
\end{proof}

\section{Proofs of \cref{sec:branches}}
We show that explorations $\explo^\sigma$ built by $\mathtt{Explore}(\bvass)$ (see \cref{subsec:partial-correctness-Explore}) admit a \emph{spanning forest} by introducing the class of well-connected graphs defined as follows. A graph $\mathcal{G}=(N,\rightarrow)$ is said to be \emph{well-connected} if there exists a sequence $(W_i)_{i\in\setN}$ of finite subsets $W_i\subseteq N$ such that the following conditions hold for every $n\in N$:
\begin{itemize}
\item $\{i\in\setN \mid n\in W_i\}$ is a non-empty interval (finite or infinite).
\item If $n\in W_{i+1}\setminus W_i$ for some $i$ then there exists $m\in W_i\setminus W_{i+1}$ such that $m\rightarrow n$.
\end{itemize}
Step-by-step algorithms computing a graph thanks to worklists will produce in general well-connected graphs. Intuitively, $W_i$ corresponds to the set of nodes in the worklist at step $i$ of the computation. The exploration $\explo^\sigma$ built by $\mathtt{Explore}(\bvass)$ (see \cref{subsec:partial-correctness-Explore}) is clearly well-connected, and so is $\hat{\explo}^\sigma$ as it is the restriction of $\explo^\sigma$ to an ancestor-closed subset of nodes.

\begin{lemma}\label{lem:caragoodforbranch}
  A graph is well-connected iff it is spannable.
\end{lemma}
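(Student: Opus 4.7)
The plan is to prove both directions separately.

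For the forward direction (well-connected $\Rightarrow$ spannable), I will construct a spanning forest $\mathcal{F} = (N, \rightarrow_{\mathcal{F}})$ directly from a witnessing sequence $(W_i)_{i \in \setN}$. For each node $n \in N$, let $i_n = \min\{i \in \setN \mid n \in W_i\}$, which is well-defined since the set is a non-empty interval. Nodes with $i_n = 0$ will be declared sources of $\mathcal{F}$. For $n$ with $i_n > 0$, we have $n \in W_{i_n} \setminus W_{i_n - 1}$ by definition, so the second well-connectedness condition (applied at $i = i_n - 1$) provides some $m \in W_{i_n - 1} \setminus W_{i_n}$ with $m \rightarrow n$; pick any such $m$ as the unique parent of $n$ in $\mathcal{F}$. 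The resulting $\rightarrow_{\mathcal{F}}$ is a subrelation of $\rightarrow$, each node has at most one parent by construction, and $\rightarrow_{\mathcal{F}}$ is well-founded because $m \rightarrow_{\mathcal{F}} n$ forces $i_m \leq i_n - 1 < i_n$, so the $\setN$-valued map $n \mapsto i_n$ strictly decreases along parent chains. The set of sources lies in the finite set $W_0$. For finite branching at an individual node $m$: if $m$ has any $\mathcal{F}$-child $n$, then $m \in W_{i_n - 1} \setminus W_{i_n}$, so $m$'s interval is finite with a last index $b_m = i_n - 1$ depending only on $m$; hence all $\mathcal{F}$-children of $m$ share the entry time $b_m + 1$ and all lie in the finite set $W_{b_m + 1}$.

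For the backward direction (spannable $\Rightarrow$ well-connected), I will simulate a BFS on a spanning forest $\mathcal{F}$. First, $N$ is countable: by induction on depth, each level $L_d$ is finite (finite sources, each node has finitely many $\mathcal{F}$-children), and well-foundedness of $\rightarrow_{\mathcal{F}}$ forces every node to have finite depth, so $N = \bigcup_d L_d$. Enumerate $N = \{n_0, n_1, \ldots\}$ so that $\mathcal{F}$-parents appear before $\mathcal{F}$-children (e.g.\ by listing each level in order). Define $W_0$ as the set of sources of $\mathcal{F}$, and inductively $W_{i+1} = (W_i \setminus \{n_i\}) \cup C_{\mathcal{F}}(n_i)$, where $C_{\mathcal{F}}(n_i)$ is the set of $\mathcal{F}$-children of $n_i$. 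Each $W_i$ is finite by induction. Every non-source node $n$ enters $W$ exactly once (the step right after its $\mathcal{F}$-parent is processed) and leaves exactly once (when $n$ is itself processed), yielding a non-empty interval; sources enter at step $0$. Finally, whenever $n \in W_{i+1} \setminus W_i$, the node $n_i$ processed at step $i$ is precisely the $\mathcal{F}$-parent of $n$, so $n_i \in W_i \setminus W_{i+1}$ and $n_i \rightarrow_{\mathcal{F}} n$ implies $n_i \rightarrow n$.

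I expect the main obstacle to lie in the forward direction, specifically in verifying finite branching: a priori a single $m$ could be the parent of children entering at many different indices, but the interval condition on $m$'s own membership confines $m$ to leave $W$ at most once, which pins all of $m$'s $\mathcal{F}$-children to the single finite set $W_{b_m + 1}$.
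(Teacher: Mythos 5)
Your proof is correct. The direction well-connected $\Rightarrow$ spannable is essentially the paper's own argument: you choose for each non-source $n$ a parent $m \in W_{i_n-1}\setminus W_{i_n}$ with $m \rightarrow n$, and your finite-branching verification via the last index $b_m$ of $m$'s membership interval is exactly the paper's. For the converse you take a genuinely different route: the paper simply lets $W_i$ be the $i$-th depth level of the spanning forest (so each node lies in exactly one $W_i$, and both conditions follow immediately from the disjointness of consecutive levels), whereas you simulate a sequential worklist, which forces you to first establish that $N$ is countable and to fix an enumeration in which $\mathcal{F}$-parents precede their children. Both constructions work; the paper's is shorter and avoids the countability detour, while yours hews more closely to the informal motivation the paper gives for the definition (worklist-based algorithms produce well-connected graphs). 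One small point to tidy in your version: when $N$ is finite the enumeration terminates, so you must pad the sequence with $W_i = \emptyset$ for all subsequent indices; this is harmless since every node has by then appeared in some $W_j$ and the second condition becomes vacuous.
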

\begin{proof}
  Assume first that $\mathcal{G}=(N,\rightarrow)$ is a graph that admits a spanning forest $\mathcal{F}=(N,\rightarrow_{\mathcal{F}})$. We introduce the set $W_0$ defined as the set of source nodes of $\mathcal{F}$. We also define inductively on $i\in\setN$ the set $W_{i+1}$ of nodes $n\in N\setminus (\bigcup_{j=0}^iW_j)$ such that there exists a node $m\in W_i$ satisfying $m\rightarrow_{\mathcal{F}}n$. By induction on $i$ we deduce that $W_i$ is finite for every $i$. Observe also that for every $n\in N$, the set $\{i\in\setN \mid n\in W_i\}$ is a singleton. Moreover, for every $n\in N$, if $n\in W_{i+1}$, there exists $m\in W_i$ such that $m\rightarrow n$. Since $W_i$ and $W_{i+1}$ are disjoint, we deduce that $W_{i+1}=W_{i+1}\setminus W_i$ and $W_i=W_i\setminus W_{i+1}$. We have proved that $\mathcal{G}$ is well-connected.

  \newcommand{\predname}{\operatorname{pred}}
  \newcommand{\pred}[1]{\predname(#1)}
  Now, let us assume that $\mathcal{G}=(N,\rightarrow)$ is well-connected. We denote by $(W_i)_{i\in\setN}$ a sequence of finite sets of nodes witnessing that property. Since $\{i\in\setN \mid n\in W_i\}$ is non-empty, it contains a minimal element denoted by $\tau(n)$. Notice that $\tau(n)=0$ iff $n\in W_0$. For every $n\in N\setminus W_0$, we have $n\in W_{\tau(n)}\setminus W_{\tau(n)-1}$. It follows that there exists $m\in W_{\tau(n)-1}\setminus W_{\tau(n)}$ such that $m\rightarrow n$. It follow that we can define a function $\predname:N\setminus W_0\rightarrow N$ such that for every $n\in N\setminus W_0$, the node $m=\pred{n}$ is such that $m\rightarrow n$ and $m\in  W_{\tau(n)-1}\setminus W_{\tau(n)}$. We introduce the graph $\mathcal{F}=(N,\rightarrow_{\mathcal{F}})$ defined by $m\rightarrow_{\mathcal{F}} n$ if $n\in N\setminus W_0$ and $m=\pred{n}$. Let us prove that this graph is a spanning forest of $\mathcal{G}$. Since $\rightarrow_{\mathcal{F}}$ is included in $\rightarrow$, it is sufficient to prove that $\mathcal{F}$ is a finitely-branching forest. Notice that the source nodes of $\mathcal{F}$ is $W_0$ which is a finite set. By induction on $i$, we have that for every $n\in N$ such that $\tau(n)=i$, we have $n\in \des[\mathcal{F}]{W_0}$. Hence $N=\des[\mathcal{F}]{W_0}$. As $m\rightarrow_{\mathcal{F}}n$ implies $m\in  W_{\tau(n)-1}\setminus W_{\tau(n)}$, it follows that $\tau(m)<\tau(n)$. In particular $\rightarrow_{\mathcal{F}}$ is anti-symmetric. Notice that $\{m\in N \mid m\rightarrow_{\mathcal{F}} n\}$ contains at most one node. Finally, let us prove that $\{n\in N \mid m\rightarrow_{\mathcal{F}}n\}$ is finite for every $m\in N$. If the set is empty, we are done. Otherwise, there exists $n\in N$ such that $m\rightarrow_{\mathcal{F}}n$. It follows that $m\in W_{\tau(n)-1}\setminus W_{\tau(n)}$. In particular, there exists a maximal $i_m\in\setN$ such that $m\in W_{i_m}$. Moreover, $\tau(n)-1=i_m$, i.e. $\tau(n)=i_m+1$. Hence $n\in W_{i_m+1}$. We have proved that $\{n\in N \mid m\rightarrow_{\mathcal{F}}n\}$ is included in the finite set $W_{i_m+1}$. Hence, in any case this set is finite. It follows that $\mathcal{F}$ is a finitely-branching forest.
\end{proof}

Now, let $\mathcal{F}$ be a spanning forest of a graph $\mathcal{G}$. A branch $\beta$ of $\mathcal{F}$ is said to be \emph{primary} if the set $\anc{\beta}\setminus \des{n}$ is finite for every node $n\in \beta$.
\begin{lemma}\label{lem:primary}
  A branch of $\mathcal{F}$ is minimal for $\sqsubseteq$ if, and only if, it is primary.
\end{lemma}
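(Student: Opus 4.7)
The plan is to prove the equivalence in both directions directly, taking advantage of the characterization already noted in the paper, namely that $\beta$ is minimal for $\sqsubseteq$ if and only if $\anc{\beta}$ is minimal for inclusion. The forward direction (primary implies minimal) will be a fairly short chase through the definitions, while the converse will be the technical heart of the argument and will proceed by contradiction, producing a strictly smaller branch via a K\"onig's lemma argument on a restricted sub-forest.

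For \emph{primary implies minimal}, I would assume $\beta$ is primary and pick any branch $\alpha$ with $\anc{\alpha}\subseteq\anc{\beta}$. Fix an arbitrary $k\in\setN$; since the $\alpha_m$ are pairwise distinct (as $\rightarrow_{\mathcal{F}}$ is acyclic) and all lie in $\anc{\alpha}\subseteq\anc{\beta}$, while $\anc{\beta}\setminus\des{\beta_k}$ is finite by primacy, at least one $\alpha_m$ must fall in $\des{\beta_k}$. This gives $\beta_k\xrightarrow{*}\alpha_m$, hence $\beta_k\in\anc{\alpha}$. For any $n\in\anc{\beta}$, writing $n\xrightarrow{*}\beta_k$ and composing with $\beta_k\xrightarrow{*}\alpha_m$ yields $n\in\anc{\alpha}$, so $\anc{\beta}\subseteq\anc{\alpha}$ and $\beta$ is minimal.

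For \emph{minimal implies primary}, I would argue by contradiction: suppose $\beta$ is minimal but that some node $n_0=\beta_{k_0}\in\beta$ has $X=\anc{\beta}\setminus\des{n_0}$ infinite. The goal is to construct a branch $\alpha$ of $\mathcal{F}$ that stays inside $X$, because any such branch satisfies $\anc{\alpha}\subseteq\anc{\beta}$ (every $\alpha_m\in\anc{\beta}$ yields the inclusion by composing reaches) and $n_0\notin\anc{\alpha}$ (since $n_0\xrightarrow{*}\alpha_m$ would place $\alpha_m$ in $\des{n_0}$), so $\anc{\alpha}\subsetneq\anc{\beta}$, contradicting minimality. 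To produce such an $\alpha$, I would consider the restriction $\mathcal{F}'$ of $\mathcal{F}$ to $X$ and establish the following closure property: if $m\in X$ and $m'\rightarrow_{\mathcal{F}} m$, then $m'\in X$. Indeed, $\rightarrow_{\mathcal{F}}\subseteq\rightarrow$ gives $m'\in\anc{m}\subseteq\anc{\beta}$; and if $m'$ were in $\des{n_0}$, then $n_0\xrightarrow{*}m'\rightarrow m$ would contradict $m\notin\des{n_0}$. This closure ensures that sources of $\mathcal{F}'$ are exactly sources of $\mathcal{F}$ lying in $X$, hence finitely many, and $\mathcal{F}'$ inherits finite branching from $\mathcal{F}$; as $\mathcal{F}'$ is infinite, K\"onig's lemma produces the desired infinite branch, which is automatically a branch of $\mathcal{F}$ itself.

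The main obstacle is the converse direction, and specifically the verification that the restricted forest $\mathcal{F}'$ admits an infinite branch. The subtle point is that cutting out $\des{n_0}$ from the ambient graph $\mathcal{G}$ could in principle create spurious new sources in $\mathcal{F}'$; the closure property under the $\mathcal{F}$-parent relation is precisely what rules this out and makes K\"onig's lemma applicable. Once this closure is in place, everything else unfolds mechanically from the definitions of $\anc{\cdot}$, $\des{\cdot}$, and the inclusion $\rightarrow_{\mathcal{F}}\subseteq\rightarrow$.
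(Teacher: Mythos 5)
Your proposal is correct and follows essentially the same route as the paper: the forward direction uses finiteness of $\anc{\beta}\setminus\des{\beta_k}$ to force some $\alpha_m$ into $\des{\beta_k}$, and the converse applies K\"onig's lemma to the restriction of $\mathcal{F}$ to $\anc{\beta}\setminus\des{n_0}$ to produce a strictly $\sqsubseteq$-smaller branch. Your explicit verification that this restriction is closed under the $\mathcal{F}$-parent relation (so its sources are sources of $\mathcal{F}$ and the extracted branch is genuinely a branch of $\mathcal{F}$) is a detail the paper leaves implicit, but it is not a different argument.
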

\begin{proof}
  Assume first that $\beta$ is a branch that is not primary and let us prove that $\beta$ is not minimal for $\sqsubseteq$. There exists a node $n\in \beta$ such that $N'=\anc{\beta}\setminus \des{n}$ is infinite. We introduce the finitely-branching forest $\mathcal{F}'=(N',\rightarrow')$ where $\rightarrow'$ is the restriction of $\rightarrow_{\mathcal{F}}$ on $N'$. From König's lemma it contains a branch $\alpha$. Since $\alpha\subseteq \anc{\beta}\setminus \des{n}$ we deduce that $\alpha\sqsubseteq \beta$ and as $n\not\in \anc{\alpha}$, we deduce that $\neg(\beta\sqsubseteq \alpha)$. Therefore $\beta$ is not minimal for $\sqsubseteq$.

  Now, assume that $\beta$ is a primary branch and let us prove that it is minimal for $\sqsubseteq$. Let us consider a branch $\alpha$ such that $\alpha\sqsubseteq \beta$. Let $n\in\beta$. Since $\beta$ is primary, the set $\anc{\beta}\setminus \des{n}$ is finite. Hence, there exists $k\in\setN$ such that $\alpha_k$ is not in that set. Since $\alpha_k\in \anc{\beta}$ we deduce that $\alpha_k\in\des{n}$. Hence $n\in\anc{\alpha}$. We have proved that $\beta\subseteq \anc{\alpha}$ which entails $\beta\sqsubseteq \alpha$. Hence $\beta$ is minimal for $\sqsubseteq$.
\end{proof}

\medskip

The following lemma prove the existence of a minimal branch.
\begin{lemma}\label{lem:primaryexists}
  There exists a minimal branch. 
\end{lemma}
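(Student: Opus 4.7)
The plan is to argue by contradiction. Assuming no branch of $\mathcal{F}$ is minimal, I would construct an infinite $\sqsubseteq$-decreasing sequence $(\beta^{(i)})_{i \in \setN}$ of branches in an \emph{eager} way, extract a pointwise limit $\beta^{(\infty)}$ via a König-style argument, and show that $\beta^{(\infty)}$ must itself be minimal---contradicting the standing assumption.

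For the eager construction, I would first fix an enumeration $n_0, n_1, n_2, \ldots$ of $N$, which is possible since $\mathcal{F}$ is a finitely-branching forest with finitely many sources, so $N$ is a countable union of finite level sets. Starting from any $\beta^{(0)}$, given $\beta^{(i)}$ (which is non-minimal by assumption), I would let $k(i)$ be the smallest index $k$ such that $n_k \in \anc{\beta^{(i)}}$ and some branch $\alpha \sqsubseteq \beta^{(i)}$ satisfies $n_k \notin \anc{\alpha}$, and pick $\beta^{(i+1)}$ to be any such $\alpha$. The key invariant to establish is $k(i+1) > k(i)$: indeed, by transitivity of $\sqsubseteq$ and the minimality of $k(i)$, any $n_j$ with $j < k(i)$ that still lies in $\anc{\beta^{(i+1)}}$ also lies in $\anc{\beta^{(i)}}$ and hence belongs to $\anc{\alpha'}$ for every $\alpha' \sqsubseteq \beta^{(i)}$, which by transitivity covers every $\alpha' \sqsubseteq \beta^{(i+1)}$; and $n_{k(i)}$ itself has already been excluded from $\anc{\beta^{(i+1)}}$ by construction. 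Thus $k(i) \to \infty$.

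For the extraction, I would argue à la König: some source $s_0$ occurs as $\beta^{(i)}_0$ for infinitely many $i$; restricting attention to those indices, some $s_1$ occurs as $\beta^{(i)}_1$ infinitely often (since $s_0$ has finitely many $\rightarrow_{\mathcal{F}}$-successors); and so on. A standard diagonal choice produces an increasing $\varphi : \setN \to \setN$ such that $(\beta^{(\varphi(i))})_i$ converges pointwise to a branch $\beta^{(\infty)} = (s_0, s_1, s_2, \ldots)$ of $\mathcal{F}$. Combining pointwise convergence (each $m \in \beta^{(\infty)}$ lies in $\beta^{(\varphi(i))}$ for all sufficiently large $i$, hence in $\anc{\beta^{(\varphi(i))}}$) with $\sqsubseteq$-monotonicity of the decreasing subsequence (extending this to all $i$), I obtain $\anc{\beta^{(\infty)}} \subseteq \anc{\beta^{(\varphi(i))}}$ for every $i$.

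Finally, I would derive the contradiction by proving $\beta^{(\infty)}$ is minimal. If not, by \cref{lem:primary} there exists $n_k \in \beta^{(\infty)}$ with $\anc{\beta^{(\infty)}} \setminus \des{n_k}$ infinite, and applying König's lemma to the restriction of $\mathcal{F}$ to that set (exactly as in the proof of \cref{lem:primary}) produces a branch $\gamma$ with $\anc{\gamma} \subseteq \anc{\beta^{(\infty)}} \setminus \des{n_k}$, so $\gamma \sqsubseteq \beta^{(\infty)}$ and $n_k \notin \anc{\gamma}$. Choosing $i$ large enough that $k(\varphi(i)) > k$, the eagerness invariant forbids any $\alpha \sqsubseteq \beta^{(\varphi(i))}$ from excluding $n_k$ from $\anc{\alpha}$; but $\gamma \sqsubseteq \beta^{(\infty)} \sqsubseteq \beta^{(\varphi(i))}$ does so---contradiction. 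Hence $\beta^{(\infty)}$ is minimal, contradicting the standing assumption. The main obstacle I anticipate is organizing the eagerness bookkeeping carefully enough to justify $k(i+1) > k(i)$; once that invariant is in place, the König extraction and the final contradiction both fall out cleanly.
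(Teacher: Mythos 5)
Your proof is correct and follows essentially the same strategy as the paper: assume for contradiction that no minimal branch exists, construct an eagerly decreasing sequence of branches, extract a pointwise limit branch via König's lemma, show the limit lies $\sqsubseteq$-below every term of the sequence, and derive a contradiction from the eagerness of the construction. The one genuine difference is in the eagerness bookkeeping: you measure it against a fixed enumeration of $N$, which makes the invariant $k(i+1) > k(i)$ (and hence $k(i) \to \infty$) a direct consequence of the construction, whereas the paper measures eagerness by positions along the current branch and can only establish divergence of its indices $k_\ell$ after the limit branch has been extracted. Your variant is arguably cleaner on this point; both then close the argument the same way, by using the divergence of the eagerness indices together with the fact that the limit branch, being non-minimal, would admit a strictly smaller branch excluding some fixed node.
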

\begin{proof}
  Assume by contradiction that there does not exist a minimal branch. Given $k\in\setN\cup\{\omega\}$, we introduce the binary relation $\sqsubseteq_k$ over the branches defined by $\alpha\sqsubseteq_k \beta$ if $\alpha\sqsubseteq \beta$ and $k=\sup\{\ell\in\setN \mid \beta_\ell\in \anc{\alpha}\}$. For every branch $\beta$, there exists a branch $\alpha$ such that $\alpha\sqsubseteq\beta$ and $\neg(\beta\sqsubseteq\alpha)$. It follows that for every branch $\beta$, there exists $k\in\setN$ and a branch $\alpha$ such that $\alpha\sqsubseteq_k \beta$. We say that $\alpha$ is \emph{eagerly-smaller} than $\beta$ if this natural number $k$ is the minimal possible one satisfying the previous property.

  From the König's lemma, there exists a branch $\alpha^0$. From the previous paragraph, we can build an infinite sequence $\alpha^0,\alpha^1,\ldots$ of branches such that $\alpha^{\ell+1}$ is eagerly-smaller than $\alpha^\ell$ for every $\ell\in\setN$. We denote by $k_\ell$ the natural number in $\setN$ such that $\alpha^{\ell+1}\sqsubseteq_{k_\ell} \alpha^\ell$.
  
  We introduce the sequence $(N_i)_{i\in\setN}$ of finite sets of nodes at \emph{depth} $i$ in the spanning forest $\mathcal{F}$. Those sets are formally defined by $N_0$ is the set of sources of $\mathcal{F}$, and inductively by $N_{i+1}=\{m\in N \mid \exists n\in N_i,\,n\rightarrow_{\mathcal{F}} m\}$ for every $i\in\setN$. Observe that $\alpha^\ell_i\in N_i$ for every $\ell\in\setN$. We denote by $N^\infty_i$ the set of nodes $n\in N_i$ such that $\{\ell\in\setN \mid \alpha^\ell_i=n\}$ is infinite. We let $N^\infty=\bigcup_{i\in \setN}N^\infty_i$ and we introduce $\rightarrow^\infty$ defined as the restriction of $\rightarrow_{\mathcal{F}}$ to $N^\infty$. Since $(N^\infty,\rightarrow^\infty)$ is an infinite finitely-branching forest, König's lemma shows that it contains a branch $\beta$.
  
   Let us prove that $\beta\sqsubseteq \alpha^\ell$ for every $\ell$. For every $i\in\setN$, since $\beta_i\in N^\infty_i$, there exists $\ell'>\ell$ such that $\alpha^{\ell'}_i=\beta_i$. Hence $\beta_i\in \anc{\alpha^{\ell'}}$. As $\alpha^{\ell'}\sqsubseteq \alpha^\ell$, we deduce that $\beta_i\in \anc{\alpha^{\ell}}$ and we have proved that $\beta\sqsubseteq \alpha^\ell$.

   Let us prove that $\lim_{\ell\rightarrow+\infty}k_\ell=+\infty$. Let $k\in\setN$. As $N_k$ is finite, we deduce that there exists $\ell_k\in\setN$ such that $\beta^\ell_k\in N^\infty_k$ for every $\ell\geq \ell_k$. Now, let $\ell\geq \ell_k$. There exists $\ell'\geq \ell+1$ such that $\alpha^{\ell'}_k=\alpha^\ell_k$. From $\alpha^{\ell'} \sqsubseteq\alpha^{\ell+1}$ we deduce that $\alpha^\ell_k\in \anc{\alpha^{\ell+1}}$. By definition of $k_\ell$, we deduce that $k_\ell> k$. We have proved that for every $k\in\setN$, there exists $\ell_k\in\setN$ such that $k_\ell> k$ for every $\ell\geq \ell_k$. Hence, $\lim_{\ell\rightarrow+\infty}k_\ell=+\infty$.
  
   Since $\beta$ is a branch and there does not exists minimal branches, it follows that there exits a branch $\alpha$ such that $\alpha\sqsubseteq_k \beta$ for some $k\in\setN$. Since $\lim_{\ell\rightarrow+\infty}k_\ell=+\infty$ and $\alpha_k^\ell=\beta_k$ for infinitely many $\ell$, there exists $\ell\in\setN$ such that $k_\ell> k$ and such that $\alpha_k^\ell=\beta_k$. By minimality of $k_\ell$, we deduce that $\alpha_{k+1}^\ell\in \anc{\alpha}$. As $\alpha\sqsubseteq\beta \sqsubseteq \alpha^\ell$, we get $\alpha_{k+1}^\ell\in \anc{\beta}$, and we get a contradiction with $\alpha\sqsubseteq_k \beta$.
   
   It follows that there exists a primary branch.
\end{proof}

\lemWellExtraction*
\begin{proof}
  \cref{lem:primaryexists} shows that there exists a minimal branch $\beta$. From \cref{lem:primary}, the branch is primary. Let $X=\anc{\beta}$. Notice that $X$ is an infinite ancestor-closed set of nodes. Let $\mathcal{X}=(X,\rightarrow_X)$ denote the restriction of $\mathcal{G}$ to $X$. Recall that ${\rightarrow}_X = {\rightarrow} \cap (X \times X)$ by definition. Let us prove that $\mathcal{X}$ is primary and directed. Since $X$ is the set of ancestors of a branch, it follows that $\mathcal{X}$ is directed. To prove that $\mathcal{X}$ is primary, let $x\in X$ and let us prove that $X\setminus\des[\mathcal{X}]{x}$ is finite. Since $x\in X$, there exists $n\in\beta$ such that $x\xrightarrow{*}n$. Now, let $y\in X\setminus \des[\mathcal{X}]{x}$ and assume by contradiction that $y\in \des{n}$. In follows that $n\xrightarrow{*}y$ and in particular $x\xrightarrow{*}y$. Since $y\in X$ and $X$ is ancestor-closed, we deduce that $x\xrightarrow{*}_X y$. Therefore $y\in \des[\mathcal{X}]{x}$ and we get a contradiction. Hence $y\in N\setminus \des{n}$ and we have proved the inclusion $X\setminus \des[\mathcal{X}]{x}\subseteq N\setminus \des{n}$. As $\beta$ is primary we deduce that $N\setminus \des{n}$ is finite. Hence $X\setminus \des[\mathcal{X}]{x}$ is finite. Therefore $\mathcal{X}$ is primary. We have proved \cref{lem:well-extraction}.
\end{proof}

\section{Proofs of \cref{sec:termination}}
\subsection{Proofs of \cref{sec:modes}}
We first prove some additional results about the sets $\stab{\star}{\vec{C}}\setminus\setQ_{\geq 0}^2$ where $\vec{C}\subseteq\setQ_{\geq 0}^2$ is a cone.

\begin{lemma}\label{lem:inclusionstab}
 For every cones $\vec{X}, \vec{Y}\subseteq \setQ_{\geq 0}^2$ with the same set of axis, the inclusion $\vec{X}\subseteq \vec{Y}$ implies the inclusion $\stab{\star}{\vec{Y}}\setminus\setQ_{\geq 0}^2\subseteq\stab{\star}{\vec{X}}\setminus\setQ_{\geq 0}^2$.
\end{lemma}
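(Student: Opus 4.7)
The plan is to fix $\vec{v}\in\stab{\star}{\vec{Y}}\setminus\setQ_{\geq 0}^2$ and show that $\vec{X}$ is $\vec{v}$-stable. To do so, I will pick an arbitrary $\vec{x}\in\vec{X}$ and $\mu\geq 0$ with $\vec{w}:=\vec{x}+\mu\vec{v}\in\setQ_{\geq 0}^2$, and prove $\vec{w}\in\vec{X}$. The case $\mu=0$ is immediate, so only $\mu>0$ is interesting.

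The underlying geometric idea is that since $\vec{v}\notin\setQ_{\geq 0}^2$ has a strictly negative component, the half-line $\vec{x}+\setQ_{\geq 0}\vec{v}$ meets $\setQ_{\geq 0}^2$ in a bounded segment. I will introduce
\[
t^{*}=\min\{\vec{x}(i)/|\vec{v}(i)|\mid i\in\{1,2\},\,\vec{v}(i)<0\},
\]
the largest $t\geq 0$ for which $\vec{x}+t\vec{v}$ still lies in $\setQ_{\geq 0}^2$. A short check gives $\mu\leq t^{*}$ and $t^{*}>0$ (otherwise some coordinate $\vec{x}(i)$ with $\vec{v}(i)<0$ would vanish, making $\vec{w}(i)=\mu\vec{v}(i)<0$). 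The crucial point will be $\vec{e}:=\vec{x}+t^{*}\vec{v}$, which lies in $\setQ_{\geq 0}^2$ and has at least one coordinate equal to zero by construction.

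The heart of the argument, and the step I expect to be the main obstacle, is showing $\vec{e}\in\vec{X}$. Since $\vec{x}\in\vec{X}\subseteq\vec{Y}$, the $\vec{v}$-stability of $\vec{Y}$ yields $\vec{e}\in\vec{Y}$. Because $\vec{e}$ has a zero coordinate, $\vec{e}$ is either the origin or a positive multiple of some basis vector $e_j\in\{(1,0),(0,1)\}$; in the latter case the cone property forces $e_j\in\vec{Y}$, so $e_j$ is an axis of $\vec{Y}$, and the same-axes hypothesis then transfers this to $e_j\in\vec{X}$, giving $\vec{e}\in\vec{X}$. This is precisely where the same-axes assumption is indispensable: without it, $\vec{e}$ could witness a genuine gap between $\vec{X}$ and $\vec{Y}$ along a coordinate axis. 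To conclude, I will observe that $\vec{w}=(1-\mu/t^{*})\vec{x}+(\mu/t^{*})\vec{e}$ expresses $\vec{w}$ as a convex combination of $\vec{x},\vec{e}\in\vec{X}$, so the convexity of the cone $\vec{X}$ yields $\vec{w}\in\vec{X}$, as required.
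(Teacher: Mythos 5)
Your proposal is correct and follows essentially the same argument as the paper: both proofs project $\vec{x}$ along $\vec{v}$ to the last point $\vec{x}+t^{*}\vec{v}$ of the positive quadrant, use $\vec{v}$-stability of $\vec{Y}$ plus the zero coordinate to place that point on an axis shared by $\vec{X}$ and $\vec{Y}$, and then recover $\vec{x}+\mu\vec{v}$ as a convex combination inside the cone $\vec{X}$. No gaps.
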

\begin{proof}
   We denote by $\vec{U}$ the set of axis of $\vec{X},\vec{Y}$.
   Let $\vec{v}\in \stab{\star}{\vec{Y}}\setminus\setQ_{\geq 0}^2$. Let $\vec{x}\in\vec{X}$ and $\lambda\in\setQ_{\geq 0}$ such that $\vec{x}+\lambda\vec{v}\geq (0,0)$, and let us prove that $\vec{x}+\lambda\vec{v}\in\vec{X}$. If $\lambda=0$ the property is immediate. So, we can assume that $\lambda>0$. Since $\vec{v}\not\geq (0,0)$, there exists a maximal $\mu\in\setQ_{\geq 0}$ such that $\vec{x}+\mu\vec{v}\geq (0,0)$. We put $\vec{x}'=\vec{x}+\mu\vec{v}$. As $\vec{x}\in\vec{X}$ and $\vec{X}\subseteq \vec{Y}$, we deduce that $\vec{x}'\in\stab{\vec{v}}{\vec{Y}}=\vec{Y}$. Moreover, by maximality of $\mu$, it follows that at least one component of $\vec{x}'$ is zero. Hence $\vec{x}'\in\setQ_{\geq 0}\vec{U}$. In particular $\vec{x}'\in\vec{X}$. By maximality of $\mu$ we have $\lambda\leq \mu$. As $\lambda>0$, it follows that $\mu>0$. A simple computation shows that $(1-\frac{\lambda}{\mu})\vec{x}+\frac{\lambda}{\mu}\vec{x}'=\vec{x}+\lambda\vec{v}$. As $\vec{X}$ is a cone, we deduce that $\vec{x}+\lambda\vec{v}\in\vec{X}$. We have proved that $\vec{v}\in\stab{\star}{\vec{X}}$. We have proved the lemma.
\end{proof}

A cone $\vec{C} \subseteq \setQ_{\geq 0}^2$ is said to be \emph{degenerated} if there exists $\vec{c}\in\setQ_{\geq 0}^2$ such that $\vec{C}=\setQ_{\geq 0}\vec{c}$.
\begin{fact}
  \label{fact:nonaxestab}
   For every non degenerated cone $\vec{C} \subseteq \setQ_{\geq 0}^2$ with an empty set of axis, we have $\stab{\star}{\vec{C}}\setminus\setQ_{\geq 0}^2=\emptyset$.
\end{fact}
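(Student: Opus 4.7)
My plan is to prove the contrapositive: given a non-degenerated cone $\vec{C} \subseteq \setQ_{\geq 0}^2$ with no axis, I will show that no vector $\vec{v} \in \setZ^2 \setminus \setN^2$ can lie in $\stab{\star}{\vec{C}}$. The key geometric picture is that since $\vec{C}$ is two-dimensional yet its extreme rays avoid the axes, $\vec{C}$ sits strictly inside the open positive quadrant (except at the origin), and there is ``room'' on either side of $\vec{C}$ inside $\setQ_{\geq 0}^2$. Adding a direction $\vec{v}$ that is not non-negative will push some boundary point of $\vec{C}$ off of $\vec{C}$ while staying inside $\setQ_{\geq 0}^2$, witnessing that $\vec{C}$ is not $\vec{v}$-stable.

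More concretely, I would proceed as follows. Since $\vec{C}$ is non-degenerated, it has two linearly independent extreme rays $\setQ_{\geq 0}\vec{c}_1$ and $\setQ_{\geq 0}\vec{c}_2$. Since no axis is in $\vec{C}$, both $\vec{c}_1$ and $\vec{c}_2$ have strictly positive components, i.e.\ $\vec{c}_i \in \setQ_{>0}^2$. Now fix $\vec{v} \in \stab{\star}{\vec{C}}$ and assume, for contradiction, that $\vec{v} \not\geq (0,0)$. Since $\vec{C} \subseteq \setQ_{\geq 0}^2$, this forces $\vec{v} \notin \vec{C}$. Decompose $\vec{v}$ uniquely in the basis $\{\vec{c}_1, \vec{c}_2\}$ of $\setQ^2$ as $\vec{v} = \alpha \vec{c}_1 + \beta \vec{c}_2$; because $\vec{v} \notin \vec{C}$, at least one of $\alpha, \beta$ is strictly negative. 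By symmetry, I may assume $\beta < 0$.

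Next, I will exhibit a witness of non-stability. Since $\vec{c}_1 \in \setQ_{>0}^2$, there exists $\lambda > 0$ small enough so that $\vec{c}_1 + \lambda\vec{v} \in \setQ_{\geq 0}^2$. By construction this vector lies in $\vec{C} + \setQ_{\geq 0}\vec{v}$, hence in $\stab{\vec{v}}{\vec{C}}$. On the other hand, its unique expansion in the basis $\{\vec{c}_1, \vec{c}_2\}$ is $(1 + \lambda\alpha)\vec{c}_1 + \lambda\beta\vec{c}_2$, and $\lambda\beta < 0$ shows that $\vec{c}_1 + \lambda\vec{v}$ is not a non-negative combination of $\vec{c}_1, \vec{c}_2$, so it does not lie in $\vec{C}$. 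This contradicts the $\vec{v}$-stability of $\vec{C}$.

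There is no real obstacle here; the only point to be slightly careful about is choosing the correct extreme ray to push along, which is handled by the WLOG assumption $\beta < 0$ (swapping $\vec{c}_1$ and $\vec{c}_2$ reduces the case $\alpha < 0$ to this one). The non-degeneracy assumption is essential to have two linearly independent extreme rays (so that the basis decomposition is unique), and the no-axis assumption is used to guarantee $\vec{c}_1 \in \setQ_{>0}^2$, which is what allows the small perturbation $\vec{c}_1 + \lambda\vec{v}$ to remain in $\setQ_{\geq 0}^2$.
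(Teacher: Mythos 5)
Your overall strategy (a contradiction obtained by exhibiting a point of $\stab{\vec{v}}{\vec{C}}$ outside $\vec{C}$) is reasonable, but there is a genuine gap at the outset that breaks the key step. You assume $\vec{C}$ has two extreme rays $\setQ_{\geq 0}\vec{c}_1$, $\setQ_{\geq 0}\vec{c}_2$ such that every element of $\vec{C}$ is a non-negative combination of $\vec{c}_1$ and $\vec{c}_2$; this is exactly what you invoke when you conclude that $(1+\lambda\alpha)\vec{c}_1+\lambda\beta\vec{c}_2$ with $\lambda\beta<0$ cannot lie in $\vec{C}$. But a cone in this paper is merely a periodic set closed under multiplication by $\setQ_{\geq 0}$: it need not be finitely generated, and its extremal directions need not be realized by elements of the cone. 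For instance $\vec{C}=\setQ_{>0}^2\cup\{(0,0)\}$, or $\{(x,y)\in\setQ_{>0}^2 \mid 1<y/x<2\}\cup\{(0,0)\}$, are non-degenerated cones with an empty set of axes for which no such pair $\vec{c}_1,\vec{c}_2$ exists; for any two independent elements of such a cone, $\vec{C}$ strictly contains $\con{\{\vec{c}_1,\vec{c}_2\}}$, so ``not a non-negative combination of $\vec{c}_1,\vec{c}_2$'' does not imply ``not in $\vec{C}$''. Worse, for these cones your witness $\vec{c}_1+\lambda\vec{v}$ with $\lambda$ \emph{small} genuinely does remain in $\vec{C}$, so no small perturbation can yield the contradiction.

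The paper's proof avoids this by pushing along $\vec{v}$ as far as possible rather than a little: for $\vec{x}\in\vec{C}\setminus\{(0,0)\}$ (necessarily in $\setQ_{>0}^2$ by the empty-axis hypothesis) it takes the \emph{maximal} $\lambda\geq 0$ with $\vec{x}+\lambda\vec{v}\geq(0,0)$, so that $\vec{x}+\lambda\vec{v}$ has a zero component; $\vec{v}$-stability places this vector in $\vec{C}$, and the empty-axis hypothesis then forces it to equal $(0,0)$, i.e.\ $\vec{x}\in\setQ_{>0}(-\vec{v})$. Applying this to two elements $\vec{x},\vec{y}\in\vec{C}$ not on a common ray (which exist by non-degeneracy) puts both on the ray $\setQ_{>0}(-\vec{v})$, a contradiction. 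If you replace your small-$\lambda$ perturbation by this maximal-$\lambda$ argument and drop the appeal to extreme rays, your proof goes through for arbitrary cones.
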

\begin{proof}
Let $\vec{S}$ be the stabilizer of $\vec{C}$. Since $\vec{C}$ is non degenerated, there exists $\vec{x}\in\vec{C}\setminus\{(0,0)\}$ and $\vec{y}\in \vec{C}\setminus \setQ_{\geq 0}\vec{x}$. As the set of axis of $\vec{C}$ is empty, it follows that $\vec{x},\vec{y}\in\setQ_{>0}^2$. Assume by contradiction that there exists $\vec{v}\in\vec{S}\setminus\setQ_{\geq 0}^2$. There exists maximal $\lambda,\mu\in\setQ_{\geq 0}$ such that $\vec{x}+\lambda\vec{v}\geq (0,0)$ and $\vec{y}+\mu\vec{v}\geq (0,0)$. Since $\vec{x},\vec{y}\in\setQ_{>0}^2$, it follows that $\lambda,\mu>0$. Observe that $\vec{x}+\lambda\vec{v}$ and $\vec{y}+\mu\vec{v}$ are in $\stab{\vec{v}}{\vec{C}}=\vec{C}$ and have a component equal to zero. As the set of axis $\vec{C}$ is empty, it follows that $\vec{x}+\lambda\vec{v}$ and $\vec{y}+\mu\vec{v}$ are equal to the zero vectors. Hence $\vec{y}=\frac{\mu}{\lambda}\vec{x}$ and we get a contradiction.
\end{proof}

\lemFiniteswithmodes*
\begin{proof}
  Let us consider a non-decreasing sequence $(\vec{C}_n)_{n\in\setN}$ of cones $\vec{C}_n\subseteq \setQ_{\geq 0}^2$. Let $\vec{M}_n$ be the $h$-mode of $\vec{C}_n$ and let us prove that there exists $N\in\setN$ such that $\vec{M}_n=\vec{M}_N$ for every $n\geq N$. Observe that if $\vec{C}_n$ is degenerated for every $n\in\setN$ then the property is immediate since in that case there exists $N\in\setN$ such that $\vec{C}_n=\vec{C}_N$ for every $n\geq N$. So, by considering a suffix of the sequence $(\vec{C}_n)_{n\in\setN}$, we can assume w.l.o.g. that $\vec{C}_n$ is non degenerated for every $n\in\setN$. By also considering a suffix of the sequence $(\vec{C}_n)_{n\in\setN}$, we can also assume w.l.o.g. that the set of axis of $\vec{C}_n$ does not depend on $n$. We denote by $\vec{U}$ this set.

  Observe that if $\vec{U}=\emptyset$, \cref{fact:nonaxestab} shows that $\stab{\star}{\vec{C}_n}=\vec{C}_n$. In particular $\vec{M}_n=\emptyset$ for every $n\in\setN$ and we are done. If $\vec{U}=\{(1,0),(0,1)\}$ then $\vec{C}_n=\setQ_{\geq 0}^2$ and in that case $\vec{M}_n=\setZ^2\setminus \setN^2$ for every $n\in\setN$ and we are done. So, we can assume that $\vec{U}$ contains exactly one vector. By symmetry, we can also assume that $\vec{U}=\{(1,0)\}$.

 Let $\vec{Q}_n=\{\vec{v}\in \vec{M}_n\mid \vec{v}(1)<0\wedge \vec{v}(2)\in\{-h,\ldots,0\}\}$ and let us prove that $\vec{M}_n=\vec{Q}_n\cup (\setN\times\{-h,\ldots,-1\})$. It is clear that $\vec{Q}_n\subseteq \vec{M}_n$. Let $\vec{v}\in \setN\times\{-h,\ldots,0\}$, and let $\vec{c}\in\vec{C}$ and $\lambda\in\setQ_{\geq 0}$ such that $\vec{c}+\lambda\vec{v}\geq (0,0)$. Let $\mu\in\setQ_{\geq 0}$ be maximal such that $\vec{c}+\mu\vec{v}\geq (0,0)$. As $\vec{v}(1)\geq 0$ and $\vec{v}(2)<0$, we deduce that $\vec{c}+\mu\vec{v}\in\setQ_{\geq 0}(1,0)$. In particular this vector is in $\vec{C}_n$. By maximality of $\mu$, it follows that $\mu\geq \lambda$. As $\vec{c}$ and $\vec{c}+\mu\vec{v}$ are both in the cone $\vec{C}_n$, and $0\leq \lambda\leq \mu$, it follows that $\vec{c}+\lambda\vec{v}\in\vec{C}_n$. We have proved that $\stab{\vec{v}}{\vec{C}}\subseteq\vec{C}$. Hence $\vec{v}\in\vec{M}_n$. We have proved the inclusion $\vec{Q}_n\cup (\setN\times\{-h,\ldots,0\})\subseteq \vec{M}_n$. For the converse inclusion, let $\vec{v}\in\vec{M}_n$. Observe that if $\vec{v}\not\in\vec{Q}_n$ then $\vec{v}(1)\geq 0$ or $\vec{v}(2)>0$. If $\vec{v}(2)> 0$ then $\vec{v}(1)<0$ and from \cref{fact:axis-in-stab} we derive a contradiction with $(0,1)\not\in \vec{U}$. Hence $\vec{v}(2)\leq 0$ and $\vec{v}(1)\geq 0$. Since $\vec{v}\not\geq (0,0)$, it follows that $\vec{v}(2)<0$. We have proved that $\vec{v}\in\setN\times\{-h,\ldots,-1\}$. We have proved the equality $\vec{M}_n=\vec{Q}_n\cup (\setN\times\{-h,\ldots,-1\})$.
 
  Let us prove that $\vec{Q}_n$ is finite. Since $\vec{C}_0$ is non-degenerated, it is not included in $\setQ_{\geq 0}(1,0)$. Hence, there exists $\vec{c}\in\vec{C}_0$ such that $\vec{c}(2)>0$. We are going to prove that $\vec{Q}_n\subseteq \{\vec{v}\in\setZ^2 \mid -\frac{\vec{c}(1)}{\vec{c}(2)}h\leq \vec{v}(1)<0\wedge -h\leq \vec{v}(2)\leq 0\}$. Let $\vec{v}\in\vec{Q}_n$. There exists a maximal $\lambda\in\setQ_{\geq 0}$ such that $\vec{c}+\lambda\vec{v}\geq (0,0)$. Since this vector is in $\stab{\vec{v}}{\vec{C}_n}=\vec{C}_n$, we deduce that $\vec{c}+\lambda\vec{v}$ is in $\vec{C}_n$. Moreover, as at least one component of that vector is zero, we deduce that this vector is in $\setQ_{\geq 0}(1,0)$, i.e. $\vec{c}(2)+\lambda\vec{v}(2)=0$. As $\vec{c}(2)>0$, we deduce that $\lambda>0$ and $\vec{v}(2)<0$. In particular $\vec{v}(2)\in\{-h,\ldots,-1\}$. Moreover, $\lambda=\vec{c}(2){-\vec{v}(2)}$. As $\vec{c}(1)+\lambda\vec{v}(1)\geq 0$, we deduce that $\vec{v}(1)\geq \frac{\vec{c}(1)}{\vec{c}(2)}\vec{v}(2)\geq -\frac{\vec{c}(1)}{\vec{c}(2)}h$. We have proved that $\vec{Q}_n$ is finite.

  From \cref{lem:inclusionstab}, we deduce that $(\vec{Q}_n)_{n\in\setN}$ is a non-increasing sequence. It follows that there exists $N\in\setN$ such that $\vec{Q}_n=\vec{Q}_N$ for every $n\geq N$. We deduce that $\vec{M}_n=\vec{M}_N$ for every $n\geq N$.
\end{proof}

\subsection{Proofs of \cref{sec:elem-cons}}
In this section, we prove that the node $n_1$ defined in \cref{sec:elem-cons} satisfies \cref{lem:elem_cons}.

\medskip

\begin{algorithm}[t]
  \DontPrintSemicolon
  \Input{%
    A sequence of states $q_1,\ldots,q_k$ and a set $J\subseteq\{1,\ldots,k\}$.
  }
  \Output{%
    A sequence $i_1<j_1<\cdots i_g<j_g$ of elements in $J$ such that $q_{i_\ell}=q_{j_\ell}$ for every $\ell\in\{1,\ldots,g\}$, and such that for every $i,j\not\in \bigcup_{\ell=1}^g\{i_\ell+1,\ldots,j_\ell\}$ such that $1\leq i<j\leq k$ and $q_i=q_j$ we have $i\not\in J$ or $j\not\in J$.
  }
  $g := 0$\;
  $j_0 := 0$\;
  \While{there exists $i,j\in J$ such that $j_g<i<j$ and $q_i=q_j$}{
    let $i\in J$ be minimal such that there exists $j\in J$ such that $j_g<i<j$ and $q_i=q_j$\;
    let $j\in J$ be maximal such that $i<j$ and $q_i=q_j$\;
    let $g:=g+1$, $i_g:=i$, and $j_g:=j$\;
  }
  \Return{$i_1,j_1,\ldots,i_g,j_g$}\;
  \caption{$\mathtt{Decompose}(q_1,\ldots,q_k,J)$}
  \label{algo:decompose}
\end{algorithm}

We first prove the following lemma that provides a decomposition of paths in the exploration by identifying nodes $n$ such that $\vec{z}_n\geq (c,c)$. 
\begin{lemma}\label{lem:elem_cons:1}
  For every $s,n\in N$ such that $s\xrightarrow{*}n$ and $q_s=q_n$, we have:
  $$\vec{z}_n-\vec{z}_s\in \sum_{m\mid s\xrightarrow{*}m\xrightarrow{*}n}\per{\vec{E}_m}+\per{\vec{C}_m}+\sum_{m\rightarrow n}\vec{P}_m$$
\end{lemma}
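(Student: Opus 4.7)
The plan is to proceed by induction on the length $k$ of a path $s = m_0 \to m_1 \to \cdots \to m_k = n$ in the exploration. The case $k = 0$ is immediate since $\vec{z}_n - \vec{z}_s = (0, 0)$.

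For the inductive step, the first move is to extract a periodic predecessor-contribution by telescoping. Condition~\ref{explo:z} of \cref{def:algorithmic-exploration} supplies, for each $i \in \{1, \ldots, k\}$ and each $m' \to m_i$, a vector $\vec{p}_{m', i} \in \vec{P}_{m'}$ with
\[
\vec{z}_{m_i} - \vec{z}_{m_{i-1}} = \vec{a}_{m_i} + \sum_{m' \to m_i,\, m' \neq m_{i-1}} \vec{z}_{m'} + \sum_{m' \to m_i} \vec{p}_{m', i}.
\]
Summing over $i$ yields $\vec{z}_n - \vec{z}_s = \vec{v}_\theta + \vec{p}$, where $\vec{v}_\theta$ is the displacement of a cycle $\theta$ in $\mathcal{V}_n$ from $q_s$ to itself (reading each exploration step as a unary transition of $\mathcal{V}_n$ whose extra inputs are instantiated by $\vec{z}_{m'}$), and $\vec{p}$ collects the periodic remainders. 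Extensivity of $\accel{\cdot}$ together with Condition~\ref{explo:P} gives $\vec{P}_{m'} \subseteq \vec{P}_{m_i}$ whenever $m' \to m_i$; combined with the induced chain $\vec{P}_{m_0} \subseteq \cdots \subseteq \vec{P}_{m_{k-1}}$ along the path and with $m_{k-1} \to n$, all summands of $\vec{p}$ indexed by $i < k$ fall in $\sum_{m \to n} \vec{P}_m$, and those indexed by $i = k$ already do.

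The crux is to place $\vec{v}_\theta$ in $\sum_{m \mid s \xrightarrow{*} m \xrightarrow{*} n}(\per{\vec{E}_m} + \per{\vec{C}_m})$. I plan to invoke the $\mathtt{Decompose}$ procedure (\cref{algo:decompose}) on the state sequence $q_{m_0}, \ldots, q_{m_k}$ with $J = \{i : \vec{z}_{m_i} \geq (c, c)\}$, producing cut pairs $(i_1, j_1), \ldots, (i_g, j_g)$ whose endpoints are both ``tall''. For each such cut, the corresponding subpath induces a cycle of $\mathcal{V}_n$ rooted at $q_{m_{i_\ell}}$ with $\vec{z}_{m_{i_\ell}} \geq (c, c)$; \cref{fact:iteration-cycles-of-instantiation} together with a standard elementary-cycle decomposition of the enclosed cycle then places its total displacement in $\per{\vec{E}_{m_{i_\ell}}}$. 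What remains after removing the cut intervals is a contracted sequence whose remaining state-repeats must, by the guarantee of $\mathtt{Decompose}$, involve at least one node with $\vec{z} < (c, c)$; applying the induction hypothesis to the strictly shorter subpath between such a repeat isolates a consecutive vector in $\vec{C}_{m_j}$ plus terms already controlled, and a finite iteration disposes of the full residual.

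The main obstacle I expect is the enclosed-cycle decomposition inside each $[i_\ell, j_\ell]$: the standard splitting into elementary cycles may produce cycles rooted at intermediate states that do not \emph{a priori} belong to any $\vec{E}_m$ with $m$ on the path. Handling this will require rotating each elementary cycle so that it is rooted at an occurrence of the tall state $q_{m_{i_\ell}}$ (always possible since the enclosing cycle starts and ends there) and, when no such rooting is available, invoking the inductive hypothesis on an appropriate shorter subpath and absorbing the resulting periodic remainder into $\sum_{m \to n} \vec{P}_m$ through the same extensivity chain used above. This bookkeeping, rather than any single conceptual difficulty, is where the argument requires the most care.
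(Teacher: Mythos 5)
Your telescoping step and the treatment of the periodic remainders $\vec{p}$ match the paper's argument, and the overall shape (reduce to a cycle $\theta$ of $\mathcal{V}_n$, then split $\theta$ via $\mathtt{Decompose}$ into pieces landing in $\per{\vec{E}_m}$ and $\per{\vec{C}_m}$) is the right one. The gap is that you have inverted the roles of the tall and short positions, and this breaks both halves of the argument. Recall that a vector lies in $\vec{C}_m$ only if \emph{both} endpoints satisfy $\vec{z} \not\geq (c,c)$, whereas a vector lies in $\vec{E}_m$ only if it is the displacement of an \emph{elementary} cycle anchored at a node with $\vec{z} \geq (c,c)$. You call $\mathtt{Decompose}$ with $J = \{i : \vec{z}_{m_i} \geq (c,c)\}$, so your cut pairs have two tall endpoints and your residual repeats have at least one short endpoint. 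For the cut pairs, the enclosed cycle is not elementary, and its standard decomposition into elementary cycles produces cycles rooted at intermediate states; your claim that each such cycle can be rotated so as to be rooted at $q_{m_{i_\ell}}$ is false (an inner elementary cycle such as $b \to c \to b$ inside $a \to b \to c \to b \to a$ simply does not visit $a$), so these displacements need not lie in any $\vec{E}_m$. For the residual, a repeat with one short and one tall endpoint yields neither a consecutive vector (both endpoints must be short) nor, without further work, an elementary one. Citing \cref{fact:iteration-cycles-of-instantiation} does not help here: that fact concerns semantic iterability of cycles and plays no role in this purely arithmetic decomposition.

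The paper's proof uses the mirror-image split and needs no induction on path length. It calls $\mathtt{Decompose}$ with $J = \{j : \vec{z}_{m_j} \not\geq (c,c)\}$, the \emph{short} positions. Each cut pair then has two short endpoints, so $\vec{z}_{m_{j_\ell}} - \vec{z}_{m_{i_\ell}}$ is directly a vector of $\vec{C}_{m_{j_\ell}}$, and the displacement of that segment of $\theta$ is this vector minus periodic remainders already accounted for; the guarantee of $\mathtt{Decompose}$ then ensures that every state repeat surviving in the contracted residual cycle has at least one tall endpoint, so every elementary cycle in the residual's decomposition is anchored at a tall ancestor and its displacement lies in some $\vec{E}_{m_j}$. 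If you swap your choice of $J$ accordingly, your outline essentially becomes the paper's proof and the bookkeeping you were worried about disappears; as written, however, the crux step does not go through.
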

\begin{proof}
   We put $\vec{K}=\sum_{t\mid s\xrightarrow{*}t\xrightarrow{*}n}\per{\vec{E}_t}+\per{\vec{C}_t}$ and $\vec{P}=\sum_{m\rightarrow n}\vec{P}_m$.
   Let $\vec{v}=\vec{z}_n-\vec{z}_s$. There exists a sequence of $k$ nodes $n_0,\ldots,n_k$ such that $n_0=s$, and $n_k=n$ and such that $n_{j-1}\rightarrow n_j$ for every $j\in\{1,\ldots,k\}$. We introduce the set $M_j=\{m\in N \mid m\rightarrow n_j\}$, and the tuple $(\vec{a}_j,q_j,\vec{z}_j,\vec{P}_j)$ labeling the node $n_j$. We introduce $\vec{p}_1,\ldots, \vec{p}_k$ such that $p_j\in \sum_{m\in M_j}\vec{P}_m$ and $\vec{z}_j=\vec{z}_{j-1}+\vec{b}_j+\vec{p}_j$ where $\vec{b}_j=\vec{a}_j+\sum_{m\in M_j\setminus\{n_{j-1}\}}\vec{z}_m$. Note that $\vec{p}_j\in \vec{P}$ and $(q_{j-1},\vec{b}_j,q_j)$ is a transition of the VASS $\mathcal{V}_n$.

   \medskip

   We introduce $\theta=(q_0,\vec{b}_1,q_1)\cdots(q_{k-1},\vec{b}_k,q_k)$. Notice that $\theta$ is a cycle of $\mathcal{V}_n$. Given $i,j\in \{1,\ldots,k\}$ we denote by $\theta_{[i,j]}$ the sequence $(q_i,\vec{b}_{i+1},q_{i+1})\ldots (q_{j-1},\vec{b}_j,q_j)$ with the convention $\theta_{[i,j]}=\varepsilon$ if $i\geq j$.

  \medskip

  Let $J$ be the set of $j\in \{0,\ldots,k\}$ such that $\vec{z}_j\not\geq (c,c)$.

  \medskip

  Let us consider the sequence $i_1<j_1<\cdots i_g<j_g$ produced by the greedy \cref{algo:decompose}, and let $H=\{0,\ldots,k\}\setminus(\bigcup_{\ell=1}^g\{i_\ell+1,\ldots,j_\ell\})$.

  \medskip

  For every $\ell\in\{1,\ldots,g\}$,
  as $i_\ell,j_\ell\in J$, $i_\ell<j_\ell$ and $q_{i_\ell}=q_{j_\ell}$, we deduce that $\vec{z}_{j_\ell}-\vec{z}_{i_\ell}$ is a vector in $\vec{C}_{n_{j_\ell}}$. As $n_{j_\ell}\xrightarrow{*}n$, it follows that $\vec{C}_{n_{j_\ell}}\subseteq \vec{K}$, and in particular $\vec{z}_{j_\ell}-\vec{z}_{i_\ell}\in \vec{K}$. Observe that $\vec{z}_{j_\ell}-\vec{z}_{i_\ell}$ is also the sum of the displacement of $\theta_{[i_\ell,j_\ell]}$ and $\sum_{i_{\ell}<j\leq j_\ell}\vec{p}_j$. It follows that the displacement of $\theta_{[i_\ell,j_\ell]}$ is in $\vec{K}-\sum_{i_{\ell}<j\leq j_\ell}\vec{p}_j$.
  \medskip

  Now, let $\theta'=\theta_{[j_0,i_1]}\cdots\theta_{[j_g,i_{g+1}]}$ with the convention $j_0=0$ and $i_{g+1}=k$. As $\theta'$ is a cycle it can be decomposed into a finite sequence of elementary cycles. Observe that the source and the target of such an elementary cycle correspond to two distinct nodes $i<j$ in $H$ with $q_i=q_j$. Hence $i\not\in J$ or $j\not\in J$. It follows that the displacement of such a cycle is in $\vec{E}_{n_j}$. Since this set is included in $\vec{K}$, we deduce that the displacement of $\theta'$ is in $\vec{K}$.
 
  \medskip

  Notice that the displacement of $\theta$ is equal to the sum of the displacement of $\theta'$ and the displacements of the cycles $\theta_{[i_\ell,j_\ell]}$ where $\ell\in\{1,\ldots,g\}$, we deduce that the displacement of $\theta$ is in the following set.
  $$\vec{K}-\sum_{\ell=1}^g\sum_{i_{\ell}<j\leq j_\ell}\vec{p}_j$$
  Since the displacement of $\theta$ is also equal to $\sum_{j=1}^k\vec{b}_j=\sum_{j=1}^k(\vec{z}_j-\vec{z}_{j-1}-\vec{p}_j)=\vec{z}_k-\vec{z}_0-\sum_{j=1}^k\vec{p}_j$, we deduce the following relation (recall that $\vec{z}_n=\vec{z}_k$ and $\vec{z}_m=\vec{z}_0$) where $\vec{p}=\sum_{\ell=0}^g\sum_{j_{\ell}<j\leq i_{\ell+1}}\vec{p}_j$.
  $$\vec{z}_n-\vec{z}_m\in \vec{K}+\vec{p}$$
  We have proved the lemma since $\vec{p}\in\vec{P}$.
\end{proof}

\medskip

Given $n\in \des{n_0}$, we introduce the set $\tilde{\vec{C}}_n$ of vectors $\vec{z}_n-\vec{z}_m$ where $m\in N$ is such that $n_0\xrightarrow{*}m\xrightarrow{*}n$ and $q_m=q_n$.
\begin{corollary}\label{cor:elem_cons:1}
  For every $n\in \des{n_0}$, we have $\tilde{\vec{C}}_n\subseteq \sum_{m \mid n_0\xrightarrow{*}m\xrightarrow{*}n}\per{\vec{E}_m}+\per{\vec{C}_m}+\sum_{s\not\in\des{n_0}}\vec{P}_s$.
\end{corollary}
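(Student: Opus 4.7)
The plan is to invoke Lemma~\ref{lem:elem_cons:1} on each element of $\tilde{\vec{C}}_n$ and then reorganise the resulting decomposition to match the target set.

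First I would fix $n\in\des{n_0}$ and a node $m$ with $n_0\xrightarrow{*}m\xrightarrow{*}n$ and $q_m=q_n$, so that $\vec{z}_n-\vec{z}_m$ is a generic element of $\tilde{\vec{C}}_n$. Applying Lemma~\ref{lem:elem_cons:1} with $s:=m$ yields
\[
\vec{z}_n-\vec{z}_m\;\in\;\sum_{t\,\mid\,m\xrightarrow{*}t\xrightarrow{*}n}\per{\vec{E}_t}+\per{\vec{C}_t}\;+\;\sum_{t\rightarrow n}\vec{P}_t.
\]
Since $n_0\xrightarrow{*}m$, every intermediate $t$ with $m\xrightarrow{*}t\xrightarrow{*}n$ also satisfies $n_0\xrightarrow{*}t\xrightarrow{*}n$, so the first two summations sit inside $\sum_{m'\mid n_0\xrightarrow{*}m'\xrightarrow{*}n}(\per{\vec{E}_{m'}}+\per{\vec{C}_{m'}})$, matching the first block of the target.

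It then suffices to absorb $\sum_{t\rightarrow n}\vec{P}_t$ term by term. A parent $t\notin\des{n_0}$ contributes directly to $\sum_{s\notin\des{n_0}}\vec{P}_s$, so the hard case is $t\in\des{n_0}$. For those parents I would prove in parallel, by well-founded induction on $\rightarrow$, the auxiliary invariant
\[
\vec{P}_t\;\subseteq\;\sum_{m'\mid n_0\xrightarrow{*}m'\xrightarrow{*}t}\per{\vec{E}_{m'}}+\per{\vec{C}_{m'}}\;+\;\sum_{s\notin\des{n_0}}\vec{P}_s,
\]
and then use $t\xrightarrow{+}n$ to transport the bound from $t$ to $n$. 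Unfolding $\vec{P}_t=\accel{\vec{I}_t}(\sum_{u\rightarrow t}\vec{P}_u)$ via Condition~\ref{explo:P}, the seeds $\vec{P}_u$ are dispatched either by the induction hypothesis (if $u\in\des{n_0}$) or by the third summand (if $u\notin\des{n_0}$); generators in $\vec{E}_t$ land in $\per{\vec{E}_t}$; and the consecutive generators $\vec{z}_t-\vec{z}_s\in\overline{\vec{C}}_t$ with $s\in\des{n_0}$ are re-expanded by a fresh application of Lemma~\ref{lem:elem_cons:1}, which only introduces $\vec{P}_u$ for $u\rightarrow t$ and thus keeps the recursion strictly descending in $\rightarrow$.

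The principal obstacle I anticipate is the borderline case of a consecutive generator $\vec{z}_t-\vec{z}_s\in\overline{\vec{C}}_t$ with $s\notin\des{n_0}$, since the allowed consecutive block in the target only ranges over indices $m'\in\des{n_0}$. Resolving this relies on exploiting the primary structure of the exploration (which makes $N\setminus\des{n_0}$ finite) and on the distinction between $\vec{I}_t=\vec{E}_t\cup\vec{C}_t$ and $\vec{I}_t=\vec{E}_t\cup\overline{\vec{C}}_t$ enforced by Condition~\ref{explo:P}, in order to route every such troublesome vector either through an alternative path passing entirely inside $\des{n_0}$ or into a finite correction that can be absorbed into $\sum_{s\notin\des{n_0}}\vec{P}_s$.
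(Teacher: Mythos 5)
Your overall route coincides with the paper's: apply \cref{lem:elem_cons:1} with $s:=m$, and absorb the residual term $\sum_{t\rightarrow n}\vec{P}_t$ by proving, by well-founded induction over $\rightarrow$, exactly the auxiliary invariant you state, namely $\vec{P}_t\subseteq \sum_{m'\mid n_0\xrightarrow{*}m'\xrightarrow{*}t}\per{\vec{E}_{m'}}+\per{\vec{C}_{m'}}+\sum_{s\notin\des{n_0}}\vec{P}_s$ for $t\in\des{n_0}$. So the skeleton is right.

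The flaw is in your last paragraph. The ``borderline case'' you try to handle — a generator $\vec{z}_t-\vec{z}_s\in\overline{\vec{C}}_t$ used by the acceleration at a node $t\in\des{n_0}$ — simply cannot occur, and the workaround you sketch would not repair it if it did. Recall the standing hypothesis of \cref{lem:elem_cons}: $\sum_{s\rightarrow n_0}\vec{P}_s\neq\{(0,0)\}$. Since $\vec{P}_u=\accel{\vec{I}_u}(\sum_{v\rightarrow u}\vec{P}_v)\supseteq\sum_{v\rightarrow u}\vec{P}_v$, the periodic sets are non-decreasing along edges, so this non-triviality propagates to every $t\in\des{n_0}$: $\sum_{u\rightarrow t}\vec{P}_u\neq\{(0,0)\}$, and therefore, by the very definition of $\vec{I}_t$, we have $\vec{I}_t=\vec{E}_t\cup\vec{C}_t$ — the set $\overline{\vec{C}}_t$ is never used inside $\des{n_0}$. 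Consequently $\vec{P}_t=\accel{\vec{I}_t}(\sum_{u\rightarrow t}\vec{P}_u)\subseteq\per{\vec{E}_t}+\per{\vec{C}_t}+\sum_{u\rightarrow t}\vec{P}_u$, and the induction closes with no further machinery. (Note also that a vector of $\vec{C}_t$ may well reference an ancestor outside $\des{n_0}$; this is harmless because the target sum is indexed by the node $t$, which lies between $n_0$ and $n$.) By contrast, your proposed fix — re-expanding such a generator by a fresh application of \cref{lem:elem_cons:1} with $s\notin\des{n_0}$, or appealing to primariness — is not a valid argument: that re-expansion would introduce terms $\per{\vec{C}_{m'}}$ for intermediate nodes $m'\notin\des{n_0}$, which are not permitted in the target, and finiteness of $N\setminus\des{n_0}$ gives no license to absorb arbitrary consecutive vectors into $\sum_{s\notin\des{n_0}}\vec{P}_s$. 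Replace that paragraph by the observation on $\vec{I}_t$ above and the proof is complete.
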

\begin{proof}
  From $\vec{P}_n=\accel{\vec{I}_n}(\sum_{s\rightarrow n}\vec{P}_s)$ for every node $n\in N$, we deduce that $\sum_{s\mid s\rightarrow n}\vec{P}_s\not=\{(0, 0)\}$ for every $n\in \des{n_0}$. In particular $\vec{I}_n=\vec{E}_n+\vec{C}_n$ for every $n\in\des{n_0}$. It follows that $\vec{P}_n\subseteq \per{\vec{E}_n}+\per{\vec{C}_n}+\sum_{s\rightarrow n}\vec{P}_s$ for every $n\in \des{n_0}$. By induction on the well-foundedness of $\rightarrow$, we get $\vec{P}_n\subseteq \sum_{m\mid n_0\xrightarrow{*}m\xrightarrow{*}n}\per{\vec{E}_m}+\per{\vec{C}_m}+\sum_{s\not\in\des{n_0}}\vec{P}_s$ for every node $n\in \des{n_0}$. Now, the corrollary follows from that inclusion and \cref{lem:elem_cons:1}.
\end{proof}

\begin{lemma}\label{lem:citter}
  Let $\vec{x}\geq (-c,-c)$ be a vector in $\setZ^2$, and let $(\vec{z}_j)_{j\in J}$ and $(\vec{z}'_j)_{j\in J}$ be two finite sequences of vectors in $\setN^2$ such that $\alpha(\vec{z}_j)=\alpha(\vec{z}_j')$ for every $j\in J$.
  It holds that
  $\vec{x}+\sum_{j=1}^k\vec{z}_j\geq (0, 0)$
  if, and only if,
  $\vec{x}+\sum_{j=1}^k\vec{z}_j'\geq (0, 0)$.
\end{lemma}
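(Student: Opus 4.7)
The plan is to argue coordinate by coordinate, since the condition $\vec{v}\geq (0,0)$ decomposes into two independent inequalities, one for each component $i\in\{1,2\}$. By symmetry between $\vec{z}_j$ and $\vec{z}_j'$, it suffices to prove one direction, say that $\vec{x}+\sum_{j=1}^k\vec{z}_j\geq (0,0)$ implies $\vec{x}+\sum_{j=1}^k\vec{z}_j'\geq (0,0)$.

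Fix a coordinate $i\in\{1,2\}$. The assumption $\alpha(\vec{z}_j)=\alpha(\vec{z}_j')$ means that for each $j$, either both $\vec{z}_j(i)$ and $\vec{z}_j'(i)$ lie in $\{0,\ldots,c-1\}$ and are equal, or both are at least $c$. I would split into two cases.

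In the first case, assume that some index $j_0\in\{1,\ldots,k\}$ satisfies $\vec{z}_{j_0}'(i)\geq c$. Since the other terms are in $\setN$, it follows that $\sum_{j=1}^k\vec{z}_j'(i)\geq c$, and hence $\vec{x}(i)+\sum_{j=1}^k\vec{z}_j'(i)\geq -c+c=0$ using the hypothesis $\vec{x}(i)\geq -c$. In the second case, $\vec{z}_j'(i)<c$ for every $j$, which by the $\alpha$-assumption forces $\vec{z}_j(i)=\vec{z}_j'(i)$ for every $j$, so the two sums at coordinate $i$ are literally equal and the desired inequality follows from the hypothesis. In either case the $i$-th component of $\vec{x}+\sum_{j=1}^k\vec{z}_j'$ is non-negative, and taking the conjunction over $i\in\{1,2\}$ yields the result.

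There is no real obstacle here: the proof is a short case distinction that essentially unpacks the definition of the abstraction~$\alpha$ and uses the lower bound $\vec{x}\geq (-c,-c)$. The only subtlety to flag explicitly is that the case split is driven by $\vec{z}_j'(i)\geq c$ but, thanks to $\alpha(\vec{z}_j)=\alpha(\vec{z}_j')$, this threshold behaves identically on the primed and unprimed sequences, which is precisely what allows the dichotomy to yield the same answer on both sides.
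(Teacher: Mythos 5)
Your proof is correct and follows essentially the same argument as the paper's: a coordinate-wise case distinction on whether some term in the sequence exceeds the threshold $c$ at that coordinate (in which case the sum already dominates $-\vec{x}(i)$), or all terms are below $c$ (in which case the $\alpha$-equality forces the two sums to coincide at that coordinate). The only cosmetic difference is that you drive the case split by the primed sequence rather than the unprimed one, which is immaterial given the hypothesis $\alpha(\vec{z}_j)=\alpha(\vec{z}_j')$.
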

\begin{proof}
   Let $i\in\{1,2\}$ and assume that $\vec{x}(i)+\sum_{j\in J}\vec{z}_j(i)\geq 0$. By symmetry, it is sufficient to prove that $\vec{x}(i)+\sum_{j\in J}\vec{z}'_j(i)\geq 0$. If there exists $j\in J$ such that $\vec{z}_j(i)\geq c$ then $\vec{z}'_j(i)\geq c$. It follows that $\vec{x}(i)+\sum_{j\in J}\vec{z}'_j(i)\geq \vec{x}(i)+c\geq 0$. If for every $j\in J$ we have $\vec{z}_j(i)<c$ then $\vec{z}_j'(i)=\vec{z}_j(i)$ for every $j\in J$. In particular $\vec{x}(i)+\sum_{j\in J}\vec{z}'_j(i)\geq 0$.
\end{proof}

\begin{lemma}\label{lem:elem_cons:2}
  For every $n\in \des{n_0'}$, we have $\vec{E}_n\subseteq \bigcup_{m\mid n_0'\xrightarrow{*}m\xrightarrow{*}n_1}\vec{E}_m+\sum_{s \mid n_0\xrightarrow{*}s\xrightarrow{+}n}\per{\tilde{\vec{C}}_s}$.
\end{lemma}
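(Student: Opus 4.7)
\emph{Plan.}
The strategy is to transform a witness elementary cycle $\theta$ for $\vec{v}\in\vec{E}_n$ into an elementary cycle $\theta'$ of $\mathcal{V}_{n_1}$ by replacing, inside each instantiation set, every configuration $q_m(\vec{z}_m)$ with $m\in\des{n_0}$ by its surrogate in $N_{\min}$. The resulting displacement will differ from $\vec{v}$ by a sum of vectors in $\tilde{\vec{C}}_s$ for suitable $s$, matching the right-hand side of the claimed inclusion.

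Concretely, fix $\vec{v}\in\vec{E}_n$ together with a witness pair $(\theta,\sigma)$, where $\theta=(p_1,\vec{a}_1,q_1)\cdots(p_k,\vec{a}_k,q_k)$ is an elementary cycle of $\mathcal{V}_n$ with displacement $\vec{v}$, and $\sigma\in\anc{n}$ satisfies $q_\sigma=p_1$, $\vec{z}_\sigma\geq(c,c)$, plus the sign condition. Each $(p_i,\vec{a}_i,q_i)$ arises from a rule $(S_i,\vec{b}_i,q_i)\in\Delta$ together with an instantiation set $D_i\subseteq\{q_m(\vec{z}_m)\mid m\xrightarrow{+}n\}$ such that $\vec{a}_i=\vec{b}_i+\sum_{r(\vec{z})\in D_i}\vec{z}$. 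For each $q_m(\vec{z}_m)\in D_i$ with $m\in\des{n_0}$, pick $s_m\in N_{\min}$ with $s_m\sqsubseteq m$, giving $q_{s_m}=q_m$, $\alpha(\vec{z}_{s_m})=\alpha(\vec{z}_m)$, and $s_m\xrightarrow{*}m$; leave the configurations with $m\notin\des{n_0}$ unchanged. Since $q_{s_m}=q_m$, the modified instantiation sets $D_i'$ obey the same multiset constraint, producing a cycle $\theta'=(p_1,\vec{a}'_1,q_1)\cdots(p_k,\vec{a}'_k,q_k)$ with the same state sequence as $\theta$, hence still elementary. By the construction of $n_0'$ and $n_1$, every node used for instantiation in $\theta'$ is an ancestor of $n_1$ (through $n_0'$), so $\theta'$ is a genuine elementary cycle of $\mathcal{V}_{n_1}$; let $\vec{v}'$ denote its displacement.

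Summing the local changes gives $\vec{v}-\vec{v}'=\sum_{m\in M}(\vec{z}_m-\vec{z}_{s_m})$, where $M$ is the multiset of replaced nodes, all of them in $\{s\mid n_0\xrightarrow{*}s\xrightarrow{+}n\}$; each summand lies in $\tilde{\vec{C}}_m$ since $n_0\xrightarrow{*}s_m\xrightarrow{*}m$ and $q_{s_m}=q_m$. Thus $\vec{v}-\vec{v}'\in\sum_{s\mid n_0\xrightarrow{*}s\xrightarrow{+}n}\per{\tilde{\vec{C}}_s}$. To conclude $\vec{v}'\in\vec{E}_{n_1}$, we exhibit a witness $s^*\in\anc{n_1}$: in the ``typical'' case $p_1\in Q_c$ (which holds whenever $\sigma=n$, since $n\in\des{n_0'}$ and $\vec{z}_n\geq(c,c)$), take $s^*=n_{p_1}$, which by construction satisfies $s^*\xrightarrow{*}n_1$, $q_{s^*}=p_1$, and $\vec{z}_{s^*}\geq(c,c)$.

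The main obstacle is twofold. First, in the residual case $\sigma\neq n$ with $p_1\notin Q_c$, the witness $\sigma$ lies outside $\des{n_0'}$; we must then argue, using directedness of the exploration, that either $\sigma\in\anc{n_1}$ serves as $s^*$ directly or some intermediate $m^*\in\des{n_0'}\cap\anc{n_1}$ with $\sigma\xrightarrow{*}m^*$ can be chosen (so the statement is applied to $m^*$ rather than $n_1$). Second, when $s^*\neq m^*$ the membership $\vec{v}'\in\vec{E}_{m^*}$ requires $\vec{v}'\geq(0,0)$; this will be deduced from $\vec{v}\geq(0,0)$ by invoking \cref{lem:citter} on the partial sums of the displacements along $\theta$ and $\theta'$, exploiting that $\alpha(\vec{z}_{s_m})=\alpha(\vec{z}_m)$ so that sign constraints transfer under the replacement.
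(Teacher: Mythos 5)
Your proposal follows essentially the same route as the paper's proof: replace each instantiated configuration $q_m(\vec{z}_m)$ with $m\in\des{n_0}$ by its $N_{\min}$-surrogate to obtain a cycle of $\mathcal{V}_{n_0'}$ (hence of $\mathcal{V}_{m^*}$ for a suitable $m^*$ between $n_0'$ and $n_1$), express the displacement difference as a sum of vectors in the sets $\tilde{\vec{C}}_s$, transfer the sign condition via \cref{lem:citter}, and resolve the witness node by the same case analysis on $Q_c$, the nodes $n_q$, and $N_{\min}$. The only cosmetic slips — targeting $\vec{E}_{n_1}$ rather than $\vec{E}_{n_{p_1}}$ in the non-nonnegative case, and invoking partial sums where only the total displacement matters — are repaired by your own "obstacle" discussion, so the argument matches the paper's.
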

\begin{proof}
  We introduce a mapping $\lambda:\des{n_0}\rightarrow N_{\min}$ that maps each node $n\in \des{n_0}$ onto a node $\lambda(n)\in N_{\min}$ such that $\lambda(m)\sqsubseteq m$. We extend $\lambda$ on the full set $N$ by $\lambda(n)=n$ if $n\not\in\des{n_0}$. We are going to prove that cycles proving vectors in $\vec{E}_n$ can be transformed by applying $\lambda$ on nodes ocurring along the cycles. More formally, let $n\in \des{n_0'}$ and let $\vec{v}'\in \vec{E}_n$. By definition, $\vec{v}'$ is the displacement of an elementary cycle $\theta'$ of $\mathcal{V}_n$ on $q_{n'}$ for some $n'\in N$ such that $n'\xrightarrow{*}n$, $\vec{z}_{n'}\geq (c,c)$, and such that $n'=n$ or $\vec{v}'\geq (0, 0)$.

\medskip
                
  Let $k$ be the length of $\theta'$, and let us introduce the sequence of states $q_0,\ldots,q_k$ and the sequence of vectors $\vec{b}_1,\ldots,\vec{b}_k$ in $\setZ^2$ such that $\theta'=t_1'\ldots t_k'$ where $t_j'=(q_{j-1},\vec{b}_j',q_j)$ for every $j\in\{1,\ldots,k\}$.
  By construction, we have $\vec{v}' = \sum_{j=1}^k \vec{b}_j'$.

\medskip

Since $t_j'$ is a transition of $\mathcal{V}_n$. There exists a rule $(S_j,\vec{v}_j,q_j)$ of $\mathcal{B}$ and a multiset of $M_j$ of nodes in $\{m \mid m\xrightarrow{+}n\}$ such that $S_j=q_{j-1}+\multiset{q_m \mid m\in M_j}$ and $\vec{b}_j'=\vec{v}_j+\sum_{m \in M_j}\vec{z}_m$.

\medskip

We put $t_j=(q_{j-1},\vec{b}_j,q_j)$ where $\vec{b}_j=\vec{v}_j+\sum_{m \in M_j}\vec{z}_{\lambda(m)}$. Notice that $t_j$ is a transition of $\mathcal{V}_{n_0'}$. We introduce $\theta=t_1\ldots t_k$ and its displacement $\vec{v}$. We have $\vec{v}=\vec{x}+\sum_{j=1}^k\sum_{m\in M_j}\vec{z}_m$ and $\vec{v}'=\vec{x}+\sum_{j=1}^k\sum_{m\in M_j}\vec{z}_{\lambda(m)}$ where $\vec{x}=\sum_{j=1}^k\vec{v}_j$. Since $k\leq |Q|$ we deduce that $\vec{x}\geq (-c,-c)$. From Lemma~\ref{lem:citter} is follows that $\vec{v}\geq (0, 0)$ iff $\vec{v}'\geq (0, 0)$.

\medskip

Let us prove that there exists a node $m$ such that $n_0'\xrightarrow{*}m\xrightarrow{*}n_1$ such that $\vec{v}\in\vec{E}_m$. Assume first that $\vec{v}'\not\geq (0, 0)$. In that case $n'=n$ and in particular $\vec{z}_n\geq (c,c)$ since $\vec{z}_{n'}\geq (c,c)$. As $n\in\des{n_0'}$, it follows that $q_n\in Q_c$. By definition of $n_1$, there exists a node $m$ such that $n_0'\xrightarrow{*}m\xrightarrow{*}n_1$, $q_m=q_n$ and $\vec{z}_m\geq (c,c)$. Since $\theta$ is a cycle of $\mathcal{V}_{n_0'}$ and $n_0'\xrightarrow{*}m$, we deduce that $\theta$ is a cycle of $\mathcal{V}_m$ as well. If follows that $\vec{v}\in \vec{E}_m$. Next, assume that $\vec{v}'\geq (0, 0)$. Then $\vec{v}\geq (0, 0)$. If $n'\not\in\des{n_0}$ then $n'\xrightarrow{*}n_0'\xrightarrow{*}n_1$ by definition of $n_0'$. We deduce that $\vec{v}\in\vec{E}_{n_1}$. If $n'\in\des{n_0}$, there exists $m\in N_{\min}$ such that $m\sqsubseteq n'$. If follows that $q_m=q_{n'}=q_n$ and from $\alpha(\vec{z}_m)=\alpha(\vec{z}_{n'})$ we get $\vec{z}_{m}\geq (c,c)$. As $m\xrightarrow{*}n_0'\xrightarrow n_1$ we deduce that $\vec{v}\in\vec{E}_{n_1}$. Therefore, in all case, there exists a node $m$ such that $n_0'\xrightarrow{*}m\xrightarrow{*}n_1$ such that $\vec{v}\in\vec{E}_m$.

\medskip

Notice that $\vec{z}_{s_j'}-\vec{z}_{s_j}$ is in $\tilde{\vec{C}}_{s_j'}$
for each $j\in \{1,\ldots,k\}$.
Since $\vec{v}'=\vec{v}+\sum_{j=1}^k(\vec{z}_{s_j'}-\vec{z}_{s_j})$, we are done.
\end{proof}

Let us introduce the periodic set $\vec{P}=\sum_{m\in\des{n_0}\setminus (\des{n_1}\setminus\{n_1\})}\per{\vec{E}_m}+\sum_{s\not\in \des{n_0}}\vec{P}_s$. We deduce the following corollary.
\begin{corollary}\label{cor:elem_cons:2}
  For every $n\in \des{n_0}$, we have:
 \begin{align*}
 \vec{E}_n
  \subseteq &
  \sum_{s \mid n_0\xrightarrow{*}s\xrightarrow{+}n}\per{\vec{E}_s}+\per{\vec{C}_s}+\vec{P}
  \end{align*}
\end{corollary}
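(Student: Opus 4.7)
The plan is to split on whether $n$ lies in $\des{n_0'}$ and then chain \cref{lem:elem_cons:2} with \cref{cor:elem_cons:1}.

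I would first dispatch the easy case $n\in\des{n_0}\setminus\des{n_0'}$. Since $n_0'\xrightarrow{*}n_1$ by construction of $n_1$, we have $\des{n_1}\subseteq\des{n_0'}$; as $n\notin\des{n_0'}$, this forces $n\notin\des{n_1}$, and in particular $n\notin\des{n_1}\setminus\{n_1\}$. Thus $n$ itself is one of the indices in the defining sum of $\vec{P}$, so $\vec{E}_n\subseteq\per{\vec{E}_n}\subseteq\vec{P}$ is immediate.

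For the main case $n\in\des{n_0'}$, I would apply \cref{lem:elem_cons:2} to bound $\vec{E}_n$ by the sum of two pieces. The first piece, $\bigcup_{m\mid n_0'\xrightarrow{*}m\xrightarrow{*}n_1}\vec{E}_m$, gets absorbed into $\vec{P}$: by acyclicity of the exploration, $m\xrightarrow{*}n_1$ forces either $m=n_1$ or $m\notin\des{n_1}$, and in either case $m\in\des{n_0}\setminus(\des{n_1}\setminus\{n_1\})$. For the second piece $\sum_{s\mid n_0\xrightarrow{*}s\xrightarrow{+}n}\per{\tilde{\vec{C}}_s}$, I would apply \cref{cor:elem_cons:1} to each $\tilde{\vec{C}}_s$ (note that $n_0\xrightarrow{*}s$ places $s\in\des{n_0}$, as required by that corollary); since the right-hand side of the corollary is already a periodic set, the inclusion passes to $\per{\tilde{\vec{C}}_s}$. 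The resulting summand $\sum_{t\notin\des{n_0}}\vec{P}_t$ is absorbed into $\vec{P}$, and for each $m$ with $n_0\xrightarrow{*}m\xrightarrow{*}s$ and $s\xrightarrow{+}n$, composing the paths yields $n_0\xrightarrow{*}m\xrightarrow{+}n$, placing $\per{\vec{E}_m}+\per{\vec{C}_m}$ inside the target sum.

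The main obstacle is not conceptual but a careful verification of index sets: specifically, the acyclicity of the exploration must be used to rule out $m$ becoming a strict descendant of $n_1$ when $m\xrightarrow{*}n_1$, and the inclusion $\des{n_1}\subseteq\des{n_0'}$ is what makes the exceptional case $n\notin\des{n_0'}$ collapse to a triviality. Once these two checks are in place, the proof is pure assembly of the two preceding results.
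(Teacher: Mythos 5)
Your proposal is correct and follows essentially the same route as the paper: dispatch the degenerate case by noting that $n$ is then itself an index of the sum defining $\vec{P}$ (the paper splits on $n\in\des{n_1}$ rather than $n\in\des{n_0'}$, but since $\des{n_1}\subseteq\des{n_0'}$ this is the same argument), and in the main case chain \cref{lem:elem_cons:2} with \cref{cor:elem_cons:1} exactly as the paper does. Your additional index-set bookkeeping (acyclicity forcing $m=n_1$ or $m\notin\des{n_1}$, and $n_0\xrightarrow{*}n_0'$ giving $m\in\des{n_0}$) is a correct filling-in of details the paper leaves implicit.
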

\begin{proof}
  Notice that if $n\not\in \des{n_1}$ then $\vec{E}_n\subseteq \vec{P}$. So, we can assume that $n\in\des{n_1}$. In particular $n\in \des{n_0'}$ since $n_0'\xrightarrow{*}n_1$. The inclusion then follows from \cref{cor:elem_cons:1} and \cref{lem:elem_cons:2}.
\end{proof}

The proof of \cref{lem:elem_cons} follows from \cref{cor:elem_cons:2} by well-founded induction over the edge relation $\rightarrow$ of the exploration.

\subsection{Proofs of \cref{sec:finitegen}}
\lemConsFinite*
\begin{proof}
    Let $\vec{H}=\con{\vec{U}}+\sum_{m\in \des{n_0}}\con{\vec{C}_m^+}$. If $\vec{U}=\{(1,0),(0,1)\}$ the lemma is proved since $\vec{H}=\setQ_{\geq 0}^2$. If $\vec{U}$ is empty, \cref{lem:cprime} shows that $\bigcup_{m\in M_h}\vec{C}_m^+$ is finite since it is included in $\{0,\ldots,h\}^2$. It follows that $\vec{H}$ is a finitely-generated cone in that case. So, by symmetry, we can assume that $\vec{U}=\{(1,0)\}$. Let $\vec{X}=\{\vec{x}\in \bigcup_{m\in\des{n_0}}\vec{C}_m^+\mid \vec{x}(2)>0\}$. \cref{lem:cprime} shows that for every $\vec{x}\in \vec{X}$, we have $\vec{x}(2)\in\{1,\ldots,h\}$. It follows that there exists $\mu\in\setQ_{\geq 0}$ such that $\mu=\min_{\vec{x}\in\vec{X}}\frac{\vec{x}(1)}{\vec{x}(2)}$. Observe that $\con{\{(1,0),(\mu,1)\}}\subseteq \vec{H}$. In order to prove the converse inclusion, it is sufficient to prove that $\vec{C}_m^+\subseteq \con{\{(1,0),(\mu,1)\}}$ for every $m\in \des{n_0}$. Let $\vec{x}\in\vec{C}_m$. If $\vec{x}(2)=0$ then $\vec{x}\in \setQ_{\geq 0}(1,0)$ and we deduce that $\vec{x}\in \con{\{(1,0),(\mu,1)\}}$. If $\vec{x}(2)>0$ then $\vec{x}\in\vec{X}$ and we deduce that $\frac{\vec{x}(1)}{\vec{x}(2)}\geq \mu$. A direct computation shows that $\vec{x}=(\vec{x}(1)-\mu\vec{x}(2))(1,0)+\vec{x}(2)(\mu,1)$. Hence $\vec{x}\in \con{\{(1,0),(\mu,1)\}}$. We have proved that $\vec{C}_m^+\subseteq \con{\{(1,0),(\mu,1)\}}$. It follows that $\vec{H}=\con{\{(1,0),(\mu,1)\}}$ which is a finitely-generated cone.
\end{proof}

\lemPformula*
\begin{proof}
  Let $\vec{K}$ be the cone of the right handside. Clearly $\vec{E}_m^+$ where $m\in\des{n_0}\setminus (\des{n_1}\setminus\{n_1\})$, $\vec{P}_s$ where $s\not\in \des{n_0}$, $\vec{U}$ and $\vec{C}_m^+$ where $m\in \des{n_0}$ are included in $\con{\vec{P}_N}$. It follows that $\vec{K}\subseteq \con{\vec{P_N}}$.
  We prove the converse inclusion by induction on the well-foundedness of the relation $\rightarrow$, showing that $\vec{P}_n\subseteq \vec{K}$ for every $n\in N$. So, let $n\in N$ such that for every $m\rightarrow n$ we have $\vec{P}_m\subseteq \vec{K}$ and let us prove that $\vec{P}_n\subseteq \vec{K}$. If $n\not\in\des{n_0}$ then clearly $\vec{P}_n\subseteq \vec{K}$. So, we can assume that $n\in\des{n_0}$. Observe that $\vec{P}_n=\accel{\vec{I}_n}(\vec{P}_n')$. It follows that $\con{\vec{P}_n}=\conP{\vec{I}_n\cup\vec{P}_n'}$, and from \cref{lem:conp-stabilizer} we deduce that $\con{\vec{P}_n}$ is $\vec{x}$-stable for every $\vec{x}\in \vec{I}_n$. From $n\in\des{n_0}$ we derive $\vec{P}_n'\not=\{(0, 0)\}$ and in particular $\vec{I}_n=\vec{E}_n\cup\vec{C}_n$. By induction, it follows that $\vec{P}_n'\subseteq\vec{K}$. We introduce the following cone $\vec{R}_n$. Clearly, $\vec{R}_{n}\subseteq \vec{K}$.
  $$\vec{R}_n=\con{\vec{P}_{n}'}+\sum_{m\in\des{n_0}\setminus (\des{n_1}\setminus\{n_1\})}\con{\vec{E}_m^+}+\con{\bigcup_{s\not\in \des{n_0}}\vec{P}_s}$$
  
  We denote by $\vec{M}$ the $h$-mode of $\con{\vec{P}_{n_0}}$. Let us prove that the $h$-mode of $\vec{R}_n$ is equal to $\vec{M}$. Observe that for every node $m$, we have $\vec{I}_m^+\subseteq\vec{P}_m$. Since the exploration is directed, there exists a node $n'\in\des{n}$ such that for every $m\in\des{n_0}\setminus (\des{n_1}\setminus\{n_1\}$ we have $m\xrightarrow{*}n'$ and for every $s\not\in\des{n_0}$ we have $s\xrightarrow{*}n'$. In particular $\vec{E}_m^+\subseteq \vec{P}_{n'}$ for every $m\in\des{n_0}\setminus (\des{n_1}\setminus\{n_1\}$, and $\vec{P}_s\subseteq\vec{P}_{n'}$ for every $s\not\in\des{n_0}$. It follows that $\con{\vec{P}_{n_0}}\subseteq \vec{R}_n\subseteq \con{\vec{P}_{n'}}$. Since those three cones have the same set of axis, and the $h$-modes of $\con{\vec{P}_{n_0}}$ and $\con{\vec{P}_{n'}}$ are equal (to $\vec{M}$), we deduce from \cref{lem:inclusionstab} that the $h$-mode of $\vec{R}_n$ is also equal to $\vec{M}$.

  We introduce the set $\vec{X}_n=\bigcup_{m\in\des{n_0}\setminus (\des{n_1}\setminus\{n_1\})}\vec{E}_m\cup\bigcup_{m\mid n_0\xrightarrow{*}m\xrightarrow{+}n_1}\vec{C}_m\cup\vec{P}_n'\cup\bigcup_{s\not\in\des{n_0}}\vec{P}_s$. Let us prove that $\vec{R}_n$ is $\vec{x}$-stable for every $\vec{x}\in\vec{X}_n$. If $\vec{x}\in\vec{P}_n'$ or $\vec{x}\in \vec{P}_s$ for some $s\not\in \des{n_0}$, clearly $\vec{x}\in \vec{R}_n$ and we conclude that $\vec{R}_n$ is $\vec{x}$-stable. So, we can assume that $\vec{x}\in \vec{E}_m$ for some $m\in \des{n_0}\setminus (\des{n_1}\setminus\{n_1\})$ or $\vec{x}\in \vec{C}_m$ for some $m$ such that $n_0\xrightarrow{*}m\xrightarrow{+}n$. If $\vec{x}\geq (0, 0)$ then $\vec{x}\in \vec{R}_n$ and in particular $\vec{R}_n$ is $\vec{x}$-stable. Assume that $\vec{x}\not\geq (0, 0)$. Notice that $\con{\vec{P}_m}$ is $\vec{x}$-stable by construction. Notice moreover that $\vec{x}\geq (-h,-h)$. It follows that is in the $h$-mode of $\con{\vec{P}_m}$, i.e. $\vec{x}\in\vec{M}$. Since $\vec{M}$ is the $h$-mode of $\vec{R}_n$, we deduce that $\vec{R}_n$ is $\vec{x}$-stable. We have proved that $\vec{R}_n$ is $\vec{x}$-stable for every $\vec{x}\in\vec{X}_n$.

  As $\vec{X}_n$ is a subset of the stabilizer $\stab{\star}{\vec{R}_n}$, which is a cone from \cref{lem:useless-stabilizers}, we deduce that $\con{\vec{X}_n}\subseteq \stab{\star}{\vec{R}_n}$ and $\stab{\star}{\vec{R}_n}\cap\setQ_{\geq 0}^2=\vec{R}_n$. We get $\con{\vec{X}_n}\subseteq \vec{R}_n$. From $\vec{P}_n=\accel{\vec{I}_n}{\vec{P}_{n'}}$, we deduce that $\con{\vec{P}_n}\subseteq \con{\vec{P}_n'\cup \vec{E}_n\cup\vec{C}_n}\cap\setQ_{\geq 0}^2$. \cref{lem:elem_cons} show that $\vec{E}_n\subseteq \con{\vec{X}_n}$. We deduce that $\con{\vec{P}_n}\subseteq\con{\vec{X}_n}\cap\setQ_{\geq 0}^2$. It follows that $\con{\vec{P}_n}\subseteq\vec{R}_n$. From $\vec{R}_n\subseteq\vec{K}$, we have proved the induction step $\vec{P}_n\subseteq \vec{K}$.
\end{proof}

\subsection{Proofs of \cref{sec:wrap-up}}
\lemZdiffStable*
\begin{proof}
  For short,
  let us write $\vec{P} = \vec{P}_n$.
  Consider a node $t \in N$.
  Since $\explo$ is algorithmic,
  we have $\vec{P}_t = \perP{\vec{J}_t}$
  for some subset $\vec{J}_t$ of $\setZ^2$ with $\vec{I}_t \subseteq \vec{J}_t$.
  Hence,
  $\con{\vec{P}_t} = \conP{\vec{J}_t}$.
  It follows from \cref{lem:conp-stabilizer} that
  $\vec{J}_t \subseteq \stab{\star}{\con{\vec{P}_t}}$.
  We thus have shown that $\vec{I}_t \subseteq \stab{\star}{\con{\vec{P}_t}}$,
  for every $t \in N$.
  We now consider two cases.
  If $\vec{P} = \{(0, 0)\}$ then
  $\vec{z}_n - \vec{z}_s$ is in $\overline{\vec{C}}_n \subseteq \vec{I}_n$,
  hence,
  $\vec{z}_n - \vec{z}_s$ is in $\stab{\star}{\con{\vec{P}}}$.
  Otherwise,
  we derive from \cref{{lem:useless-stabilizers}} that
  $\stab{\star}{\con{\vec{P}}}$ is a cone.
  According to \cref{lem:elem_cons:1},
  it holds that
  $\vec{z}_n - \vec{z}_s$ is in
  $\sum_{t\mid s\xrightarrow{*}t\xrightarrow{*}n} \per{\vec{I}_t} + \vec{P}$.
  Observe that for every node $t \in N$ with $s \xrightarrow{*} t \xrightarrow{*} n$,
  we have $\vec{P} = \vec{P}_s \subseteq \vec{P}_t \subseteq \vec{P}_n = \vec{P}$,
  hence,
  $\vec{P}_t = \vec{P}$.
  So $\vec{I}_t \subseteq \stab{\star}{\con{\vec{P}}}$
  for every $t \in N$ with $s \xrightarrow{*} t \xrightarrow{*} n$.
  We also have $\vec{P} \subseteq \stab{\star}{\con{\vec{P}}}$ by \cref{lem:stabcone}.
  It follows that $\vec{z}_n - \vec{z}_s$ is in $\stab{\star}{\con{\vec{P}}}$
  since $\stab{\star}{\con{\vec{P}}}$ is a cone.
\end{proof}

\end{document}